\xpatchcmd{\@ssect@ltx}{\@xsect}{\protected@edef\@currentlabelname{#8}\@xsect}{}{}
\xpatchcmd{\@sect@ltx}{\@xsect}{\protected@edef\@currentlabelname{#8}\@xsect}{}{}
\renewcommand\onecolumngrid{%
  \do@columngrid{one}{\@ne}%
  \def\set@footnotewidth{\onecolumngrid}%
  \def\footnoterule{\kern-6pt\hrule width 1.5in\kern6pt}%
}
\newlength\dlf 
\newtheorem{lemma}{Lemma}
\newtheorem*{lemma*}{Lemma}
\newtheorem{proposition}{Proposition}
\theoremstyle{remark}
\newcommand{\bes} {\begin{subequations}}
\newcommand{\ees} {\end{subequations}}
\newcommand{\bea} {\begin{eqnarray}}
\newcommand{\eea} {\end{eqnarray}}
\newcommand{\be} {\begin{equation}}
\newcommand{\ee} {\end{equation}}
\def\>{\rangle}
\def\<{\langle}
\def\Tr{\operatorname{Tr}}
\newcommand{\ketbra}[2]{|{#1}\>\!\<#2|}
\newcommand{\ignore}[1]{}
\crefname{section}{Sec.}{Secs.}
\crefname{claim}{Claim}{Claims}
\begin{document}	

\begin{abstract}

We study quantum networks with tree structures, in which information propagates from a root to leaves. 
  At each node in the network, the received qubit unitarily interacts with fresh ancilla qubits, after which each qubit is sent through a noisy channel to a different node in the next level. 
 Therefore,  as the tree depth grows, there is a competition between the irreversible effect of noise and the protection against such noise achieved by delocalization of information. In the classical setting, where each node simply copies the input bit into multiple output bits, this model has been studied as the broadcasting or reconstruction problem on trees, which has broad applications. In this work, we study the quantum version of this problem. We consider a Clifford encoder at each node that encodes the input qubit in a stabilizer code, along with a single qubit Pauli noise channel at each edge. Such noisy quantum trees describe a scenario in which one has access to a stream of fresh (low-entropy) ancilla qubits, but cannot perform error correction. Therefore, they provide a different perspective on quantum fault tolerance. Furthermore, they provide a useful model for describing the effect of noise within the encoders of concatenated codes.  We prove that above certain noise thresholds, which depend on the properties of the code such as its distance, as well as the properties of the encoder, information decays exponentially with the depth of the tree. On the other hand, by studying certain efficient decoders, we prove that for codes with distance $d\geq2$ and for sufficiently small (but non-zero) noise, classical information and entanglement propagate over a noisy tree with infinite depth. Indeed, we find that this remains true even for binary trees  with certain 2-qubit encoders at each node, which encode the received qubit in the binary repetition code with distance $d=1$.
\end{abstract}

\title{   Noisy Quantum Trees: Infinite Protection Without  Correction 
}







\author{Shiv Akshar Yadavalli}\affiliation{Department of Physics, Duke University, Durham, NC 27708, USA}
\author{Iman Marvian}
\affiliation{Department of Physics, Duke University, Durham, NC 27708, USA}
\affiliation{Department of Electrical and Computer Engineering, Duke University, Durham, NC 27708, USA}

\maketitle
\section{Introduction}

Overcoming noise and decoherence in quantum systems is the biggest challenge for quantum information technology. Understanding how these 
 undesirable  phenomena affect storage, communication and processing of quantum information is a problem with broad interest \cite{Preskill2018Nisq}. Beyond these applications, understanding the competition between entanglement generation and noise in quantum circuits is a fundamental problem in many-body physics that 
 has attracted significant attention in  recent years \cite{gullans_huse_measinduced, gullans_huse_2_measinduced, skinner_measinduced, fischer_measinduced}.

\begin{figure}[ht]
\centering
\includegraphics[width=0.4\textwidth]{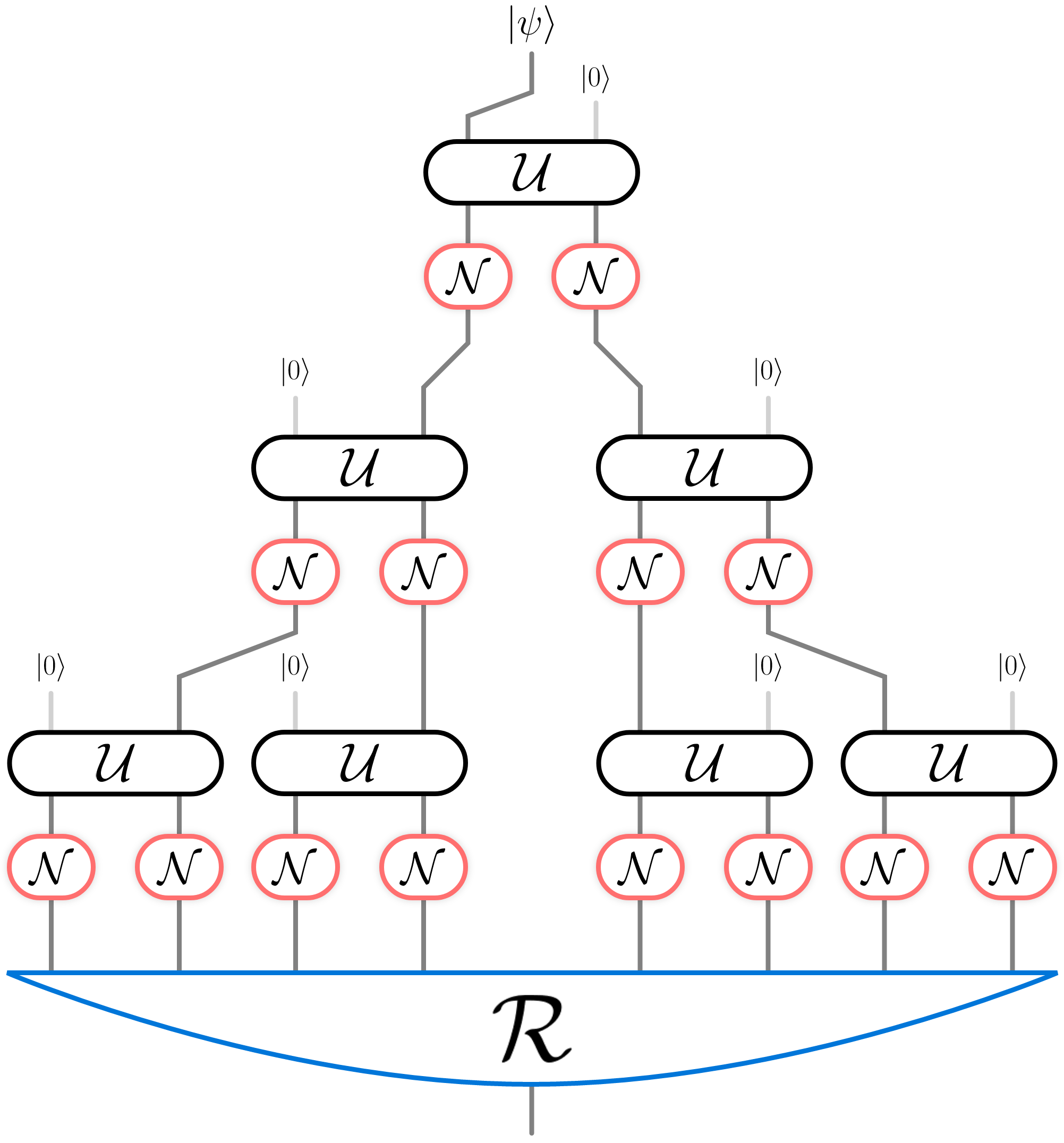}
\caption{\textbf{Noisy Quantum Tree--}  Here,  the input qubit at the root of the tree is in unknown state $\ket{\psi}$,  each $\mathcal{U}$ is a unitary transformation that couples an input qubit to a fresh ancilla qubit  initially prepared in state $|0\rangle$,   and $\mathcal{N}$ is a noise process
 (Time flows from top to bottom). 
  $\mathcal{R}$ represents a decoding process that takes \textit{all} the leaf qubits and approximately reconstructs the root state. The illustrated quantum tree has depth $T=3$. 
 } 
\label{QnCtree}
\end{figure}


Classically, the study of noisy circuits traces back to von Neumann when he proved the first threshold theorem for fault-tolerant computation using noisy elements \cite{von1956probabilistic}. In subsequent decades, 
Pippenger and others further developed this into a theory of noisy circuits \cite{pippenger1985networks,pippenger1988reliable, percolation, kesten_stigum}. A particularly interesting and useful model for the propagation of
classical information is \emph{broadcasting} on noisy trees (see, e.g.,  \cite{evans2000, LyonsPerestextbook}). A  broad class of classical processes -- both natural and artificial -- can be modeled as such trees.  Applications include  broadcasting networks, reconstruction in genetics, and Ising models in statistical physics \cite{evans2000, Mossel_second_eigenvalue, Mossel_Peres, LyonsPerestextbook}.

Here we consider a basic version of this problem:  at a distinguished node, called the \emph{root} of the tree, a bit of information is received and then it is sent to $b$ other nodes in the next level. Then, each node in the next level sends the received bit to $b$ other nodes in the next level, and so on.  Now suppose each link between the nodes at two successive levels is noisy, so that the bit is flipped with probability $p$. Suppose the tree has depth $T$. Let $p(\textbf{x}_T|\textbf{x}_0)$ be the probability that at level $T$ the leaves are in bit-string $\textbf{x}_T\in\{0,1\}^{b^T}$, given that the input bit at the root is $\textbf{x}_0\in\{0,1\}$. A basic interesting question here is as $T\rightarrow \infty$, whether the leaves remain correlated with the input bit or not.
We can formulate this in terms of the total variation distance of the distributions associated to inputs $\textbf{x}_0 =0$ and $\textbf{x}_0 =1$, i.e.,
\begin{align}\label{TVD}
\frac{1}{2}\sum_{\textbf{x}_T\in\{0,1\}^{b^T}}  \Big|p(\textbf{x}_T|\textbf{x}_0 =0) - p(\textbf{x}_T|\textbf{x}_0=1)\Big|\ .
\end{align}
In particular,  does this distance vanish in the limit $T\rightarrow \infty$? If so, then it will be  impossible to  determine the value of the input bit at the root by looking at the output bits at the leaves.


In  a paper \cite{evans2000} published in 2000, Evans \textit{et. al} showed  that 
for infinite $b$-ary trees and, more generally, for trees with branching number $b$ and with bit-flip probability $p$,  if  $b<(1-2 p)^{-2}$ then the total variation distance in Eq.(\ref{TVD}) vanishes exponentially fast with $T$, whereas it  remains non-zero for $b>(1-2 p)^{-2}$, even in the limit $T\rightarrow\infty$. In other words, there is a critical noise threshold 
\be\label{intro threshold}
p_{\text{th}}=\frac{1}{2}\Big(1-\frac{1}{\sqrt{b}}\Big)\ .
\ee
For $ p<p_{\text{th}}$, information about the input never fully disappears in the output, even as $T\rightarrow\infty$. Whereas for $p_{\text{th}}<p\le 1/2$, the output of the tree is asymptotically uncorrelated with the input.\\

In this paper, we study the quantum version of this problem. Clearly, in the quantum setting, due to the no-cloning \cite{wootters1982single} and   
no-broadcasting \cite{nobroadcasting} theorems, it is not possible to copy quantum information. Then, a natural way to generalize the classical problem to the quantum setting is to assume that at each node the received qubit unitarily interacts with one or multiple ancilla qubits initially prepared in a fixed pure state $|0\rangle$, and then  each qubit is transmitted to a node in the  next layer through a noisy single-qubit channel (see Fig. \ref{QnCtree}).  The unitaries at different nodes can be identical or different, and possibly random.

Due to the noise in the circuit, as information propagates from the root to the leaves of the tree, at each step it partially decays. At the same time, the unitary transformation at each node ``delocalizes" information in a single qubit into multiple qubits. Even though this unitary transformation may not necessarily be the encoder of an error-correcting code, the intuition from the theory of quantum error correction suggests that this delocalization of information can protect information against \emph{local} noise. Hence, as the tree depth grows, there is a competition between the effect of noise and the protection against such noise achieved by further delocalization of information. 

\begin{figure}[ht]
\centering
\[
\Qcircuit @C=1em @R=1em {
\lstick{\ket{\psi}} & \qw & \qw & \gate{H} & \ctrl{+2} & \qw & \gate{\mathcal{N}} & \gate{H}  & \ctrl{+1}  & \gate{\mathcal{N}} & \qw\\
& & & & & & & \lstick{\ket{0}}    & \qw   \oplus & \gate{\mathcal{N}} & \qw \\
& & & \lstick{\ket{0}}   & \qw \oplus & \qw & \gate{\mathcal{N}} & \gate{H}  & \ctrl{+1}  & \gate{\mathcal{N}} & \qw\\
& & & & & & &  \lstick{\ket{0}}    & \qw   \oplus  & \gate{\mathcal{N}} & \qw \gategroup{1}{3}{3}{5}{1.6em}{--} \\}\large
\]\\

\includegraphics[width=0.5\textwidth]{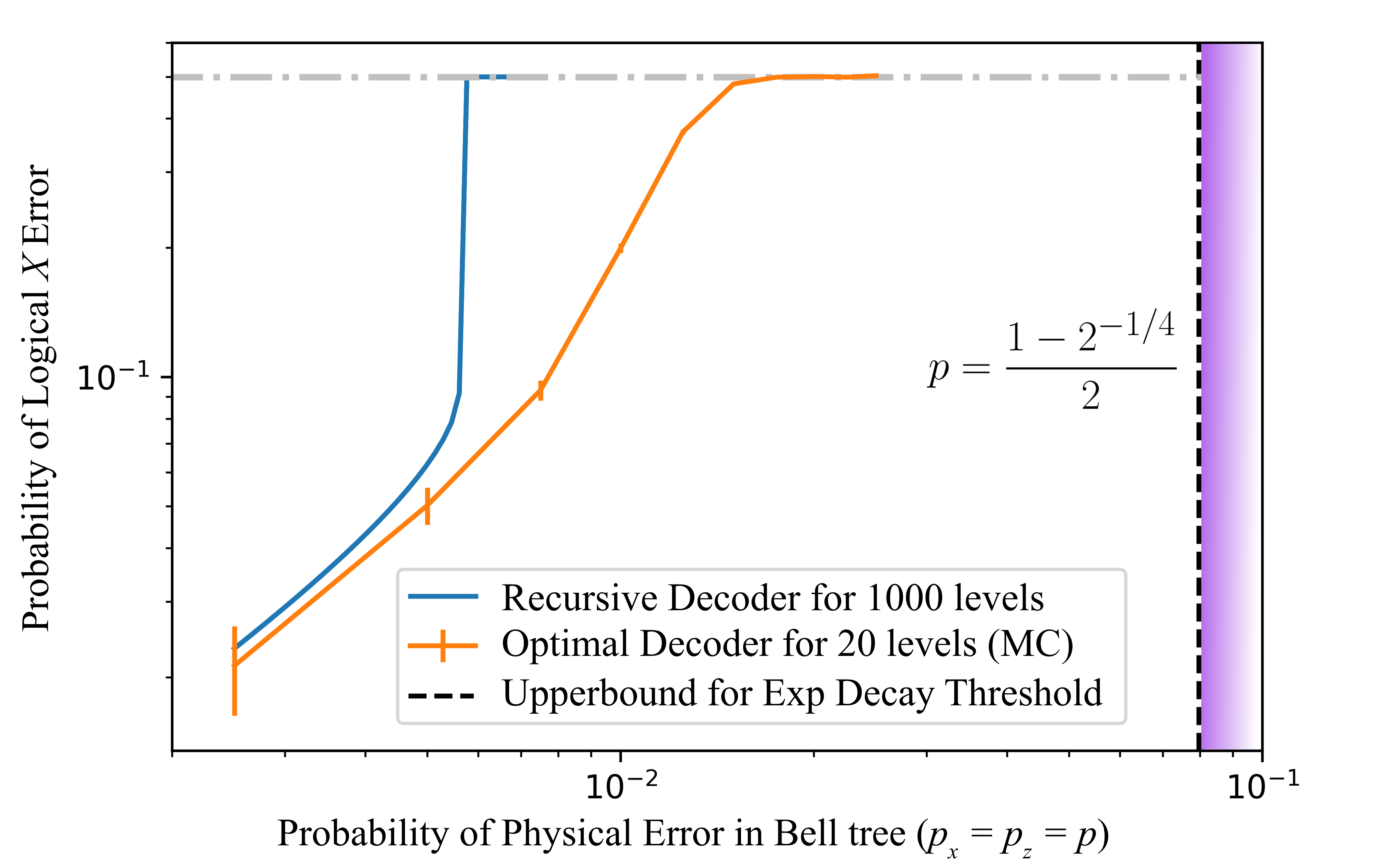}
\caption{\textbf{The Bell Tree -- } The top figure presents a binary tree, which we call a \textit{Bell tree}, with depth $T=2$.     
Here, we assume that the single-qubit channel $\mathcal{N}$
consists of independent bit-flip and phase-flip channels  which apply $X$ and $Z$ errors with probability $p_x=p_z=p$. The plot describes the probability of error after decoding, i.e., the probability of logical error.  Using the fact that the Bell tree has logical subtrees with
branching number 
$\sqrt{2}$, we show that for $p>(1-2^{-1/4})/2\sim8\%$, information decays exponentially with depth $T$ (see Sec. \ref{upperbound sec}).  Therefore, the probability of $X$ logical error for the infinite tree is $1/2$ (the same holds for the probability of logical $Z$ errors). The blue curve corresponds to the probability of logical $X$ error for a Bell tree of depth $T=1000$ after applying an efficient decoder introduced in Sec.\ref{bell tree sec}, which uses a recursive scheme with 2 \emph{reliability} bits.  The   threshold for this decoder is around $p_x=p_z\sim 0.5\%$. The orange curve corresponds to the logical $X$ error after applying the optimal decoder for the Bell tree of depth $T=20$, which is implemented via the belief propagation algorithm presented in Sec.\ref{Optimal section}. This plot suggests that the actual noise threshold for infinite propagation is below $1.7\%$. }\label{BinaryvsBell}
\end{figure}

Fig. \ref{BinaryvsBell} presents a simple example of this problem for  a tree of depth $T=2$, where at each node the received qubit 
goes through a Hadamard 
gate and then interacts with a fresh ancilla qubit via a CNOT. We call this the  \textit{Bell} tree of depth $T=2$, because the unitary transformation  at each node (the dotted box in  Fig. \ref{BinaryvsBell}) transforms the computational basis of two qubits to the so-called Bell basis 
  (note that $T=2$ Bell tree can be interpreted as the encoder of the generalized Shor code [[4,1,2]]). The ``quantum code" defined by this two-qubit encoder is the repetition code and has distance $d=1$, which means it cannot correct or detect general single-qubit errors.

Nevertheless, we prove that for noise below a certain threshold, classical information and entanglement propagate over the Bell tree with depth $T\rightarrow \infty$  (see Fig. \ref{BinaryvsBell}). In particular, we show that if  $Z$ and $X$ errors occur independently with probability $p_z=p_x<\sim 0.5\%$, then  entanglement with a hypothetical reference qubit that is initially entangled with the input, survives over the infinite tree.
 To achieve this we introduce and analyze a novel decoder that during decoding augments every qubit with two \emph{reliability} bits to recursively decode and recover the input state (see the decoder in Fig.\ref{bell tree logic circuit}). It is worth noting that in the classical broadcasting problem discussed above, the threshold in Eq.(\ref{intro threshold}) can be achieved via majority voting on the leaves of the tree \cite{evans2000}\footnote{Since majority voting  ignores the tree structure, its performance is not optimal for finite trees. Yet, it remarkably achieves the threshold in Eq.(\ref{intro threshold})}. In the quantum setting, on the other hand, the decoder is more complicated. Indeed, in contrast to the classical case, even in the absence of noise, it is impossible to obtain any information about the input state without having access to $\approx\sqrt{2^T}$ qubits (see Fig.\ref{Bell} and 
 Sec.\ref{Sec:gen}).

We also rigorously prove that  
 for
 \be
p_z=p_x> \frac{1-2^{-1/4}}{2}\approx  8\%\ ,
 \ee
 classical information, e.g., as quantified by the trace distance or mutual information, decays exponentially in the tree, and in the limit $T\rightarrow\infty$ the output of the tree channel becomes independent of the input.  Numerical analysis of the Bell tree of depth $T=20$, which is performed via a belief propagation algorithm discussed in Sec.\ref{Optimal section},  suggests that the actual threshold is  below $\sim 1.7\%.$\footnote{ 
 It should be noted that entanglement among different output qubits can  survive even for noise stronger than this threshold.}

\subsection*{A simplified model of fault tolerant computation}

The study of noisy quantum circuits goes back to the pioneering work of Peter Shor in 1995 \cite{Shor95} who discovered the first quantum error correcting code and showed how it can be used to suppress noise in  quantum computers. 
Shortly after, the theory of quantum error-correcting codes was developed \cite{CladerbankShor96, Steane96, gottesman1997stabilizer} and the threshold theorem for fault-tolerant quantum computing was established \cite{aharonov2008fault,knill1996threshold}. In particular, Knill \textit{et al.} \cite{knill1996threshold} used codes obtained by concatenating codes with distance  $d \ge 3$ to establish the presence of a non-zero noise threshold below which arbitrarily long quantum computation is possible (see also  \cite{gottesman2014faulttolerant, chamberland2016threshold} for more recent works).

In general, to achieve protection against noise, the standard fault-tolerant protocols involve regular error correction modules that discard entropy from the system. This, in particular, requires encoding into fresh ancilla, measurements \& classical communication, and decoding operations mid-circuit. To better understand the theory of fault tolerance and the importance of its underlying assumptions, it is useful to consider fault tolerance under other assumptions about available resources.

In particular, one can consider a scenario in which one has access to a stream of fresh (low-entropy) ancilla qubits, but cannot perform error correction. Then, is it possible to slow down -- or fully stop -- the irreversible information loss caused by noise and protect quantum information? Noisy quantum trees provide a simple formulation of this problem, and hence a simplified model of fault tolerance.

\subsection*{Comparison with the standard fault tolerance setup}

We make these differences with standard fault tolerance explicit: on one hand, in this simplified model we assume we \textit{have} access to 
\begin{itemize}
\item  Noiseless (low entropy) ancilla qubits, and 
\item A noiseless decoder acting on all the output qubits collectively. 
\end{itemize}
These resources are not available in the standard setup of fault-tolerant quantum computation. On the other hand,  we do \textit{not} allow
\begin{itemize}
\item Measurements,  classical communication, and error correction in the middle of the (infinite) tree network.
\end{itemize}
Furthermore, because of the tree structure, we do \textit{not} allow
\begin{itemize}
\item Interactions (gates) between ``data" qubits.  
\end{itemize}

That is, data qubits can be coupled only to fresh ancilla qubits prepared in a fixed state. In conclusion, although non-zero noise thresholds appear in both settings, the exact value of these thresholds are not necessarily related.

Here, we mention a few other related works in the context of fault-tolerant quantum computation. Aharonov studied noise-induced phase transitions in the propagation of entanglement in noisy quantum circuits \cite{aharonovphasetransition}. In this context, various authors have established upperbounds on noise thresholds, none of which can be directly applied to quantum trees due to the differences noted earlier. For instance, Harrow and Nielsen placed an upper bound of $0.74$ for the fault-tolerance threshold  \cite{HarrowRobust2003} for circuits with 2-local gates. Razbarov \cite{Razborov} improved this upper bound to $1-1/k$ for $k$-local gates. Kempe \textit{et al.} \cite{kempe} further improved this to $1-\Theta(1/\sqrt{k})$ by studying the evolution of the Hilbert-Schmidt distance between two inputs to a circuit, but limiting to the distinguishability of a single (or a few) output qubits. Similarly, the effect of noise in Haar-random circuits has been recently studied 
\cite{dalzell2021random,deshpande2022tight,PRXQuantum.3.010333}.


\subsection*{Other Applications}

In addition to offering a simplified theoretical model for quantum fault tolerance, similar to the classical case, the exploration of noisy quantum trees' properties is expected to have broad applications.
Specifically, such networks are relevant in the context of quantum computing and quantum communications with faulty components; e.g., they describe the effect of noise within the
\begin{itemize}
 \item  Encoders of concatenated codes \cite{dohertyconcatenated, poulin2006optimal}.
\item State preparation circuit for preparing tree tensor network states  \cite{TTN, barthel2022closedness} on quantum computers. 
\item  Binary switching tree of Quantum Random Access Memory (QRAM)
\cite{QRAM1, QRAM2}\ .
\end{itemize} 

Besides these applications, noisy tree networks also provide a natural model for the physical process in which a particle enters a \emph{cold} environment and randomly interacts with other particles in the environment which are initially in a pure state. Other possible applications of this  model can be in, e.g., the study of quantum chaos and scrambling \cite{roberts2013memory, almheiri2015bulk, pastawski2015holographic}.


\subsection*{Overview of Results}
In this paper, we will mainly focus on \textit{stabilizer trees} whose vertices correspond to the same encoder of a stabilizer code, and whose edges are subject to the same single-qubit Pauli noise (see Sec. \ref{stab tree setup}). In the following, we summarize the results presented in each section:
\begin{itemize}

    \item In Sec. \ref{upperbound sec} we prove an upperbound on the distinguishability of the output states at the leaves of the tree, as quantified by the trace distance,  and show that for noise above a certain threshold it decays exponentially fast with the tree depth $T$ (see Propositions \ref{Prop 1}, \ref{prop2} and \ref{prop3}). 


    \item In Sec. \ref{local rec section} we study a  decoding strategy based on  local recovery, where one recursively  decodes  blocks of qubits corresponding to the encoder  at each node of the  tree. While it is sub-optimal, using this strategy we can rigorously prove the existence of a non-zero noise threshold in the case of codes with distance  $d\geq3$. 
    
      \item In Sec. \ref{local recovery 1 bit section} we consider codes with distance $d=2$, which  cannot correct  single-qubit errors  in unknown locations. In this case  the local recursive recovery approach of Sec. \ref{local rec section}, does not yield 
      a non-zero threshold for the infinite tree.  To overcome this, 
 we consider a modification of this scheme where  at each step one passes a single classical ``reliability" bit to the next level. We rigorously prove and numerically demonstrate that this approach  achieves a non-zero noise threshold for the infinite tree.  
       
    \item  In Sec. \ref{bell tree sec} we study the  \textit{Bell} tree (see Fig. \ref{BinaryvsBell}), that corresponds to a 2-qubit code with distance $d=1$. Such codes cannot even detect general single-qubit errors. Nevertheless, we rigorously prove and numerically demonstrate the existence of a non-zero threshold. To achieve this we introduce and analyze an efficient decoder, which is a simple modification of the recursive local recovery where at each step \textit{two} reliability bits are sent to the next level.

    \item  In Sec. \ref{Optimal section} we describe an {optimal and efficient recovery} strategy for stabilizer trees that involves a belief propagation algorithm on classical syndrome data. This allows us to numerically study the decay of information in any stabilizer tree with Pauli noise.
    
     \item Finally, in Sec. \ref{deph tree mapping sec} we show that by fully dephasing qubits at all levels, the stabilizer tree problem can be mapped to an equivalent fully classical problem about propagation of information on a classical tree with correlated noise.

    
\end{itemize}

\section{The setup: Noisy Quantum Trees} \label{stab tree setup}

\subsection{General Case}

Suppose at each node of a full $b$-ary tree  a qubit arrives and interacts with $b-1$ ancillary qubits initially prepared in a fixed state $|0\rangle$ via a unitary transformation $U$, and then each qubit is sent  to a different node in the next level. The overall process at each node can be described as an \emph{isometry}  $V: \mathbb{C}^2\rightarrow (\mathbb{C}^{2})^{\otimes b}$  defined by
\begin{equation}\label{Eq:encoder}
V|\psi\rangle=U(|\psi\rangle|0\rangle^{\otimes (b-1)})\ .
\end{equation}
The image of $V$ defines a 2-dimensional \emph{code} subspace in a $2^b$-dimensional Hilbert space of $b$ qubits. Applying the above process recursively $T$ times, we obtain a chain of encoded states
\be\label{chain}
|\psi\rangle \rightarrow |\psi_1\rangle \rightarrow \cdots  \rightarrow  |\psi_T\rangle \ ,
\ee
where the state at level $k+1$ is obtained via the relation 
\be
|\psi_{k+1}\rangle=V^{\otimes b^k}  |\psi_{k}\rangle\ .
\ee
Then, the overall process can be described by the  isometry 
\be\label{noisless isometry}
V_{T}=\prod_{j=0}^{T-1} V^{\otimes {b}^j} = V^{\otimes b^{T-1}} \cdots V^{\otimes {b}^2} V^{\otimes b} V\ ,
\ee
that encodes 1 qubit in  
\be
N=b^T
\ee 
 qubits. We denote the corresponding quantum channel by $\mathcal{V}_{T}$, where  $\mathcal{V}_{T}(\rho)= 
V_{T} \rho V_{T}$. 

In the language of quantum error-correcting codes, the above process defines a concatenated code \cite{gottesman1997stabilizer, poulin2006optimal}. In this context, one often ignores the noise within the \emph{encoder}. However, we are interested to understand how such noise would affect the output state. Therefore, we assume that after each encoder $V$, the output qubits go through noisy channels (see Fig.\ref{QnCtree}). Furthermore, we assume the noise is independent and identically distributed (i.i.d.) on all qubits and the noise on each qubit is described by the single-qubit channel $\mathcal{N}$. Then, the noisy tree can be described by the quantum channel
\be\label{noisy tree definition -- no root noise}
\mathcal{E}_{T}=\prod_{j=0}^{T-1}    \mathcal{N}^{\otimes b^{j+1}} \circ \mathcal{V}^{\otimes b^j}\ .
\ee
For example, the circuit in Figure \ref{QnCtree} corresponds to $\mathcal{E}_3$. Channel $\mathcal{E}_{T}$ can also be defined recursively as 
\be
\mathcal{E}_{T}=\mathcal{E}^{\otimes b}_{T-1}\circ \mathcal{N}^{\otimes b}\circ \mathcal{V}=\widetilde{\mathcal{E}}^{\otimes b}_{T-1} \circ \mathcal{V}\ ,
\ee
where 
\begin{align}\label{noisy tree definition with root noise}
\widetilde{\mathcal{E}}_{T}=\mathcal{E}_T\circ \mathcal{N}\ ,
\end{align}
corresponds to the noisy tree where the noise is also applied on the input qubit prior to the first encoder. Note that the presence of this single-qubit channel at the root does not affect the noise threshold for  transmitting classical information in the infinite tree. 


\subsection{Stabilizer Trees with Pauli Noise}
In this paper, we mainly assume that the single-qubit noise channel $\mathcal{N}$ is a Pauli channel. 
Specifically, we will be interested in the case of independent $X$ and $Z$ errors, i.e.,
\begin{equation}
\mathcal{N}=\mathcal{N}_x \circ \mathcal{N}_z=\mathcal{N}_z \circ \mathcal{N}_x\ , 
\end{equation}
where
\bes
\begin{align}
\mathcal{N}_x(\rho) &= (1-p_x)\rho + p_x X\rho X,\\
\mathcal{N}_z(\rho) &= (1-p_z)\rho + p_z Z\rho Z
\end{align}
\ees
are, respectively,  bit-flip and phase-flip channels. 

We also assume the encoder unitary $U$ in Eq.(\ref{Eq:encoder}) is a \textit{Clifford} unitary, such that for all  $P \in \mathcal{P}_b=\{\eta I, \eta X , \eta Y, \eta Z: \eta=\pm1 , \pm i \}^{\otimes b}$, $UPU^\dagger \in \mathcal{P}_b$.   Recall that any Clifford unitary can be realized by composing CNOT, Hadamard and Phase gates  \cite{gottesman1997stabilizer,neilsenandchuang}.
The code defined by this encoder is a stabilizer code  with the stabilizer generators  
\be
UZ_j U^\dag\ \ \  : j=2,\cdots, b\ ,
\ee
where $Z_j$ denotes the Pauli $Z$ operator on qubit $j$ tensor product with identity operators on the other qubits \cite{gottesman1997stabilizer,neilsenandchuang}. Given any stabilizer code, there exists an encoder $U$ satisfying the additional property that it has a  \emph{logical} operator $Z_L$, such that $Z_L V=V Z$ and 
 $Z_L\in   \langle i I,Z \rangle^{\otimes b}$,  i.e., $Z_L$ can be written as a tensor product of the identity and Pauli $Z$ operators, up to a global phase. 
  Following \cite{gottesman1997stabilizer, neilsenandchuang} we refer to such an encoder as a \emph{standard} encoder.

 A nice feature of stabilizer codes is the existence of a simple error-correction scheme for correcting Pauli errors: suppose after encoding the qubits are 
subjected to Pauli errors, which in general can be correlated for different qubits. Then, the optimal recovery of the (unknown) input state $|\psi\rangle$ can be achieved by measuring the stabilizers of the code, which can be realized by first applying the inverse of the Clifford unitary $U$,  and then  measuring all the ancilla qubits in the  $Z$ basis.  Then, based on the  outcomes of these measurements, one applies one of the Pauli operators $X, Y, Z$, or the identity operator on the data qubit to correct the error and recover the state $|\psi\rangle$ (the choice of the Pauli operator depends on the inferred distribution of Pauli errors given the syndrome information).

We study the propagation of information through these noisy stabilizer trees by considering recovery channels (decoders) that process all the $b^T$ leaf qubits as the input, and then output a single (approximately) recovered qubit. Note that the optimal recovery channel $\mathcal{R}^{\rm opt}_T$ for the  stabilizer tree $\mathcal{E}_T$ is the standard stabilizer recovery procedure noted earlier, but for the entire concatenated encoder unitary, $V_T$.

After performing the optimal recovery for the stabilizer codes with Pauli errors, 
the entire process can be described  by a Pauli channel, as 
\begin{align}\label{special}
    \mathcal{R}^{\rm opt}_T\circ \mathcal{E}_T(\rho) = r^I_T \rho + r^x_T X\rho X + r^y_T Y \rho Y + r^z_T Z \rho Z\ ,
\end{align}
for arbitrary density operator $\rho$, where $r^I_T, r^X_T, r^Y_T, r^Z_T \ge 0$. 



\subsubsection{CSS Codes with standard encoders}\label{css intro}
As an important special case, we consider the case of CSS (Calderbank-Shor-Steane)  codes with \emph{standard} encoders. CSS stabilizer codes on $b$ qubits are those whose stabilizer generators $UZ_j U^\dag\  : j=2,\cdots, b$  belong to either $ \langle i I,Z \rangle ^ {\otimes b}$, 
 or $ \langle i I,X \rangle ^ {\otimes b}$, i.e., can be written as the tensor product of the identity operators with only Pauli $Z$, or only Pauli $X$ operators \cite{CladerbankShor96, Steane96, GF4}.

  Every CSS code has a standard encoder $V$, defined by the property that there exist logical operators $Z_L$ and $X_L$, such that $Z_L V=V Z$ and $X_L V=V X$, and
 $Z_L\in   \langle i I,Z \rangle^{\otimes b}$, $X_L\in   \langle i I,X \rangle^{\otimes b}$.   This property, in particular, implies that the concatenated code defined by the encoder $V_T$ in Eq.(\ref{noisless isometry}) is also a CSS code.

  The fact that the  stabilizer generators of a CSS code can be partitioned into two types containing only $Z$ or only $X$ Pauli operators simplifies the analysis of error correction. In particular,  if 
  $Z$ and $X$ errors are independent, such that     
$\mathcal{N}=\mathcal{N}_x \circ \mathcal{N}_z$, then after optimal error correction the overall channel can also be decomposed as a composition  of a bit-flip and phase-flip channel as,
\begin{align}\label{setup eqn: css}
    \mathcal{R}^{\text{opt}}_T \circ \mathcal{E}_T=\mathcal{Q}_z\circ \mathcal{Q}_x =\mathcal{Q}_x\circ \mathcal{Q}_z \ ,
\end{align}
where 
\bes
\begin{align}
\mathcal{Q}_z(\rho)&=(1-q^z_T)\rho + q^z_T Z\rho Z\\
\mathcal{Q}_x(\rho)&=(1-q^x_T) \rho + q^x_T X\rho X\ .
\end{align}
\ees
Note that this is  a special case of Eq.(\ref{special}) corresponding to $r_T^x=q_T^x(1-q_T^z)$, $r_T^z=q_T^z(1-q_T^x)$, and $r_T^y=q_T^x q_T^z$. 
We refer to $q^x_T, q^z_T$ as probabilities of  logical $X$ and $Z$ errors for the tree of depth $T$, respectively.
  These probabilities grow  monotonically with the tree depth $T$ and are bounded by 1/2. Therefore, the limits of $T\rightarrow \infty$   of $q^z_T$ and  $q^x_T$ exist and  are denoted by $q^z_\infty$ and $q^x_\infty$, respectively.

\section{
The noise threshold
for  
exponential decay of information}\label{upperbound sec}

In this section, we 
establish that if the noise is stronger than a certain threshold in noisy quantum trees, then information decays exponentially fast with the depth of the tree. First, we start with the special case of standard encoders and then in Sec.\ref{Sec:gen} we present the general result that applies to general (Clifford) encoders. 
  The  results presented in Sec. \ref{std enc sec}, \ref{std css enc sec}, and \ref{anti standard subsec} are proved in Sec. \ref{Sec:mainbound} and \ref{sec:proofs}.

\subsection{Stabilizer Trees with Standard Encoders}\label{std enc sec}

Consider trees with general stabilizer codes  with standard encoders.   
Recall that the weight of an operator $P\in \mathcal{P}_b$,   denoted by ${\rm weight}(P)$,   is the number of qubits on which $P$ acts non-trivially, i.e., is not a multiple of the identity operator. In the following,  $\|\cdot\|_\diamond$ denotes the diamond norm (see Appendix \ref{appendix: diamond pauli} for the definition). We prove that,

\begin{proposition}\label{Prop 1}
Let $V: \mathbb{C}^2\rightarrow (\mathbb{C}^2)^{\otimes b}$
be the encoder of a stabilizer code that 
encodes one qubit into $b$ qubits. Suppose  for this encoder there exists a logical $Z$ operator  $Z_L$, satisfying  $V Z=Z_L V$, 
which can be written 
as the tensor product of Pauli $Z$ and identity  operators $I$, i.e., $Z_L\in \langle i I,Z \rangle ^ {\otimes b}$  (any stabilizer code has an encoder with this property). Let 
\be
b_z={\rm weight}(Z_L)
\ee
be the weight of $Z_L$. Suppose the noise channel $\mathcal{N}$ that defines the tree channel $\mathcal{E}_T$ in Eq.(\ref{noisy tree definition -- no root noise})  is $\mathcal{N}=\mathcal{N}_x\circ \mathcal{N}_z$, where 
$\mathcal{N}_x$ and $\mathcal{N}_z$ are, respectively, the bit-flip and phase-flip channels. Let $p_z$ be the probability of $Z$ error in the phase-flip channel $\mathcal{N}_z$.  
  Then,
\be\label{mainbound}
\big\|\mathcal{E}_{T}-\mathcal{E}_{T}\circ \mathcal{D}_{z}\big\|_\diamond\le {\sqrt{2}} \times \Big[\sqrt{b_z} \times |1-2p_z|\Big]^{T}\ ,
\ee
where $\mathcal{D}_{z}(\rho)=(\rho+Z\rho Z)/2$ is the fully dephasing channel. 
\end{proposition}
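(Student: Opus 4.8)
The plan is to reduce the diamond-norm bound to a statement about a single scalar — the "bias" of an effective classical broadcasting process — and then to exploit the multiplicativity of that bias under concatenation. First I would note that since $\mathcal{D}_z$ projects onto the diagonal in the $Z$-eigenbasis of the logical qubit, the operator $\mathcal{E}_T - \mathcal{E}_T\circ\mathcal{D}_z$ annihilates any input that is already $Z$-diagonal; the only surviving component is the off-diagonal coherence $\ketbra{0}{1}$ (and its conjugate). So it suffices to bound $\|\mathcal{E}_T(\ketbra{0}{1})\|_1$, or more precisely the norm of the channel restricted to the traceless off-diagonal sector. The key observation is that $Z$-coherences are only affected by $X$-type errors and by the encoder's action on $X$-type logical information: a phase-flip channel $\mathcal{N}_z$ acting on a qubit multiplies $\ketbra{0}{1}\to(1-2p_z)\ketbra{0}{1}$ on that tensor factor, i.e. it contributes a scalar factor $(1-2p_z)$ per qubit that carries a "$Z$-coherence label." Because the encoder is standard with $Z_L\in\langle iI,Z\rangle^{\otimes b}$ of weight $b_z$, the logical coherence $\ketbra{0}{1}$ is mapped by $V$ to an operator supported (in its off-diagonal structure) on exactly the $b_z$ qubits in the support of $Z_L$, each of which then independently picks up the $(1-2p_z)$ factor from one application of $\mathcal{N}_z$.

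Next I would set up the recursion. After tracking the $Z$-coherence through one layer, the tree channel on the coherence sector factorizes: the single logical input coherence spreads to $b_z$ "active" child subtrees (the children sitting on the support of $Z_L$), each of which is itself a tree channel $\mathcal{E}_{T-1}$ acting on a coherence input, and each of the $b$ edges out of the root contributes its $\mathcal{N}_z$ factor — but only the $b_z$ active edges matter, giving a factor $(1-2p_z)^{b_z}$... wait, more carefully: it is one $(1-2p_z)$ factor per active edge at this level times the recursively-defined quantity for each active subtree. Writing $c_T$ for the relevant operator norm of $\mathcal{E}_T$ restricted to the coherence sector, the bound takes the form $c_T \le b_z\,|1-2p_z|\,\cdot\,(\text{per-subtree factor involving }c_{T-1})$, and iterating $T$ times yields $c_T \le \big(b_z\,|1-2p_z|\big)^{\,?}$. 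To match the stated exponent one wants $\sqrt{b_z}$, not $b_z$: this comes from the fact that distinguishing a coherence requires matching it on \emph{both} the bra and ket sides, so the natural quantity that multiplies is $\sqrt{b_z}\,|1-2p_z|$ per layer (the $\sqrt{\cdot}$ being exactly the phenomenon behind the classical $b<(1-2p)^{-2}$ threshold, where the relevant second moment carries a square root when passing to distinguishability). Concretely I would bound the $1$-norm of $\mathcal{E}_T(\ketbra{0}{1})$ using Cauchy–Schwarz / the relation between $\|\cdot\|_1$ and $\|\cdot\|_2$ on the spread-out operator, turning the $b_z$ branching into $\sqrt{b_z}$, and the $\sqrt{2}$ prefactor is the standard loss from bounding diamond norm of a "coherence-only" channel by the $1$-norm on a single off-diagonal matrix unit.

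The main obstacle, I expect, is making the branching-to-$\sqrt{b_z}$ step rigorous and tight simultaneously with the recursion: one must choose the right norm at each level so that (i) it is multiplicative under the tensor/concatenation structure, (ii) it correctly produces $\sqrt{b_z}$ rather than $b_z$ or $b$, and (iii) it dominates the diamond norm of the actual channel difference at the end. The clean way is probably to work with the Choi/vectorized picture: vectorize the coherence sector so that $\mathcal{E}_T$ acts as an operator on a tensor-power Hilbert space, identify the $X$-logical "propagation matrix" at one node as a $2\times 2$-block object whose relevant singular value is $\sqrt{b_z}\,|1-2p_z|$ after the $\mathcal{N}_z$ dressing, and then the $T$-fold concatenation just multiplies these singular values. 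The remaining care is the conversion constants: controlling that passing from this operator norm back to $\|\cdot\|_\diamond$ costs only the stated factor $\sqrt{2}$, which I would handle via the standard fact that for a channel difference of the form $\Phi-\Phi\circ\mathcal{D}_z$ the diamond norm is attained on (or bounded using) a maximally entangled input and reduces to a $1$-norm on the off-diagonal block, with the $\sqrt{2}$ coming from normalizing $\ket{0}\!\bra{1}+\ket{1}\!\bra{0}$ versus $\ket{0}\!\bra{1}$.
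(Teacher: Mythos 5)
Your overall plan has the right shape (restrict attention to the off-diagonal sector, use $\big\|\mathcal{E}_T-\mathcal{E}_T\circ\mathcal{D}_z\big\|_\diamond=\tfrac12\big\|\mathcal{E}_T-\mathcal{E}_T\circ\mathcal{Z}\big\|_\diamond$ and a final $\sqrt{2}$-type conversion), but two of your central claims do not hold. First, the statement that $V$ maps the logical coherence $|0\rangle\langle 1|$ to an operator whose off-diagonal structure lives only on the $b_z$ qubits in the support of $Z_L$ is false in general: for the repetition encoder $|c\rangle\mapsto|c\rangle^{\otimes b}$ one may take $Z_L=Z_1$, so $b_z=1$, yet $V|0\rangle\langle 1|V^\dagger=|0^b\rangle\langle 1^b|$ has off-diagonal support on all $b$ qubits and every qubit's dephasing contributes its own $(1-2p_z)$ factor. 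What is actually true, and what the paper exploits, is only that a logical-$Z$ flip can be \emph{implemented} by phases on those $b_z$ qubits; the relevant question is then how well the output distinguishes "logical flip applied" from "not applied" given the noise, not how the coherence operator itself is supported. (Relatedly, your remark that $Z$-coherences are "only affected by $X$-type errors" is backwards, and you never dispose of $\mathcal{N}_x$ at all; the paper drops it by data processing before doing anything else.)

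Second, and more seriously, the step that produces $\sqrt{b_z}$ per level is asserted rather than proved, and the mechanisms you gesture at would not deliver it. A Cauchy--Schwarz or $\|\cdot\|_1\le\sqrt{\mathrm{rank}}\,\|\cdot\|_2$ conversion on the leaf space costs a dimension factor exponential in the number of leaves (of order $\sqrt{2^{\,b_z^T}}$), not $\sqrt{b_z^{\,T}}$; and "the $T$-fold concatenation just multiplies singular values" does not apply, because $\mathcal{E}_T$ restricted to the coherence sector is a tree-structured \emph{correlated mixture} over noise realizations, not a $T$-fold composition of one fixed single-site operator. If you push your computation through honestly, the output coherence is block diagonal over syndromes and its trace norm equals a total-variation distance between the distribution $f_T$ of $Z$-noise patterns and its shift by the bit string $\mathbf{z}_L[T]$ of a leaf-level logical operator --- exactly the quantity the paper bounds. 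The $\sqrt{b_z}$ then comes from two ingredients your proposal is missing: (i) choosing $f_T$ and the chain of logical operators recursively so that the noise bits inside the logical subtree are statistically independent of those outside, which reduces the problem to classical broadcasting on a full $b_z$-ary tree of depth $T$ with flip probability $p_z$; and (ii) the Kesten--Stigum-type second-moment bound of Evans et al.\ (Theorem 1.3/1.3'), which gives $\Delta^2\le 2\,b_z^T\,(1-2p_z)^{2T}$ for that classical problem. Without (i) and (ii), or an argument of equivalent strength replacing them, the proposed route does not establish Eq.~(\ref{mainbound}).
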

Therefore, for $p_z$ in the interval $(1-{b_z}^{-\frac12})/2<p_z< (1+{b_z}^{-\frac12})/2$, in the limit $T\rightarrow\infty$ the channel $\mathcal{E}_T$ becomes a classical-quantum channel that transfers input information only in the $Z$ basis. Note that, in general,  the logical operator $Z_L$ is not  unique and the strongest bound is obtained for the logical operator  with the minimum weight $b_z$. As we further explain in Sec.\ref{Sec:mainbound}, the main idea for proving this result, and the other similar bounds found in this section, is to consider the \emph{logical subtree} defined by the sequence of logical operators in the tree (see Fig.\ref{restricted_subtree}).

\begin{figure}[ht]
\centering
\includegraphics[width=0.4\textwidth]{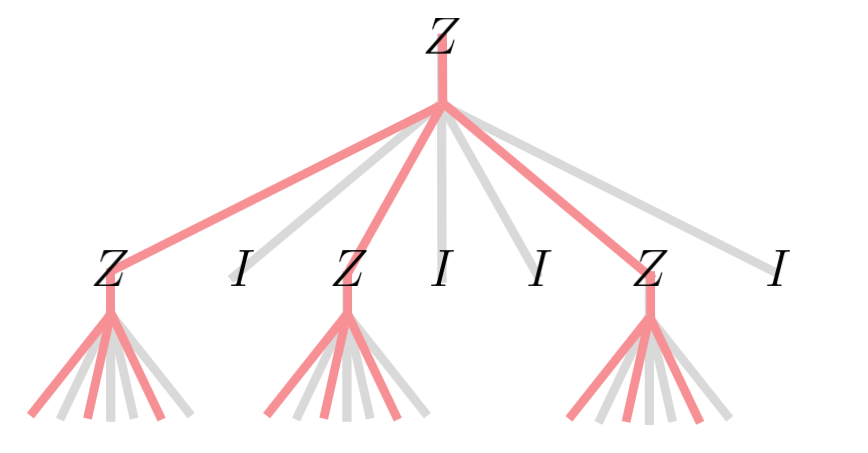}
\caption{\textbf{Illustration of the logical subtree --} Here we consider a tree defined by a standard encoder of  Steane-7 code. The highlighted subtree corresponds to the  logical operator $Z_L=ZIZIIZI$, and is called a ``logical subtree" in this paper. In this case, the logical subtree is a full 3-ary tree, whereas the original tree is a full 7-ary tree.} 
\label{restricted_subtree}
\end{figure}

It is also worth emphasizing that in Eq.(\ref{mainbound}) we are not comparing channel $\mathcal{E}_T$ with its noiseless version $\mathcal{V}_T$, which is more common in the context of noisy quantum circuits. Rather, we are comparing  $\mathcal{E}_T$ with $\mathcal{E}_{T}\circ \mathcal{D}_{z}$, or equivalently with  $\mathcal{E}_{T}\circ \mathcal{Z}$, where $\mathcal{Z}(\cdot)=Z(\cdot)Z$ is the channel that applies Pauli $Z$ on the input qubit. In particular, note that
\be\label{b11}
\big\|\mathcal{E}_{T}-\mathcal{E}_{T}\circ \mathcal{D}_{z}\big\|_\diamond=\frac{1}{2} \big\|\mathcal{E}_{T}-\mathcal{E}_{T}\circ \mathcal{Z}\big\|_\diamond\  .
\ee

In the light of  Helstrom's theorem \cite{Helstrom}, this result can be understood in terms of the decay of  distinguishability of states: suppose at the input of the tree we have one of the orthogonal states  $\ket{\pm}=(\ket{0}\pm \ket{1})/\sqrt{2}$ with equal probabilities (this corresponds to sending a  bit of classical information in the $X$ basis). Then, by looking at the output state at the tree's leaves, we can successfully distinguish these two cases with the maximum success  probability
\begin{align}
    P_{\text{success}} &= \frac{1}{2} +\frac{||\mathcal{E}_{T}(\ketbra{+}{+})-\mathcal{E}_{T}(\ketbra{-}{-})||_1}{4} \nonumber \\
    &\leq  \frac{1}{2} + \frac{(\sqrt{b_z} |1-2p_z|)^{T}}{\sqrt{2}}\ ,
\end{align}
where $\|\cdot\|_1$ denotes the $l$-1 norm, the first equality follows from Helstrom's  theorem, and the inequality follows from the above proposition by noting that $Z|+\rangle=|-\rangle$ and the $l$-1 norm for any particular state is bounded by the diamond norm.  
 We conclude that for input states corresponding to $X$ eigenstates, the distinguishability of states decays exponentially with the depth $T$ of the tree (the same holds true for $Y$ eigenstates as well). 

\subsection{CSS code trees with Standard Encoders}\label{std css enc sec}

As mentioned in Sec.\ref{css intro} every CSS code has encoders that are standard for both $Z$ and $X$ directions. Therefore, for trees constructed from such encoders, we can apply this bound to both $Z$ and $X$ directions. Let $d_z$ and $d_x$ be the minimum weight of logical $Z$ and $X$ operators, respectively. Then, applying the triangle inequality, we find\footnote{See Eq.(\ref{rt10})} that the trace distance of the outputs of the channel $\mathcal{E}_T$ for an arbitrary input density operator $\rho$ and the maximally-mixed state is bounded by 
\begin{align}\label{stand}
&\big\|\mathcal{E}_T(\rho)-\mathcal{E}_T(\frac{I}{2}) \big\|_1\le\nonumber \\  &{\sqrt{2}} \times \Big([\sqrt{d_z}|1-2p_z|]^{T}+[\sqrt{d_x}  |1-2p_x|]^{T}\Big)\ .
\end{align}
Another useful way of characterizing the error in channel $\mathcal{E}_T$ is in terms of the probability of logical errors. Recall that for CSS codes with standard encoders and independent $Z$ and $X$ errors, the optimal error correction can be performed independently for $Z$ and $X$ errors, and after optimal error correction, the overall channel is
\begin{align}\nonumber
    \mathcal{R}^{\text{opt}}_T  \circ \mathcal{E}_T=\mathcal{Q}_z\circ \mathcal{Q}_x =\mathcal{Q}_x\circ \mathcal{Q}_z \ ,
\end{align}
where $\mathcal{Q}_z(\rho)=q^z_T\rho + (1-q^z_T)Z\rho Z$ is a phase-flip channel, and  $\mathcal{Q}_x(\rho)=q^x_T\rho + (1-q^x_T)X\rho X$ is a bit-flip channel.   
Then, the data-processing inequality for diamond norm distance implies that
\begin{align}
\|\mathcal{Q}_z\circ \mathcal{Q}_x-\mathcal{Q}_z \circ \mathcal{Q}_x\circ\mathcal{D}_z\|_\diamond &= \|\mathcal{R}_T\circ \mathcal{E}_T-\mathcal{R}_T\circ\mathcal{E}_T\circ\mathcal{D}_z\|_\diamond\nonumber\\
&\leq \| \mathcal{E}_T-\mathcal{E}_T\circ\mathcal{D}_z\|_\diamond.
\end{align}
The left-hand side is equal to $|1-2q^z_T|$ (see Appendix \ref{ent depth uncorrelated XZ}) and the right-hand side is  bounded by  Eq.(\ref{mainbound}) in  proposition \ref{Prop 1}. Therefore, 
\be\label{mainbound css}
|1-2q^z_T|\leq {\sqrt{2}} \times \Big[\sqrt{d_z} \times |1-2p_z|\Big]^{T}\ ,
\ee
and a similar bound holds for logical $X$ probability as well. 
We conclude that when  $p_x=p_z=p$, if
\begin{align}\label{b1}
|1-2p|^{2}\times \max\{d_x,d_z\}<1\ ,
\end{align}
then the infinite tree does not transfer any information,  whereas 
if 
\begin{align}\label{EB-cond}
|1-2p|^2\times \min\{d_x,d_z\}<1\ ,
\end{align}
then it 
is entanglement-breaking but it may still transfer classical information in either $Z$ or $X$ input basis. Here, 
\be
d=\min\{d_x,d_z\}
\ee
is the code distance, which according to  the quantum singleton bound  \cite{neilsenandchuang}  satisfies $d \leq (b+1)/2$. Therefore, if $|1-2p|\le \sqrt{2/(b+1)}$ then the  infinite tree does not transmit entanglement.  On the other hand,
using 
 the classical result of \cite{evans2000} discussed in the introduction,  we know that for the repetition code, which is a CSS code, classical information is transmitted  for $|1-2p|> \sqrt{{1}/{b}}$. We conclude that in the noise regime 
 \begin{align}
     \sqrt{\frac{1}{b}}<|1-2p|< \sqrt{\frac{2}{b+1}} \ ,
 \end{align}
there are CSS codes with standard encoders transmitting  classical information to any depth, whereas entanglement can not be transmitted by such encoders to infinite depth.

Note that even if the noise is stronger than the threshold set by Eq.(\ref{EB-cond}), the channel $\mathcal{E}_T$ may still transmit entanglement for a finite depth $T$. More precisely, the single-qubit channel  $\mathcal{R}^{\text{opt}}_T\circ\mathcal{E}_T$  is not necessarily entanglement-breaking. Equivalently,  for a maximally-entangled  state $|\Phi\rangle$ of a pair of qubits, the two-qubit state 
\be\nonumber
(\mathcal{R}^{\text{opt}}_T\circ\mathcal{E}_T)\otimes \text{id}(|\Phi\rangle\langle\Phi|)
\ee
can be entangled for a finite $T$, where $\text{id}$ denotes the identity channel on a \emph{reference} qubit. 
  For concreteness,  assume that the probability of $Z$ errors is above the threshold, i.e., 
 $|1-2p_z|^2\times d_z<1 $. Then, as we show in Appendix \ref{ent depth appendix}, for   
\begin{align}\label{ent depth final}
T> \frac{c}{-\ln(\sqrt{d_z}|1-2p_z|)} \ ,
\end{align}
the channel $\mathcal{E}_T$ is entanglement-breaking, where $c$ is a constant that only depends on the probability of logical $X$ error, namely   
$c=\ln\big(\frac{\sqrt{2}(1-q^x_1)}{q^x_1}\big)$, which is finite for $p_x>0$ (recall that $q^x_1$ is the probability of logical $X$ error for a tree of depth $1$).

\begin{figure}
    \centering
    \includegraphics[width=0.5\textwidth]{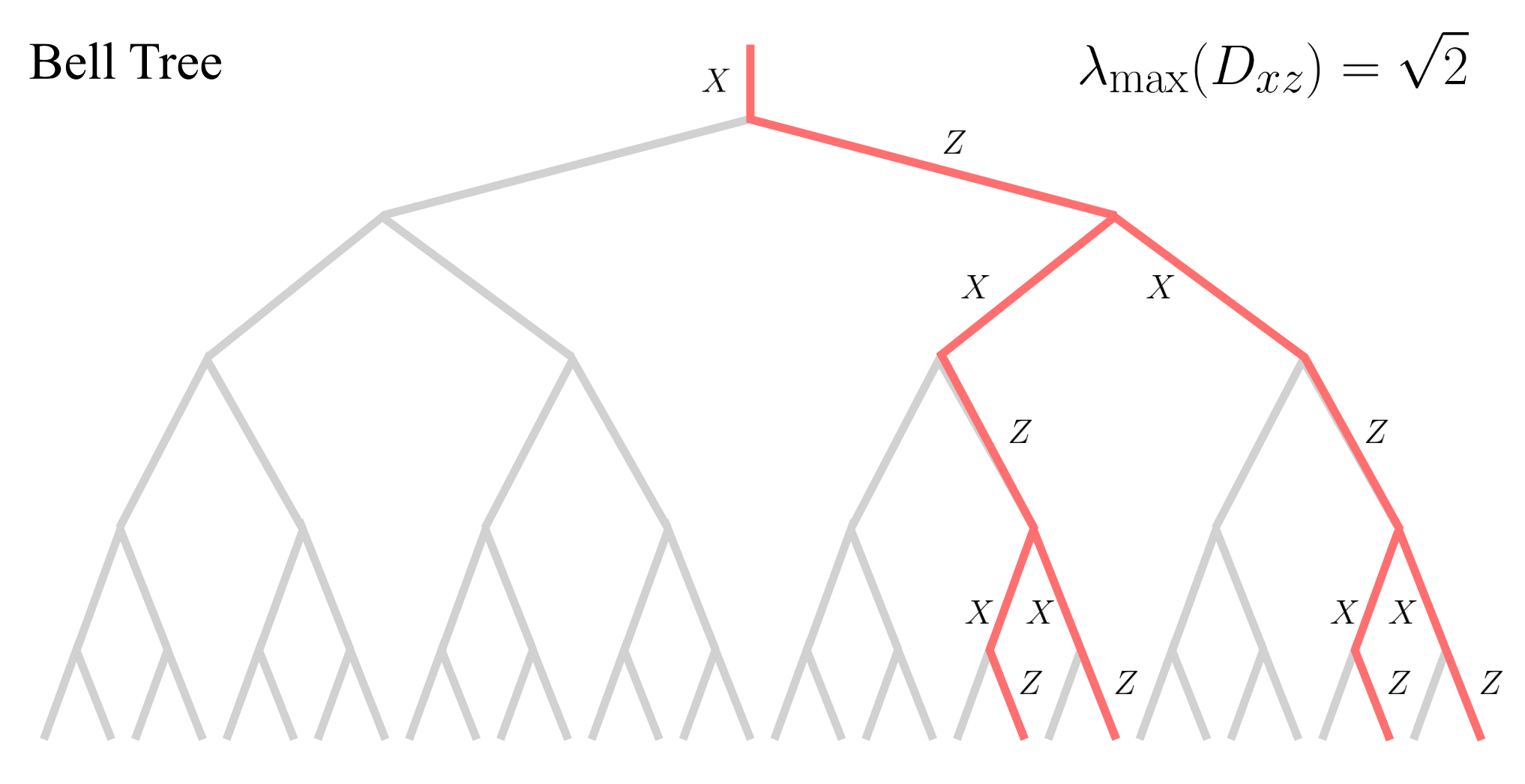}
    \caption{\textbf{A logical subtree of the Bell Tree:} This figure illustrates the $X$ logical subtree associated with the Bell tree, 
 defined in Fig. \ref{BinaryvsBell}.  Recall that  this encoder is an anti-standard encoder of a CSS code, namely the binary repetition code.   
    Clearly, in alternate levels, the tree branches into  either $1$ or $2$ (i.e., $d_x$ or $d_z$) children, resulting in an effective branching factor of $\sqrt{1 \times 2}$.}  \label{Bell}
\end{figure}

\subsection{CSS code trees with `Anti-standard' encoders: Bell Tree}\label{anti standard subsec}
Consider a standard encoder for a CSS code, described by an isometry $V$. Now 
suppose before applying this encoder,  we apply the Hadamard gate $H$ on the qubit. The resulting encoder, described by the isometry $V H$ has logical $Z$ operators 
$Z_L\in \langle i I,X \rangle ^ {\otimes b}$  and the logical $X$ operator  $X_L\in \langle i I,Z \rangle ^ {\otimes b}$ with weights $d_z={\rm weight}(Z_L)$ and $d_x={\rm weight}(X_L)$, respectively. We refer to this type of encoder as an ``anti-standard" encoder.

We show that in this 
case Eq.(\ref{mainbound css}) will be modified to
\be
|1-2q^x_T|\leq {\sqrt{2}} \times \sqrt{d_x^{\lceil T/2 \rceil}\times d_z^{\lfloor T/2 \rfloor} } \times |1-2p|^{T}\ ,
\ee
where for simplicity we have assumed $p=p_x=p_z$  (see lemma \ref{lem}). 
Note that the quantity $d_x^{\lceil T/2 \rceil}\times d_z^{\lfloor T/2 \rfloor} $ is the weight of a logical $X$ operator for the encoder $V_T$.  
A similar bound can be obtained for $q^z_T$ by exchanging  $d_z$ and $d_x$ in the right-hand side of this equation.

 We conclude that for $p$ satisfying 
\begin{align}
|1-2p|^{2}\times \sqrt{d_x\times d_z}<1\ ,
\end{align}
the infinite tree does not transfer any information (note that  this lower bound is the geometric mean of the lower bounds in Eq.(\ref{b1}) and Eq.(\ref{EB-cond}) for standard encoders).

As a simple example, 
consider the repetition code with the isometry $V|c\rangle=|c\rangle^{\otimes b}: c=0,1$, for which $d_x=b$ and $d_z=1$. From the classical result of \cite{evans2000} discussed in the introduction, one can show that the infinite tree constructed from this code does not transmit classical information in  $|0\rangle, |1\rangle$ basis if, and only if 
 \be
 |1-2p_x|^{2}\times b<1\ ,
 \ee
 whereas if $p_z \neq 0,1$,  it does not transfer any information in $X$ (or, $Y$) basis.       
 On the other hand, when we add the Hadamard gate and convert the standard encoder to an anti-standard encoder, no information is transmitted over the infinite tree if
 \be
 |1-2p|^{2} \times  \sqrt{b}<1 \ ,
 \ee
 where we assume $p_x=p_z=p$. Therefore,  adding the Hadamard gates lowers the noise threshold for transferring classical information encoded in the input $Z$ basis. However, as we will show in the example of the Bell tree, which corresponds to $b=2$, this allows transmission of information encoded in the input $X$ basis and entanglement, even for non-zero $p_z>0$.  It is worth noting that 
 in the absence of noise, the channel $\mathcal{V}_2$ obtained from two layers of the anti-standard encoder is indeed the encoder of the generalized Shor code (see Appendix Sec.\ref{subsec: shor rec} for further discussion).

\subsection{Stabilizer Trees with General Encoders}\label{Sec:gen}
In this section, we establish the exponential decay of information for stabilizer trees  constructed from general encoders that are not necessarily standard or anti-standard. However, before presenting the most general case in proposition \ref{prop3}, first, we consider encoders that are constructed only from CNOTs and Hadmarad gates. The main relevant property of such encoders is that they have logical $X$ and  $Z$ operators  $L_x, L_z \in \langle i I,\sigma_z , \sigma_x\rangle ^ {\otimes b}$, such that $L_x$ and $L_z$ do not act as  $\sigma_y$ operator on any qubits (in this section, for convenience, we use $\sigma_x , \sigma_y,$ and $\sigma_z$, to denote Pauli $X$, $Y$, and $Z$ operators, respectively). 
 Then, we show that in this case, a modification of the bound in Eq.(\ref{stand}) holds.

\begin{proposition}\label{prop2}
Let $L_x, L_z \in \langle i I,\sigma_x , \sigma_z\rangle ^ {\otimes b}$
be logical $\sigma_x$ and logical $\sigma_z$ operators for the encoder $V:\mathbb{C}^2\rightarrow(\mathbb{C}^2)^{\otimes b}$, such that $L_w V=V \sigma_w\ : w=x,z$.  
For $v,w\in\{x,z\}$ let $n(v\rightarrow w)$ be the number of qubits on which logical $\sigma_v$ acts as $\sigma_w$, and  
 \be\nonumber
 b_\text{max}=\max\{{\rm weight}(L_x), {\rm weight}(L_z) \}\ ,
 \ee
 be the maximum weight of these two logical operators.  
 Let $\lambda_\text{max}(D_{xz})$ be the maximum eigenvalue (spectral radius) of  the weight transition matrix
\be\label{transition}
D_{xz}=
\left(
\begin{array}{cc}
n(x\rightarrow x)\   &   n(z\rightarrow x)\    \\
  n(x\rightarrow z)\   & n(z\rightarrow z)\  
\end{array}\right)
\ .
\ee
Assume the noise channel $\mathcal{N}$ that defines the tree channel $\mathcal{E}_T$ in Eq.(\ref{noisy tree definition -- no root noise})  is $\mathcal{N}=\mathcal{N}_x\circ \mathcal{N}_z$, where 
$\mathcal{N}_x$ and $\mathcal{N}_z$ are, respectively, the bit-flip and phase-flip channels with probability $p_x=p_z=p$. If 
\be\label{D_xz upper}
(1-2p)^{2} \times \min\{\lambda_\text{max}(D_{xz}) ,  b_\text{max}\} <1 \ ,
\ee
then in the limit $T\rightarrow\infty $, 
the output of channel $\mathcal{E}_T$ becomes independent of the input.  More precisely, for any single-qubit density operator $\rho$, it holds that
\begin{align}\label{xz}
&\big\|\mathcal{E}_T(\rho)-\mathcal{E}_T(\frac{I}{2}) \big\|_1\le {2\sqrt{2}} \times \sqrt{g_{xz}(T)}\times |1-2p|^{T}\ .
\end{align}
Here,
\be
g_{xz}(T)=\max_{w\in\{x,z\}} 
(1,1) D^T_{xz} e_w \le b^T_\text{max}\ ,
\ee
is an upper bound on the weights of logical  $\sigma_x$ and $\sigma_z$ operators at level $T$, where  $e_x=(1 , \ 0 )^{\rm T}$, $e_z=(0 , \ 1 )^{\rm T}$.
\end{proposition}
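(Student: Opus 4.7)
\textbf{Proof plan for Proposition~\ref{prop2}.} The plan is to reduce the output-distinguishability bound to two single-basis dephasing bounds (in $\sigma_x$ and $\sigma_z$) and then prove each of these by induction on $T$, adapting the ``logical subtree'' argument of Proposition~\ref{Prop 1} to the two-color setting in which $X$- and $Z$-positions can interconvert across levels according to the transition matrix $D_{xz}$.

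For the reduction, I would exploit the Bloch-sphere identity $\mathcal{D}_x\circ \mathcal{D}_z(\rho)=\tfrac12\tr(\rho)\,I$. For any single-qubit unit-trace $\rho$ this gives $\mathcal{E}_T(I/2)=\mathcal{E}_T\circ \mathcal{D}_x\circ \mathcal{D}_z(\rho)$; inserting $\mathcal{E}_T\circ \mathcal{D}_z(\rho)$ as an intermediate and applying the triangle inequality together with the contractivity of $\mathcal{D}_z$ under the diamond norm yields
\begin{align*}
\bigl\|\mathcal{E}_T(\rho)-\mathcal{E}_T(I/2)\bigr\|_1
&\le \|\mathcal{E}_T-\mathcal{E}_T\circ \mathcal{D}_z\|_\diamond\\
&\quad + \|\mathcal{E}_T-\mathcal{E}_T\circ \mathcal{D}_x\|_\diamond.
\end{align*}

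For each $w\in\{x,z\}$, the inductive claim to establish is
\[
\|\mathcal{E}_T-\mathcal{E}_T\circ \mathcal{D}_w\|_\diamond \le \sqrt{2}\,\sqrt{(1,1)\,D_{xz}^T\,e_w}\;|1-2p|^{T}.
\]
The recursion is built from the identity $\mathcal{V}\circ \mathcal{D}_w=\mathcal{D}_{L_w}\circ \mathcal{V}$, which follows from $V\sigma_w=L_w V$, together with the fact that single-qubit Pauli noise commutes with conjugation by any Pauli. Combining these with the decomposition $\mathcal{E}_T=\mathcal{E}_{T-1}^{\otimes b}\circ \mathcal{N}^{\otimes b}\circ \mathcal{V}$ gives
\[
\mathcal{E}_T\circ(\mathrm{id}-\mathcal{D}_w) = \bigl(\mathcal{E}_{T-1}^{\otimes b} - \mathcal{E}_{T-1}^{\otimes b}\circ \mathcal{D}_{L_w}\bigr)\circ \mathcal{N}^{\otimes b}\circ \mathcal{V}.
\]
By hypothesis $L_w=\bigotimes_i P_i^{(w)}$ with $P_i^{(w)}\in\{I,\sigma_x,\sigma_z\}$, so exactly $n(w\to x)$ positions are of type $X$ and $n(w\to z)$ of type $Z$. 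Each non-identity position contributes one child to the logical subtree at the next level (of the appropriate color) and picks up a $|1-2p|$ damping factor from the noise. A Hilbert--Schmidt/Cauchy--Schwarz estimate, justified by the fact that the dephasing ``signal'' is effectively supported on a rank-$2$ code subspace at each recursion step, then produces the $\ell^2$ recursion
\[
f_w(T)^2 \le (1-2p)^2\bigl(n(w\to x)\,f_x(T-1)^2 + n(w\to z)\,f_z(T-1)^2\bigr),
\]
with $f_w(T):=\|\mathcal{E}_T-\mathcal{E}_T\circ \mathcal{D}_w\|_\diamond$. This is exactly the linear recursion governed by $(1-2p)^2 D_{xz}$ acting on $(f_x^2,f_z^2)^{\mathrm T}$, and iterating from $T=0$ gives the stated bound.

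Combining the two dephasing bounds through $\sqrt{a}+\sqrt{b}\le 2\sqrt{\max(a,b)}$ and the definition of $g_{xz}(T)$ yields inequality~(\ref{xz}) with the $2\sqrt{2}$ prefactor. The simpler comparison $g_{xz}(T)\le b_{\mathrm{max}}^T$ appearing in~(\ref{D_xz upper}) follows from the fact that every row of $D_{xz}$ sums to at most $b_{\mathrm{max}}$, while the sharper $\lambda_{\mathrm{max}}(D_{xz})^T$ bound follows from the Perron--Frobenius theorem applied to the non-negative matrix $D_{xz}$. \textbf{The main obstacle} is justifying the square root (i.e., the $\ell^2$ rather than $\ell^1$ combination of the $b$ branches): a naive triangle inequality on trace norms yields only the weaker threshold $(1-2p)^2 b_{\mathrm{max}}<1$ and misses the sharper $(1-2p)^2\lambda_{\mathrm{max}}(D_{xz})<1$. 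Obtaining the square root requires the rank-$2$/Hilbert--Schmidt comparison, or equivalently, a reduction to the classical broadcasting problem on the logical subtree via the intermediate dephasings developed in Sec.~\ref{deph tree mapping sec}.
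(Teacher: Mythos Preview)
Your reduction step (the $\mathcal{D}_x\circ\mathcal{D}_z(\rho)=I/2$ identity followed by the triangle inequality) and the role you assign to $D_{xz}$ in tracking how many $X$- and $Z$-type positions appear at each depth both match the paper exactly. The gap is precisely where you flag it: the $\ell^2$ recursion
\[
f_w(T)^2 \le (1-2p)^2\sum_{v\in\{x,z\}} n(w\to v)\,f_v(T-1)^2
\]
is consistent with the answer, but there is no Hilbert--Schmidt/Cauchy--Schwarz argument that establishes it for the diamond norm. The ``rank-$2$ code subspace'' observation does not help: after one encoding layer plus noise the relevant operators are full-rank on $(\mathbb{C}^2)^{\otimes b}$, and the diamond norm does not tensorize across branches in the way the recursion would require. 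A direct induction on $T$ of this form is not how the bound is obtained.

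The paper does not recurse on $T$ at all. It builds the entire depth-$T$ logical subtree at once---the two-color tree you describe---and then uses the hypothesis $p_x=p_z=p$ in a way your sketch does not: on each edge of the logical subtree it \emph{discards} the noise component $\mathcal{N}_x$ or $\mathcal{N}_z$ whose type does \emph{not} match the color of that edge (this is legitimate because extra noise can only decrease $\|\mathcal{E}_T-\mathcal{E}_T\circ\mathcal{D}_w\|_\diamond$, by contractivity). After discarding, every edge carries a flip of probability $p$ in the Pauli matching its color, and a local basis change makes this a single-type tree with bit-flip probability $p$. At that point the argument of Proposition~\ref{Prop 1} converts $\|\mathcal{E}_T-\mathcal{E}_T\circ\mathcal{D}_w\|_\diamond$ into a classical total-variation distance on the logical subtree, and the square root comes from the Evans \emph{et al.}\ bound (Theorem~1.3$'$) for broadcasting on a general tree: $\Delta\le\sqrt{2}\,\sqrt{n_T}\,|1-2p|^T$ with $n_T=\mathrm{weight}(L_w[T])$. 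The matrix $D_{xz}$ enters only afterward, to compute $\mathrm{weight}(L_w[T])=(1,1)D_{xz}^T e_w$. So the alternative you mention in your last sentence---reduction to classical broadcasting on the logical subtree---is the actual proof; develop that, not the recursion.
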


\begin{figure}
    \centering
    \includegraphics[width=0.46\textwidth]{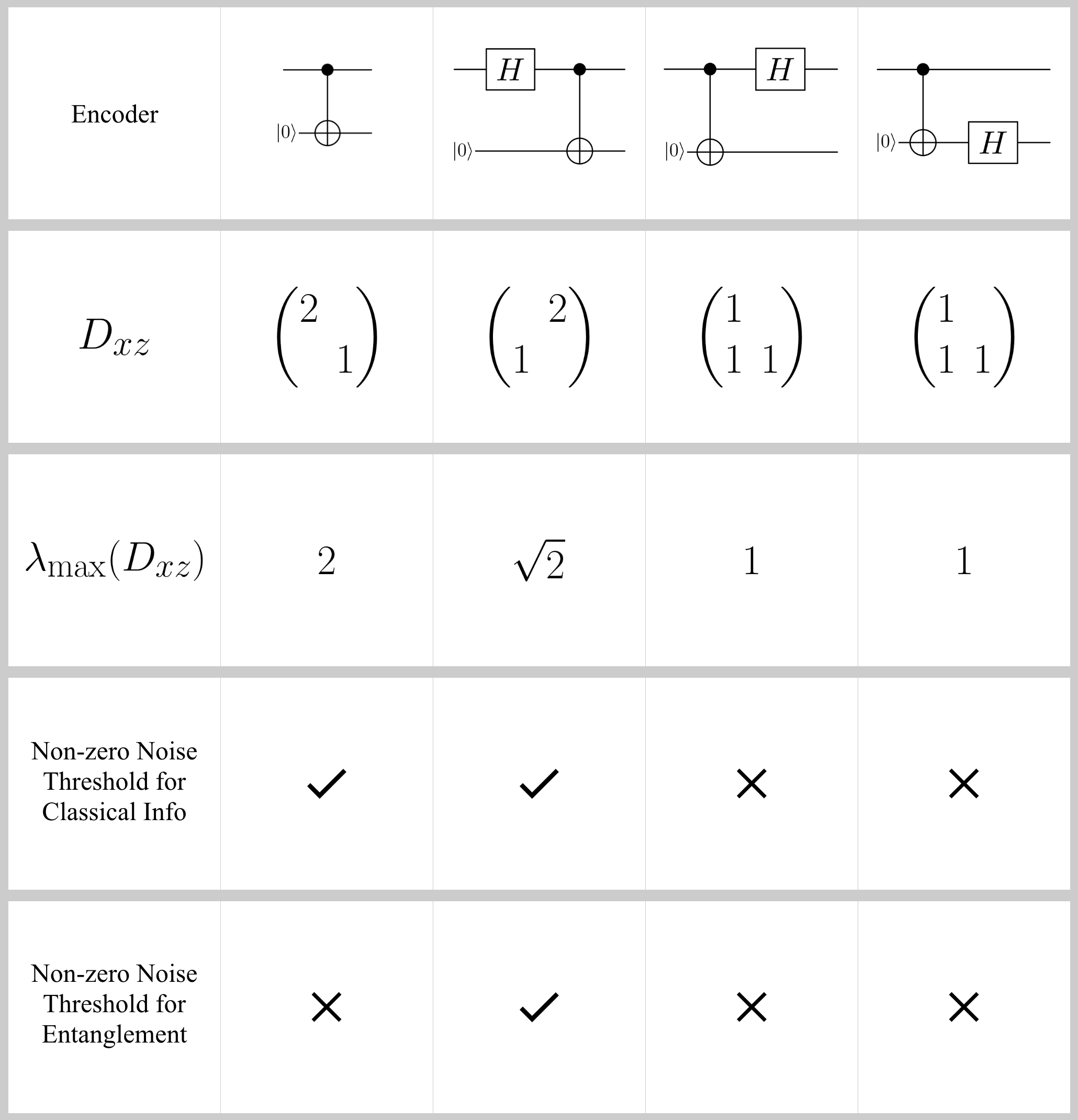}
    \caption{The  encoder of the  binary repetition code and 3 of its variations  obtained by adding a single Hadamard gate to its inputs/outputs. In every circuit, the top qubit is the input data qubit and the bottom qubit is the ancilla initialized in state $\ket{0}$. In addition to these 3 variations, there is a $4^{\rm th}$ variation in which the Hadamard acts on the ancilla before applying the CNOT gate. However, in that case the encoder acts trivially on the input state. For each of these encoders, we indicate their weight transition matrix $D_{xz}$ defined in Eq.(\ref{transition}) and its largest eigenvalue $\lambda_\text{max}(D_{xz})$. Furthermore, we indicate whether there exists a non-zero noise threshold below which the tree transmits classical information and/or entanglement to any depth. 
    Eq.(\ref{D_xz upper}) puts an upper bound on the noise threshold for classical information in terms of  $\lambda_\text{max}(D_{xz})$.  In particular, for $\lambda_\text{max}(D_{xz})=1$ information does not propagate over the infinite tree; this is the case for the $3^{\rm rd}$ and $4^{\rm th}$ encoders. The second column, which is the encoder of the Bell tree in Fig.\ref{BinaryvsBell} yields a non-zero noise threshold for the propagation of  classical information and entanglement   on an infinite tree. 
 It is worth  noting that  the transition matrix $D_{xz}$ is not unique. For instance, for the third encoder one can choose $Z_L=XI$ or $Z_L=IZ$. With $X_L=ZX$, the former choice yields $\lambda_{\rm max}(D_{xz})=(1+\sqrt{5})/2$, whereas for the latter $\lambda_{\rm max}(D_{xz})=1$; we pick the lower value for the tighter upperbound.}
    \label{Fig-Bell-variation}
\end{figure}

Note that the maximum eigenvalue of $D_{xz}$ is 
\be
\lambda_\text{max}(D_{xz})=n_{\text{avg}}+
\sqrt{n_{\text{dif}}^2+n_{\text{cross}}^2}\ ,
\ee
where
\bes
\begin{align}
n_{\text{avg}}&=\frac{1}{2}\big[n(x\rightarrow x)+n(z\rightarrow z)\big]\\ n_{\text{dif}}&=\frac{1}{2}\big[n(x\rightarrow x)-n(z\rightarrow z)\big]\\n_{\text{cross}}&=\sqrt{n(x\rightarrow z)\times n(z\rightarrow x) }\ .
\end{align}
\ees
Roughly speaking, this quantity determines the maximum rate of   growth of the logical subtrees 
in the regime $T\rightarrow\infty$
(see Fig.\ref{restricted_subtree} and Fig.\ref{Bell}). 
To obtain the strongest bound on the decay of information, one should consider the logical $X$ and $Z$ operators for which $\lambda_\text{max}(D_{xz})$ is minimized.

It is worth considering  the two special cases of standard and anti-standard encoders of CSS codes.  For standard encoders, there exist logical operators with  
$$n(z\rightarrow z)=d_z\ , \ \ \ \  n(z\rightarrow x)=0 $$ and 
$$n(x\rightarrow x)=d_x\ ,\ \  \ \  n(x\rightarrow z)=0\ ,$$
which implies 
\be
\lambda_\text{max}(D_{xz})=\frac{1}{2}\big(d_x+d_z\big)+\frac{1}{2}|d_x-d_z|=\max\{d_x,d_z\}\ .
\ee
Next, suppose we add a Hadamard to this encoder to obtain an anti-standard encoder with
$$n(z\rightarrow z)=0\ , \ \   \ \ n(z\rightarrow x)=d_x $$ and 
$$n(x\rightarrow x)=0\ ,\ \  \ \  n(x\rightarrow z)=d_z\ ,$$
which implies 
\be
\lambda_\text{max}(D_{xz})=\sqrt{d_x\times d_z}\ .
\ee
Therefore, the bound in Eq.(\ref{xz}) generalizes the special bounds  we have previously seen in the case of standard and anti-standard encoders.

Finally, we consider the most general stabilizer encoder which can be an arbitrary Clifford unitary. In this case, the logical $\sigma_x$ and $\sigma_z$ operators contain $\sigma_x$, $\sigma_y$, and $\sigma_z$. Then, in this situation, it is more natural to assume a noise model where $X$, $Y$, $Z$ errors happen independently, each with probability $p_x=p_y=p_z=p\le 1/2$. This corresponds to the depolarizing channel 
 \be\label{depol}
\mathcal{M}_\epsilon(\rho)=\mathcal{N}_x\circ\mathcal{N}_y\circ\mathcal{N}_z(\rho)=(1-\epsilon)\rho+ \epsilon \frac{I}{2}\ ,
 \ee
where $\epsilon=4p(1-p)$,    and  $\mathcal{N}_w(\rho)=p\sigma_w \rho \sigma_w+(1-p)\rho$ for $w=x,y,z$.

\begin{proposition}\label{prop3}
Let $L_x, L_y, L_z \in \langle i I,\sigma_x , \sigma_y, \sigma_z\rangle ^ {\otimes b}$
be logical $\sigma_x$, $\sigma_y$, and  $\sigma_z$ operators for the encoder $V:\mathbb{C}^2\rightarrow(\mathbb{C}^2)^{\otimes b}$, such that $L_w V=V \sigma_w\ : w=x, y ,z$.  
Let $n(v\rightarrow w)$ be the number of qubits on which logical $\sigma_v$ acts as $\sigma_w$
and consider the corresponding  $3\times 3$ matrix
\be
D_{xyz}=
\left(
\begin{array}{ccc}
n(x\rightarrow x)\   & n(y\rightarrow x)\    &  n(z\rightarrow x)\   \\n(x\rightarrow y)\   & n(y\rightarrow y)\    &  n(z\rightarrow y)\   \\
  n(x\rightarrow z)\   & n(y\rightarrow z)\    &  n(z\rightarrow z)\  
\end{array}\right)
\ .
\ee
Let $\lambda_\text{max}(D_{xyz})$ be the spectral radius of $D_{xyz}$, i.e., the largest absolute value of the eigenvalues of $D_{xyz}$, and 
\be\nonumber
 b_\text{max}=\max\{{\rm weight}(L_x), {\rm weight}(L_y),  {\rm weight}(L_z) \}\ ,
 \ee
be the maximum weight of these logical operators. 
Assume the noise channel $\mathcal{N}$ that defines the tree channel $\mathcal{E}_T$ in Eq.(\ref{noisy tree definition -- no root noise}) is the depolarizing channel $\mathcal{M}_\epsilon$ in Eq.(\ref{depol}).   If 
\be \label{perc_like}
(1-\epsilon)\times \min\{\lambda_\text{max}(D_{xyz}) ,  b_\text{max}\}<1\ ,
\ee 
then in the limit $T\rightarrow\infty $, 
the output of channel $\mathcal{E}_T$ becomes independent of the input. More precisely, for any single-qubit density operator $\rho$, it holds that
\begin{align}\label{xyz}
&\big\|\mathcal{E}_T(\rho)-\mathcal{E}_T(\frac{I}{2}) \big\|_1\le {2\sqrt{2}} \times \sqrt{g_{xyz}(T)\times (1-\epsilon)^T} \ .
\end{align}
Here, 
\be
g_{xyz}(T)=\max_{w\in \{x,y,z\}}\begin{pmatrix}
    1 & 
    1 &
    1
    \end{pmatrix} D^T_{xyz} e_w\le  b_\text{max}^T\ ,
\ee
is an upper bound on the weights of logical operators in level $T$, 
where $e_x=(1, 0 , 0)^{\rm T}$, and $e_y$ and $e_z$, are defined in a similar fashion.
\end{proposition}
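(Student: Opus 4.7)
The plan is to adapt the logical-subtree argument of Proposition~\ref{prop2} to the three-Pauli setting induced by a general Clifford encoder and depolarizing noise. The reduction step mirrors the earlier cases: writing $\rho = I/2 + \tfrac{1}{2}\vec r \cdot \vec \sigma$ with $|\vec r|\le 1$, it suffices by linearity to bound $\|\mathcal{E}_T(\sigma_w)\|_1$ for each $w \in \{x,y,z\}$, and then combine the three estimates via Cauchy-Schwarz against $|\vec r|\le 1$ to recover the $2\sqrt 2$ prefactor in \eqref{xyz}. The two algebraic identities driving the recursion are (i) for a Clifford encoder, $\mathcal{V}(\sigma_w) = V\sigma_w V^\dagger = L_w V V^\dagger = 2^{-(b-1)}\sum_{g\in\mathcal{S}}L_w g$, which expands the encoded Pauli as a uniform superposition over the logical-$\sigma_w$ coset of the stabilizer group $\mathcal{S}$; and (ii) for depolarizing noise, $\mathcal{M}_\epsilon(\sigma_v)=(1-\epsilon)\sigma_v$ for every $v\in\{x,y,z\}$, so that $\mathcal{M}_\epsilon^{\otimes b}$ damps any Pauli string by $(1-\epsilon)^{\text{weight}}$.

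Iterating these two facts level by level unfolds $\mathcal{E}_T(\sigma_w)$ into a weighted sum of Pauli strings on $b^T$ qubits lying in the logical-$\sigma_w$ coset of the concatenated stabilizer, where the coefficient of each string is a product of damping factors accumulated across the $T$ intermediate noise layers. To control the combinatorial growth, I would set up the analogous branching process from Proposition~\ref{prop2} but now with three Pauli types: a qubit carrying Pauli $\sigma_v$ at one level spawns $n(v\to u)$ qubits of type $\sigma_u$ in the next, according to the Pauli content of $L_v$. This is precisely the branching process whose transition matrix is $D_{xyz}$, so the number of type-$u$ qubits at depth $T$ descending from a type-$w$ qubit at the root is $(D_{xyz}^T)_{u,w}$, and the total weight summed over types is bounded by $g_{xyz}(T)$; by Perron--Frobenius this grows like $\lambda_{\max}(D_{xyz})^T$ up to polynomial prefactors, yielding the criterion \eqref{perc_like} for exponential decay.

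To turn this operator expansion into the claimed trace-norm bound, I would follow the Hilbert--Schmidt route used in Proposition~\ref{prop2}: by orthogonality of distinct Pauli strings, $\|\mathcal{E}_T(\sigma_w)\|_2^2$ reduces to a sum over surviving strings in the logical coset, each contributing a squared damping $(1-\epsilon)^{2\sum_k w_k}$, where $w_k$ is the intermediate weight at level $k$. Since every non-trivial branch of the logical subtree carries at least one non-identity Pauli per level, $\sum_k w_k \ge T$, giving an overall damping of at least $(1-\epsilon)^T$ per surviving string, while the multiplicity of such strings is bounded by $g_{xyz}(T)$. Converting the 2-norm to the 1-norm via a Cauchy--Schwarz estimate that uses an effective dimension tied to the logical subtree—i.e., a witness supported on the two-dimensional logical subspace of the concatenated code—produces the square root $\sqrt{g_{xyz}(T)(1-\epsilon)^T}$ and the $2\sqrt 2$ prefactor in \eqref{xyz}. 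The alternative criterion using $b_{\max}$ in $\min\{\lambda_{\max}(D_{xyz}),b_{\max}\}$ follows from the crude bound $g_{xyz}(T) \le b_{\max}^T$ obtained by estimating each level's branching by the maximum weight of a single logical operator.

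The hard part will be to correctly execute the 2-norm-to-1-norm conversion in a way that produces the factor $\sqrt{g_{xyz}(T)}$ rather than the trivial dimension factor $\sqrt{2^{b^T}}$: this requires that the relevant Cauchy--Schwarz inequality uses an effective dimension of order $g_{xyz}(T)$, tied to the number of surviving branches of the logical subtree, rather than the ambient Hilbert-space dimension. A related subtlety is the freedom to choose the three logical representatives $L_x,L_y,L_z$ jointly so as to minimize $\lambda_{\max}(D_{xyz})$, and one must verify that such a minimizing choice is always available and that the sum over stabilizer representatives $g\in\mathcal{S}$ in $\mathcal{V}(\sigma_w)$ does not produce significant cancellations beyond what the naive branching estimate captures. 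The essential new ingredient beyond Proposition~\ref{prop2} is the full three-way Pauli mixing, where a $\sigma_x$ at one level can spawn $\sigma_y$'s and $\sigma_z$'s at the next via a single logical operator, so that $D_{xyz}$ is genuinely $3\times 3$ and its spectral radius need not coincide with any single diagonal entry; once this branching recursion is correctly set up, the remainder of the argument should parallel Proposition~\ref{prop2} with only minor modifications.
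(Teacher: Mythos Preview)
Your proposal has a genuine gap that you partially anticipate but do not resolve. The expansion $\mathcal{V}(\sigma_w) = 2^{-(b-1)}\sum_{g\in\mathcal{S}} L_w g$ is correct, but iterating it over $T$ levels produces $2^{b^T-1}$ Pauli strings in the logical coset, not $g_{xyz}(T)$ of them. Your claim that ``the multiplicity of such strings is bounded by $g_{xyz}(T)$'' conflates two different objects: $g_{xyz}(T)$ is the \emph{weight} of the single recursively-built logical operator $L_w[T]$, whereas the stabilizer-coset expansion has exponentially many terms. With the correct multiplicity, the crude conversion $\|A\|_1\le\sqrt{2^{b^T}}\,\|A\|_2$ yields a bound that blows up in the ambient dimension and is useless. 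Your suggested fix---an ``effective dimension tied to the logical subtree'' via a witness on the two-dimensional logical subspace---is not obviously available: after $T$ layers of depolarizing noise the operator $\mathcal{E}_T(\sigma_w)$ is no longer low-rank or supported near the code space, so there is no evident Cauchy--Schwarz argument that trades the ambient $2^{b^T}$ for $g_{xyz}(T)$.

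The paper's route is different and sidesteps this problem entirely. Rather than expanding over all coset representatives, it fixes the \emph{single} logical operator $L_w[T]$ built by recursive concatenation of $L_w$, bounds $\|\mathcal{E}_T-\mathcal{E}_T\circ\mathcal{D}_w\|_\diamond$ by rewriting $\mathcal{E}_T\circ\mathcal{D}_w$ in terms of the action of $L_w[T]$ on the leaves, and then uses \emph{contractivity of the diamond norm} to discard, on each edge of the logical subtree, the Pauli-noise components whose type does not match the Pauli carried by $L_w[T]$ on that edge. After this pruning the problem collapses to a purely classical total-variation distance on the logical subtree (up to a local reference-frame change on each edge), to which Theorem~1.3$'$ of Evans et al.\ applies directly and produces the $\sqrt{\text{weight}(L_w[T])}$ scaling with no norm conversion needed. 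Incidentally, Proposition~\ref{prop2} is proved exactly the same way---there is no ``Hilbert--Schmidt route'' in the paper. The $D_{xyz}$ branching recursion you set up is precisely what the paper uses to compute $\text{weight}(L_w[T])$, so that part of your outline is on target; what is missing is the diamond-norm reduction to the classical tree problem in place of a direct operator-norm estimate.
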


To prove this result, we first show the following lemma, which is a generalization of proposition \ref{Prop 1}, and is  of independent interest. Here, $\mathcal{D}_w$ is the single-qubit channel that fully dephases the input qubit in the eigenbasis of $\sigma_w$, i.e., 
$\mathcal{D}_w(\rho)=(\rho+\sigma_w\rho \sigma_w)/2$.

\begin{lemma}\label{lem}
Let ${L}_w[T]\in \langle iI, \sigma_x, \sigma_y, \sigma_z\rangle^{\otimes b^T}$ be a logical $\sigma_w$ operator, such that $ {L}_w[T] V_T=V_T\sigma_w$.  Let 
$\mathcal{N}$ be the noise channel  that defines the tree channel $\mathcal{E}_T$ in Eq.(\ref{noisy tree definition -- no root noise}). Then,
\be
\big\|\mathcal{E}_{T}-\mathcal{E}_{T}\circ \mathcal{D}_{w}\big\|^2_\diamond\le 2\times r(T)  \times {\rm weight} ({L}_w[T]) \ ,
\ee  
where 
\begin{itemize}
\item[(i)] $w=x,y,z$ and  $r(T)=(1-\epsilon)^{T}$,  provided that $\mathcal{N}$ is the depolarizing channel $\mathcal{M}_\epsilon$ in Eq.(\ref{depol}).
\item[(ii)] $w=x, z$, and  
 $r(T)=(1-2 p)^{2 T}$, provided that $\mathcal{N}=\mathcal{N}_z\circ \mathcal{N}_x$,  with the probability of bit-flip and phase-flip errors $p_x=p_z=p$,  and  provided that there exists a sequence of operators
${L}_w[0]\rightarrow L_w[1]\rightarrow \cdots \rightarrow  L_w[T]$ 
 such that $L_w[0]=\sigma_w$, $L_w[t]\in  \langle i I,\sigma_x , \sigma_z\rangle ^ {\otimes b^{t}}$, and  
${L}_w[t+1]V^{\otimes b^t}= V^{\otimes b^t}{L}_w[t]$ for all $t\geq0$.
\end{itemize}
\end{lemma}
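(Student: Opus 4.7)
The plan is to prove the lemma by induction on $T$. The base case $T=0$ is immediate: $\mathcal{E}_0 = \mathrm{id}$ and $L_w[0] = \sigma_w$ has weight one, so the inequality reduces to $\|\mathrm{id}-\mathcal{D}_w\|_\diamond^2 \le 2$, verified by the direct computation $\|\mathrm{id}-\mathcal{D}_w\|_\diamond = 1$.

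For the inductive step, I would exploit three structural facts of the noisy stabilizer tree. First, the recursion $\mathcal{E}_T = \widetilde{\mathcal{E}}_{T-1}^{\otimes b}\circ\mathcal{V}$ with $\widetilde{\mathcal{E}}_{T-1} = \mathcal{E}_{T-1}\circ\mathcal{N}$. Second, the logical-operator identity $\mathcal{V}\circ\mathcal{W} = \mathcal{L}_w[1]\circ\mathcal{V}$, the superoperator form of $L_w[1]V = V\sigma_w$. Third, the commutation $\mathcal{N}\circ\mathcal{P} = \mathcal{P}\circ\mathcal{N}$ for any Pauli conjugation $\mathcal{P}$, since Pauli noise is a convex mixture of Pauli conjugations. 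Combining these,
\[
\mathcal{E}_T - \mathcal{E}_T\circ\mathcal{W} = \big(\widetilde{\mathcal{E}}_{T-1}^{\otimes b} - \widetilde{\mathcal{E}}_{T-1}^{\otimes b}\circ\mathcal{L}_w[1]\big)\circ\mathcal{V},
\]
and since $\mathcal{V}$ is CPTP I can drop it when taking the diamond norm.

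Next, I would decompose $L_w[1] = \bigotimes_{j=1}^b L^{(j)}$; subtrees with $L^{(j)} = I$ cancel, reducing the problem to the index set $S = \{j : L^{(j)}\neq I\}$. Each $L^{(j)}$ has a definite Pauli type $w_j$, and by iterating $L_w[t+1]V^{\otimes b^t} = V^{\otimes b^t}L_w[t]$ one obtains a valid logical-$w_j$ operator $L^{(j)}_{w_j}[T-1]$ in each subtree, so the inductive hypothesis applies to each one. The heart of the argument is a Pythagorean-type reduction of the form
\[
\big\|\widetilde{\mathcal{E}}_{T-1}^{\otimes|S|} - \widetilde{\mathcal{E}}_{T-1}^{\otimes|S|}\circ\bigotimes_{k\in S}\mathcal{L}^{(k)}\big\|_\diamond^2 \le r^*\sum_{k\in S}\big\|\mathcal{E}_{T-1} - \mathcal{E}_{T-1}\circ\mathcal{L}^{(k)}\big\|_\diamond^2,
\]
with a per-level contraction factor $r^* = (1-2p)^2$ in case (ii) and $r^* = 1-\epsilon$ in case (i), extracted from the outermost $\mathcal{N}$ acting on Pauli sectors anti-commuting with each $L^{(k)}$. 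Applying the inductive hypothesis to each summand and invoking additivity of the weight, $\mathrm{weight}(L_w[T]) = \sum_{k\in S}\mathrm{weight}(L^{(k)}_{w_k}[T-1])$, closes the recursion as $\|\mathcal{E}_T - \mathcal{E}_T\circ\mathcal{D}_w\|_\diamond^2 \le r^*\cdot 2r(T-1)\cdot\mathrm{weight}(L_w[T]) = 2r(T)\cdot\mathrm{weight}(L_w[T])$.

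The main obstacle is justifying the Pythagorean-type bound in the key step. A naive triangle inequality on the diamond norm aggregates subtree contributions linearly, and squaring with Cauchy--Schwarz introduces a spurious factor of $|S|$ that defeats the bound. I expect the argument to pass to the Choi or Pauli-transfer-matrix representation, where (for our Pauli-covariant difference channels) maximally entangled inputs are optimal and the distinct anti-commuting Pauli sectors associated with different $k\in S$ are Hilbert--Schmidt orthogonal. This orthogonality should permit the squared diamond norms on distinct subtrees to add quadratically rather than combining linearly, while the outermost noise layer contributes the multiplicative factor $r^*$. Carefully tracking this factor through mixed Pauli types in $L^{(k)}$---in particular the $\sigma_y$ components that can appear in case (i)---is the most delicate part of the argument.
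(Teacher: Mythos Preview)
Your inductive approach is genuinely different from the paper's, and the part you flag as the main obstacle---the Pythagorean-type bound
\[
\Big\|\widetilde{\mathcal{E}}_{T-1}^{\otimes|S|} - \widetilde{\mathcal{E}}_{T-1}^{\otimes|S|}\circ\bigotimes_{k\in S}\mathcal{L}^{(k)}\Big\|_\diamond^2 \le r^*\sum_{k\in S}\big\|\mathcal{E}_{T-1} - \mathcal{E}_{T-1}\circ\mathcal{L}^{(k)}\big\|_\diamond^2
\]
---is in fact the entire difficulty, and you have not closed it. The diamond norm is an $\ell_1$-type quantity, so the Hilbert--Schmidt orthogonality you invoke does not by itself convert a tensor product of anti-commuting Pauli sectors into a sum of squares; a naive telescoping plus Cauchy--Schwarz picks up the factor $|S|$ exactly as you note. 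What is actually needed here is a $\chi^2$-type tensorization, and that is precisely the content of Theorem~1.3$'$ of Evans et al.\ for classical trees. Your sketch does not supply a quantum substitute for that step.

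The paper avoids the obstacle altogether by first reducing the quantum problem to a classical one. Using data-processing monotonicity of the diamond norm, it discards on each edge the Pauli-noise components that do not match the type of the logical operator on that edge (in case (i) this leaves one of $\mathcal{N}_x,\mathcal{N}_y,\mathcal{N}_z$; in case (ii) one of $\mathcal{N}_x,\mathcal{N}_z$). After this pruning, the logical subtree carries a single Pauli type on every edge and is isomorphic---up to a frame change per edge---to a classical broadcasting tree with bit-flip probability $p=(1-\sqrt{1-\epsilon})/2$ (case (i)) or $p$ (case (ii)). The diamond-norm difference then equals a classical total variation distance, to which Theorem~1.3$'$ of Evans et al.\ applies directly, yielding the factor $\sqrt{\mathrm{weight}(L_w[T])}$ in place of $\sqrt{b_z^T}$. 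In short, the paper outsources the Pythagorean step to the existing classical result rather than reproving it in the quantum setting; your induction would need either to cite that result at the key step or to re-derive it, neither of which your proposal does.
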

In particular, in the first case of this lemma, if there exists a sequence of logical operators ${L}_w[T]$ such that
\be
\lim_{T\rightarrow \infty} (1-\epsilon)^{T} \times {\rm weight}({L}_w[T])=0\ ,
\ee
then the infinite tree is entanglement-breaking and might only transfer classical information encoded in the eigenbasis of $\sigma_w$. Note that this lemma does not assume the restriction that all encoders in the tree must be the same. Indeed, it can also be extended to stabilizer trees where the Clifford encoders are sampled \textit{randomly}.

\subsection*{Comparison with bond percolation bound on noise threshold}
The depolarizing channel $\mathcal{M}_\epsilon(\rho)=(1-\epsilon)\rho+ \epsilon \frac{I}{2}$ has a simple probabilistic interpretation: it discards the input qubit with probability $\epsilon$, and replaces it with the maximally mixed state.  Therefore,
in a quantum tree with this  depolarizing noise channel on every edge, with probability $\epsilon$, \textit{no} information about the root is transmitted through each edge. This immediately allows us to apply the standard results from \textit{percolation} theory \cite{LyonsPerestextbook}, where each edge (or `bond') of the tree is independently removed from the tree with probability $\epsilon$. It is well-known that for a tree with branching number $b$, if $(1-\epsilon)\times b<1$,  the probability of the existence of a continuous connected path from the root to the leaves in an infinite tree will approach zero \cite{evans2000, Evans-Schulman}.
Note that the existence of such path  from the root to leaves is only a necessary, but not a sufficient requirement for non-zero information propagation down to the leaves of a tree network.  

We conclude that a quantum tree with depolarizing channel $\mathcal{M}_\epsilon$ on each edge will not transmit information to infinite depth when $(1-\epsilon)\times b<1$. Thus, the bond percolation threshold upperbound, i.e., $1-1/b$, is an upperbound for the noise threshold for the propagation of information over infinite depth for any noisy quantum tree (not necessarily stabilizer trees), including trees where encoders are sampled randomly.  
  
It is worth comparing this bound with the bound in Eq.(\ref{perc_like}), which is established for stabilizer trees: in general, $\min\{\lambda_\text{max}(D_{xyz}) ,  b_\text{max}\}\leq b$. This means the latter bound will be  stronger than the percolation bound (that is, it restricts the noise level $\epsilon$ that allows transmission of information over infinite tree to a smaller range $\epsilon \le 1-{1}/{\min\{\lambda_\text{max}(D_{xyz}) ,  b_\text{max}\}}$).

\subsection{Proof of Proposition \ref{Prop 1}}\label{Sec:mainbound}

First, recall that the effect of noise inside the tree is equivalent to correlated Pauli noise on the leaves. That is
\be\label{Eqt}
\mathcal{E}_T=\mathcal{F}_T\circ \mathcal{V}_T\ ,
\ee
 where $\mathcal{V}_T$ defined in Eq.(\ref{noisless isometry}) is the ideal encoder of the concatenated code associated with the tree, and $\mathcal{F}_T$ are correlated Pauli errors on the leaves. 
 In the following, we prove the proposition  for the special case of $p_x=0$, i.e., when there are no $X$ errors. Clearly, in the presence of $X$ errors, due to the 
 contractivity of the diamond norm under data processing, the upper bound remains valid. Then, for the standard encoders, $\mathcal{F}_T$ can be thought of as a channel that randomly applies Pauli $Z$ operators on some qubits. To express this we use the following notation: for any bit string $\textbf{z}=z_1\cdots z_n\in\{0,1\}^n$ define
 \be
\textbf{Z}^{\bf{z}}:=Z^{z_1}\otimes \cdots \otimes Z^{z_{n}}\ ,
\ee
which means a Pauli $Z$ is applied on those qubits where $\bf{z}$ takes value 1. Then, for the  standard encoders, $\mathcal{F}_T$ can be written as
 \be\label{def1}
\mathcal{F}_T(\cdot)=\sum_{\textbf{z}\in\{0,1\}^{b^T}} f_T(\textbf{z})\ \textbf{Z}^{\textbf{z}} (\cdot) \textbf{Z}^{\textbf{z}}\ ,
\ee
where $f_T$ is a probability distribution over $\{0,1\}^{b^T}$.  It is worth emphasizing that  because the output of $\mathcal{V}_T$ is restricted to a 2D code subspace, 
the channel
$\mathcal{F}_T$, and hence the  probability distribution $f_T$, 
are not uniquely defined (as we further explain below, this is related to the freedom of choosing logical operators for a code).

The goal is to bound
$\big\|\mathcal{E}_{T}-\mathcal{E}_{T}\circ \mathcal{D}_{z}\big\|_\diamond=\frac{1}{2} \big\|\mathcal{E}_{T}-\mathcal{E}_{T}\circ \mathcal{Z}\big\|_\diamond$,  
 where $\mathcal{Z}(\cdot)=Z(\cdot)Z$ is the channel that applies Pauli $Z$ on the input qubit.
We note that
 \be
\mathcal{E}_{T}\circ \mathcal{Z}=\mathcal{F}_T\circ \mathcal{V}_{T}\circ \mathcal{Z}=\mathcal{F}_T\circ \mathcal{Z}_L[T]\circ\mathcal{V}_{T} \ ,
 \ee
where $\mathcal{Z}_L[T]$ is the quantum channel corresponding to a logical $Z$ operator at the leaves. More precisely, 
\be
\mathcal{Z}_L[T](\cdot)=\textbf{Z}^{\textbf{z}_L[T]}(\cdot)\textbf{Z}^{\textbf{z}_L[T]}\ ,
\ee
where $\textbf{z}_L[T]\in \{0,1\}^{b^T}$ is a bit string of length $b^T$ satisfying
\be\label{logic}
\eta V_T Z=\textbf{Z}^{\textbf{z}_L[T]}V_T \ ,
\ee
where $\eta\in\{\pm 1, \pm i\}$ is a global phase. In other words, $\textbf{z}_L[T]$
determines on which qubits  the logical operator $\eta\textbf{Z}^{\textbf{z}_L[T]}$ acts as the identity operator and on which qubits it acts as  Pauli $Z$.

We conclude that
\begin{align}
 \big\|\mathcal{E}_{T}-\mathcal{E}_{T}\circ \mathcal{Z}\big\|_\diamond \nonumber &= \big\|\mathcal{F}_T\circ \mathcal{V}_{T}-\mathcal{F}_T\circ\mathcal{Z}_L[T]\circ \mathcal{V}_{T}\big\|_\diamond \nonumber\\ &\le \big\|\mathcal{F}_T-\mathcal{F}_T\circ\mathcal{Z}_L[T]\big\|_\diamond \ ,
\end{align}
 where the inequality follows from the contractivity of the diamond norm. 

Next, using Eq.(\ref{def1}) we note that 
\be
\mathcal{F}_T\circ\mathcal{Z}_L[T](\cdot)=\sum_{\textbf{z}\in\{0,1\}^{N}} f_T(\textbf{z}+\textbf{z}_{L}[T])\ \textbf{Z}^{\textbf{z}} (\cdot) \textbf{Z}^{\textbf{z}}\ ,
\ee
where $\textbf{z}+\textbf{z}_{L}[T]$ is the bitwise addition of $\textbf{z}$ and $\textbf{z}_{L}[T]$ mod 2. Then, using properties of the diamond norm, this  implies
\begin{align}
&\big\|\mathcal{F}_T-\mathcal{Z}_L[T]\circ\mathcal{F}_T\big\|_\diamond\nonumber\\ &=\big\|\sum_{\textbf{z}\in\{0,1\}^{N}} \big[f_T(\textbf{z})- f_T(\textbf{z}+\textbf{z}_{L}[T])\big]\ \textbf{Z}^{\textbf{z}} (\cdot) \textbf{Z}^{\textbf{z}}\big\|_\diamond
\nonumber \\ &=
\sum_{\textbf{z}\in\{0,1\}^{N}} \big|f_T(\textbf{z}+\textbf{z}_{L}[T])-f_T(\textbf{z})\big|\ ,
\end{align}
where the last step is explained in Appendix \ref{appendix: diamond pauli} (this can be seen, e.g., using  the triangle inequality for the diamond norm, which implies the left-hand side is less than or equal to the right-hand, together with the fact that the equality is achieved when all the qubits are prepared in state $|+\rangle$). 

Putting everything together we conclude that 
\begin{align}\label{52}
\big\|\mathcal{E}_{T}-\mathcal{E}_{T}\circ \mathcal{D}_{z}\big\|_\diamond&\le \frac{1}{2}\big\|\mathcal{F}_T-\mathcal{Z}_L[T]\circ\mathcal{F}_T\big\|_\diamond \\ &=\frac{1}{2}\sum_{\textbf{z}\in\{0,1\}^{N}} \big|f_T(\textbf{z}+\textbf{z}_{L}[T])-f_T(\textbf{z})\big|\ .\nonumber
\end{align}
The second line is the total variation distance between the distribution 
$f_T$ and the distribution obtained from $f_T$ by flipping all those bits where $\textbf{z}_{L}[T]$ is 1. As mentioned before, neither  $f_T$ nor $\textbf{z}_{L}[T]$ are unique. One is free to choose any distribution $f_T$ such that the corresponding channel $\mathcal{F}_T$  defined in Eq.(\ref{def1}) satisfies  Eq.(\ref{Eqt}). Similarly, one can choose any logical operator 
$\textbf{Z}^{\textbf{z}_L[T]}$ that satisfies  Eq.(\ref{logic}). But, clearly, given the structure of the tree, it is natural to choose the logical operator 
$\textbf{Z}^{\textbf{z}_L[T]}$  and the distribution $f_T$ in a way that is defined by the recursive application of the same rule on all nodes of the tree.

In particular, let $\eta \textbf{Z}^{\textbf{z}_L[1]}$ be a logical $Z$ operator for the original code that is applied at each node of the tree, i.e.,   
\be
VZ=\eta \textbf{Z}^{\textbf{z}_L[1]}V\ ,
\ee
where $\eta \in\{\pm1,\pm i\}$ is a global phase. Then, we can define a chain of bit strings
\be\label{seq}
1 \rightarrow \textbf{z}_L[1]\rightarrow \cdots  \rightarrow \textbf{z}_L[T]\ ,
\ee
where $\textbf{z}_L[t+1]\in\{0,1\}^{b^{t+1}}$ is 
obtained from $\textbf{z}_L[t]\in\{0,1\}^{b^{t}}$ by replacing any 0 in $\textbf{z}_L[t]$  with $0^b$ and any 1 with $\textbf{z}_L[1]$.

Then, for all $t$ it holds that
\be\label{atr}
V_t Z=\eta  \textbf{Z}^{\mathbf{z}_L[t]} V_t\ ,
\ee
for some $\eta\in\{\pm 1, \pm i\}$.
For each level $t\le T$,   consider the nodes  where $\textbf{z}_L[t]$ takes value 1. The set of such nodes 
 defines a logical subtree.  For instance, in the example in Figure \ref{restricted_subtree}, this corresponds to the highlighted 3-ary subtree.

Next, we discuss  the distribution $f_T$. Recall that this distribution should be chosen such that Eq.(\ref{Eqt}) is satisfied by $\mathcal{F}_T$. 
We choose to define this distribution recursively in  a way that is consistent  with the above chain of logical operators. 
  In particular, we define $f_T$ via the recursive equation
\begin{align}\label{tr}
\mathcal{F}_{t+1}(\cdot)&=\sum_{\textbf{z}\in\{0,1\}^{b^{t+1}}} f_{t+1}(\textbf{z})\ \textbf{Z}^{\textbf{z}} (\cdot) \textbf{Z}^{\textbf{z}}\ ,
\\ &=\mathcal{N}_z^{\otimes b^{t+1}}\circ \sum_{\textbf{w}\in\{0,1\}^{b^{t}}} f_t(\textbf{w})\ \textbf{Z}_L^{\textbf{w}}  (\cdot) \textbf{Z}_L^{\textbf{w}}\ \nonumber ,
\end{align}
where $f_0(0)=1$ and $f_0(1)=0$, and for any $\textbf{w}=w_1\cdots w_t$ $\textbf{Z}^\textbf{w}_L$ denotes
\begin{align}
\textbf{Z}_L^{\textbf{w}}&=\textbf{Z}^{w_1\times \textbf{z}_L[1]}\otimes \cdots \cdots \otimes \textbf{Z}^{w_t\times \textbf{z}_L[1]}
=\textbf{Z}^{\textbf{w}'}\ ,
\end{align}
where ${\textbf{w}'}$
is the bit string of length $b^{t+1}$ obtained from $\textbf{w}$ by replacing all $0$ and $1$  in $\textbf{w}$ by $0^b$ and $\textbf{z}_L[1]$, respectively.   This definition implies 
\be\label{rtre}
\textbf{Z}^{\textbf{w}'}V^{\otimes b^t}=\textbf{Z}_L^{\textbf{w}}V^{\otimes b^t}=V^{\otimes b^t}\textbf{Z}^{\textbf{w}}\ . 
\ee
Composing both sides of Eq.(\ref{tr}) from the right-hand side with $\mathcal{V}^{\otimes b^t}$ and applying Eq.(\ref{rtre}) implies
\be
\mathcal{F}_{t+1}\circ \mathcal{V}^{\otimes b^t}=\mathcal{N}_z^{\otimes b^{t+1}}\circ \mathcal{V}^{\otimes b^t}\circ \mathcal{F}_{t}\ , 
\ee
for all $t$, which in turn implies 
\be
\mathcal{F}_{T}\circ \mathcal{V}_T=\prod_{j=0}^{T-1}    \mathcal{N}_z^{\otimes b^{j+1}} \circ \mathcal{V}^{\otimes b^j}=\mathcal{E}_T\  , 
\ee
as required.

Note that Eq.(\ref{tr}) has a simple interpretation: to 
sample a bit string  $\textbf{z}\in \{0,1\}^{b^{t+1}}$ according to the distribution $f_{t+1}$, one first samples a bit string  $\textbf{w}\in\{0,1\}^{b^t}$ according to the distribution $f_{t}$, then  replace  any 0 in $\textbf{w}$ with $0^b$ and any 1 in $\textbf{w}$ with $\textbf{z}_L[1]$, and finally flips  each bit randomly and independently with probability $p_z$. 
This definition has an  important implication for the distribution $f_T$:
the bits inside the logical subtree defined by the  sequence of logical operators in  Eq.(\ref{seq}) 
are uncorrelated with the bits outside this subtree.  
\footnote{It is worth noting that the qubits inside the logical subtree are, in general,  correlated with qubits outside the subtree.}
Furthermore, $\textbf{z}_L[T]$ only acts inside this subtree. It follows that discarding the bits outside the logical subtree does not change  the total  variation in the second line of Eq.(\ref{52}). That is
\begin{align}
&\frac{1}{2}\sum_{\textbf{z}\in\{0,1\}^{b^T}} \big|f_T(\textbf{z}+\textbf{z}_{L}[T])-f_T(\textbf{z})\big|\nonumber \\ &=\frac{1}{2}\sum_{\textbf{w}\in\{0,1\}^{b_z^{T}}} \big|g_T(\textbf{w}+\textbf{1}^{b_z^T})-g_T(\textbf{w})\big|\ ,\label{tt}
\end{align}
where $g_T$ is the marginal distribution associated to the $b_z^T$ bits on which $\textbf{z}_{L}[T]$ is 1, and $\textbf{1}^{b_z^T}$ is a bit string with $b_z^T$ 1's. The distribution $g_T$ has a simple interpretation in terms of the classical broadcasting problem discussed in the introduction: 
consider the 
distribution associated with the leaves of a full $b_z$-ary tree with depth $T$, where at each edge the bit is flipped with probability $p_z$. Then, for the input 0 at the root of the tree, the probability that leaves are in the bit string $\textbf{w}\in \{0,1\}^{b_z^T}$  is $g_T(\textbf{w})$ and for the input 1, this probability is $g_T(\textbf{w}+\textbf{1}^{b_z^T})$.

Therefore, we can apply the results on classical broadcasting on trees and, in particular, the results of \cite{evans2000} to bound the total variation distance in Eq.(\ref{tt})\footnote{The total variation distance in Eq.(\ref{tt}) can also be studied using standard results in percolation theory.}. In particular, as we explain in the Appendix \ref{classical tree discussion}, applying Theorem 1.3 of \cite{evans2000} we find
\be\label{pr2}
\frac{1}{2}\sum_{\textbf{w}\in\{0,1\}^{b_z^{T}}} \big|g_T(\textbf{w}+\textbf{1}^{b_z^T})-g_T(\textbf{w})\big|\le {\sqrt{2}}\Big[\sqrt{b_z}\times  |1-2p_z|\Big]^{T}
\ .
\ee
Combining this with Eq.(\ref{52}) and Eq.(\ref{tt}),   we arrive at 
$$
\big\|\mathcal{E}_{T}-\mathcal{E}_{T}\circ \mathcal{D}_{z}\big\|_\diamond\le {\sqrt{2}}\Big[\sqrt{b_z}\times  |1-2p_z|\Big]^{T}\ ,$$
which completes the proof of proposition \ref{Prop 1}.

\subsection{Proofs  of propositions  \ref{prop2} and  \ref{prop3} and lemma \ref{lem}}\label{sec:proofs}
Here, we  explain how proposition \ref{Prop 1}  can be generalized to propositions  \ref{prop2} and  \ref{prop3} and lemma \ref{lem} that apply to general stabilizer codes with general encoders. We mainly focus on proving proposition \ref{prop3} and lemma \ref{lem}. Proof of  proposition \ref{prop2}  follows in a similar fashion.

First, note that the  idea of a logical subtree can be straightforwardly generalized to  the trees in Proposition \ref{prop2} or \ref{prop3}. In particular, using the same approach that we used to define the sequence in Eq.(\ref{seq}), for $w\in\{x,y,z\}$ we  define the sequence of logical operators  
\begin{align}\label{sq13}
{L}_w[0]\rightarrow L_w[1]\rightarrow \cdots \rightarrow  L_w[T]\ ,
\end{align}
such that
\be\label{log}
{L}_w[t+1]V^{\otimes b^t}= V^{\otimes b^t}{L}_w[t]\ ,
\ee
where
$${L}_w[t]\in \langle i I, \sigma_x, \sigma_y,\sigma_z\rangle^{\otimes b^t}\ ,$$
and,  in particular, ${L}_w[0]=\sigma_w$ is a Pauli operator. It is also worth noting that given any  operator  ${L}_w[T]\in \langle iI, \sigma_x, \sigma_y, \sigma_z\rangle^{\otimes b^T}$ satisfying ${L}_w[T] V_T=V_T \sigma_w$ for 
$$V_{T}=\prod_{j=0}^{T-1} V^{\otimes {b}^j}\ , $$ 
one can find a sequence of logical operators in the form of Eq.(\ref{sq13}) satisfying Eq.(\ref{log}). 

Given this sequence, we can extract a logical subtree where at each level $t$, the nodes of this subtree correspond to those qubits on which  operator $L_w[t]$ acts non-trivially, i.e., acts as one of the Pauli operators.  
Furthermore, this definition also associates one of the  ``types" $x, y$, or $z$ to each node in the logical subtree:  
 namely, the type of each node at level $t$ is determined by the Pauli operator in $L_w[t]$  that acts on the qubit in that node. Similarly, we can  associate a type $x, y$, or $z$ to each edge in this subtree: an edge connecting a node at level $t$ to a  node at level $t+1$, has the same type as the node at level $t$, i.e., the parent node.

Now consider the noisy tree channel $\mathcal{E}_T$. Each edge in this tree corresponds to a single-qubit channel $\mathcal{N}$, where  $\mathcal{N}=\mathcal{N}_x\circ\mathcal{N}_z$ in proposition \ref{prop2}, and $\mathcal{N}=\mathcal{N}_x\circ\mathcal{N}_y\circ\mathcal{N}_z$ in proposition \ref{prop3}. In the proof of proposition \ref{Prop 1}, we began by ignoring the bit-flip noise channels $\mathcal{N}_x$ on all edges. Similarly, for the proof of proposition \ref{prop2} and \ref{prop3} we selectively   discard the noises that are \textit{not} consistent with an edge's type. In proposition \ref{prop2} this means discarding one of the channels $\mathcal{N}_x$ or $\mathcal{N}_z$ in $\mathcal{N}=\mathcal{N}_x\circ \mathcal{N}_z$, and in proposition \ref{prop3} we discard $\mathcal{N}_x\circ \mathcal{N}_y$ from
$\mathcal{N}_x\circ \mathcal{N}_y\circ \mathcal{N}_z$, if 
 the edge type is $z$, and similarly for $x$ and $y$.   
 By doing so, we ensure that 
 for each qubit in the logical subtree, the Pauli type associated with the logical operator acting on the qubit is the same as the type of Pauli noise on the qubit.  Note that here we again rely on the monotonicity of the diamond norm distance under data processing.

 Now recall that we had set all the flip probabilities in channels $\mathcal{N}_x$, $\mathcal{N}_y$, and $\mathcal{N}_z$  
 to be $p$, which in the case of proposition \ref{prop3} is $p=(1-\sqrt{1-\epsilon})/2$. 
  Thus, at this juncture, we may presume that we have a logical subtree of just one type of operator, say $Z$, and all its edges have only phase-flip errors with probability $p$. In other words, the logical subtree we obtain after discarding `inconsistent' noise differs from the tree obtained in the proof of proposition \ref{Prop 1}  by a reference frame change on each edge.
 

Thus, we can apply the same proof technique as discussed for proposition \ref{Prop 1}, but for the logical subtree obtained for general encoders.
The only important  difference between
the two cases is that the logical subtree will not correspond to a full $b_z$-ary tree, as in Eq.(\ref{pr2}). However, fortunately, as we explain in Appendix \ref{classical tree discussion}, Theorem 1.3' of \cite{evans2000} 
 allows us to upper bound the total variation distance in Eq.(\ref{pr2}), for arbitrary trees. In particular, for the logical subtree obtained from the sequence in Eq.(\ref{sq13}), the same bound in Eq.(\ref{pr2}) remains valid, provided that we replace $\sqrt{b_z^T}$ in the right-hand side of Eq.(\ref{pr2}) with the more general expression $\sqrt{  {\rm weight}(L_w[T])}$. Then, we conclude that   
\begin{align}
\big\|\mathcal{E}_{T}-\mathcal{E}_{T}\circ \mathcal{D}_{w}\big\|^2_\diamond&\le 2\times  {\rm weight}(L_w[T])\times  |1-2 p|^{2T}\ \nonumber , \\ &=2\times  {\rm weight}(L_w[T])\times (1-\epsilon)^{T}\nonumber \ ,
\end{align}
which proves the first part of lemma \ref{lem}. The second part of the lemma follows similarly.

Next, to prove proposition \ref{prop3} we note that
\begin{align}\label{rt10}
&\big\|\mathcal{E}_T(\rho)-\mathcal{E}_T(\frac{I}{2}) \big\|_1= \big\|\mathcal{E}_T(\rho)-\mathcal{E}_T\circ \mathcal{D}_x\circ \mathcal{D}_z(\rho)  \big\|_1\nonumber\ \\ &\le \big\|\mathcal{E}_T-\mathcal{E}_T\circ \mathcal{D}_x\circ \mathcal{D}_z  \big\|_\diamond \nonumber\\ &\le \big\|\mathcal{E}_T-\mathcal{E}_T\circ \mathcal{D}_x  \big\|_\diamond+ \big\|\mathcal{E}_T\circ \mathcal{D}_x-\mathcal{E}_T\circ \mathcal{D}_x\circ \mathcal{D}_z   \big\|_\diamond\nonumber\\ &\le \big\|\mathcal{E}_T-\mathcal{E}_T\circ \mathcal{D}_x  \big\|_\diamond+ \big\|\mathcal{E}_T-\mathcal{E}_T\circ \mathcal{D}_z   \big\|_\diamond \nonumber\\ &\le\sqrt{2 (1-\epsilon)^T}\times\Big(\sqrt{{\rm weight}(L_x[T])}+\sqrt{{\rm weight}(L_z[T])}\Big) ,
\end{align}
where the second line follows immediately from the definition of the diamond norm, the third line follows from the triangle inequality, the fourth line follows from the contractivity of  the diamond norm and the last line follows from lemma \ref{lem}, which is proven above.

Now to prove proposition \ref{prop3},  suppose the sequence of logical operators in Eq.(\ref{sq13}) is obtained by concatenating the same logical operators $L_w[1]\in \langle i\sigma_0, \sigma_x, \sigma_y,\sigma_z\rangle^{\otimes b}$ for $w\in \{0,x,y,z\}$, where $\sigma_0=I$  and 
$L_0[1]=\sigma_0^{\otimes b}= I^{\otimes b}$ are the identity  and logical identity operators, respectively. 
   That is, for 
$${L}_w[t]=\sigma_{v_1}\otimes \cdots \otimes \sigma_{v_{b^t}}\ , $$
with $v_1, \cdots, v_{b^t}\in \{0, x,y,z\}$, we define  
$${L}_w[t+1]={L}_{v_1}[1]\otimes \cdots \otimes {L}_{v_{b^{t}}}[1] \ , $$ 
which satisfies
 Eq.(\ref{log}).

Next, we   determine the number of different Pauli operators in ${L}_{w}[T]$.  For a fixed $w\in \{x,y,z\}$,  suppose the number of Pauli operators $\sigma_x, \sigma_y$, and $\sigma_z$ in the logical  operator ${L}_w[t]$ are denoted as $n^x_{t}$, $n^y_{t}$, and $n^z_{t}$, respectively. Then, at level $t+1$, we obtain 
\begin{align}
    \begin{pmatrix}
    n^x_{t+1}\\
    n^y_{t+1}\\
    n^z_{t+1}
    \end{pmatrix} = 
    \begin{pmatrix}
    n(x\rightarrow x) & n(y\rightarrow x) & n(z\rightarrow x)\\
    n(x\rightarrow y) & n(y\rightarrow y) & n(z\rightarrow y)\\
    n(x\rightarrow z) & n(y\rightarrow z) & n(z\rightarrow z)
    \end{pmatrix} 
    \begin{pmatrix}
    n^x_{t}\\
    n^y_{t}\\
    n^z_{t}
    \end{pmatrix} \ ,
\end{align}
where this $3\times 3$ transition matrix is denoted as $D_{xyz}$. Then, 
\begin{align}
    {\rm weight}(L_w[T])&=n^x_T + n^y_T + n^z_T\nonumber\\ &= \begin{pmatrix}
    1 & 
    1 &
    1
    \end{pmatrix} D^T_{xyz} e_w\ \nonumber \\ &\le g_{xyz}(T)\nonumber \\ 
    &\le b^T_\text{max} \nonumber\ .
\end{align}
Here, the second line is the sum of the matrix elements of matrix $D^T_{xyz}$ in the column associated with $w\in\{x,y,z\}$, the upper bound in the third line corresponds to the maximum between these 3 columns, and the last line follows  
from the fact that ${\rm weight}(L_w[t+1])\le b_\text{max}\times {\rm weight}(L_w[t])$. 
Combining the above bound on ${\rm weight}(L_w[T])$ with Eq.(\ref{rt10}) proves Eq.(\ref{xyz}), i.e.,
\begin{align}\label{bvbv}
&\big\|\mathcal{E}_T(\rho)-\mathcal{E}_T(\frac{I}{2}) \big\|_1\\ &\le {2\sqrt{2}} \max_{w\in\{x,y,z\}} \sqrt{\begin{pmatrix}
    1 & 
    1 &
    1
    \end{pmatrix} \Big[(1-\epsilon)   D_{xyz}\Big]^T e_w }  \nonumber\ \\ &\le {2\sqrt{2}}\times \big(\sqrt{b_\text{max}\times (1-\epsilon)} \big)^T  \nonumber\ .
\end{align}

Finally, recall that for any square matrix $A$,  $\lim_{n\rightarrow\infty} A^n=0$ if, and only if, its spectral radius $\lambda_\text{max}(A)<1$. 
It follows that if $(1-\epsilon)\times \lambda_\text{max}(D_{xyz})<1$, or  if $(1-\epsilon) \times b_\text{max}<1$, then 
in the limit $T\rightarrow \infty$ the right-hand side of the above bound goes to zero, implying that the output of channel $\mathcal{E}_T$ becomes independent of the input. This completes the proof of proposition \ref{prop3}. Proposition \ref{prop2} can be shown in a similar fashion.

\section{Recursive Decoding of Trees with $d\geq 3$ codes} \label{local rec section}


In the previous section, we showed that above certain noise thresholds classical information 
and entanglement decay exponentially with depth $T$ and vanish  in the limit $T\rightarrow\infty$. 
But, is there any non-zero noise level that allows the propagation of 
classical information and entanglement over the infinite tree? In this section, we answer in the affirmative by considering a simple recursive local decoding strategy for stabilizer trees with distance $d\geq 3$.


We first explain the idea for general encoders and noise channels and then consider the special case of stabilizer codes with Pauli channels.   
Consider one layer of the tree channel, the channel $\mathcal{N}^{\otimes b}\circ \mathcal{V}$ that
first encodes one input qubit in $b$ qubits and then sends each qubit through the single-qubit noise channel $\mathcal{N}$.  Let $\mathcal{R}: \mathcal{L}(({\mathbb{C}^{2}})^{\otimes  b})\rightarrow \mathcal{L}(\mathbb{C}^2)$ be a channel that (approximately) reverses this process,  such that  the concatenated channel 
$f[\mathcal{N}]=\mathcal{R}\circ\mathcal{N}^{\otimes b}\circ \mathcal{V}$ is as close as possible to the identity channel (e.g., with respect to the diamond norm, or the entanglement fidelity), where  for any single-qubit channel $\mathcal{M}$, we have defined 
\be\label{func}
f[\mathcal{M}]\equiv\mathcal{R}\circ\mathcal{M}^{\otimes b}\circ \mathcal{V}\ .
\ee
 As depicted in Figure \ref{recursivetree_circ}, by applying this recovery recursively, we find  a recovery for the channel corresponding to the entire tree of depth $T$, denoted by ${\mathcal{E}}_{T}=\prod_{j=0}^{T-1}    \mathcal{N}^{\otimes b^{j+1}} \circ \mathcal{V}^{\otimes b^j}$, or $\widetilde{\mathcal{E}}_{T}={\mathcal{E}}_{T}\circ\mathcal{N}$, if we consider the noise at the input qubit. Namely, the recovery  is 
\begin{align}\label{rec tree eqn}
    \mathcal{R}^{\rm loc}_T = \prod^{T-1}_{i=0} \mathcal{R}^{\otimes b^{T-i-1}} \ .
\end{align}
As we will see in the following, in general, this is a sub-optimal recovery strategy. However, the advantage of this approach is that implementing the recovery does not require long-range interactions between distant qubits. More precisely, the error corrections are decided locally based on the observed syndromes from only one block with $b$ qubits. Hence, we sometimes  refer to this approach as \emph{local recovery}, as opposed to the optimal recovery that would make a `global' decision based on \textit{all} observed syndromes together (see Sec.\ref{Optimal section}). Furthermore, as we discuss below, analyzing the performance of this recovery is relatively easy. We note that the recursive recovery approach has been previously studied  in the context of concatenated codes with noiseless encoders  \cite{dohertyconcatenated} (in the language of this paper, this corresponds to the tree in which uncorrelated noise channels are applied to the qubits in the leaves, but not inside the tree).

\begin{figure}[ht]
\centering
\includegraphics[width=0.5\textwidth]{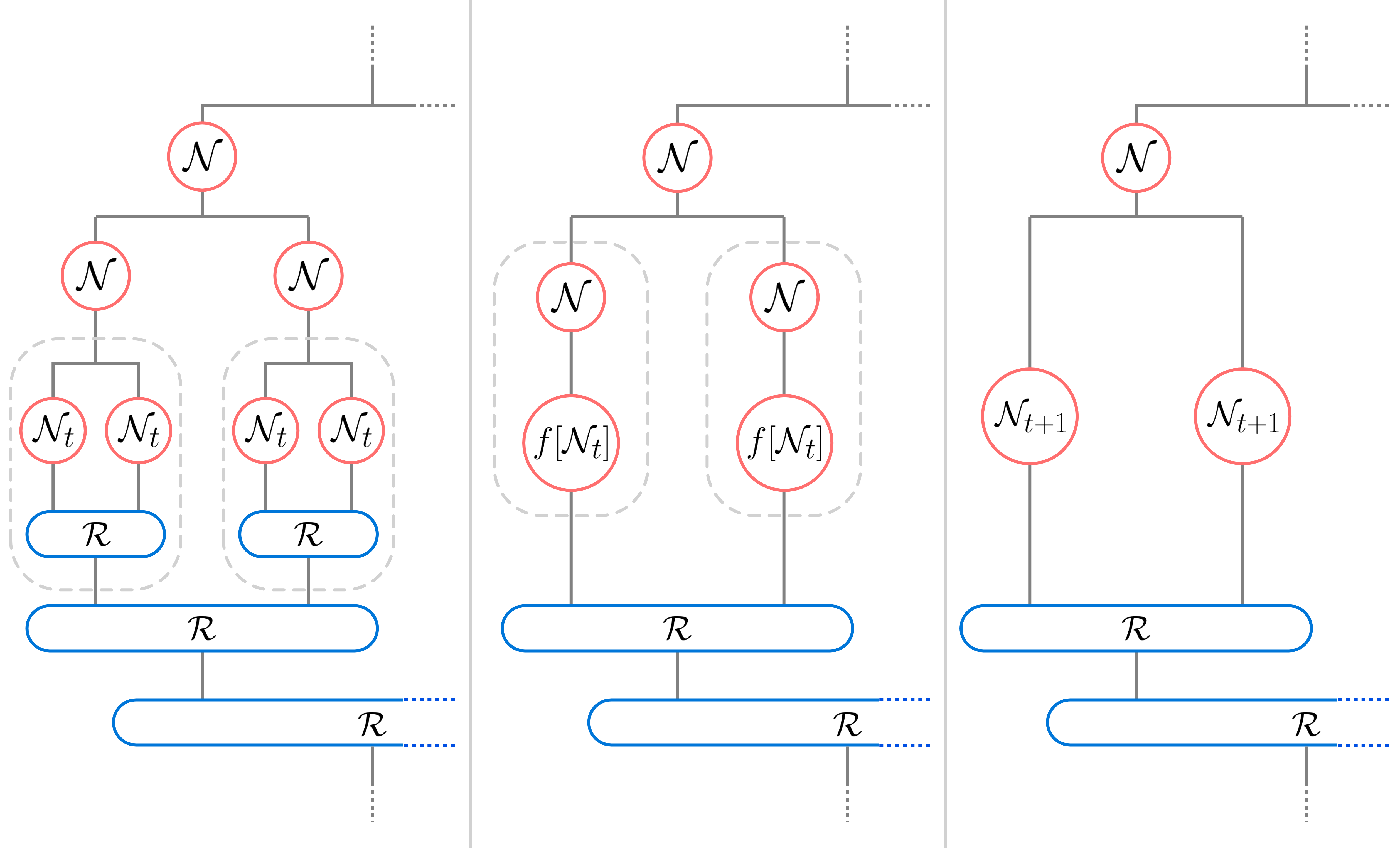}
\caption{ \textbf{Derivation of the recursive equation \ref{gsj}:} After recovery upto level $t$, the effective noise channel is  described by the single-qubit noise channel $\mathcal{N}_t$ defined in Eq.(\ref{def45}) (see the \textbf{left} image). In the \textbf{middle} image, we consider the single qubit channel obtained after applying recovery to a single block, i.e., $f[\mathcal{N}_t]=\mathcal{R}\circ \mathcal{N}_t^{\otimes b}\circ \mathcal{V}$ ($b=2$ in this diagram). In the \textbf{right} image, we compose this with the physical noise  in the tree, i.e., $\mathcal{N}_{t+1}=f[\mathcal{N}_t] \circ \mathcal{N}$. Thus, the effective noise at the next level $t+1$ is denoted by $\mathcal{N}_{t+1}$. } 
\label{recursivetree_circ}
\end{figure}


\subsection{Fixed-point equation for infinite tree}

For a tree with depth $t$, let $\mathcal{N}_t$ be the overall noise after applying the above recovery process to channel $\widetilde{\mathcal{E}}_t$, i.e.,
\be\label{def45}
\mathcal{N}_t=\mathcal{R}^{\rm loc}_t\circ \widetilde{\mathcal{E}}_t=\mathcal{R}^{\rm loc}_t\circ {\mathcal{E}}_t\circ \mathcal{N} \ .
\ee
As seen in Figure \ref{recursivetree}, in a tree with depth $t+1$ there are $b$ subtrees each with depth $t$. For each subtree the effective error after error correction is $\mathcal{N}_t$. Then, ignoring the single-qubit channel $\mathcal{N} $ at the root of the tree with depth $t+1$, the overall noise is 
\be
f(\mathcal{N}_t)=\mathcal{R}\circ\mathcal{N}_t^{\otimes b}\circ \mathcal{V}\ .
\ee

\begin{figure}[ht]
\centering
\includegraphics[width=0.4\textwidth]{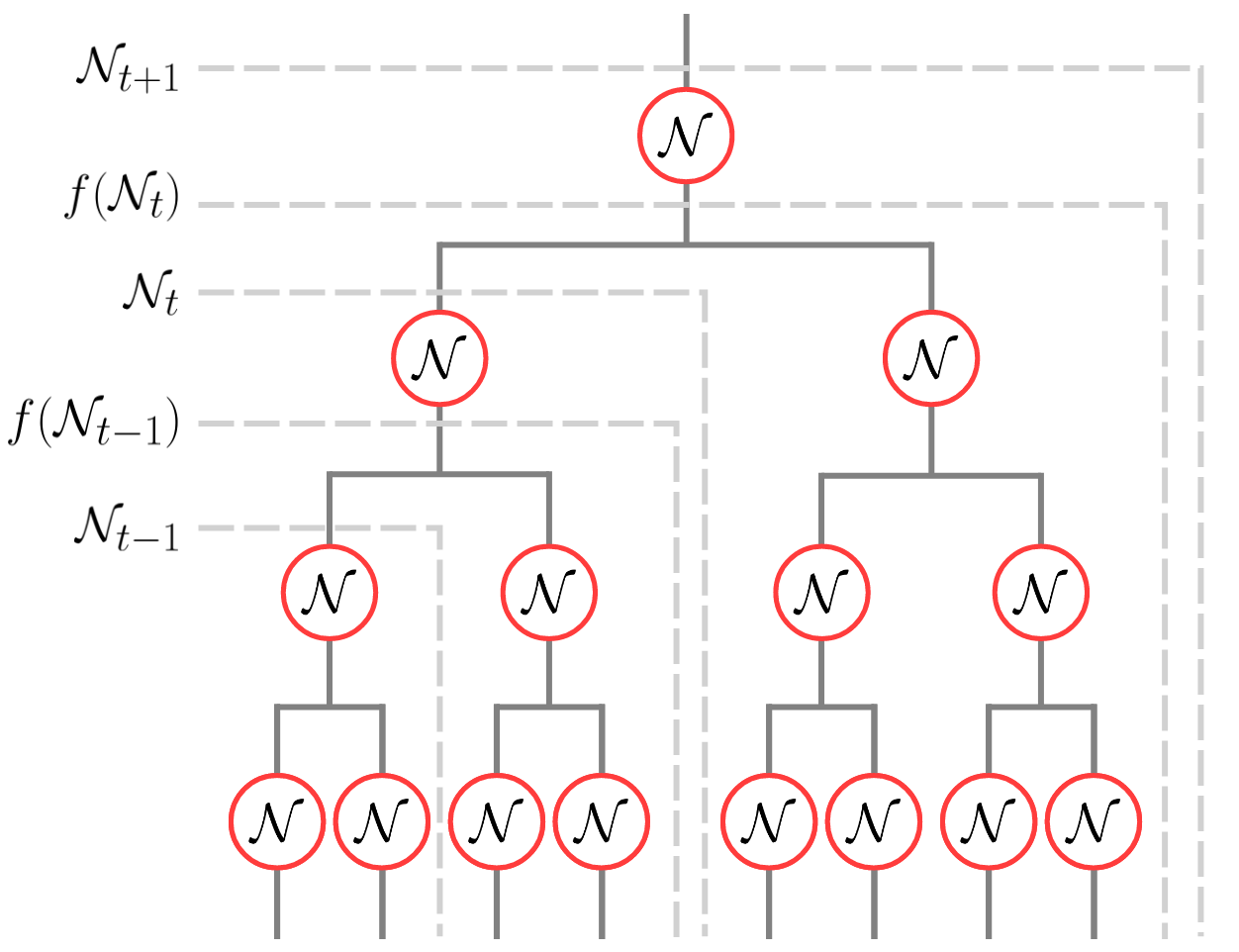}
\caption{This figure illustrates the steps of recursive decoding of a tree by indicating the subtree considered in each step.  $\mathcal{N}_{t-1}$ and $\mathcal{N}_{t}$ are single-qubit channels obtained after the local recovery of  subtrees of depth $t-1$ and $t$, respectively.  $\mathcal{N}_{t}$ can be expressed as a function of $\mathcal{N}_{t-1}$ and $\mathcal{N}$, as summarized in Eq.(\ref{gsj}). This way of indexing local recovery levels will  be followed in the rest of this paper.} 
\label{recursivetree}
\end{figure}

Finally, adding the effect of the single-qubit channel $\mathcal{N}$ at the root, we obtain the overall noise channel 
\be\label{gsj}
\mathcal{N}_{t+1}= f(\mathcal{N}_t)\circ \mathcal{N} =  \mathcal{R} \circ\mathcal{N}^{\otimes b}_t\circ \mathcal{V}\circ \mathcal{N}\ .
\ee
Fig. \ref{recursivetree_circ} illustrates Eq.(\ref{gsj}). With the initial condition $\mathcal{N}_0=\mathcal{N}$, this recursive equation fully determines $\mathcal{N}_t$ for arbitrary $t$. Then, in the limit $t\rightarrow\infty$, we obtain the single-qubit channel 
\be
\mathcal{N}_\infty= \lim_{t\rightarrow\infty}\mathcal{N}_t= \lim_{t\rightarrow\infty} \mathcal{R}^{\rm loc}_t\circ \widetilde{\mathcal{E}}_t\ ,
\ee
where we assume that the recovery channel is chosen properly such that this limit exists. 
This channel satisfies the fixed-point equation
\be
\mathcal{N}_{\infty}=f(\mathcal{N}_\infty)\circ \mathcal{N}  =\mathcal{R}\circ\mathcal{N}_\infty^{\otimes b}\circ \mathcal{V} \circ \mathcal{N}    \ .
\ee
 Note that the single-qubit channel $\mathcal{N}_T$ describes the output of the recursive recovery applied to the  channel $\widetilde{\mathcal{E}}_T={\mathcal{E}}_T\circ \mathcal{N}$, which includes  channel $\mathcal{N}$ at the input.  On the other hand, if one does not include the single-qubit noise at the input of the tree, then the overall channel is described by $f(\mathcal{N}_{T-1})$, and in the limit $T\rightarrow\infty$, one obtains the channel $f(\mathcal{N}_{\infty})$.

Depending on the strength of the noise in the single-qubit channel $\mathcal{N}$ and the properties of the encoder $\mathcal{V}$, the channel $\mathcal{N}_{\infty}$(or, $f(\mathcal{N}_{\infty})$) might be a channel with constant output 
which does not transmit any information, i.e., with zero classical capacity, an entanglement-breaking channel with nonzero classical capacity, or a channel that transmits entanglement (and hence even classical information).

\begin{figure}[ht]
\centering
\includegraphics[width=0.5\textwidth]{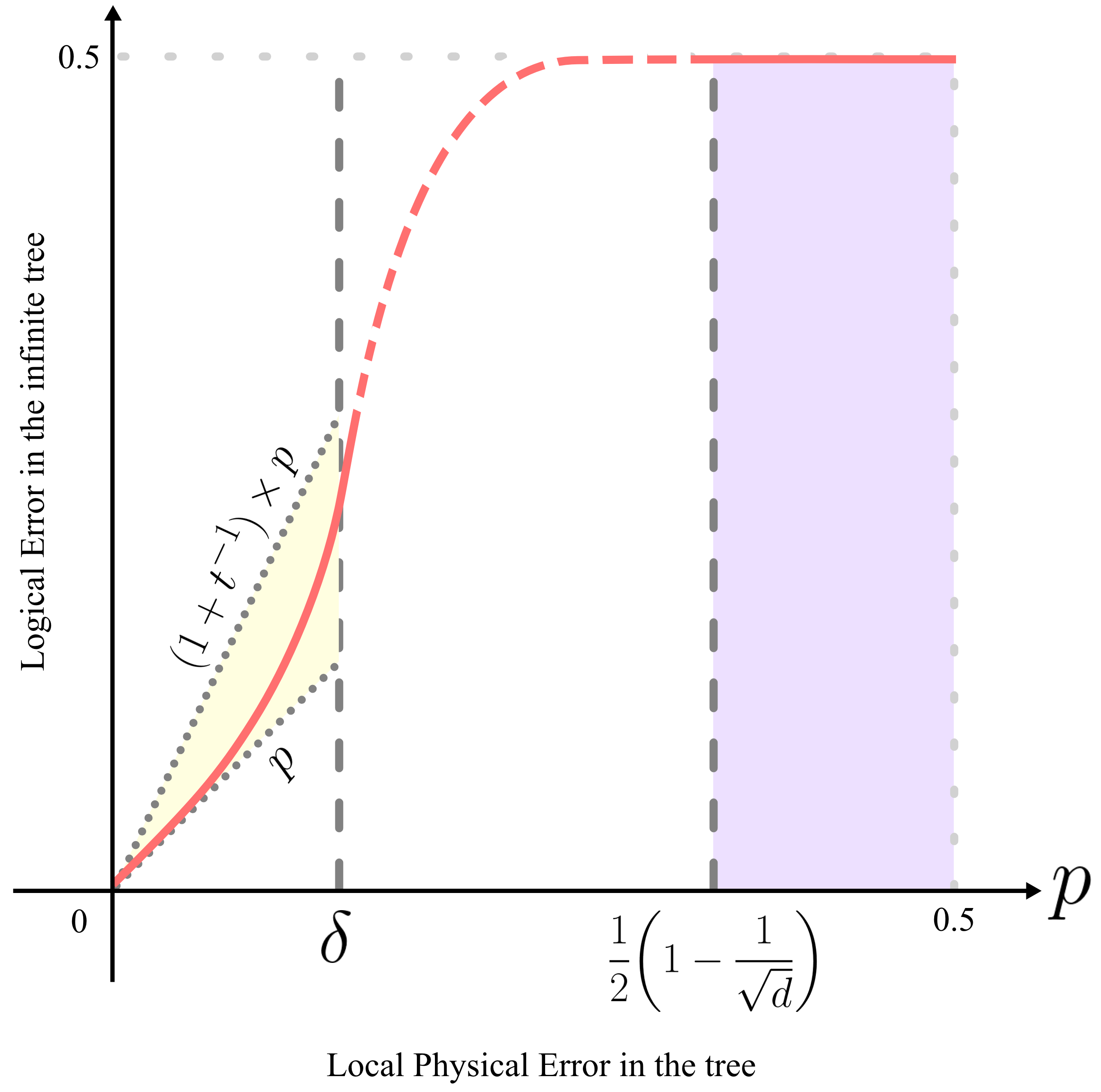}
\caption{\textbf{Summary of results for CSS code trees with distance $d\geq 3$ -- } 
This plot illustrates the  general behavior  of the logical $Z$ error $q_T^z$ in the channel 
$\widetilde{\mathcal{E}}_T=\mathcal{E}_T\circ \mathcal{N}$, in the limit $T\rightarrow \infty$ (i.e., the infinite tree
\textit{with} noise on the root)  as a function of the  physical $Z$ error $p$ on the edges of the tree. Here, $d$ is the code distance, which in particular, means upto $t_z=\lfloor d_z-1\rfloor/2 \geq \lfloor d-1\rfloor/2$, $Z$ errors can be corrected. We characterize three regions, $p<\delta \equiv \frac{t_z}{1+t_z}\big(\frac{1}{c(1+t_z)}\big)^{\frac{1}{t_z}}$ (region I),  $p>\frac{1}{2}\big(1-\frac{1}{\sqrt{d}}\big)$ (region III), and the region in between (region II). From proposition \ref{prop: css lowerbound} on  local recovery analysis, we see that in region I, the logical error is bounded between $p$ and $(1+t_z^{-1})\times p$, and it is $p+\mathcal{O}(p^2)$ for $p\ll 1$. From our upperbound on the noise threshold, we see that in region III, the logical error in the infinite tree saturates to $1/2$. The form of the curve in region II remains unknown, but it can be estimated for specific trees using the optimal recovery strategy in Sec \ref{Optimal section}.} 
\label{css result summary fig}
\end{figure}

\subsection{Recursive decoding of CSS code trees \\ (A  lower bound on the noise threshold)}
In the following, we 
further study this approach for the case of CSS trees with standard encoders. In Appendix  \ref{gen stab threshold} we explain how these results can be extended to general stabilizer trees.

 We consider the recovery channel $\mathcal{R}^{\rm loc}_T$  defined in Eq.(\ref{rec tree eqn})  applied to a CSS code tree $\widetilde{\mathcal{E}}_T$ which acts on $X$ and $Z$ errors separately. This ensures,
\begin{align}\label{correc}
    \mathcal{R}^{\rm loc}_T \circ \widetilde{\mathcal{E}}_T=\mathcal{Q}_z\circ \mathcal{Q}_x=\mathcal{Q}_x\circ \mathcal{Q}_z\ ,  \end{align}
where $\mathcal{Q}_z=q^z_T\rho + (1-q^z_T)Z\rho Z$ is a phase flip channel, and  $\mathcal{Q}_x=q^x_T\rho + (1-q^x_T)X\rho X$ is a bit flip channel. Since the cases of $Z$ and $X$ errors are similar, we focus on the recovery of $Z$ errors and prove a  lower bound on the noise threshold.

\begin{proposition}\label{prop: css lowerbound}
Consider the tree channel $\widetilde{\mathcal{E}}_T=\mathcal{E}_T\circ \mathcal{N}$ constructed from a standard encoder of a  CSS code, as defined  in Eq.(\ref{noisy tree definition with root noise}). Suppose the single-qubit channel $\mathcal{N}=\mathcal{N}_z\circ \mathcal{N}_x$ is composed of bit-flip and phase-flip channels $\mathcal{N}_x$ and $\mathcal{N}_z$ that apply $X$ and $Z$ errors with probabilities $p_x$ and $p_z$, respectively. Suppose the minimum weight of the logical $Z$ operator for this code is $d_z$, which means the  code corrects up to $t_z=\lfloor (d_z-1)/2\rfloor$,  $Z$ errors.  For sufficiently small
probability of error $p_z$, e.g., for 
\be\label{delta defn}
p_z \le    \frac{t_z}{t_z+1} \bigg(\frac{1}{(t_z+1)c}\bigg)^{\frac{1}{t_z}}\equiv\delta \ ,
\ee
the probability of logical $Z$ error after local recovery is bounded by
\be
\forall T:\ \ \ \ q^z_T \le (1+\frac{1}{t_z}) p_z\ ,
\ee
where $c=\sum_{k=t_z+1}^b {b \choose k}$ is a positive constant bounded as $2^{t_z} \leq c \leq 2^b$. Furthermore,
\begin{align}
    \lim_{p_z \rightarrow 0} \frac{q^z_\infty}{p_z} = 1.
\end{align}
\end{proposition}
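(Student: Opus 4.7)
The plan is to exploit the factorization in Eq.(\ref{setup eqn: css}) of the effective channel for CSS codes with standard encoders, which lets us track $Z$ and $X$ logical errors independently; I will focus on $Z$ errors, with the $X$ case handled identically. Writing $q_t \equiv q_t^z$ for the logical $Z$-error probability of the single-qubit channel $\mathcal{N}_t$ from Eq.(\ref{def45}), my first step is to derive an explicit scalar recursion for $q_t$ from the operator recursion $\mathcal{N}_{t+1} = f(\mathcal{N}_t)\circ\mathcal{N}$. After local recovery of the $b$ subtrees, the $b$ qubits feeding into the top-level decoder $\mathcal{R}$ carry i.i.d. $Z$-errors, each of probability $q_t$, because the noise randomness in distinct subtrees is independent. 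The local decoder for a distance-$d_z$ CSS block corrects any configuration with at most $t_z$ $Z$-errors, so the logical error probability of $f(\mathcal{N}_t)$ is at most
\begin{align*}
q'_{t+1} \le \sum_{k=t_z+1}^{b} \binom{b}{k}\, q_t^{k} (1-q_t)^{b-k} \le c\, q_t^{\,t_z+1},
\end{align*}
where the second inequality uses $(1-q_t)^{b-k}\le 1$ and $q_t^k \le q_t^{t_z+1}$ for $k\ge t_z+1$. Composing with the root noise $\mathcal{N}$, which contributes an independent $Z$-flip with probability $p_z$, gives $q_{t+1} = q'_{t+1}(1-p_z) + (1-q'_{t+1})p_z \le q'_{t+1}+p_z$, so the master recursion is
\begin{align*}
q_{t+1} \;\le\; p_z + c\, q_t^{\,t_z+1}.
\end{align*}

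The second step is to verify by induction on $t$ that $q_t \le \alpha\, p_z$ for all $t \ge 0$, where $\alpha = 1 + 1/t_z$, provided $p_z \le \delta$. The base case $q_0 = p_z \le \alpha p_z$ is immediate from the definition $\mathcal{N}_0=\mathcal{N}$. For the inductive step, the recursion requires $p_z + c(\alpha p_z)^{t_z+1} \le \alpha p_z$, which rearranges to $c\, \alpha^{\,t_z+1}\, p_z^{\,t_z} \le \alpha - 1 = 1/t_z$. Substituting $\alpha = (t_z+1)/t_z$ and solving for $p_z$ yields precisely the threshold $\delta$ of Eq.(\ref{delta defn}); this is the sole place where the explicit form of $\delta$ arises, and the algebra is elementary. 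The bounds $2^{t_z}\le c\le 2^{b}$ follow from standard binomial identities using $b \ge d_z = 2t_z+1$.

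For the asymptotic limit, I will pass to $t\to\infty$ in the scalar recursion. The sequence $\{q_t\}$ is bounded above by $\alpha p_z$, and comparison with the majorant sequence $\tilde q_{t+1} = p_z + c\,\tilde q_t^{\,t_z+1}$ (with $\tilde q_0 = p_z$) suffices: $\tilde q_t$ is monotone non-decreasing and bounded, hence converges to some $\tilde q_\infty$ satisfying $\tilde q_\infty = p_z + c\,\tilde q_\infty^{\,t_z+1}$, and $\limsup_t q_t \le \tilde q_\infty$. Combined with the invariant $\tilde q_\infty \le \alpha p_z$, this yields $q_\infty^z \le p_z + c\,\alpha^{\,t_z+1}\, p_z^{\,t_z+1} = p_z + O(p_z^{\,t_z+1})$. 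A matching lower bound $q_\infty^z \ge p_z(1-2q'_\infty) = p_z - O(p_z^{\,t_z+2})$ follows from the exact form $q_{t+1} = q'_{t+1}+p_z - 2q'_{t+1}p_z$ and the bound on $q'_\infty$. Because $t_z\ge 1$ whenever $d_z\ge 3$, both bounds give $q_\infty^z/p_z \to 1$ as $p_z\to 0$.

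The main conceptual point to get right is the i.i.d. assumption on the per-qubit error distribution at the input of each local decoder, which relies on the independence of noise across different subtrees and the CSS/standard-encoder decoupling of $Z$ and $X$ channels; once this is in place, the remainder is an elementary inductive argument on a scalar recursion plus a small-$p_z$ expansion of its fixed point. I do not expect any truly difficult step; the only technical care needed is in justifying that the scalar inequality $q_{t+1}\le p_z + c q_t^{t_z+1}$ faithfully captures the worst case of the operator recursion for CSS-plus-standard-encoder stabilizer trees.
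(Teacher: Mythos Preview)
Your proposal is correct and rests on the same two ingredients the paper uses: the CSS decoupling that reduces everything to a scalar recursion for $q_t^z$, and the bound $\alpha_z(q)\le c\,q^{t_z+1}$. The difference is in how the recursion is analyzed. The paper studies the fixed-point equation $\alpha_z(q_\infty^z)=(q_\infty^z-p_z)/(1-2p_z)$ geometrically: $p_{\text{th}}^z$ is identified as the $x$-intercept of the tangent to $\alpha_z$ through $(\tfrac12,\tfrac12)$, and the bound on $q_\infty^z$ comes from comparing intersections of $\alpha_z(q)$ with its majorant $cq^{t_z+1}$ (Figs.~\ref{Fig:tanget} and~\ref{upperbound for q inf}). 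You instead run a direct induction on the invariant $q_t\le(1+1/t_z)p_z$ using the simplified majorant recursion $q_{t+1}\le p_z+cq_t^{t_z+1}$; the threshold $\delta$ drops out of the single inequality $c\alpha^{t_z+1}p_z^{t_z}\le 1/t_z$ needed for the inductive step. Your route is more elementary and self-contained; the paper's route is more pictorial and, as a byproduct, also pins down $p_{\text{th}}^z$ as an exact geometric quantity before lower-bounding it by $\delta$. For the limit $q_\infty^z/p_z\to 1$, your sandwich via the majorant fixed point is equivalent to the paper's Taylor-expansion of the fixed-point equation; note that the lower bound can be shortened to $q_\infty^z\ge p_z$ directly from $q_{t+1}=p_z+(1-2p_z)q'_{t+1}\ge p_z$, without the detour through $p_z(1-2q'_\infty)$.
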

 In particular, if $t_z\ge 1$, then for $p_z \le \delta$, the error $q^z_T<1/2$, which means that the tree of any depth $T$ transfers non-zero classical information in the input $X$ basis. Thus, the logical single-qubit error in the infinite tree after local recovery is  
\begin{align}\label{pz qz reln}
    q^z_\infty \leq  (1+\frac{1}{t_z})p_z\ .
\end{align}
Also, note that $q^z_T$ is the probability of logical error in a tree with depth $T$ \textit{with} noise at the root (i.e., $\widetilde{\mathcal{E}}_T$). Clearly, without noise at the input channel, the overall noise is less. 

 Recall that after recursive decoding, the effective single qubit channel is a concatenation of a bit-flip and phase-flip channel with error probabilities $q^x_T$ and $q^z_T$, respectively. As we explain in Appendix \ref{ent depth uncorrelated XZ}, this channel is entanglement-breaking if, and only if, $|1-q^z_T|\times |1-q^x_T|<1/2$. Proposition \ref{prop: css lowerbound} ensures that we can make $q^x_T$ and $q^z_T$ arbitrarily small,  by reducing $p_x$ and $p_z$ respectively. Therefore, there exists a non-zero value of $p_x$ and $p_z$ for which any infinite CSS code tree transmits entanglement.

\begin{figure}[ht!]
\centering
\includegraphics[width=0.313\textwidth]{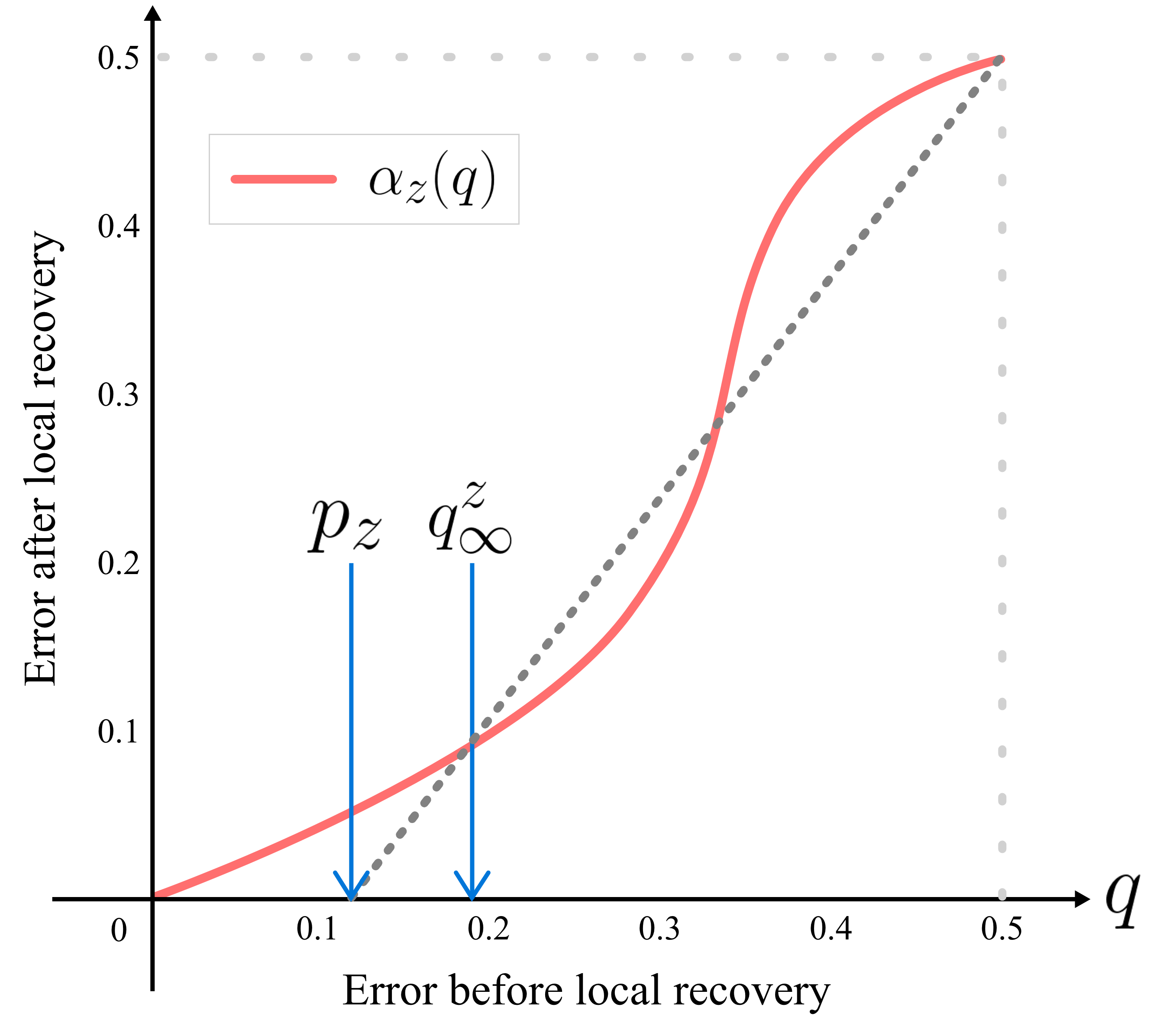}
\includegraphics[width=0.313\textwidth]{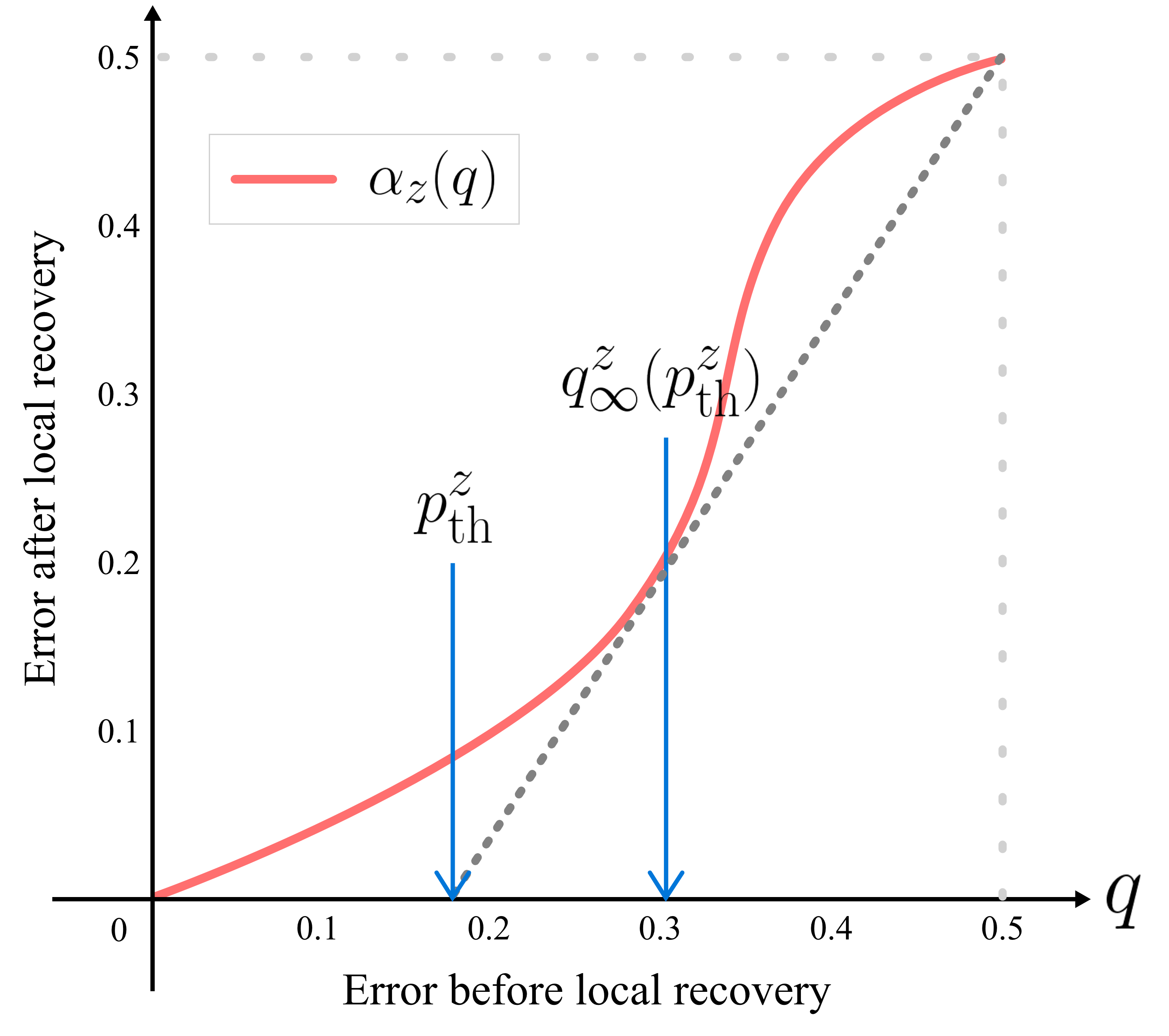}
\includegraphics[width=0.313\textwidth]{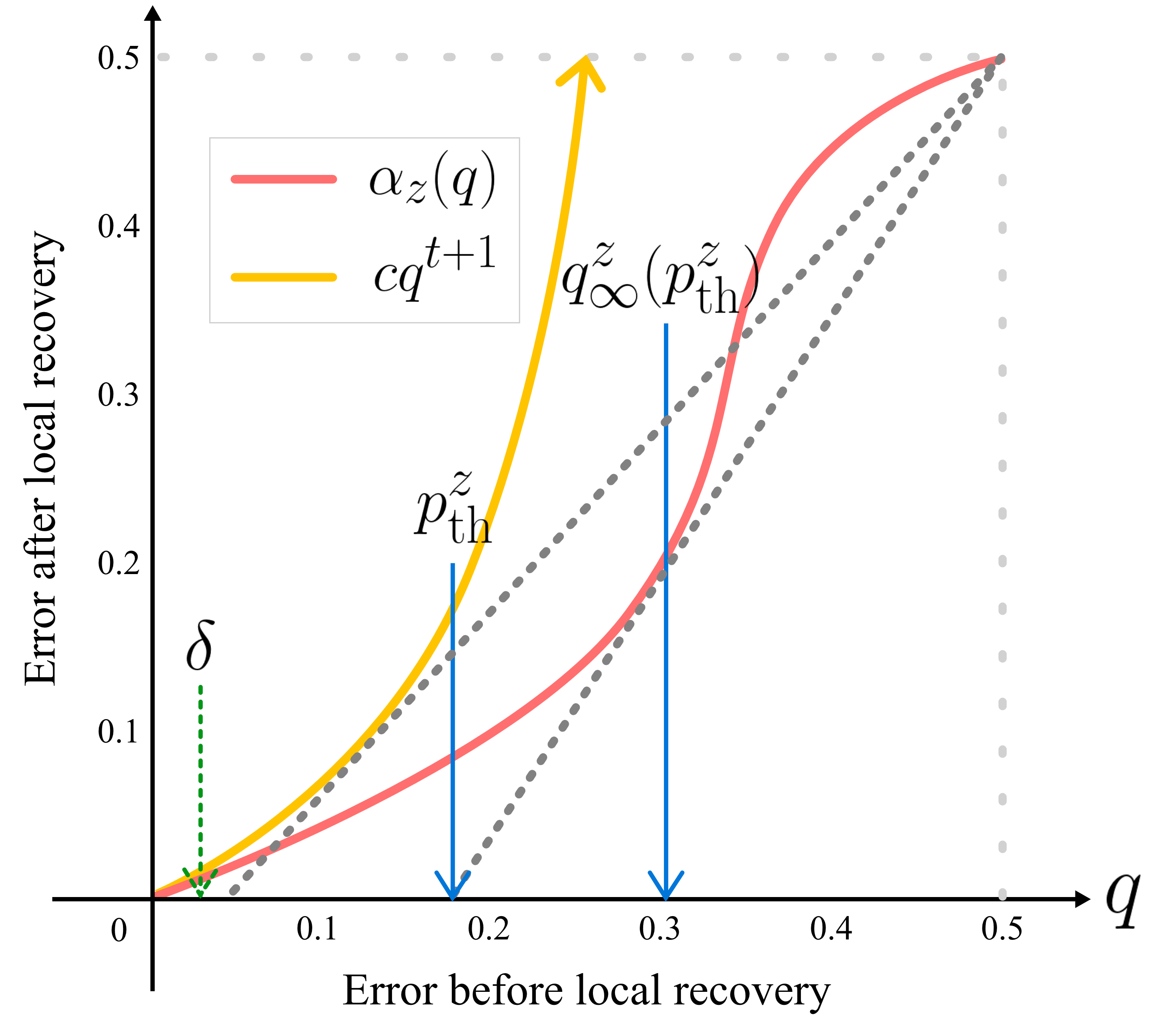}
\caption{\textbf{Schematic for recursive decoding threshold computation --} The \textbf{top} figure shows the intersection of two curves $\alpha(q)$ and $(q-p_z)/(1-2p_z)$, where $p_z$ is the error within the tree and the line's $x$-intercept. According to Eq.(\ref{fixed23}), for the infinite tree the overall error after the recovery, denoted by $q^z_\infty$, is determined by the the intersection of these two curves. The smallest $x$-value of the intersections is $q^z_\infty$.  For large enough $p_z$, only the $(0.5,0.5)$ intersection persists, entailing that the  infinite tree upon local recovery transmits no information. The \textbf{middle} figure shows $p^z_{\rm th}$, i.e. the maximum value of $p_z$ for which $q^z_\infty$ is not $0.5$. The \textbf{bottom} figure shows that every $\alpha(q)$ for a $[[b,1,d]]$ CSS code is upper bounded by $c q^{\lfloor (d_z-1)/2\rfloor+1}$, for some constant $2^{t_z}\leq c\leq2^b$, where $d_z\ge d$ is the distance for $Z$ errors. 
Consider the tangent line to this curve that passes the point $(0.5,0.5)$. The $x$-intercept of this tangent line is a lower bound on $p_{\rm th}^z$. Graphically, we see that this lowerbound is always greater than zero.  Using this observation, in Appendix \ref{local rec appendix}, we show that $p_{\rm th}^z$ is lower bounded  by $\delta:=\frac{t_z}{t_z+1} \big(\frac{1}{(t_z+1)c}\big)^{\frac{1}{t_z}}$,  which is strictly positive when $d_z\geq3$.  
We also show that for $p_z$ less than this value,  the logical error in the channel $\widetilde{\mathcal{E}}_T=\mathcal{E}_T\circ \mathcal{N}$, is bounded by $p_z (1+\frac{1}{t_z})$.}
\label{Fig:tanget}
\end{figure}

\subsubsection{Analysis of recursive decoding of CSS code trees\\ (Proof of Proposition \ref{prop: css lowerbound})}\label{css local rec}

To analyze the performance of the local recursive strategy for a CSS code tree, we first note that for CSS codes with uncorrelated $X$ and $Z$ errors, error correction for these  errors can be performed independently, and therefore after error correction, the $X$ and $Z$ errors remain uncorrelated.   In particular, there exists a  recovery process $\mathcal{R}$ that corrects $X$ and $Z$ errors independently, i.e., a recovery process  $\mathcal{R}$  such that 
\be
f[\mathcal{M}^x\circ \mathcal{M}^z]=f[\mathcal{M}^x]\circ f[\mathcal{M}^z]\ ,
\ee
where $\mathcal{M}^x$ and $\mathcal{M}^z$ are bit-flip and phase-flip channels, respectively, and function $f$ defined in Eq.(\ref{func}) denotes the effective single-qubit channel after recovery.  In particular, if   $\mathcal{M}^z(\rho)=(1-p)\rho+p Z\rho Z$, then 
\be
f[\mathcal{M}^z](\rho)= (1-\alpha_z(p)) \rho+\alpha_z(p) Z \rho Z\ ,
\ee
where $\alpha_z(p)$ determines the probability of logical error after recovery. Note that $\alpha_z({1}/{2})={1}/{2}$, which corresponds to the cases of maximal $Z$ error.  For instance, for the Steane-7 code \begin{align}
    \alpha_z(p)=& \ 21p^2 - 98p^3 + 210p^4- 252p^5 + 168p^6 -48p^7\ .\nonumber
\end{align}
In Appendix \ref{CSS appendix} we explain how this function can be calculated for general CSS codes (see also \cite{dohertyconcatenated}).

Now we consider the recursive Eq.(\ref{gsj}) for the phase-flip channel  $\mathcal{Q}_z$ in Eq.(\ref{correc}), which applies $Z$ error with probability $q_T^z$. Writing Eq.(\ref{gsj}) in terms of function $\alpha_z$, we obtain
\be\label{CSS rec eqn}
q_{t+1}^z= (1-p_z)\times \alpha_z(q_{t}^z)+ p_z\times [1-\alpha_z(q_{t}^z)]\ ,
\ee
with the initial condition $q_{0}^z=p_z$. The fixed point of this equation, denoted by  $q^z_\infty$, satisfies
\begin{align}\label{fixed23}
     {\alpha_z}(q^z_\infty)=\frac{q^z_\infty-p_z}{1-2p_z}\ .
\end{align}
Therefore, to determine $q^z_\infty$ we study the intersection of the curve ${\alpha}_z(q)$ with the line $(q-p_z)/(1-2p_z)$  as functions of $q$  (see Fig. \ref{Fig:tanget}).  Since $\alpha_z(1/2)=1/2$,  $q^z_\infty=1/2$ is always a solution. However, for  a sufficiently small value of $p_z$, the two curves have more intersections in the interval $0<q^z_\infty<0.5$. Let $p^{z}_{\rm th}$ be the largest value of $p_z$ for which they  have an intersection in this interval. 
As it can be seen in Fig. \ref{Fig:tanget} this point is determined by the tangent of curve $\alpha_z$ that passes through the point $(0.5,0.5)$. More specifically, $p^{z}_{\rm th}$  is the $x$-intercept for this line. Given $\alpha_z$, $p^z_\text{\rm th}$ can be estimated numerically.


While the exact threshold $p^z_\text{\rm th}$ depends on the detail of the polynomial $\alpha_z$, its value is strictly positive, provided that the lowest order of $q$ in polynomial  $\alpha_z(q)$ is 2 or higher, which means the recovery corrects all single-qubit $Z$ errors. This is always possible for codes with  distance $d\ge 3$. In particular, suppose the code distance for $Z$ errors is $d_z\ge d$, i.e., the code  corrects $Z$ errors with weight  $t_z=\lfloor(d_z-1)/2\rfloor$ or less. Then, the lowest order of $q$ in the polynomial $\alpha_z(q)$ is 
$t_z+1$.  Therefore, there exists a positive constant $c$, such that  $\alpha_z(q)$ is upper bounded as  
\be
\alpha_z(q)\le c\times q^{t_z+1}\ ,
\ee
where $2^{t_z} \leq c\leq 2^b$ (see Appendix \ref{local rec appendix}).

This, in turn, implies that the $x$-intercept of the tangent to $cq^{t_z+1}$ lower bounds the $x$-intercept of the tangent to $\alpha_z(q)$. Then, using an idea explained in Fig. \ref{Fig:tanget}, in Appendix \ref{local rec appendix} we show that 
\be\label{th lowerb}
p^z_\text{th}\geq\frac{t_z}{t_z+1} \bigg(\frac{1}{(t_z+1)c}\bigg)^{\frac{1}{t_z}}\equiv \delta\ ,
\ee
thus, proving the first part of Proposition \ref{prop: css lowerbound}. Furthermore, from Fig. \ref{Fig:tanget} it is clear that for $p_z< p^z_\text{th}$ the overall probability of $Z$ error in the infinite tree, denoted by $q_\infty^z$ can be arbitrarily small. In particular, we show that    $q^z_\infty \leq p_z\times (1+\frac{1}{t_z})$. 

Finally, to show that $\lim_{p_z \rightarrow 0} {q^z_\infty}/{p_z} = 1$, consider the Taylor expansion of  $q^z_\infty$ as a function of $p_z$ in the limit $p_z\rightarrow 0$, as $q^z_\infty=a_1 p_z+ \mathcal{O}(p^2_z)$. Then, on the left-hand side of
Eq.(\ref{fixed23}), the lowest  degree of 
 the polynomial $\alpha_z(q^z_\infty)$ in $p_z$ is larger than or equal to 2, whereas on the right-hand side the lowest-degree term in $p_z$ is $(a_1-1)\times  p_z$. Therefore,   Eq.(\ref{fixed23}) implies $a_1=1$, hence $\lim_{p_z \rightarrow 0} {q^z_\infty}/{p_z} = 1$.

\subsection{Example: Noise threshold for recursive decoding of the repetition code tree}\label{subsec: local rec}

\begin{figure}[ht!]
\centering
\includegraphics[width=0.48\textwidth]{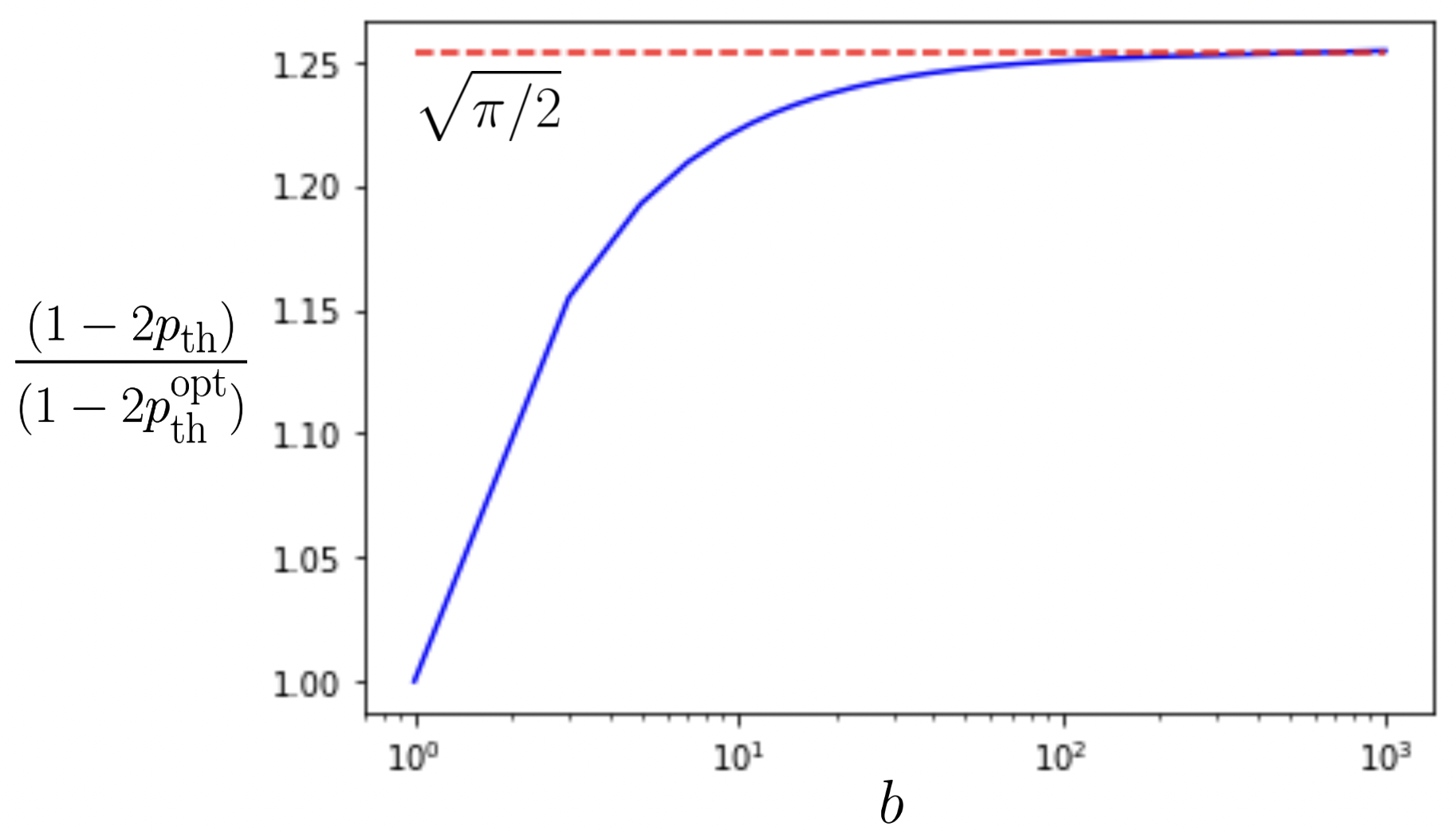} 
\caption{\textbf{Comparing the threshold for the recursive and optimal decoders-- } 
Consider a full $b$-ary tree, where 
 each node has the encoder $\ket{c}\mapsto \ket{c}^{\otimes b}: c=0,1$,  for odd $b$. Also, assume each edge has a bit-flip channel, which applies Pauli $X$ with probability $p_x$. The result of \cite{evans2000} implies that  for the infinite tree, the noise threshold for transmission of classical information 
 in the input $Z$ basis 
   is $p^{\rm opt}_{\text{th}}=\frac{1}{2}(1-\frac{1}{\sqrt{b}})$. Meanwhile, applying the method described in the previous section to the function $\alpha_{\rm maj}$ in Eq.(\ref{alpha cl}), we numerically estimate the local recovery threshold $p_{\rm th}$. The plot shows the ratio $(1-2p_{\rm th})/(1-2p^{\rm opt}_{\rm th})$. For large $b$, this ratio approaches $\sqrt{\pi/2}\approx 1.253$, matching the analytical result in  Eq.(\ref{loc-threshold}).}
\label{class opt v local}
\end{figure}

As a simple example, 
we consider the CSS code tree constructed from the standard encoder for the repetition code.
Specifically, suppose the encoder at each node is  
\begin{equation}
|c\rangle\ \longrightarrow \ V |c\rangle=|c\rangle_L=|c\rangle^{\otimes b}\ \ \ :\  c\in\{0,1\}\ ,
\end{equation}
for odd integer $b$. 
Classically the optimal error correction strategy for the repetition code can be realized via majority voting. In the quantum case measuring qubits in $\{|0\rangle,|1\rangle\}$ basis and then performing majority voting destroys coherence between the logical states $|0\rangle_L$ and $|1\rangle_L$.  However, one can perform majority voting in a way that does not destroy this coherence, namely by measuring stabilizers $Z_jZ_{j+1}: j=1,\cdots, b-1$.  Then, one determines the number of bit-flips (i.e., Pauli $X$) needed to make the eigenvalues of all stabilizers $+1$. There are only two patterns of bit flips that make all the eigenvalues +1. Then, the decoder chooses the one that requires the minimum number of flips. We shall denote this $b\rightarrow 1$ qubit recovery procedure with $\mathcal{R}_{\rm maj}$.  

Assuming each qubit is subjected to independent bit-flip error, which flips the bit with probability $p_x$, the probability that the decoder makes a wrong guess is given by
\begin{align}\label{alpha cl}
    \alpha_{\rm maj}(p_x)=\sum_{k={\lfloor b/2\rfloor}+1}^{b} {b\choose k} p_x^k (1-p_x)^{b-k}\ .
\end{align}
 We can compute the $p_{\rm th}$ as detailed in Sec. \ref{css local rec}. In particular, in Appendix \ref{local rec appendix} we show that asymptotically in the limit of large $b$
\be\label{loc-threshold}
p_{\rm th}= \frac12\Big[1-\sqrt{\frac{\pi}{2}}\frac{1}{\sqrt{b}}\Big]+\mathcal{O}(\frac{1}{b})\ .
\ee
On the other hand, applying the  result of Evans et al. \cite{evans2000} on broadcasting over classical trees  discussed in the introduction, we know that for the optimal recovery  
the exact threshold is  
\be
p^{\text{opt}}_{\rm th}= \frac12\Big[1-\frac{1}{\sqrt{b}}\Big]\ .
\ee
Interestingly, this threshold can also be achieved via similar majority voting, albeit when it is performed \emph{globally} on all qubits at the output of the tree, whereas the threshold in Eq.(\ref{loc-threshold}) is obtained via recursive local majority voting on groups of $b$ qubits. In Fig. \ref{class opt v local}  we compare the actual thresholds for the local and global majority voting approaches for finite values of $b$. See Appendix \ref{CSS appendix} for further examples.

\section{Trees with distance $d=2$ codes:\\ Recursive decoding with one reliability bit}\label{local recovery 1 bit section}

In this section, we consider trees where at each node the received qubit is encoded in a code with distance $d=2$. Since such codes can not correct single-qubit errors in unknown locations, the local recovery approach discussed in the previous section fails to yield a non-zero threshold for infinite trees: under local recovery, the probability of logical error as a function of the probability of physical error  has a linear term, which will then accumulate as the tree depth grows.  

To overcome this issue, one may consider a modified version of the recursive recovery that combines several layers together and performs recovery on these combined layers, which corresponds to a concatenated code with distance  $d>2$. However, this modification will not solve the aforementioned issue because the noise between the layers still produces uncorrectable logical errors with a probability that is  linear in the physical error probability.

To fix this problem, we introduce a new scheme, which takes advantage of the fact  that although $d=2$ codes cannot correct errors in unknown locations, they can  (i) detect single-qubit errors, and (ii) correct single-qubit errors in known locations. Based on these observations, it is natural to consider  a 
modification of local recovery where each layer sends a \textit{reliability bit} to the next layer that determines if the decoded qubit is reliable or not. At each layer, the value of this bit is determined based on the value of the reliability 
bit received from the previous layers together with the syndromes observed in this layer. See  Fig. \ref{recursivetree2} and Sec.\ref{subsec: one bit local recovery}
for further details.

\begin{figure}[ht]
\centering
\includegraphics[width=0.5\textwidth]{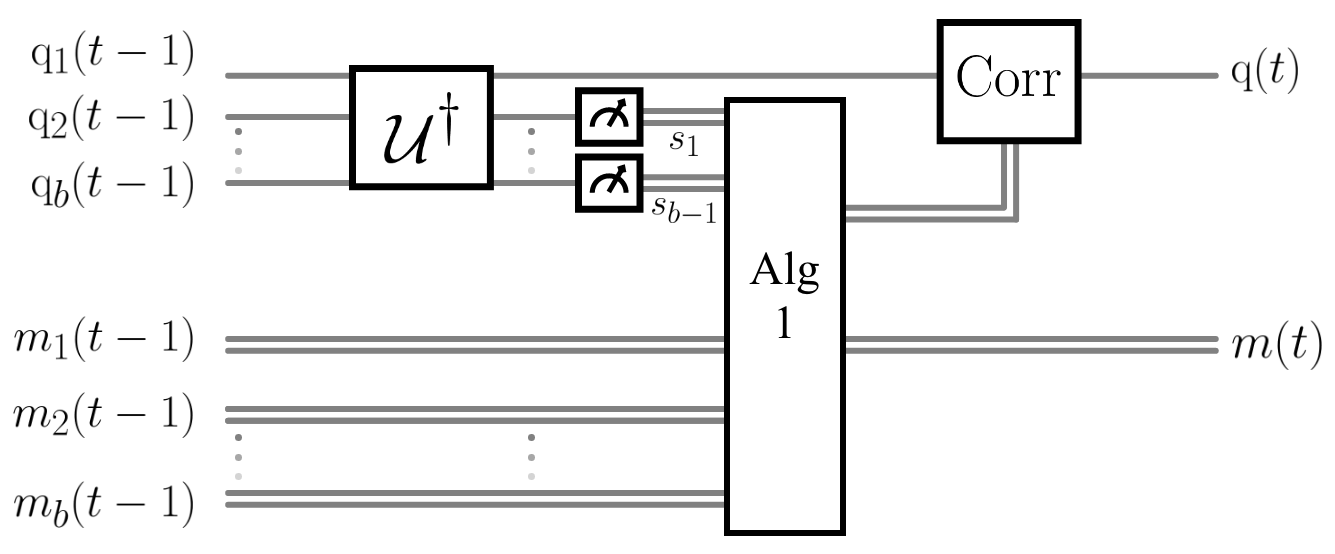}
\includegraphics[width=0.45\textwidth]{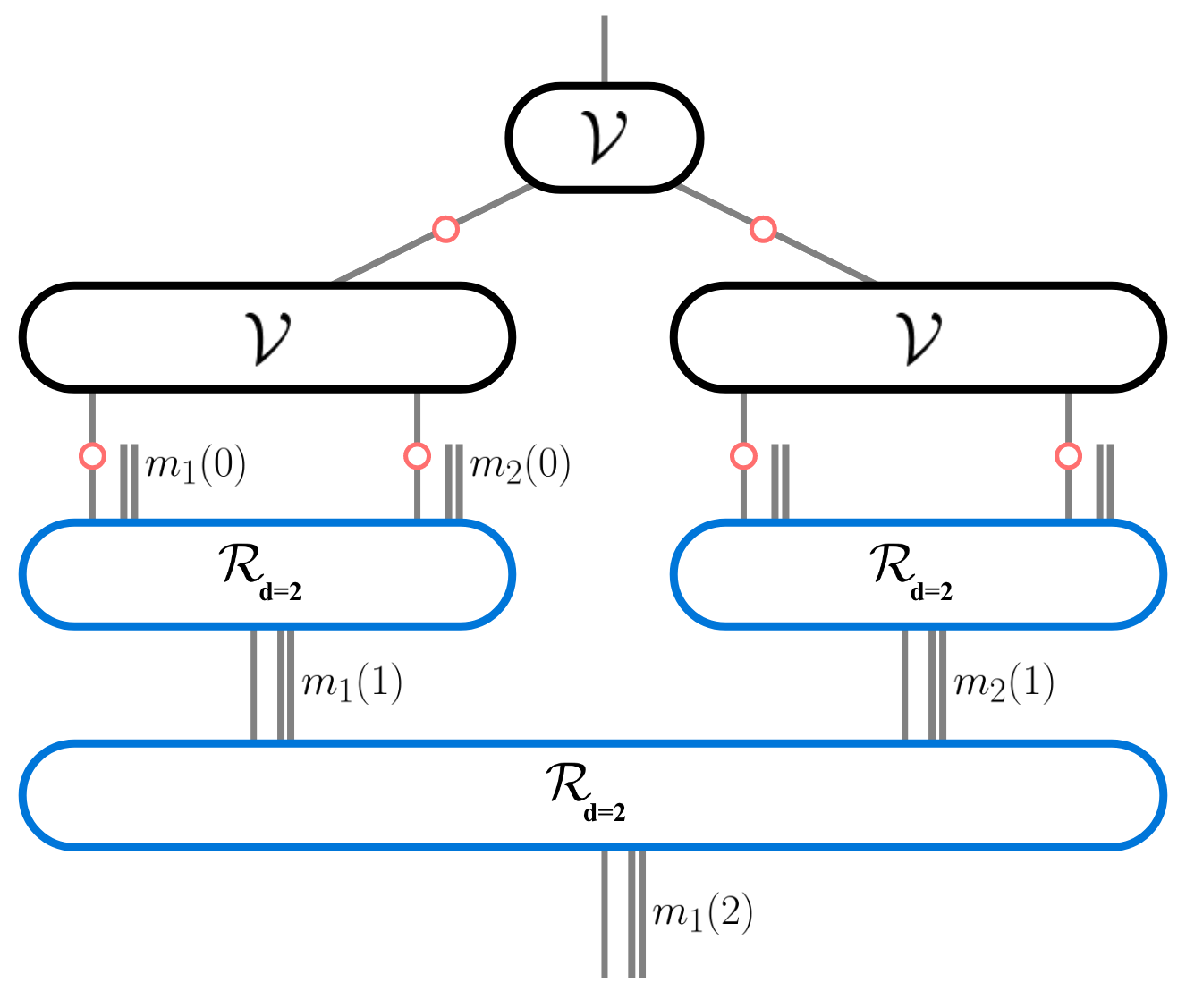}
\caption{The 
\textbf{top} figure shows the 
 decoding module for $d=2$ codes with one reliability bit. It  first applies the the unitary $\mathcal{U}^\dagger$, i.e., the inverse of encoder,  on $b$ input qubits $\{{\rm q}_i(t-1)\}^b_{i=1}$, and  then measures $b-1$ ancilla qubits to obtain the syndromes $\{s_j\}^{b-1}_{j=1}$. Then, it corrects the logical qubit with a unitary ${\rm Corr}$  controlled by the classical output of Algorithm \ref{alg:local rec with bit} that classically processes $\{s_j\}^{b-1}_{j=1}$ and $\{m_i(t-1)\}^b_{i=1}$. The exact dictionary between the classical message+syndrome data and the correction unitary is determined by the encoding unitary $\mathcal{U}$. Furthermore, this algorithm also generates the reliability bit $m(t)$ for the subsequent level $t$. The \textbf{bottom} figure presents the action of this decoder for $T=2$ layers.  
} 
\label{recursivetree2}
\end{figure}

We rigorously prove and numerically demonstrate that using this recovery it is possible to achieve a noise threshold $p_\text{th}$ strictly larger than zero for codes with distance $d=2$. In particular, in Appendix \ref{d=2 appendix} we show that

\begin{proposition}\label{prop: d=2 tree}
Consider the tree channel $\mathcal{E}_T$ defined in Eq.(\ref{noisy tree definition -- no root noise}), for  a $b$-ary tree of depth $T$. Suppose 
the noise at each edge is described by the Pauli channel 
$$\mathcal{N}(\rho)= (1-p_{\text{tot}}) \rho+p_x X\rho X+p_y Y\rho Y + p_z Z\rho Z\ , $$
where $p_{\text{tot}}=p_x+p_y+p_z$  is the total probability of error.  Assume the image of encoder    $V:\mathbb{C}^2\rightarrow \mathbb{C}^{\otimes b}$ 
is  a stabilizer error-correcting code $[[b,1,2]]$,  i.e., a code with distance $2$. Then, 
for $p_{\text{tot}}< 1/(16b^4+4b^2)$, after applying
the  recovery channel $\mathcal{R}_T$ in Sec.\ref{subsec: one bit local recovery}, we obtain the Pauli channel 
$$\mathcal{R}_T\circ{\mathcal{E}}_T(\rho)= (1-r^{\text{tot}}_T) \rho+r^x_T X\rho X+r^y_T Y\rho Y + r^z_T Z\rho Z\ ,$$
with  the total probability of logical error $r^{\text{tot}}_T=r^x_T+r^y_T+r^z_T$, bounded by 
\be\label{br3}
r^{\text{tot}}_T\le (1+8b^2) \times p_{\text{tot}} \ , 
\ee
for all $T$.
\end{proposition}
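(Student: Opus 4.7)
My approach is induction on the tree depth, tracking two quantities associated with the recursive decoder described in Sec.~\ref{subsec: one bit local recovery}: the probability $r_t$ of a logical error on the decoded qubit at level $t$, and the probability $m_t$ that its reliability bit is flagged as unreliable. The base case at the leaves is $r_0 = p_{\text{tot}}$ and $m_0 = 0$, and the goal is to show that both $r_t$ and $m_t$ remain bounded by constant multiples of $p_{\text{tot}}$, uniformly in $t$.

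The core step is a one-step analysis of the decoding module. It receives $b$ input qubits, each the output of a depth-$t$ subtree followed by the edge channel $\mathcal{N}$, so each input carries an effective error probability at most $r_t + p_{\text{tot}}$ and an unreliability probability $m_t$. I would enumerate the joint configurations of errors and flags on the $b$ inputs and exploit two structural facts about distance-2 codes equipped with a reliability bit: (i) the stabilizer syndromes of the code detect any single-qubit error, and (ii) if the location of a detected error is known---because exactly one input's reliability bit flags that qubit---then the correction can be carried out unambiguously. Consequently, a logical error or an unreliable output flag arises only from configurations with at least two ``bad events'' (errors or flagged inputs) among the $b$ inputs. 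A union bound over the $\binom{b}{2}$ pairs of qubits then yields a recursion of the schematic form
\begin{align*}
r_{t+1} \le C_1 b^2 \bigl(r_t + m_t + p_{\text{tot}}\bigr)^2,\quad m_{t+1} \le C_2 b^2 \bigl(r_t + m_t + p_{\text{tot}}\bigr)^2,
\end{align*}
where $C_1, C_2$ are absolute constants extracted from the case analysis of the decoder's branches.

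With such a recursion in hand, I would set $\Delta_t := r_t + m_t$ and look for the largest $p_{\text{tot}}$ for which the ansatz $\Delta_t \le 8b^2\, p_{\text{tot}}$ is preserved by the map. Substituting the ansatz into the recursion gives a sufficient condition of the form $(C_1 + C_2)\, b^2 (8b^2 + 1)^2\, p_{\text{tot}} \le 8b^2$, which, for the constants produced by the case analysis, reduces to $p_{\text{tot}} < 1/(16b^4 + 4b^2)$. Induction then yields $r_T \le \Delta_T \le 8b^2\, p_{\text{tot}}$, and combining with the direct contribution from the root edge noise gives the claimed bound $r_T^{\text{tot}} \le (1 + 8b^2)\, p_{\text{tot}}$.

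The main obstacle I anticipate is the bookkeeping in the one-step analysis. The errors and reliability flags on the $b$ inputs to a decoder module are in general \emph{correlated}---they all descend from the same noisy subtrees---so the joint probabilities above cannot simply be factored, and one must bound them by tracking the joint distribution over (error, flag) patterns rather than by multiplying marginals. Moreover, the default correction applied when the decoder deems a block ambiguous can itself inject a logical error, and this must be folded into the same counting. Both points reduce to a finite combinatorial analysis, but they are essential to control the constants $C_1, C_2$ sharply enough to recover the numerical threshold $1/(16b^4 + 4b^2)$ rather than an asymptotically weaker version.
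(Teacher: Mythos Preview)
Your one-step analysis contains a genuine error that breaks the recursion. You claim that ``a logical error or an unreliable output flag arises only from configurations with at least two `bad events' (errors or flagged inputs) among the $b$ inputs,'' and from this you deduce that both $r_{t+1}$ and $m_{t+1}$ are quadratic in $(r_t+m_t+p_{\text{tot}})$. But this is false for the flag. Suppose exactly one of the $b$ inputs carries an unflagged single-qubit error and the rest are clean and unflagged. Because the code has distance $2$, the syndrome is nonzero; since no input is flagged, the decoder cannot locate the error and falls into case~4, setting $m(t)=1$. Thus a \emph{single} bad event produces an unreliable output, and $m_{t+1}$ necessarily contains a term of order $b\,(r_t+p_{\text{tot}})$, i.e., linear rather than quadratic. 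With that linear term present, your ansatz $\Delta_t\le 8b^2 p_{\text{tot}}$ is not self-consistent under the map you wrote down, and the induction collapses. (The same single-error configuration can also leave a logical error on the flagged output, so your $r_{t+1}$ recursion is affected too if $r_t$ denotes the \emph{total} error probability.)

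The paper's fix is to track a different pair of quantities: $\mu_t=\Pr(m(t)=1)$ and $\delta_t=\Pr(e(t)=1\ \text{and}\ m(t)=0)$, the probability of an \emph{unflagged} error. The point is that an unflagged output error genuinely requires at least two bad events (two unflagged errors, or one flagged input together with a separate unflagged error that mimics its syndrome), so $\delta_t$ obeys a quadratic recursion; meanwhile $\mu_t$ has the linear term $b(\delta_{t-1}+p)$ you are missing. One then shows $\mu_t\le 8b^2 p$ and $\delta_t\le (16b^4+4b^2)p^2$ for $p<1/(16b^4+4b^2)$, and bounds the total logical error via $\Pr(e(t)=1)\le \delta_t+\mu_t\le (1+8b^2)p$. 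So the $8b^2$ in the final bound comes from the \emph{linear} growth of the flagging probability, not from a quadratic suppression.

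A separate remark: your stated obstacle about correlations is a non-issue. The $b$ inputs to a decoding module descend from $b$ \emph{disjoint} subtrees with i.i.d.\ noise, so their (error, flag) pairs are independent across $i=1,\dots,b$; the only correlation is between the error bit and the flag bit of the \emph{same} input, which the $(\mu_t,\delta_t)$ bookkeeping handles directly.
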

Indeed, in Appendix \ref{d=2 appendix} we prove a slightly stronger result: recall that the final output of the recursive decoder in Fig.\ref{recursivetree2} is the decoded qubit along with the  corresponding reliability bit $m(T)$ whose value 1 indicates the likelihood of error in the decoded qubit (see subsection \ref{subsec: one bit local recovery} below for further details).
We show that if the total physical error is $p_{\text{tot}}< 1/(16b^4+4b^2)$ then for all $T$ the probability that the reliability bit $m(T)=1$ remains bounded by $8b^2\times p_{\text{tot}}$, and if $m(T)=0$, then the probability of error on the decoded qubit is bounded by  $(16b^4+4b^2)\times p_{\text{tot}}^2$, hence a quadratic suppression of undetected errors.  Then, ignoring the value of the reliability bit $m(T)$, the total probability of error is bounded as Eq.(\ref{br3}).

In conclusion, for $p_{\text{tot}}< 1/(16b^4+4b^2)$,  the probability of logical error $r^{\text{tot}}_T$ is bounded by $(1+8b^2)/(16b^4+4b^2)$, even for the infinite tree. Since $b\ge 2$, this means
$r^{\text{tot}}_T\le \sim 12\%$, which implies 
that both classical information and entanglement propagate over the infinite tree  (see Appendix \ref{d=2 appendix}).

As a simple example, in Fig. \ref{classical d=2 plot} we consider the tree constructed from the binary repetition encoder $|c\rangle\rightarrow|c\rangle|c\rangle\ : c=0,1$, with only bit-flip errors ($p_z=p_y=0$). Note that technically this code has distance $d=1$.  Nevertheless, it detects single-qubit $X$ errors, i.e., it has distance 2 for $X$ errors. 
 Our numerical analysis demonstrates the threshold is $p_x \sim 0.125$ which is consistent with the (albeit weak)  lowerbound  estimate from Proposition \ref{prop: d=2 tree}, $\sim 0.3\%$.  Note that in this case the exact threshold can be determined from  the result of \cite{evans2000} on  the classical broadcasting problem,  namely 
 Eq.(\ref{intro threshold})  which implies
$p^x_{\text{th}}=(1-1/\sqrt{2})/2\approx 14\%$.

Finally, it is worth noting that for trees with standard or anti-standard  encoders of CSS codes with distance 2 with  independent $X$ and $Z$ errors on each edge, one can apply the recursive decoding strategy using \textit{two} reliability bits -- one for each type of error -- to get improved performance.


\begin{figure}[ht!]
\centering
\includegraphics[width=0.5\textwidth]{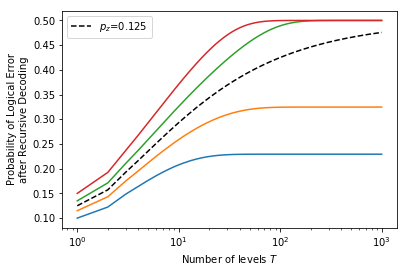}
\caption{\textbf{Performance of local recovery with one reliability bit on binary repetition code tree --} Suppose each node has  the encoder of the binary repetition code, namely $\ket{c}\rightarrow \ket{c}\ket{c}: c=0,1$, and each edge has a bit-flip channel, which applies Pauli $X$ with probability $p_x$. The decoder with one reliability bit achieves  a non-zero noise threshold for information propagation in infinite trees: when $p_x<\sim 0.125$, the logical $X$ error after local recovery of a tree with depth $T\rightarrow \infty$ saturates to strictly below $0.5$, while $p_x>\sim 0.125$ ensures the logical $X$ error after recovery saturates to $0.5$.}
\label{classical d=2 plot}
\end{figure}


\subsection{Recursive decoding with one reliability bit}\label{subsec: one bit local recovery}


Here, we present a full description 
of the above scheme.  Recall that at level $t$ of recursive decoding of a tree of depth $T$, there are 
$b^{T-t}$ identical blocks each with $b$ qubits inside. Since noise affects all blocks in the same way and the same decoder is applied to all blocks, the following analysis applies to all the blocks at this level.    
Therefore, to simplify the notation, we suppress the indices that determine the block under consideration, and just label qubits inside a block with index $i=1,\cdots, b$. Let $m_i(t-1)$ be the corresponding reliability bits that are received from the previous level (see Fig. \ref{recursivetree2}).   In the following, we assume the reliability bit takes the value 1 when  the qubit is \emph{unreliable} or \emph{marked}.     For the first level of recovery, i.e., at the leaves of the tree, we assume the reliability bits are all initiated at $0$. 

Now, similar to the original local recovery in the previous section, the  recovery starts by applying the inverse of the encoder circuit and then measuring $b-1$ ancilla qubits, which determine the value of syndromes $s_j: j=1,\cdots, b-1$. For codes with distance $d=2$ this information is not enough to determine the location of the error. Therefore, unlike the original scheme, the Pauli correction applied to the logical qubit not only depends  on these syndromes, but also depends on the value of reliability bits $m_i(t-1): i=1,\cdots, b$. More specifically,  
\begin{enumerate}

\item If no qubit is marked, i.e., $m_i(t-1)=0$ for $i=1,\cdots, b$,  and no syndrome is observed, i.e., $s_j=0$ for all $j=1,\cdots, b-1$, then decode (without applying any correction) and set $m(t)=0$, indicating the reliability of this qubit.
\item Similarly, if exactly one qubit is marked and no syndrome is observed then again decode and set $m(t)=0$.
\item If exactly one qubit is marked and the observed syndromes
are consistent with a single-qubit error in that location, then assume there is an error in that location,  correct the error, and set $m(t)=0$. 

\item In all the other cases, decode ignoring the reliability bits and set $m(t)=1$.   
\end{enumerate}
This procedure  is also described more precisely in Algorithm \ref{alg:local rec with bit}. Two important remarks are in order here: First, note that only in case 3, the value of the reliability bits affect the applied correction. This is exactly where we take advantage of the fact that codes with distance 2 can correct single-qubit errors in a known location. Secondly, note that if all reliability bits $m_i{(t-1)}=0: i=1,\cdots b$ and some error syndromes are observed, while the local recovery cannot correct the error, it sets $m(t)=1$  signaling to the next layer that an error has been detected in this level. In this case, we take advantage of the error detection property of distance 2 codes. 

Therefore, assuming the probability that a received qubit at this level has an error with probability $p\ll 1$, the probability that the reliability bit is set to $m(t)=1$ is of order $b\times p$ in the leading order. On the other hand, the probability of an undetected (and thus, unmarked)  error is of order $ (b\times p)^2$. This suppression of the probability of error from order $p$ to $p^2$, will make it possible to propagate information over an infinite tree, provided that the probability of physical noise in the single-qubit channels is sufficiently small. In Appendix \ref{d=2 appendix}, we present a rigorous error analysis for this scheme and prove Proposition \ref{prop: d=2 tree}.   

\begin{algorithm}[H]
\caption{Decoding module with one reliability bit}\label{alg:local rec with bit}
\begin{algorithmic}
\State \textbf{begin} with $b$ qubits $i=1,...,b$ and their corresponding \textit{reliability} bits $m_i(t-1)$ received from level $t-1$. \\
\State \textbf{apply $\mathcal{U}^\dagger$ to the $b$ qubits} and \textbf{measure the $b-1$ ancilla qubits} to obtain $\{s_j\}_j$ 
\If{$m_i(t-1)=0, \ \forall i$}
    \If{$s_j=0, \forall j$ } \State $m(t)\leftarrow0$
    \Else \State $m(t)\leftarrow1$
    \EndIf
\ElsIf{$m_i(t-1)=1$ for exactly one $i\in\{1,\cdots, b\}$, denoted as $k$ }
    \If{$s_j=0, \forall j$ } \State $m(t)\leftarrow0$ 
    \ElsIf{some $s_j\neq0$, but $\{s_j\}_j$ are consistent with error on $k^{\rm th}$ qubit}
    \State \textbf{correct} logical qubit appropriately
    \State $m(t)\leftarrow0$
    \Else \State $m(t)\leftarrow1$
    \EndIf
\Else \State $m(t)\leftarrow1$
\EndIf\\
\Return logical qubit and $m(t)$ to $t$ level.
\end{algorithmic}
\end{algorithm} 

\section{\textit{Bell} Tree -- A $d=1$ code tree}\label{bell tree sec}

Next, we study the Bell tree, introduced in Fig. \ref{BinaryvsBell} in the introduction. Here, the encoder is an (anti-standard) encoder of the binary repetition code, which  has distance $d=1$: while it can detect a single  $X$ error (and correct none), it is fully insensitive to $Z$ errors. Nevertheless, we show that if the probability of $X$ and $Z$ errors are sufficiently small (but non-zero) it is still possible to transmit both classical information and entanglement through the infinite tree. We show this using two different decoders: (i) the optimal decoder which is realized using a belief propagation algorithm discussed in the next section, and (ii) a sub-optimal decoder 
introduced in this section,  which recursively applies a decoding module with \textit{two} reliability bits (see Fig. \ref{Bell tree circuit}). 

The relatively simple structure of this  decoder makes it amenable to both analytical and numerical study. For instance, Fig. \ref{d=2 XZ numerics}
presents the results of a numerical study of  trees  with depth up to $T= 1000$, which involves $2^{1000}$ qubits. For this plot,  the physical single-qubit noise in the tree channel $\mathcal{E}_T$ is  $\mathcal{N}=\mathcal{N}_x\circ \mathcal{N}_z$, where $\mathcal{N}_x$ 
and $\mathcal{N}_z$ are bit-flip and phase-flip channels, respectively,   which apply  $X$ and  $Z$ errors, 
 each with probability $p_x=p_z=p$.  This numerical study shows that in the infinite tree  
for 
 $p<\sim 0.5\%$ the probabilities  of logical  $Z$ error $q^z_\infty$ and logical $X$ error $q^x_\infty$ remain strictly smaller than 0.5; namely
  $q^x_\infty \le 7\%$ and $q^z_\infty \le 3\%$. This means that for $p_x=p_z\lesssim 0.5\%$ both classical information and entanglement propagate to any depth in the Bell tree. Note that although the probability of physical $X$ and $Z$ errors are  equal, the probability of logical errors $q^x_T$ and $q^z_T$ are not generally equal (see below for further discussion).  
  In addition to these numerical results,  in Appendix \ref{appendix: bell tree} we also formally prove the existence of a non-zero threshold for this decoder, and estimate the noise threshold to be lowerbounded by $\sim 0.245\%$, which is consistent with the numerical study.

We also note that 
the optimal decoder performance for $T=20$ suggests that the actual threshold is lower than $p_x=p_z\approx1.7\%$. It is also worth noting that the noise threshold for the decoder studied in this section is comparable with the actual noise threshold estimated by the optimal decoder ($\sim 0.5\% $ versus $\sim 1.7\%$), and the 
performance of the two decoders are similar in the very small noise regime, namely $p_x=p_z\ll 0.5\%$. Finally, recall that our results in Sec. \ref{anti standard subsec} imply exponential decay of information for  $p_x=p_z> (1-2^{-1/4})/2 \approx 8\%$.

\begin{figure}[ht!]
\centering
\includegraphics[width=0.5\textwidth]{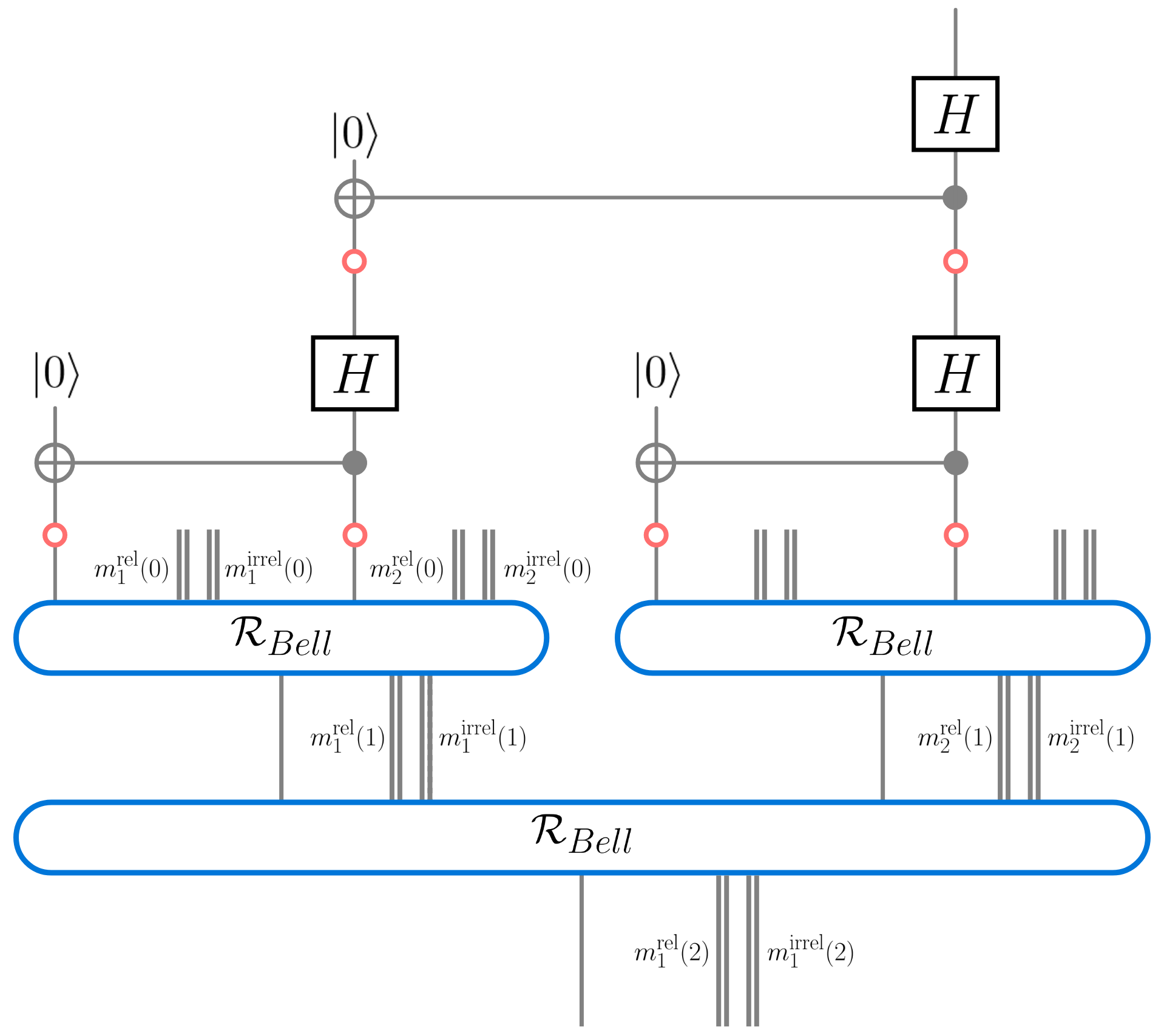}
\caption{
\textbf{2 levels of Bell tree and the corresponding recursive decoder: --} $\mathcal{R}_{Bell}$ is  the decoder module described in Fig.\ref{bell tree logic circuit}. Each qubit is augmented  with two reliability bits $m^{\text{rel}}(t)$ and $m^{\text{irrel}}(t)$, represented by double lines. 
} 
\label{Bell tree circuit}
\end{figure}

\begin{figure}[ht!]
\centering
\includegraphics[width=0.5\textwidth]{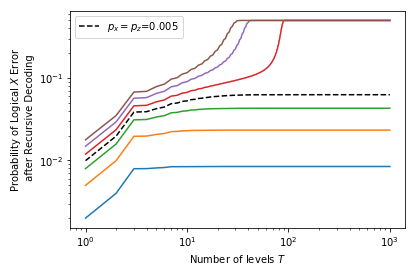}
\caption{\textbf{Performance of the recursive decoder with two reliability  bits on the Bell trees --} We consider the Bell tree with depth $T\rightarrow \infty$ with physical error $p_x=p_z=p$. For $p<\sim 0.005$, the asymptotic logical $X$  errors after decoding saturates to  $q^x_\infty\le \sim 0.07$
(a similar result holds for $q^z_\infty$). On the other hand, for $p>\sim 0.005$ the asymptotic logical $X$ and $Z$ error saturate to $0.5$.}
\label{d=2 XZ numerics}
\end{figure}

\subsection{Recursive decoding with two reliability bits}
\label{subsec: 2 bit local rec}
Here, we present the full description of the recursive decoder with \textit{two} reliability bits 
for the Bell tree.  This decoder recursively applies the decoding module with the circuit diagram in Fig.\ref{bell tree logic circuit}.   The example in Fig.\ref{Bell tree circuit} presents two layers of this decoder.  Recall that at level $t$ of recursive decoding of a Bell tree of depth $T$, there are $2^{T-t}$ identical blocks each with $2$ qubits inside. Since the noise affects all the blocks in the same way and the same decoder is applied to all of them,  similar to the previous section,   we suppress the indices that determine the block under consideration and just label the two qubits inside a block with index $i=1, 2$.

Similar to Algorithm \ref{alg:local rec with bit}, here too the value of the reliability bits at each level is decided based on the observed syndrome and the reliability bits received from the previous level.   Each  
reliability bit determines the likelihood of one type of error, namely $X$ and $Z$ errors, on the decoded qubit. 
 At the first level (i.e., the leaves of the tree) all reliability bits are initiated at 0.  

\begin{figure}[ht!]
\centering
\includegraphics[width=0.45\textwidth]{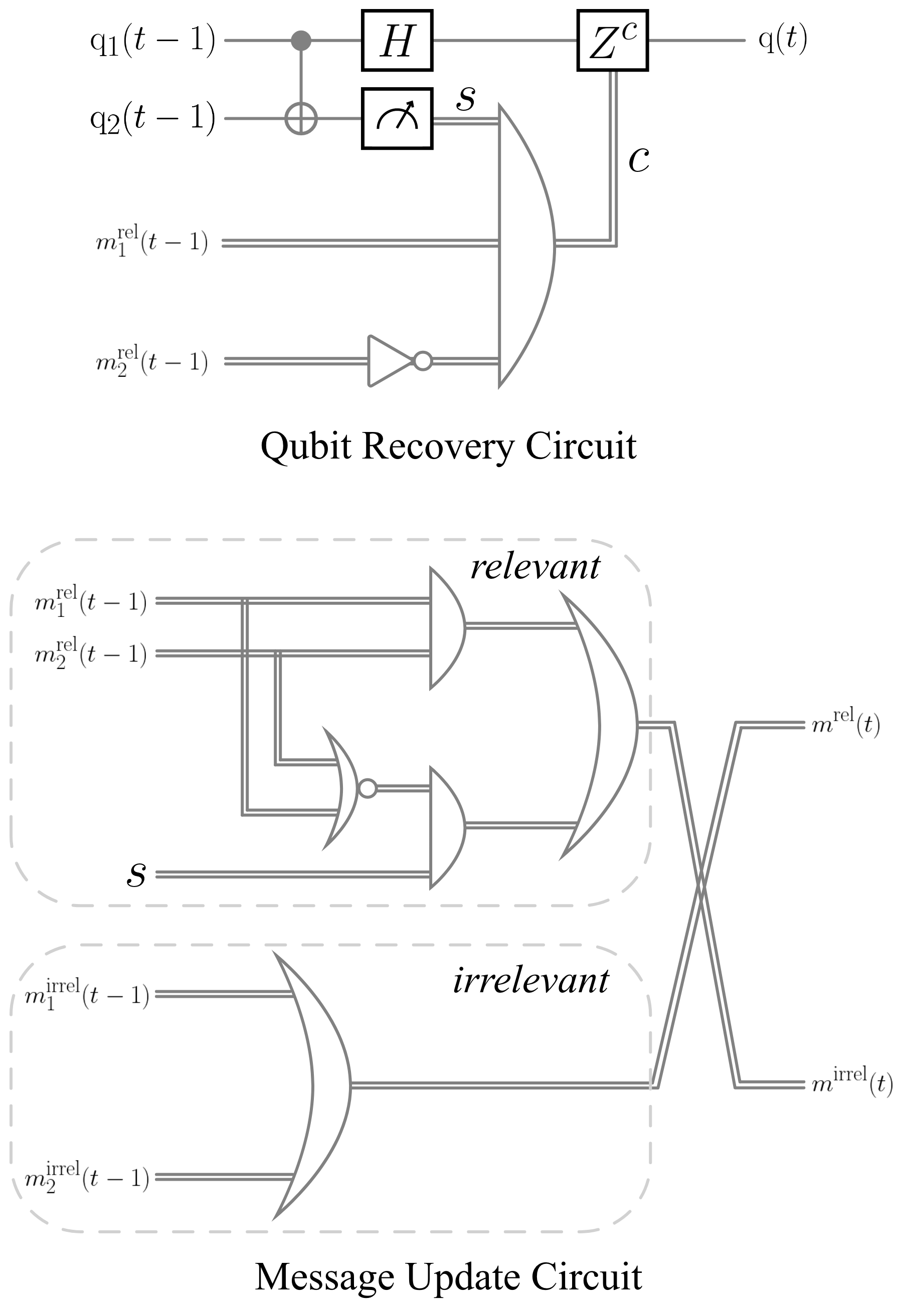}
\caption{\textbf{Decoding module for the recursive decoder of the Bell tree -- }This figure illustrates the three circuits involved in the Bell Tree decoding module   from level $t-1$ to $t$. The input to the module are two qubits ${\rm q}_1(t-1)$ and ${\rm q}_2(t-1)$ along with their ordered reliability bit pairs, labeled as  
$\{(m^{\text{rel}}_i(t-1),m^{\text{irrel}}_i(t-1))\}_i$ for $i=1,2$. The first figure indicates the decoding  circuit that detects the syndrome $s$ and corrects the error based on $m^{\text{rel}}_i(t-1): i=1,2$ and $s$. The bottom circuit generates the reliability bits sent to level $t$. It is segmented into relevant and irrelevant parts. The relevant part takes $m^{\text{rel}}_i(t-1): i=1,2$  \textit{and} the syndrome $s$ as the input, whereas the irrelevant part simply is an ${\rm OR}$ function on $m^{\text{irrel}}_i(t-1): i=1,2$.  Note that the `relevant' and `irrelevant' parts of the decoding module remain separate until the swapping of the bits at the end. This final swap accounts for the fact that the Bell encoder is an anti-standard encoder (see \ref{anti standard subsec}). }
\label{bell tree logic circuit}
\end{figure}

Consider the action of the decoding module in  Fig.\ref{bell tree logic circuit}  on the leaves of the tree. The observed syndrome $s$ is determined by the outcome of measuring $Z\otimes Z$ stabilizer on the pair of qubits in a block. On the other hand, since the decoding module applies a Hadamard gate on the decoded qubit, the measurement performed in the next level determines the outcome of $X$ stabilizers measured  on the leaf qubits. At each level, the decoding module should combine the information obtained from the syndrome measurement with the right type of reliability bits ($X$ or $Z$). This can be achieved, e.g.,  by adding a third bit that keeps track of the parity of the current level, i.e., whether it is odd or even.

However, in the decoding module in Fig.\ref{bell tree logic circuit}, we use a slightly different approach that allows us to avoid the use of this additional bit. Namely, we  keep this information in the order of the reliability bits:  when qubits $i=1,2$ are received from the level $t-1$, they each come with two reliability bits labeled as \emph{relevant} and \emph{irrelevant} bits. The relevant bit is determined by the values of the syndromes at levels $t-2, t-4, \cdots$, which corresponds to the same type of stabilizer that is going to be measured at level $t$ (i.e., $X$ type or $Z$ type).

More precisely, the decoding module in Fig.\ref{bell tree logic circuit} is designed to realize the following rules:

\begin{enumerate}
\item Upon receiving qubits 1 and 2, measure the stabilizer $Z_1\otimes Z_2$ and denote the outcome by $s$.

\item For the pair of received qubits if both  relevant reliability bits are zero, i.e.,  
$m_1^{\text{rel}}(t-1) \text{ OR } m_2^{\text{rel}}(t-1)=0$, and no syndrome is observed ($s=0$), then apply the inverse encoder without applying any correction and set the relevant reliability bit to zero, $m^{\text{rel}}(t)=0$.
\item Similarly, if exactly one relevant reliability bit is 1, i.e., $m_1^{\text{rel}}(t-1) \text{ XOR } m_2^{\text{rel}}(t-1)=1$, and no syndrome
is observed ($s=0$), then  apply the inverse encoder without applying any correction and set the relevant reliability bit to 0, i.e., $m^{\text{rel}}(t)=0$.
\item If exactly one relevant reliability bit is 1 and the syndrome is observed (i.e., $s=1$), then 
assume an error has happened on the marked qubit, correct the error, then apply the inverse encoder and set the relevant reliability bit to 0, i.e.,  $m^{\text{rel}}(t)=0$.
\item In all the other cases, apply the inverse encoder ignoring the reliability bits and set $m^{\text{rel}}(t)=1$. 
\item Set the irrelevant reliability bit to 1 if any of the received irrelevant reliability bits are 1, i.e., $m^{\text{irrel}}(t)=m_1^{\text{irrel}}(t-1)\text{ OR } m_2^{\text{irrel}}(t-1) $, and then swap the relevant and irrelevant bits.
\end{enumerate}
This procedure is  described more precisely in Algorithm \ref{alg:local rec with 2 bits} in Appendix \ref{appendix: bell tree}, where we also present a rigorous error analysis for this algorithm.  
It is worth noting that although the circuit in Fig.\ref{bell tree logic circuit} implements the above rules, the order of implementation is slightly different: 
 it first applies the inverse of the encoder and measures the ancilla qubit, and then applies the correction.   It is also worth noting that only in case  4, a correction is performed, and  after that the reliability bit  is set to $m^{\text{rel}}(t)=0$.  It is interesting to consider a more ``conservative" version of this decoder where in case 4, after the correction the relevant reliability bit is set to $m^{\text{rel}}(t)=1$, indicating unreliability of the decoded qubit. Numerical results presented in Appendix \ref{appendix: bell tree} suggests that this modification reduces the threshold to around 
$\sim 0.3\%$, and generally results in a higher probability of logical error.

\section{Optimal Decoding with Belief Propagation}\label{Optimal section}

Similar to any stabilizer code, the optimal decoding for the concatenated code defined by  the tree encoder $V_T=\prod_{j=0}^{T-1} V^{\otimes {b}^j}$ can be performed by  measuring the stabilizer generators. This can be realized by implementing the inverse of this encoder unitary and then measuring all the ancilla qubits in the $Z$ basis. For a tree of depth $T$, this  yields a syndrome bitstring $\mathbf{s}_T$ of length $b^T-1$. In the absence of noise, one obtains the all-zero bit string and the remaining decoded qubit is equal to the input state of the tree.  
On the other hand, in the presence of an error, the output qubit is equal to the input, upto a logical error  $L_T=\{I,X, Y, Z\}$. Then, any  decoder uses the syndrome $\mathbf{s}_T$ to infer $L_T$ and correct it by applying $L_T$ on the decoded qubit. The optimal decoder needs to find $L_T$ 
that maximizes the conditional probability  
$p(L_T|\mathbf{s}_T)$.   
\textit{Prima facie},  this would require an inefficient double-exponential search over $2^{b^T-1}$ distinct values of the syndromes.

However, thanks to the tree structure, 
 it turns out that in the case that is of interest in this paper, the decoding can be performed exponentially more efficiently, in time $\mathcal{O}(2^T)$.
  In 2006, Poulin \cite{poulin2006optimal} developed an efficient algorithm for decoding concatenated codes via belief propagation (BP). This algorithm assumes the encoder is noiseless for the entire concatenated code and the qubits at the leaves of the tree are affected by Pauli errors that are uncorrelated between different qubits. On the other hand, errors between the encoders in the tree, that are considered in this paper,  are equivalent to correlated errors on the leaves of the tree.  
Although the efficient algorithm of \cite{poulin2006optimal} cannot be extended to general correlated errors, 
as we explain below, it can be extended to the type of correlated errors that are caused by local errors in between encoders. 
 Roughly speaking, this is possible because such correlations are consistent with the tree structure.


\subsection{Belief Propagation Update Rule}
\begin{figure}[ht!]
\centering
\includegraphics[width=0.4\textwidth]{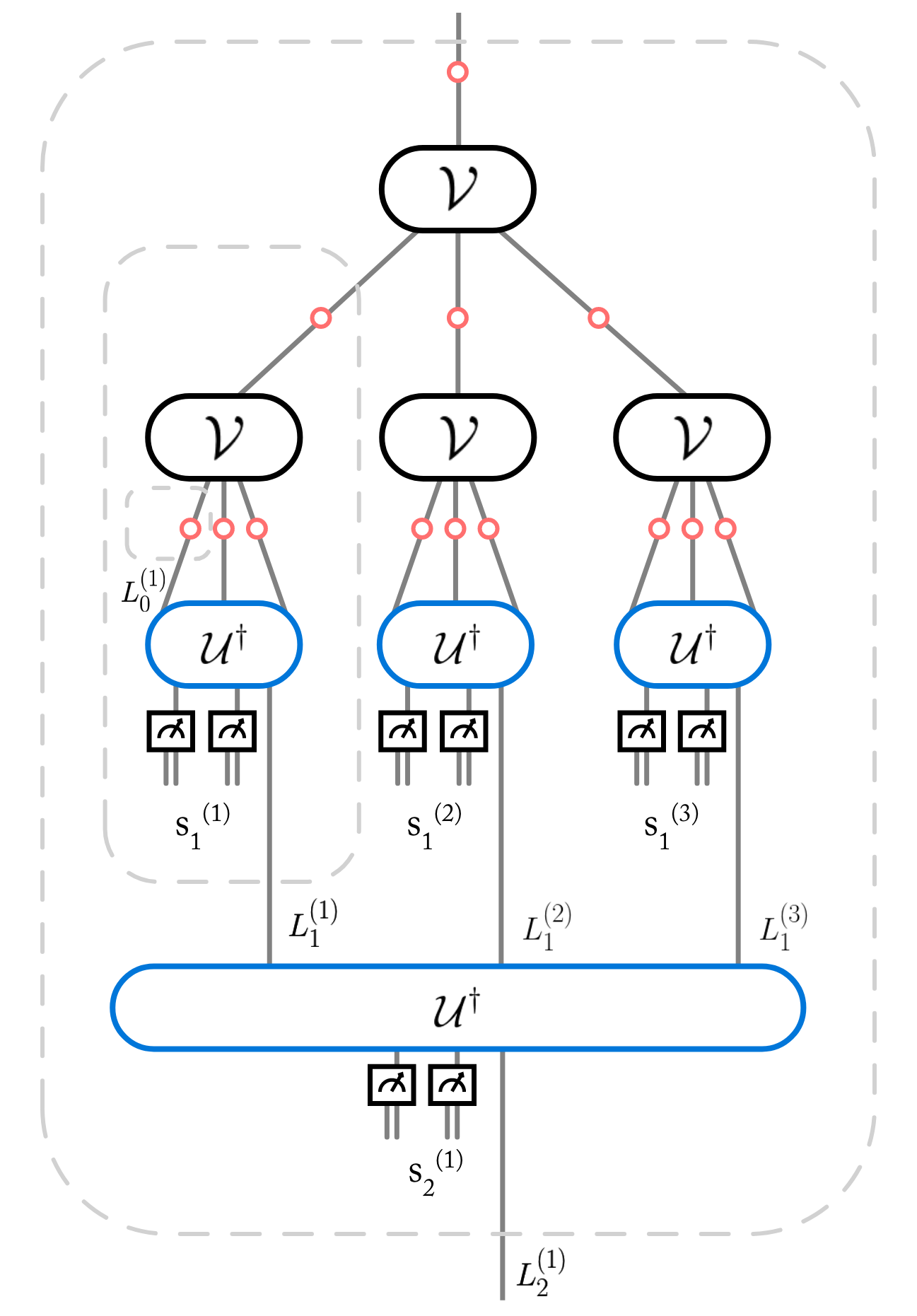} 
\caption{The upper half of the figure is a tree with stabilizer encoder $\mathcal{V}$ along with Pauli noise $\mathcal{N}$ on each edge, which are denoted by red circles. The lower half is the application of the inverse unitaries $U^\dag$ on the output of the tree process (Note that $V|\psi\rangle=U|\psi\rangle|0\rangle^{\otimes (b-1)}$).  Upon inversion, each ancillary qubit is measured in the $Z$-basis to obtain the syndromes.  $L^{(i)}_t\in\{I, X, Y, Z\}$ indicates the overall logical error 
caused by the noise channels inside the dashed bubble above it, and  ${s}^{(i)}_t$ is  the corresponding syndrome string.}
\label{BP_Circuit23}
\end{figure}

We explain how the algorithm works for the 3-ary tree of depth $T=2$, presented in the example in Fig.\ref{BP_Circuit23}. Each red dot in this figure corresponds to the Pauli noise channel $$\mathcal{N}(.)=r_I(.)+r_xX(.)X + r_yY(.)Y + r_zZ(.)Z\ .$$
To simplify the presentation, we have considered a tree with noise at the root.

Recall that the optimal decoder needs to determine $p(L_T|\textbf{s}_T)$ for the observed value of $\textbf{s}_T$.  Using the labeling introduced in Fig.\ref{BP_Circuit23}. The goal is to determine 
$$p(L_2^{(1)}|s_2^{(1)}, s_1^{(1)}, s_1^{(2)},  s_1^{(3)})\ .$$ 
Consider the action of the decoder circuit  on the first level. Let error $E$ be an arbitrary tensor product of Pauli operators and the identity operator. Then, 
\begin{align}
    U^\dagger E U=\mathcal{L}(E) \otimes \mathbf{X}^{{\rm Synd}(E)} \mathbf{Z}^\mathbf{z},
\end{align}
where $\mathcal{L}(E)$ is the logical operator acting on the logical qubit and $\mathbf{X}^{{\rm Synd}(E)}$ is the syndrome acting on the ancilla qubits. Since the ancilla is initially prepared in $\ket{0}^{\otimes b-1}$, it remains unchanged under $\mathbf{Z}^\mathbf{z}$. We measure and record the syndrome ${\rm Synd}(E)$ in each step of decoding. 

The key point is $p(L^{(1)}_2|s_2^{(1)}, s_1^{(1)}, s_1^{(2)},  s_1^{(3)})$ has a decomposition as 
\begin{align}\label{BP 3}
p&(L^{(1)}_2|s_2^{(1)}, s_1^{(1)}, s_1^{(2)},  s_1^{(3)})\nonumber\\ &= \frac{1}{p(s^{(1)}_2|\mathbf{s}_1)}\sum_{\textbf{L}_1}f_{s^{(1)}_2}(L^{(1)}_2, \textbf{L}_1)\times \prod^3_{i=1} p(L^{(i)}_1|{s}^{(i)}_1)\ ,
\end{align}
where $\textbf{L}_1=L^{(1)}_1 L^{(2)}_1 L^{(3)}_1$, $\mathbf{s}_1=s^{(1)}_1s^{(2)}_1s^{(3)}_1$ and 
\begin{align}
    f_{s^{(1)}_2}(L^{(1)}_2,\textbf{L}_1):= \delta[s^{(1)}_2={\rm Synd}(\textbf{L}_1)]\times N(L^{(1)}_2|\mathcal{L}(\textbf{L}_1)),\nonumber
\end{align}
where the conditional probability $N$ describes the  transition probability 
associated to Pauli channel $\mathcal{N}$, such that  for any pair of $E_1$ and $E_2$ in the Pauli group, 
\begin{align}\label{pauli noise transition matrix}
N(E_1|E_2)=r_{E_1E_2}\ ,
\end{align}
where $E_1E_2$ is the matrix product of Pauli operators $E_1$ and $E_2$ upto a global phase.

Notice that conditioned on $s^{(i)}_1$, the logical error $L^{(i)}_1$ is independent of the rest of the syndrome bits, which is clear from the tree structure.  Now, this formula can be reapplied to $p(L^{(i)}_1|s^{(i)}_1)$ for each $i$. For instance, for the $i=1$ subtree  we have 
\begin{align}
p(L^{(1)}_1|s^{(1)}_1) = \sum_{\textbf{L}_0} f_{s^{(1)}_1}(L^{(1)}_1,\textbf{L}_0)\frac{1}{p(s^{(1)}_1)} \times \prod_{i=1}^3 p(L^{(i)}_0) 
\end{align}
where $\textbf{L}_0=L^{(1)}_0L^{(2)}_0L^{(3)}_0$, and the superscript labels the three leaves in the first subtree. We can repeat this for the other two trees as well. At this point, we have reached level $0$, i.e., the leaves of the tree. Therefore, when $p(L^{(i)}_0)$ appears in the expression, we have reached the terminal condition for recursion and we set $p(L^{(i)}_0)=N(L^{(i)}_0|I)=r_{L^{(i)}_0}$.  See Appendix \ref{BP Appendix} for further details of this algorithm for general stabilizer trees. In the following, we present some numerical examples.


\begin{figure}[ht!]
\centering
\includegraphics[width=0.5\textwidth]{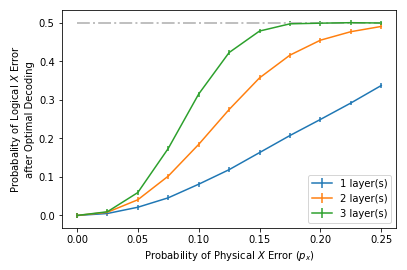} 
\includegraphics[width=0.5\textwidth]{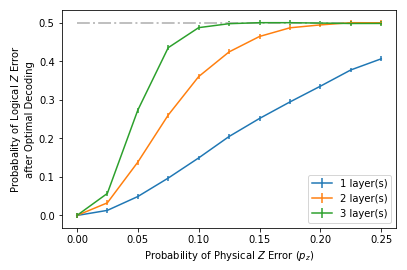} 

\caption{\textbf{Optimal recovery of Shor-9 tree.} Probability of logical error as a function of the probability of physical error in the tree for trees of depth $T=1, 2$ and $3$. In the \textbf{top} plot, the $y$-axis indicates the overall \textit{logical} $X$-error \textit{after} optimal decoding given the  physical $X$ noise within the tree has probability $p_x$ indicated on the $x$-axis. The \textbf{bottom} plot shows the same quantities for $Z$ errors.  This provides improved upperbound estimates for  noise threshold for $X$ errors at $\sim 0.17$, and  for $Z$ errors at $\sim 0.13$. Each data point in the plots is computed via a Monte Carlo of $10^5$ samples explained in Appendix \ref{BP Appendix}.}
\label{Shor 9 optimal numerics}
\end{figure}

\begin{figure}[ht!]
\centering
\includegraphics[width=0.5\textwidth]{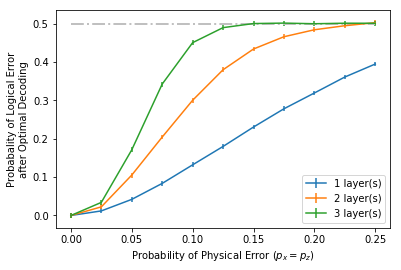}
\caption{\textbf{Optimal recovery of Steane-7 tree.} Probability of logical error as a function of the probability of physical error for trees of depth $T=1,2,3$. Note that since Steane-7 code is self-dual  $X$ and $Z$ errors have the same behavior, and therefore
this plot presents both cases. In particular, $x$ axis is the  probability $p_x=p_z$ of physical error in the channel. This provides  upperbound estimates for asymptotic noise threshold for $X$ (or, $Z$) error at $p\sim 0.15$. Each data point in the plots is computed via a Monte Carlo of $10^5$ samples explained in Appendix \ref{BP Appendix}.}
\label{Steane 7 optimal numerics}
\end{figure}

\subsection{Examples}\label{BP numerics}
\subsubsection*{Shor-9  Code} Fig. \ref{Shor 9 optimal numerics} plots the logical error for the tree as defined in Eq.(\ref{noisy tree definition -- no root noise}) constructed from a Shor-9 code encoder that is standard in both $X$ and $Z$.  We consider trees of depths $T=1,2,3$. 
The specifics of constructing this plot are explained in Appendix \ref{BP Appendix}. We see that the asymptotic noise threshold for $X$ errors is below $\sim 0.17$, whereas for $Z$ errors it is below $\sim 0.13$ (since the number of $Z$ stabilizers in the Shor-9 code is higher than $X$ stabilizers and therefore this code is better equipped to correct $X$ errors than $Z$ errors). Recall that since this code has distance 3, our result in Proposition \ref{Prop 1} establishes an upper bound $\sim 0.21$; thus, these numerics provide evidence for a tighter bound for the noise threshold.

After optimal recovery for channel $\mathcal{E}_T$, the overall channel is a composition of a bit-flip and a phase-flip channel with the probability of $X$ error and $Z$ errors $q^x_T, q_T^z\le 1/2$, respectively. Such channels are entanglement-breaking if, and only if, $|1-q_T^z|\times |1-q_T^x|<1/2$ (see Appendix \ref{ent depth uncorrelated XZ}).  From numerical results in Fig.\ref{Shor 9 optimal numerics}, it follows that for depth $T=3$, this condition is satisfied for $p>\sim 0.07$. Note that since the error monotonically increases with $T$, the same also holds for $T\ge 3$.

\subsubsection*{Steane-7 Code}
Fig. \ref{Steane 7 optimal numerics} plots the optimal recovery performance for a tree as defined in Eq.(\ref{noisy tree definition -- no root noise}) composed of Steane-7 encoders that are standard in both $X$ and $Z$, and is similarly subject to independent $X$ and $Z$ errors for depths $T=1,2,3$. Specifics of this recovery procedure are detailed in Appendix \ref{BP Appendix}. Here,  we see that the asymptotic noise threshold for $X$ errors is below $\sim 0.15$. Because the Steane-7 code is self-dual, the same threshold holds true for the $Z$ errors as well. Using an argument similar to the Shor-9 code, one can show that for  $p>\sim 0.07$, entanglement cannot be transmitted beyond depth $T\ge 3$.

\section{Mapping Quantum trees to classical trees with correlated errors}\label{deph tree mapping sec}
In this section, we argue that stabilizer trees can be understood in terms of an equivalent classical problem, which is a variant of the standard broadcasting problem on trees discussed in the introduction. For simplicity,  we focus on trees constructed from a standard encoder of a CSS code with independent phase-flip and bit-flip channels.

Suppose we modify the original channel $\mathcal{E}_T$ defined in Eq.(\ref{noisy tree definition -- no root noise}), by adding a fully dephasing channel $\mathcal{D}_z$ before and after each encoder. 
That is, at each node we modify the encoder $\mathcal{V}$ to
\be\label{deph}
\mathcal{V}\ \ \ \longrightarrow \ \ \  \overline{\mathcal{V}}\equiv \mathcal{D}_z^{\otimes b}\circ\mathcal{V}\circ \mathcal{D}_z\ .
\ee
Then, the channel $\mathcal{E}_{T}$ defined in Eq.(\ref{noisy tree definition -- no root noise}) will be modified to 
\be
\overline{\mathcal{E}}_{T}(\rho) =\prod_{j=0}^{T-1}   {\overline{\mathcal{V}}}^{\otimes b^j}\circ \mathcal{N}^{\otimes b^j}(\rho)\ ,
\ee
which will be called the \emph{dephased tree} in the following. Equivalently, rather than inserting dephasing channels $\mathcal{D}_z$, we can assume the probability of $Z$ error $p_z$ is increased to 1/2. 

Then, at any level of the tree the density operator of qubits will be diagonal in the computational basis.  
In particular, the action of the dephased encoder $\overline{\mathcal{V}}$ is fully characterized by the classical channel with the conditional probability distribution
\be\label{class-enc}
P_z(\textbf{z}|c)=\langle \textbf{z} |\mathcal{V}(|c\rangle\langle c|)|\textbf{z}\rangle\ \ \ \ \  : \ c=0,1\ ,\ \textbf{z}\in\{0,1\}^b\ \ .
\ee
Note that the property that the density operator is diagonal in the computational basis remains valid under Pauli errors. In particular, $Z$ errors act trivially on such states, whereas $X$ errors, i.e., bit-flip channels, become a binary symmetric channel, which can described by the conditional probability
\be
N_z(j|i)=p_x+(1-2p_x)\delta_{i,j}\ \ \  : \ i, j=0,1\ .
\ee
Therefore, by adding dephasing before and after each encoder, as described in Eq.(\ref{deph}), we obtain a fully classical problem: at each node, a bit $c\in\{0,1\}$ enters the encoder, and at the output of the encoder  a bit string $\textbf{z}\in\{0,1\}^b$ with probability $P(\textbf{z}|c)$ is generated. Then, each bit goes into a binary symmetric channel $Q$, which flips the input bit with probability $p_x$ and leaves it unchanged with probability $1-p_x$.  Finally, each bit goes to the next level and enters another encoder $\overline{\mathcal{V}}$.


The assumption that the encoder at each node is a standard encoder implies that the effect of inserted dephasing channels $\mathcal{D}_z$ in the middle of the tree is equivalent to correlated $Z$ errors on the leaves.  Furthermore, because for CSS codes $Z$ and $X$ errors can be corrected independently, we conclude that the probability of logical $X$ error, $q_T^x$ for the dephased channel $\overline{\mathcal{E}}_{T}$ and the original channel ${\mathcal{E}}_{T}$  are equal. Another way to phrase this observation is to say that 
the distinguishability of the output states $\mathcal{E}_T(|0\rangle\langle 0|)$ and  $\mathcal{E}_T(|1\rangle\langle 1|)$ 
with respect to any measure of distinguishability, such as the trace distance,  is the same as the distinguishability of 
states 
$\overline{\mathcal{E}}_{T}(|0\rangle\langle 0|)$ and $ \overline{\mathcal{E}}_{T}(|1\rangle\langle 1|)$. To prove this it suffices to show that there exists a channel $\mathcal{T}_T$
such that 
\be
\mathcal{T}_T\circ \overline{\mathcal{E}}_T(|c\rangle\langle c|)=\mathcal{E}_T(|c\rangle\langle c|)\ \ \ : c=0,1\ .
\ee
A channel $\mathcal{T}_T$ that satisfies the above equation is the error correction of $Z$ errors, which requires measuring $X$ stabilizers and correcting the $Z$ errors based on the outcomes of the measurement,  followed by adding certain (correlated) $Z$ errors to reproduce the effect of $Z$ errors in $\mathcal{E}_T$  (see proposition \ref{prop} in Appendix \ref{deph appendix}).   Note that instead of the $Z$ basis, we can dephase qubits in the $X$ basis. This results in  another classical tree with depth $T$, which can be used to determine the probability of logical $Z$ error $q_T^z$.  Similarly, in the case of CSS codes with anti-standard encoders the equivalent classical tree can be obtained  by measuring the qubits in the $Z$ and $X$ bases, alternating between different levels (see the example below).

We conclude that  any CSS tree with standard or anti-standard encoders and independent $X$ and $Z$ errors  
can be fully characterized in terms  of a modification of the  classical broadcasting problem which involves correlated noise on the  edges that leave the same node. As an example, here we consider the classical tree corresponding to  the Shor 9-qubit code. See Appendix \ref{shor dephased} for further examples.

\subsection{Example: Bell Tree}
Consider the Bell tree discussed in Sec.(\ref{bell tree sec}) with noise rate $p_x=p_z=p$. Recall that the Bell encoder is an anti-standard encoder. Thus, as mentioned above, we must dephase alternatingly in the $X$ and $Z$ bases. Suppose we are interested in the propagation of information encoded in $Z$ eigenstates (i.e., $\{\ket{0},\ket{1}\}$) of the input. Then, we shall dephase the root of the tree in the $Z$ direction, then the next level must be dephased in the $X$ direction, and so on. This implies that the dephased versions of two levels of a Bell tree is the concatenation of two kinds of classical encoders
\begin{align}
    \mathbb{M}_1:&\ P_z(00|0)=1\nonumber\\ 
    &\ P_z(11|1)=1\\
    \mathbb{M}_2:&\   P_z(\mathbf{w}|0)=1/2,\ {\rm when}\ \mathbf{w} \in \{00, 11\}\nonumber\\
  &\   P_z(\mathbf{w}|1)=1/2,\ {\rm when}\ \mathbf{w} \in \{01, 10\}
\end{align}
that are present on alternate levels of the tree.  For instance, the dephased version of a $T=2$ Bell tree (as seen in Fig. (\ref{BinaryvsBell}) is,
\begin{align}
    \mathbb{N}^{\otimes 4} \circ \mathbb{M}_1^{\otimes 2} \circ \mathbb{N}^{\otimes 2} \circ \mathbb{M}_2\ ,
\end{align}
where $\mathbb{N}$ represents the classical bit-flip channel that flips the input bit with probability $p$.

\section{Future Directions}
This paper opens up several lines of inquiry. For instance, finding the exact noise threshold for the propagation of classical information and entanglement in an infinite tree remains an open question. 
A natural extension of this work is to consider other (non-Clifford) encoders and noise processes. In particular, the case of Haar-random (independent) unitaries \cite{dalzell2021random,deshpande2022tight} at different nodes is an interesting model for natural processes. Other noise models such as erasure noise or coherent noise can also be studied. While the current study assumes that the decoder has access to \textit{all} the exponentially many qubits at the end of the tree, more pragmatically one can study how classical and quantum correlations are affected when we consider only a subset of the leaves instead of all of them. A thorough study of classical trees obtained by dephasing quantum trees (as explained in Sec.\ref{open q}) is an open question not just of independent interest within classical network theory, but with direct applications to CSS code trees. Finally, some ideas presented in this paper can have applications for more general quantum networks described by directed acyclic graphs.\\


 \section*{Acknowledgments}
 This work is supported by a collaboration between the US DOE and other Agencies. This material is based upon work supported by the U.S. Department of Energy, Office of Science, National Quantum Information Science Research Centers, Quantum Systems Accelerator. Additional support is acknowledged from  NSF QLCI grant OMA-2120757,  NSF grant FET-1910571, NSF FET-2106448.  
  We acknowledge helpful discussions with Thomas Barthel, Ken Brown, Daniel Gottesman, Michael Gullans, David Huse, Jianfeng Lu, Henry Pfister, and Grace Sommers. SAY sincerely thanks Saathwik Yadavalli for all the time and effort he volunteered for making figures for the paper despite his busy schedule.

\bibliography{main}

\onecolumngrid

\newpage

\maketitle
\vspace{-5in}
\begin{center}

\Large{Appendix}
\end{center}
\appendix

\section{Quick review of the relevant results on classical trees}\label{classical tree discussion}

In this appendix, we briefly review a result presented in \cite{evans2000} and \cite{Mossel_Peres} that upperbounds the information propagation down noisy trees.

Consider a tree network where a certain node/vertex has been chosen to be the root. This choice allows us to view the tree as a directed graph where bits stream from the root to the leaves. 
 Every vertex $\mathbf{v}$ is a bit-copier process where $0 \mapsto \mathbf{0}^b$ and $1 \mapsto \mathbf{1}^b$, where $b$ is the number of children at the vertex $\mathbf{v}$, and $\mathbf{0}^b$ is the string $00..0$ repeated $b$ times (similarly for $\mathbf{1}^b$). This defines a \textit{$b$-ary} tree. Additionally, each edge $\mathbf{e}$ is associated to a single bit-flip with probability $p_\mathbf{e}$. We also define $\theta_\mathbf{e}:= 1-2 p_\mathbf{e}$.

Let us assume the bit at the root of the tree is $\textbf{x}_0$.   At the end of this tree process, the probability of obtaining bitstring $\textbf{x}_T$ at level $T$ is ${p}(\textbf{x}_T|\textbf{x}_0)$ where we conditioned by the root value $\textbf{x}_0\in\{0,1\}$. The information propagated via the tree is quantified by the total variation distance,
\begin{align}
    \Delta \equiv \frac12||{p}(\textbf{x}_T|\textbf{x}_0=0) - {p}(\textbf{x}_T|\textbf{x}_0=1)||_1 = \frac12 \sum_{\textbf{x}_T}|{p}(\textbf{x}_T|\textbf{x}_0=0) - {p}(\textbf{x}_T|\textbf{x}_0=1)|.
\end{align}

To upperbound this quantity, we first define,
\begin{align}
    \Theta_{\textbf{e}}:=\prod_{\textbf{e}' \in [\textbf{e}]} \theta_{\textbf{e}'}
\end{align}
where $[\textbf{e}]$ is the set of all edges in the path between the root and edge $\textbf{e}$, including the ends. Because of the tree structure, we are ensured a unique path between the root and any edge $\textbf{e}$.  
Theorem 1.3' in \cite{evans2000} states that,
\begin{align}
    \Delta^2 \leq 2 \sum_{\textbf{e} \in \textbf{W}_T} \Theta^2_\textbf{e}
\end{align}
where the summation is over the set $\textbf{W}_T$ that contains all edges at level $T$ of the tree. 
\footnote{This statement is proved in \cite{evans2000} using a \textit{domination} argument where they construct a `stringy' version of the original tree which can be post-processed by appropriate stochastic maps to simulate the original tree process (i.e., the original tree is \textit{dominated} by the `stringy' tree). Thus, benchmarking this `stringy' tree gives an upperbound on the original tree's performance due to the data processing inequality. And more importantly, this `stringy' tree is constructed to be one for whom the total variation distance is easily computed, thus yielding the above upperbound.}

Suppose the error probability on each edge is $p$ and there are $n_T$ edges in level $T$. Then this formula is simplified to
\begin{align}
    \Delta \leq \sqrt{2} \times \sqrt{n_T} \times (1-2p)^T \ .
\end{align}

Furthermore, consider the important special case of a \textit{full} $b$-ary tree where at level $T$, $n_T=b^T$, and all bit-flip probabilities across edges are the same, i.e., $p$. So,
\begin{align}
\Delta \leq  \sqrt{2}\times (\sqrt{b} (1-2p))^{T}.
\end{align}
This is the result presented in Eq.(13) of \cite{evans2000}. This upperbound provides an asymptotic noise threshold upperbound of $p_{\rm th} \leq \frac12 (1-\frac{1}{\sqrt{b}})$ (This square-root scaling upperbound is sometimes referred to as the Kesten-Stigum bound). Remarkably,   it turns out that this bound holds as equality. In particular, 
Refs. \cite{evans2000, Mossel_Peres} show that if $b \times |1-2p|^2>1$, then the output of the majority voting decoder remains correlated with the input of the tree, even in the limit $T\rightarrow \infty$. 

\newpage

\section{Review of some useful facts about Pauli channels}

Here, we briefly review some useful facts about the Diamond norm distance of Pauli channels and the condition for entanglement-breaking Pauli channels. See, e.g., \cite{wilde2013quantum, horodecki2003entanglement} for further discussion and references.

\subsection{Diamond norm distance between Pauli Channels}\label{appendix: diamond pauli}

Recall that the \textit{diamond norm} of a super-operator ${\Phi}$ that acts on operators in the system $S$ with Hilbert space $\mathcal{H}_S$ is defined as 
\begin{align}
    \|{\Phi}\|_\diamond=\max_{\rho_{SR}} \|{\Phi} \otimes {\rm id}_R(\rho_{SR})\|_1\ ,
\end{align}
where $R$ is a reference system with Hilbert space $\mathcal{H}_R$ such that ${\rm dim}(\mathcal{H}_R)={\rm dim}(\mathcal{H}_S)$, and ${\rm id}_R$ is the identity super-operator on $\mathcal{H}_R$. The maximization is over all states in the composite system $SR$, i.e., $\mathcal{H}_S \otimes \mathcal{H}_R$. We use the diamond norm to quantify the distance between two channels. 
\begin{lemma}\label{Pauli diamond}
Consider two n-qubit Pauli channels, namely
\begin{align}
    \mathcal{P}(\cdot)&=\sum_{\mathbf{x}, \mathbf{z}\in\{0,1\}^n} p(\mathbf{x}, \mathbf{z})\ \mathbf{X}^\mathbf{x} \mathbf{Z}^\mathbf{z} (.) \mathbf{Z}^\mathbf{z} \mathbf{X}^\mathbf{x}\nonumber \\
    \mathcal{Q}(\cdot)&=\sum_{\mathbf{x}, \mathbf{z}\in\{0,1\}^n} q(\mathbf{x}, \mathbf{z})\ \mathbf{X}^\mathbf{x} \mathbf{Z}^\mathbf{z} (.) \mathbf{Z}^\mathbf{z} \mathbf{X}^\mathbf{x}.\nonumber
\end{align}
The diamond distance of the channels is equal to the total variation distance of the corresponding probability distributions
$$d_\diamond(\mathcal{P},\mathcal{Q})=\frac12\|\mathcal{P}-\mathcal{Q}\|_\diamond = \frac12\sum_{\mathbf{x}, \mathbf{z}} |p(\mathbf{x},\mathbf{z}) - q(\mathbf{x},\mathbf{z})|\ .$$
\end{lemma}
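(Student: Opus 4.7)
The plan is to sandwich $\|\mathcal{P}-\mathcal{Q}\|_\diamond$ between matching upper and lower bounds, both equal to $\sum_{\mathbf{x},\mathbf{z}}|p(\mathbf{x},\mathbf{z})-q(\mathbf{x},\mathbf{z})|$.

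For the upper bound, I would decompose
$$\mathcal{P}-\mathcal{Q} \;=\; \sum_{\mathbf{x},\mathbf{z}} \bigl[p(\mathbf{x},\mathbf{z})-q(\mathbf{x},\mathbf{z})\bigr]\,\mathcal{U}_{\mathbf{x},\mathbf{z}}\ ,$$
where $\mathcal{U}_{\mathbf{x},\mathbf{z}}(\cdot):=\mathbf{X}^{\mathbf{x}}\mathbf{Z}^{\mathbf{z}}(\cdot)\mathbf{Z}^{\mathbf{z}}\mathbf{X}^{\mathbf{x}}$ is a unitary channel, hence has $\|\mathcal{U}_{\mathbf{x},\mathbf{z}}\|_\diamond=1$. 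The triangle inequality for $\|\cdot\|_\diamond$ then gives $\|\mathcal{P}-\mathcal{Q}\|_\diamond \le \sum_{\mathbf{x},\mathbf{z}}|p(\mathbf{x},\mathbf{z})-q(\mathbf{x},\mathbf{z})|$ immediately.

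For the matching lower bound, I would evaluate the diamond norm on the $n$-qubit maximally entangled state $|\Phi^+\rangle_{SR}=2^{-n/2}\sum_{i}|i\rangle_S|i\rangle_R$, with $R$ a reference register of the same dimension as $S$. The crucial observation is that the $4^n$ states $|\Phi_{\mathbf{x},\mathbf{z}}\rangle:=(\mathbf{X}^{\mathbf{x}}\mathbf{Z}^{\mathbf{z}}\otimes I)|\Phi^+\rangle$ form the generalized Bell basis and are therefore orthonormal; this follows from the identity $\langle\Phi^+|(A\otimes I)|\Phi^+\rangle=2^{-n}\tr(A)$ together with the fact that $\tr(\mathbf{X}^{\mathbf{x}}\mathbf{Z}^{\mathbf{z}})=2^n\,\delta_{\mathbf{x},\mathbf{0}}\,\delta_{\mathbf{z},\mathbf{0}}$ up to a sign. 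Consequently,
$$(\mathcal{P}-\mathcal{Q})\otimes \mathrm{id}_R\bigl(|\Phi^+\rangle\langle\Phi^+|\bigr)\;=\;\sum_{\mathbf{x},\mathbf{z}}\bigl[p(\mathbf{x},\mathbf{z})-q(\mathbf{x},\mathbf{z})\bigr]\,|\Phi_{\mathbf{x},\mathbf{z}}\rangle\langle\Phi_{\mathbf{x},\mathbf{z}}|$$
is already in spectral form, so its trace norm equals exactly $\sum_{\mathbf{x},\mathbf{z}}|p(\mathbf{x},\mathbf{z})-q(\mathbf{x},\mathbf{z})|$, which lower bounds $\|\mathcal{P}-\mathcal{Q}\|_\diamond$ by the definition of the diamond norm. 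Combining the two directions gives the claimed equality.

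I do not anticipate a genuine obstacle: the entire argument reduces to the fact that the (rescaled) Choi matrix of a Pauli channel is diagonal in the Bell basis with diagonal entries given by the Pauli probabilities, and for such Hermiticity-preserving maps the maximally entangled input saturates the diamond norm. The only routine care is in verifying the Bell-basis orthogonality, which is a short Pauli trace computation, and in tracking the global signs that arise when commuting $\mathbf{X}$ past $\mathbf{Z}$ factors; since these are phases on the basis vectors they drop out of the rank-one projectors $|\Phi_{\mathbf{x},\mathbf{z}}\rangle\langle\Phi_{\mathbf{x},\mathbf{z}}|$ and hence do not affect either side of the identity.
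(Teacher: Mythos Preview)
Your proposal is correct and matches the paper's proof essentially step for step: the upper bound via the triangle inequality on unitary Pauli conjugations, and the lower bound by evaluating on the maximally entangled state and exploiting the orthonormality of the generalized Bell states $(\mathbf{X}^{\mathbf{x}}\mathbf{Z}^{\mathbf{z}}\otimes I)|\Phi^+\rangle$. The paper frames the lower-bound step as ``inspired by super-dense coding,'' but the content is identical to what you wrote.
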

\begin{proof}
We  prove this statement by lower and upperbounding by the TVD. The upperbound follows from triangle inequality for diamond norm, namely
\begin{align}
    &||\mathcal{P}-\mathcal{Q}||_\diamond
=\|\sum_{\mathbf{x},\mathbf{z}} [p(\mathbf{x},\mathbf{z}) - q(\mathbf{x},\mathbf{z})] \mathbf{X}^\mathbf{x} \mathbf{Z}^\mathbf{z}(\cdot) \mathbf{X}^\mathbf{x} \mathbf{Z}^\mathbf{z} \|_\diamond\le \sum_{\mathbf{x}, \mathbf{z}} |p(\mathbf{x},\mathbf{z}) - q(\mathbf{x},\mathbf{z})| \times \|\mathbf{X}^\mathbf{x} \mathbf{Z}^\mathbf{z}(\cdot) \mathbf{X}^\mathbf{x} \mathbf{Z}^\mathbf{z} \|_\diamond=\sum_{\mathbf{x}, \mathbf{z}} |p(\mathbf{x},\mathbf{z}) - q(\mathbf{x},\mathbf{z})|\ .
\end{align}

To show that this bound holds as an equality, it suffices to show that there exists a state $\rho_{RS}$ such that $\|(\mathcal{P}-\mathcal{Q})\otimes {\rm id}_R(\rho_{SR})\|_1$ is equal to the right-hand side. 
Inspired by the super-dense coding protocol, which allows perfect discrimination of Pauli operators using maximally entangled states, we consider the maximally entangled state $\ket{\Psi}=\frac{1}{\sqrt{2^n}}\sum_{\mathbf{i}} \ket{\mathbf{i}}\otimes \ket{\mathbf{i}}$, where the summation $\mathbf{i}$ is over all bitstrings of length $n$. Then we have,
\begin{align}
    ||\mathcal{P}-\mathcal{Q}||_\diamond
    =&\max_{\rho_{SR}} ||\sum_{\mathbf{x},\mathbf{z}} [p(\mathbf{x},\mathbf{z}) - q(\mathbf{x},\mathbf{z})]\ \mathbf{X}^\mathbf{x} \mathbf{Z}^\mathbf{z} \otimes I_R (\rho_{SR}) \mathbf{Z}^\mathbf{z} \mathbf{X}^\mathbf{x} \otimes I_R||_1\\
    \geq&  ||\sum_{\mathbf{x},\mathbf{z}} [p(\mathbf{x},\mathbf{z}) - q(\mathbf{x},\mathbf{z})]\ \mathbf{X}^\mathbf{x} \mathbf{Z}^\mathbf{z} \otimes I_R (\ketbra{\Psi}{\Psi}) \mathbf{Z}^\mathbf{z} \mathbf{X}^\mathbf{x} \otimes I_R||_1
\end{align}
As we expect from super-dense coding, the set of states, $\{\mathbf{X}^\mathbf{x} \mathbf{Z}^\mathbf{z} \otimes I_R \ket{\Psi}\}_{\mathbf{x},\mathbf{z}}$ indexed by $\mathbf{x}, \mathbf{z}$ are orthonormal. To see this, consider $\mathbf{x}\neq\mathbf{x}'$ and/or $\mathbf{z}\neq\mathbf{z}'$,
\begin{align}
    &\bra{\Psi}[\mathbf{X}^\mathbf{x} \mathbf{Z}^\mathbf{z} \otimes I_R]\ [\mathbf{Z}^\mathbf{z'} \mathbf{X}^\mathbf{x'}  \otimes I_R]\ket{\Psi}\\
    =& \pm \bra{\Psi} \mathbf{X}^\mathbf{x + x'} \mathbf{Z}^\mathbf{z +z'} \otimes I_R \ket{\Psi}\\
    =& \pm \frac{1}{2^n} \sum_{\mathbf{i}, \mathbf{i}'} \bra{\mathbf{i}} \mathbf{X}^\mathbf{x+x'} \mathbf{Z}^\mathbf{z+z'} \ket{\mathbf{i}'} \braket{\mathbf{i}|\mathbf{i}'}\\
    =& \pm \frac{1}{2^n} \sum_{\mathbf{i}} \bra{\mathbf{i}} \mathbf{X}^\mathbf{x+x'} \mathbf{Z}^\mathbf{z+z'} \ket{\mathbf{i}}\\ 
    =& \pm \frac{1}{2^n} {\rm Tr}(\mathbf{X}^\mathbf{x+x'} \mathbf{Z}^\mathbf{z+z'})\\ 
    =&\ 0
\end{align}
where in the last step we used that $\mathbf{x}+\mathbf{x}'\neq\mathbf{0}$ when $\mathbf{x} \neq \mathbf{x}'$ (similarly for $\mathbf{z}$). Thus, $\sum_{\mathbf{x},\mathbf{z}} [p(\mathbf{x},\mathbf{z}) - q(\mathbf{x},\mathbf{z})]\ [\mathbf{X}^\mathbf{x} \mathbf{Z}^\mathbf{z} \otimes I_R \ket{\Psi}][\bra{\Psi} \mathbf{Z}^\mathbf{z} \mathbf{X}^\mathbf{x} \otimes I_R]$ is a convex-mixture of orthonormal states. So the 1-norm evaluates to,
\begin{align}
    &||\sum_{\mathbf{x},\mathbf{z}} [p(\mathbf{x},\mathbf{z}) - q(\mathbf{x},\mathbf{z})]\ \mathbf{X}^\mathbf{x} \mathbf{Z}^\mathbf{z} \otimes I_R (\ketbra{\Psi}{\Psi})  \mathbf{Z}^\mathbf{z} \mathbf{X}^\mathbf{x} \otimes I_R||_1
    = \sum_{\mathbf{x},\mathbf{z}} |p(\mathbf{x},\mathbf{z}) - q(\mathbf{x},\mathbf{z})|\ .
\end{align}

This proves the lemma.
\end{proof}

Now we compute a useful example using the above lemma.
\begin{lemma}\label{Z channel diamond norm}
    Consider the channel $\mathcal{Q}_z \circ \mathcal{Q}_x$ that is a concatenation of a bit-flip and phase-flip channel, i.e., $\mathcal{Q}_x(.)=(1-p_x)(.)+p_xX(.)X$  and $\mathcal{Q}_z(.)=(1-p_z)(.)+p_zZ(.)Z$ respectively, where $p_x\leq1$ and $p_z\leq1$.  Let $\mathcal{D}_z$ be the dephasing channel defined by $\mathcal{D}_z(\rho)=(\rho+Z\rho Z)/2$. Then,
    \begin{align}
        \|\mathcal{Q}_z \circ \mathcal{Q}_x- \mathcal{Q}_z \circ \mathcal{Q}_x\circ \mathcal{D}_z\|_\diamond= |1-2p_z|.
    \end{align}
\end{lemma}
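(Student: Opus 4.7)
The plan is to reduce the statement to the Pauli-channel diamond-distance computation furnished by Lemma~\ref{Pauli diamond}. Both $\mathcal{Q}_z \circ \mathcal{Q}_x$ and $\mathcal{Q}_z \circ \mathcal{Q}_x \circ \mathcal{D}_z$ are compositions of mutually commuting Pauli channels, hence again Pauli channels, so Lemma~\ref{Pauli diamond} turns the diamond norm into the $\ell^1$-distance between two explicit weight distributions on $\{I,X,Y,Z\}$.

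First I would simplify the right-hand channel using two elementary identities. (i) $\mathcal{Q}_z \circ \mathcal{D}_z = \mathcal{D}_z$: expanding $\mathcal{Q}_z(\tfrac{1}{2}(\rho + Z\rho Z)) = \tfrac{1}{2}(\rho + Z\rho Z)$ shows that the fully dephased state is a fixed point of $\mathcal{Q}_z$. (ii) $\mathcal{Q}_x$ and $\mathcal{D}_z$ commute under composition, since all Pauli channels commute. Together these give
\[
\mathcal{Q}_z \circ \mathcal{Q}_x \circ \mathcal{D}_z \;=\; \mathcal{Q}_z \circ \mathcal{D}_z \circ \mathcal{Q}_x \;=\; \mathcal{D}_z \circ \mathcal{Q}_x,
\]
replacing the $p_z$-biased phase-flip by the $1/2$-biased one while leaving the bit-flip factor intact. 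This is the key conceptual move: it makes it manifest why the final answer cannot depend on $p_x$.

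Next I would just read off the Pauli weights on each side. The channel $\mathcal{Q}_z \circ \mathcal{Q}_x$ has weights $(1-p_x)(1-p_z),\ p_x(1-p_z),\ p_x p_z,\ (1-p_x)p_z$ on $I,X,Y,Z$, while $\mathcal{D}_z \circ \mathcal{Q}_x$ has weights $\tfrac{1-p_x}{2},\ \tfrac{p_x}{2},\ \tfrac{p_x}{2},\ \tfrac{1-p_x}{2}$. Each of the four coordinate-wise differences factors as $(1-p_x)|\tfrac{1}{2}-p_z|$ or $p_x|\tfrac{1}{2}-p_z|$, so the total variation distance is $2|\tfrac{1}{2}-p_z|=|1-2p_z|$. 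Applying Lemma~\ref{Pauli diamond} yields $\|\mathcal{Q}_z\circ\mathcal{Q}_x - \mathcal{Q}_z\circ\mathcal{Q}_x\circ\mathcal{D}_z\|_\diamond = |1-2p_z|$ as claimed.

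There is no substantive obstacle here: the statement is really a one-line corollary of Lemma~\ref{Pauli diamond} once the absorption identity $\mathcal{Q}_z\circ\mathcal{D}_z=\mathcal{D}_z$ is noted. The only subtlety worth flagging for the reader is that this absorption is what forces the $p_x$-dependence to cancel, so the distinguishability of the two channels is governed entirely by how far $p_z$ lies from the maximally-dephased value $1/2$.
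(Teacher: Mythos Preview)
Your proof is correct. Both your argument and the paper's hinge on Lemma~\ref{Pauli diamond}, but the tactics differ: you compute the Pauli weights of $\mathcal{Q}_z\circ\mathcal{Q}_x$ and $\mathcal{D}_z\circ\mathcal{Q}_x$ explicitly and apply Lemma~\ref{Pauli diamond} once to read off the exact diamond norm, whereas the paper argues by separate upper and lower bounds --- an upper bound via contractivity of the diamond norm under post-processing by $\mathcal{Q}_x$ (reducing to $\|\mathcal{Q}_z-\mathcal{Q}_z\circ\mathcal{D}_z\|_\diamond$), and a matching lower bound by evaluating on the input $|+\rangle\langle+|$, which $\mathcal{Q}_x$ fixes. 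Your route is more direct and makes the cancellation of $p_x$ algebraically visible; the paper's route avoids writing out all four weights and instead showcases the data-processing and state-evaluation techniques that recur elsewhere in the paper. One minor terminological slip: what you call the ``total variation distance'' is actually $\sum|p-q|$, not $\tfrac12\sum|p-q|$, but since Lemma~\ref{Pauli diamond} states $\|\mathcal{P}-\mathcal{Q}\|_\diamond=\sum|p-q|$ your conclusion is unaffected.
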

\begin{proof}
The commutativity of Pauli channels together with contractivity of the diamond norm under data processing implies
\be
\|\mathcal{Q}_z \circ \mathcal{Q}_x- \mathcal{Q}_z \circ \mathcal{Q}_x\circ \mathcal{D}_z\|_\diamond=\|\mathcal{Q}_z \circ \mathcal{Q}_x- \mathcal{Q}_z \circ  \mathcal{D}_z\circ\mathcal{Q}_x\|_\diamond\le \|\mathcal{Q}_z  - \mathcal{Q}_z \circ \mathcal{D}_z\|_\diamond=\frac{1}{2} \|\mathcal{Q}_z  - \mathcal{Q}_z \circ \mathcal{Z}\|_\diamond = |1-2p_z|\ ,
\ee
where $\mathcal{Z}(.)=Z(.)Z$,  and  the last step  follows from lemma \ref{Pauli diamond}. To finish the proof we note that the left-hand side is lower bounded by
\be
\|\mathcal{Q}_z \circ \mathcal{Q}_x- \mathcal{Q}_z \circ \mathcal{Q}_x\circ \mathcal{D}_z\|_\diamond\ge \|\mathcal{Q}_z \circ \mathcal{Q}_x(|+\rangle\langle +|)- \mathcal{Q}_z \circ  \mathcal{D}_z\circ\mathcal{Q}_x(|+\rangle\langle +|)\|_1=\|\mathcal{Q}_z (|+\rangle\langle +|)-   \mathcal{D}_z\circ\mathcal{Q}_z(|+\rangle\langle +|)\|_1=|1-2p_z|\ ,
\ee
where we have used the fact that $\mathcal{Q}_x(|+\rangle\langle +|)=|+\rangle\langle +|$. This proves the lemma.
\end{proof}

\subsection{Entanglement Breaking  Pauli channels}\label{ent depth uncorrelated XZ}
We shall first state the following lemmas regarding entanglement-breaking properties of single-qubit Pauli channels \cite{wilde2013quantum, horodecki2003entanglement}.

\begin{lemma}\label{pauli ent breaking}
Consider the Pauli channel $\mathcal{N}(\rho)=p_I \rho+ p_x X\rho X+p_y Y\rho Y+ p_z Z\rho Z$, where $p_I,p_x, p_y, p_z\ge 0$ and $p_I+p_x+p_y+p_z=1$. This channel $\mathcal{N}$ is entanglement-breaking if and only if $\max\{p_I,p_x, p_y, p_z\} \leq 1/2$. 
\end{lemma}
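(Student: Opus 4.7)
The plan is to reduce the claim to the well-known characterization of separable two-qubit Bell-diagonal states via the Choi--Jamio\l{}kowski isomorphism. Recall that a channel is entanglement-breaking iff its Choi state is separable, where here the Choi state is $J(\mathcal{N}) = (\mathcal{N}\otimes\mathrm{id})(\ketbra{\Phi^+}{\Phi^+})$ with $\ket{\Phi^+} = (\ket{00}+\ket{11})/\sqrt{2}$. My first step would be to compute $J(\mathcal{N})$ explicitly: since $(P\otimes I)\ket{\Phi^+}$ for $P\in\{I,X,Y,Z\}$ produces, up to an irrelevant global phase, the four Bell states, one obtains
\[J(\mathcal{N}) = p_I \ketbra{\Phi^+}{\Phi^+} + p_x \ketbra{\Psi^+}{\Psi^+} + p_y \ketbra{\Psi^-}{\Psi^-} + p_z \ketbra{\Phi^-}{\Phi^-},\]
a Bell-diagonal state whose four eigenvalues are exactly $(p_I,p_x,p_y,p_z)$. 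The lemma thus reduces to showing that such a Bell-diagonal two-qubit state is separable iff $\max_i p_i \leq 1/2$.

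Next I would invoke Horodecki's theorem that separability of a two-qubit state is equivalent to having positive partial transpose (PPT), and compute the partial transpose of $J(\mathcal{N})$ in the computational basis. Using $(\ket{ij}\bra{kl})^{T_B}=\ket{il}\bra{kj}$, one checks that $J(\mathcal{N})^{T_B}$ decomposes into two $2\times 2$ blocks on the subspaces $\mathrm{span}\{\ket{00},\ket{11}\}$ and $\mathrm{span}\{\ket{01},\ket{10}\}$, each of the form $\bigl(\begin{smallmatrix}\alpha & \beta \\ \beta & \alpha\end{smallmatrix}\bigr)$. A short calculation shows that the four resulting eigenvalues are precisely $\tfrac{1}{2}-p_I,\ \tfrac{1}{2}-p_x,\ \tfrac{1}{2}-p_y,\ \tfrac{1}{2}-p_z$. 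Positivity of the partial transpose---and hence separability of $J(\mathcal{N})$---is therefore equivalent to $\max\{p_I,p_x,p_y,p_z\} \leq 1/2$, proving the lemma.

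As a conceptual cross-check, the ``only if'' direction also admits a one-line fidelity argument: every product state has squared overlap at most $1/2$ with any maximally entangled state, so the same bound holds for separable states by convexity; if some $p_i > 1/2$ then $J(\mathcal{N})$ has fidelity exceeding $1/2$ with the corresponding Bell state and so cannot be separable. For the ``if'' direction one can alternatively produce an explicit separable decomposition using the identity $\tfrac{1}{2}(\ketbra{\Phi^+}{\Phi^+}+\ketbra{\Phi^-}{\Phi^-})=\tfrac{1}{2}(\ketbra{00}{00}+\ketbra{11}{11})$ and its analogues for the other Bell-state pairings, decomposing $J(\mathcal{N})$ as a convex mixture of such pair-states whenever $\max_i p_i \leq 1/2$. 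Since this is a classical two-qubit fact, there is no substantive obstacle; the only mildly delicate step is the partial-transpose bookkeeping, where one must carefully track the off-diagonal entries between the $\{\ket{00},\ket{11}\}$ and $\{\ket{01},\ket{10}\}$ subspaces to confirm that the eigenvalues of $J(\mathcal{N})^{T_B}$ come out to exactly $\tfrac{1}{2}-p_i$ rather than some other linear combination of the probabilities.
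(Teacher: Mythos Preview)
Your proposal is correct and takes essentially the same approach as the paper: compute the Choi state of $\mathcal{N}$, apply the Peres--Horodecki (PPT) criterion for two qubits, and read off the partial-transpose eigenvalues $\tfrac{1}{2}-p_i$. The only cosmetic difference is that the paper writes the Choi state in the Pauli-tensor expansion $\tfrac{1}{4}(I\otimes I + \alpha X\otimes X - \beta Y\otimes Y + \gamma Z\otimes Z)$ rather than in the Bell basis, so the partial transpose is immediate (it flips the sign of $Y\otimes Y$); your Bell-basis formulation and the additional fidelity/separable-decomposition cross-checks are nice but not needed.
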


\begin{proof}
    Let $\mathcal{N}$ act on system $A$, and $R$ be a reference system $R$. Now, consider a maximally entangled input state on $RA$, whose density operator can be written as $$\ketbra{\Phi}{\Phi}_{RA}=\frac14 (I\otimes I +X\otimes X - Y\otimes Y +Z\otimes Z).$$
Then, 
\begin{align}
    ({\rm id} \otimes \mathcal{N})&[\ketbra{\Phi}{\Phi}_{RA}] =\frac14 \Big[I\otimes I + (p_I + p_x - p_y - p_z) X\otimes X  - (p_I - p_x + p_y - p_z)Y\otimes Y + (p_I -p_x -p_y + p_z)Z\otimes Z\Big].
\end{align}
Partial-transpose of this state is 
$$\frac14 \Big[I\otimes I + (p_I + p_x - p_y - p_z) X\otimes X  + (p_I - p_x + p_y - p_z)Y\otimes Y + (p_I -p_x -p_y + p_z)Z\otimes Z\Big].$$
The eigenvalues of this matrix are $\{1/2 -p_I,1/2 -p_x,1/2 -p_y,1/2 -p_z\}$. From the Peres-Horodecki criterion \cite{Horo-Peres1}, all these eignevalues are positive if and only if $({\rm id} \otimes \mathcal{N}) [\ketbra{\Phi}{\Phi}_{RA}]$ is separable. Thus, for $\mathcal{N}$ to be entanglement-breaking, $\min\{1/2 -p_I,1/2 -p_x,1/2 -p_y,1/2 -p_z\} \geq 0$. This is equivalent to $\max\{p_I,p_x, p_y, p_z\} \leq 1/2$.  
\end{proof}
\begin{lemma}\label{ind XZ ent break}
    Define $\mathcal{N}_x(\rho) = (1-p_x)\rho + p_x X\rho X$ and $\mathcal{N}_z(\rho) = (1-p_z)\rho + p_z Z\rho Z$, where $p_x, p_z \leq 1/2$. Consider the special case of independent $X$ and $Z$ errors, that is $\mathcal{N}=\mathcal{N}_x \circ \mathcal{N}_z=\mathcal{N}_z \circ \mathcal{N}_x$. Such a channel $\mathcal{N}$ is entanglement-breaking if and only if $(1-p_x)(1-p_z)< 1/2$. 
\end{lemma}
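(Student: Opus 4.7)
The plan is to reduce this to the previously proven lemma \ref{pauli ent breaking} by explicitly computing the composition $\mathcal{N}_x\circ \mathcal{N}_z$ as a single-qubit Pauli channel. First I would expand the composition, using $XZ = -iY$ (equivalently $ZX = iY$) so that $XZ\rho ZX = Y\rho Y$, which gives
\begin{equation}
\mathcal{N}(\rho) = (1-p_x)(1-p_z)\,\rho + p_x(1-p_z)\,X\rho X + (1-p_x)p_z\,Z\rho Z + p_x p_z\,Y\rho Y.
\end{equation}
In other words, the four Pauli weights of $\mathcal{N}$ are $p_I=(1-p_x)(1-p_z)$, $p_X=p_x(1-p_z)$, $p_Z=(1-p_x)p_z$, and $p_Y=p_xp_z$.

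Next I would apply lemma \ref{pauli ent breaking}, which says that a single-qubit Pauli channel is entanglement-breaking if and only if each of its four Pauli weights is at most $1/2$. So the question reduces to determining which of these four weights can realize the binding constraint.

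Under the standing hypothesis $p_x,p_z\le 1/2$, three of the four weights are automatically bounded by $1/2$: indeed $p_X = p_x(1-p_z) \le p_x\le 1/2$, $p_Z = (1-p_x)p_z \le p_z\le 1/2$, and $p_Y = p_x p_z\le 1/4$. Therefore the only nontrivial entry is $p_I=(1-p_x)(1-p_z)$, and the entanglement-breaking criterion collapses to the single condition $(1-p_x)(1-p_z)\le 1/2$, which is the asserted equivalence (the strict-vs-non-strict distinction at the boundary is inessential, since the critical Pauli channel at $(1-p_x)(1-p_z)=1/2$ is still entanglement-breaking by lemma \ref{pauli ent breaking}).

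There is no real obstacle here; the entire content is the Pauli-algebra bookkeeping of step one, which is straightforward once one is careful with the global phase coming from $XZ=-iY$. The lemma is essentially a corollary of lemma \ref{pauli ent breaking} after this computation. It is worth noting that the assumption $p_x,p_z\le 1/2$ is exactly what is needed to ensure the three non-identity weights never become the binding constraint; without it one would additionally have to check, e.g., $p_X\le 1/2$.
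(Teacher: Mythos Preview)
Your proposal is correct and follows essentially the same approach as the paper: expand $\mathcal{N}_x\circ\mathcal{N}_z$ into a single-qubit Pauli channel, observe that under $p_x,p_z\le 1/2$ the identity weight $(1-p_x)(1-p_z)$ dominates, and then invoke Lemma~\ref{pauli ent breaking}. The only cosmetic difference is that the paper simply notes $(1-p_x)(1-p_z)$ is the maximum coefficient, whereas you bound each non-identity weight by $1/2$ separately; both arguments are equivalent.
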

\begin{proof}
The channel in the Lemma has the form
\begin{align}
    \mathcal{N}(\rho)=(1-p_x)(1-p_z)\rho + p_x(1-p_z)X\rho X + p_x p_z Y \rho Y + (1-p_x) p_z Z\rho Z.
\end{align}
Observe that when $p_x, p_z \leq 1/2$, the maximum coefficient is $(1-p_x)(1-p_z)$. From the aforementioned Lemma \ref{pauli ent breaking},
\begin{align}
    (1-p_x)(1-p_z)\leq 1/2 \iff \mathcal{N}\ {\rm is \ entanglement-breaking.}
\end{align}
\end{proof}

\subsection{Entanglement depth for CSS code trees}\label{ent depth appendix}
Consider a CSS code tree $\mathcal{E}_T$ where each edge has a single qubit noise channel which is an independent bit-flip and phase-flip channel, $\mathcal{N}=\mathcal{N}_x \circ \mathcal{N}_z$, with probabilities $p_x$ and $p_z$,  respectively. We noted in Eq.(\ref{setup eqn: css}) that for CSS code trees with optimal recovery, $\mathcal{R}_T\circ \mathcal{E}_T = \mathcal{Q}_z \circ \mathcal{Q}_x$, where $\mathcal{Q}_x(.)=(1-q^x_T)(.)+q^x_TX(.)X$  and $\mathcal{Q}_z(.)=(1-q^z_T)(.)+q^z_TZ(.)Z$. Note that for the optimal decoder $q^x_T, q^z_T\le 1/2$. Therefore, from lemma \ref{ind XZ ent break}, we find that channel $\mathcal{E}_T$ is entanglement-breaking if
\begin{align}\label{EB weak}
    (1-q^z_T)\times (1-q^x_1)<\frac12.
\end{align}
If the probability of physical $X$ error is not zero or one, i.e.,  $p_x \neq 0,1 $, then the probability of logical $X$ error is also non-zero, i.e.,  $0<q^x_1\leq q^x_T\leq 1/2$, where we have used the fact that $q^x_T$ grows monotonically with $T$.

Now for $(1-d_z^{-1/2})/2<p_z\leq 1/2$ proposition \ref{Prop 1} implies that information encoded in $X$ and $Y$ input bases decays exponentially. In particular, Eq.(\ref{mainbound css}) can be equivalently stated as
\be
1-q^z_T \leq \frac12 \Big[1+\sqrt{2}(\sqrt{d_z}\times |1-2p_z|)^T\Big]\ .
\ee
By substituting this into the LHS of Eq.(\ref{EB weak}) and simplifying, we get that $\mathcal{E}_T$ is entanglement-breaking when
\begin{align}
    T> \ln\Big(\frac{\sqrt{2}(1-q^x_1)}{q^x_1}\Big) \times \frac{1}{-\ln(\sqrt{d_z}(1-2p_z))},
\end{align}
which is Eq.(\ref{ent depth final}).
\newpage

\section{Local Recovery of CSS code trees}\label{local rec appendix}

In this section, we revisit the local recovery strategy discussed in Sec. \ref{local rec section} for CSS code trees whose encoders have distance $d\geq 3$ and each edge is subject to independent $X$ and $Z$ errors with probability $p_x$ and $p_z$,  respectively. Recall from Eq.(\ref{CSS rec eqn}) in Sec. \ref{css local rec} that the asymptotic logical $Z$ error $q^z_\infty$  satisfies the equality,
\begin{align}
    q^z_\infty= p_z\times (1-\alpha_z(q^z_\infty))+(1-p_z)\times \alpha_z(q^z_\infty)\ .
\end{align}
We aim to find the noise threshold $p^z_{\rm th}$, i.e., the largest value of $p_z\in[0,0.5]$ such that the above equation has a solution when $q^z_\infty\in (0,0.5)$. To find this threshold, first we rewrite the above equation as 
\begin{align}\label{fixed point eqn}
     {\alpha_z}(q^z_\infty)= \frac{q^z_\infty-p_z}{1-2p_z}\ ,
\end{align}
which means $q^z_\infty$ can be determined by finding the intersection of curve $\alpha_z(q)$ and the line $\frac{q-p_z}{1-2p_z}$. More precisely, $p^z_{\rm th}$ is  the x-intercept of the tangent line to curve $\alpha_z$ that passes through the point $(0.5,0.5)$ (see Fig. \ref{stab local rec plot}).

\begin{figure}[ht!]
\centering
\includegraphics[width=0.5\textwidth]{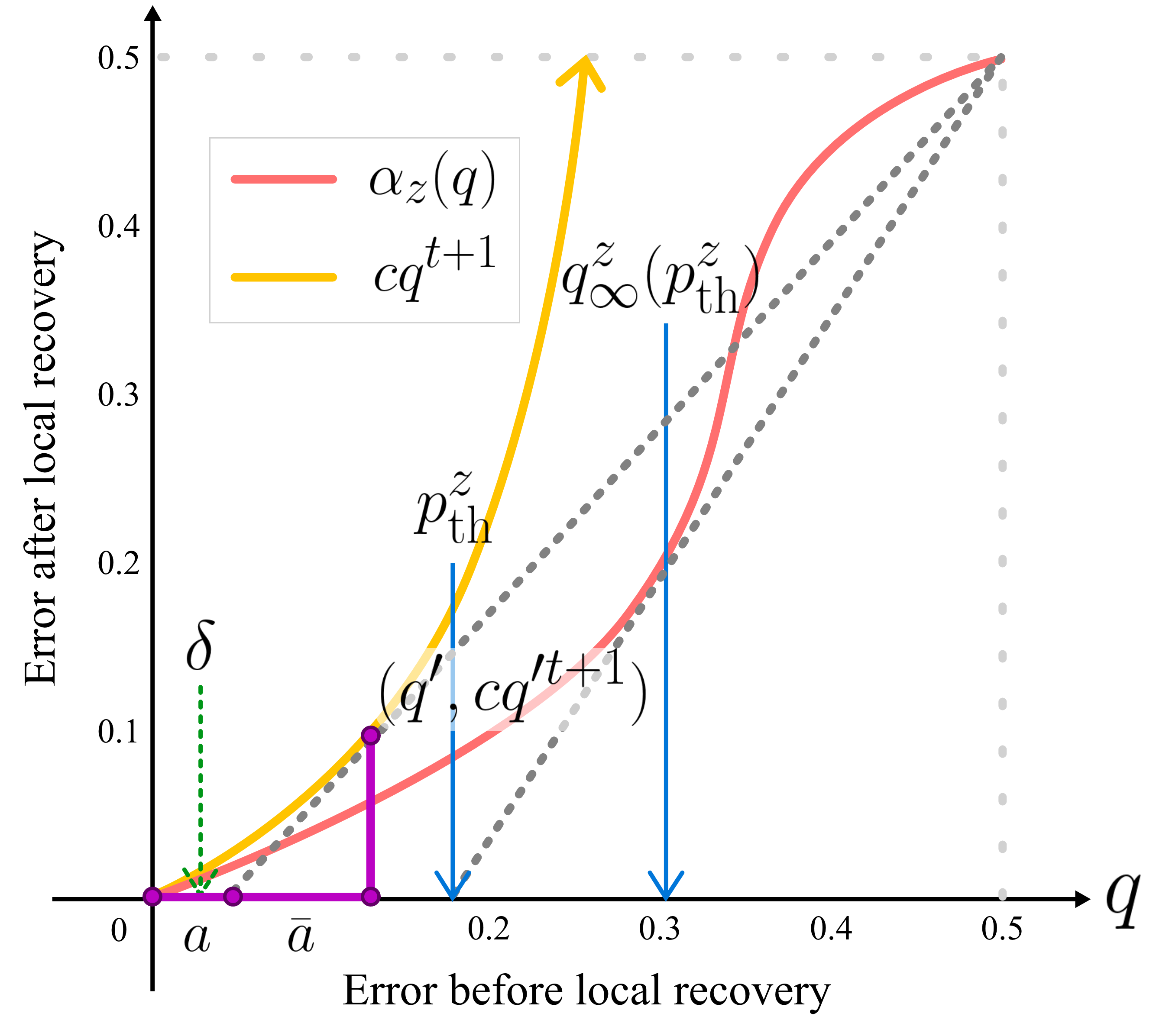}
\caption{This plot illustrates the tangent line to $c q^{t_z+1}$ which passes through the point (0.5,0.5). The x-intercept of this line, denoted by $a$, is  a lower bound on $p^{z}_{\rm th}$, i.e., $p^{z}_{\rm th}\geq a$. Suppose the tangent line intersects the curve $c q^{t_z+1}$ at point $(q',cq'^{t_z+1})$. Define $\bar{a}=q'-a$. Because the slope of the tangent line at $q'$ is equal to the derivative of curve $c q^{t_z+1}$ at this point, we find $\frac{cq'^{t_z+1}}{\bar{a}}=(t_z+1)cq'^t_z$, which implies $\bar{a}=\frac{q'}{t_z+1}$, or equivalently,    $a=\frac{t_z}{t_z+1}q'.$ Finally, note that  the slope of the tangent is greater than or equal to one, i.e., 
$(t_z+1)cq'^t_z \ge 1$, which means $ q'\geq \big(\frac{1}{(t_z+1)c}\big)^{{1}/{t_z}}$. We conclude that $a = \frac{t_z}{t_z+1}q' \geq \frac{t_z}{t_z+1} \big(\frac{1}{(t_z+1)c}\big)^{{1}/{t_z}}$, which in turn implies $p^{z}_{\rm th}\ge a\ge \frac{t_z}{t_z+1} \big(\frac{1}{(t_z+1)c}\big)^{{1}/{t_z}}$.  Thus, for $p_z\leq \delta$, the logical error upon local recovery of an infinite tree $q^z_\infty<1/2$, because $\delta \leq p^z_{\rm th}$.}
\label{stab local rec plot}
\end{figure}

\subsection{Lower bound on $p^z_{\rm th}$}\label{css lowerbound appendix}

In the following, first we derive an upper bound on
$\alpha_z$ and then as we describe in Fig. \ref{stab local rec plot}, based on this upper bound, we derive a  lower bound on $p^z_{\rm th}$.

Suppose a CSS code corrects all $Z$ errors with weight less than or equal to $t_z$. Assuming each qubit is subjected to independent $Z$ error with probability $q$, then, after the error correction, the probability of $Z$ error is bounded by  
\begin{align}
    \alpha_z(q)\leq\sum_{k=t_z+1}^b {b \choose k} q^k (1-q)^{b-k} \leq \Bigg[\sum_{k=t_z+1}^b {b \choose k}\Bigg] q^{t_z+1} \equiv c q^{t_z+1}\ ,
\end{align}
where $\sum_{k=t_z+1}^b {b \choose k} q^k (1-q)^{b-k}$ is the probability of $Z$ errors with weight $t_z+1$ or larger, and
\be
 c\equiv  \sum_{k=t_z+1}^b {b \choose k}\le 2^b\ .
 \ee
 As it can be seen from Fig. \ref{stab local rec plot}, since   $c q^{t_z+1}\ge \alpha_z(q)$ for $q\in[0,0.5]$, the  x-intercept of this line is smaller than or equal to the  x-intercept of curve $\alpha_z(q)$ that passes through the point $(0.5,0.5)$. As we explain in the caption of Fig. \ref{stab local rec plot}, this implies
 that for
 \be
p_z \le   \frac{t_z}{t_z+1} \bigg(\frac{1}{(t_z+1)c}\bigg)^{\frac{1}{t_z}}\equiv \delta\ ,
\ee
$q^z_\infty$ is strictly less than $1/2$.  

This proves the existence of a non-zero local recovery threshold for any CSS code tree with $d_z\geq3$. Furthermore, this also lower bounds the optimal recovery threshold for $d_z\geq 3$ CSS code tree.
\subsection{Upper bound on  $q_\infty^z$}\label{css error upperbound appendix}

\begin{figure}[ht!]
\centering
\includegraphics[width=0.5\textwidth]{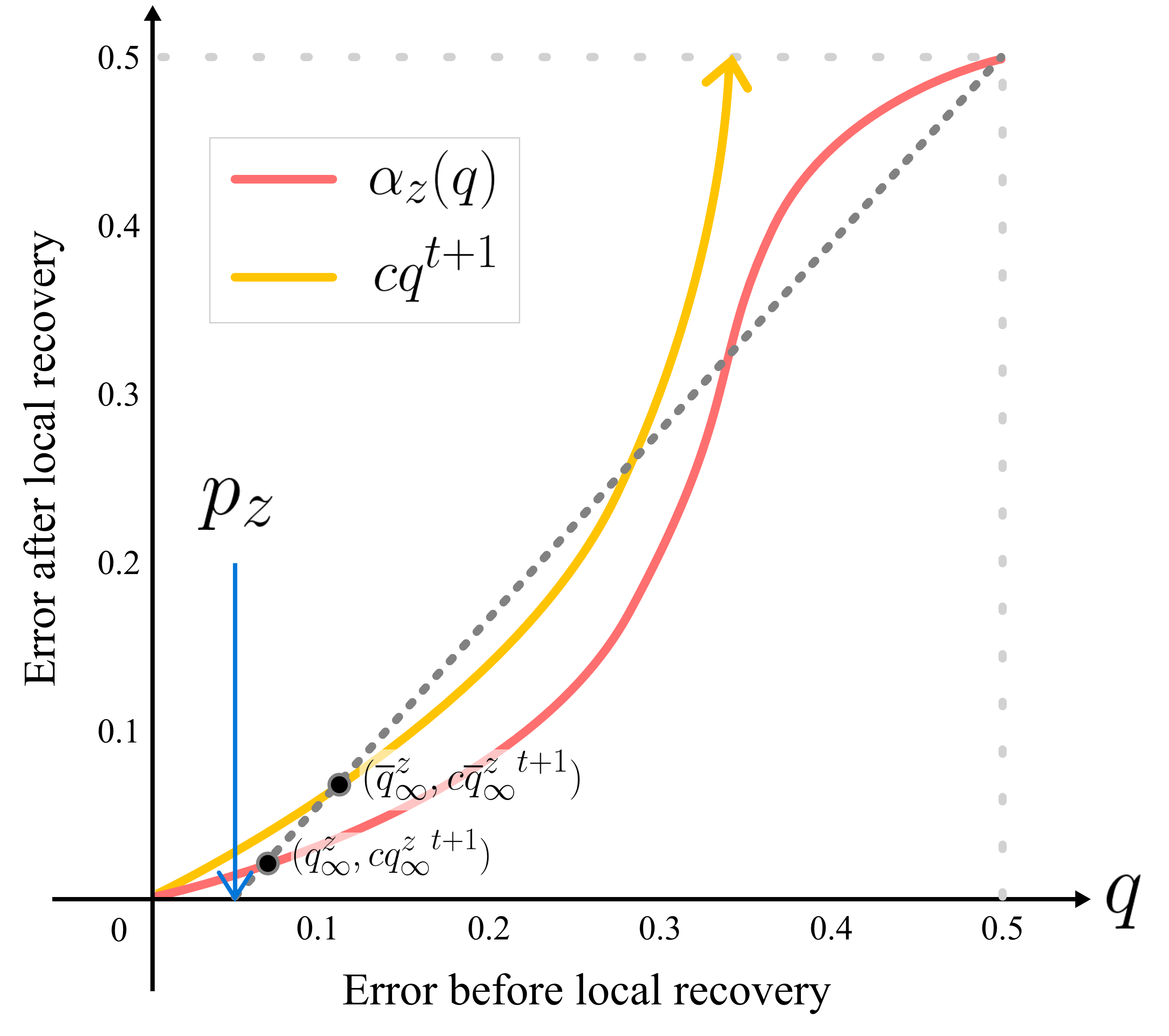}
\caption{This figure illustrates how to estimate the logical error in an infinite CSS code tree after local recovery given that the $Z$ error within the tree is set at $p_z$. The line that passes through (0.5,0.5) and has $x$-intercept $p_z$ intersects $\alpha(q)$ at the point $(q^z_\infty,\alpha(q^z_\infty))$. We show that the $q^z_\infty$ is the logical error in the infinite tree after local recovery. It is clear that $q^z_\infty$ depends on the specifics of $\alpha$, but we can upperbound $\alpha(q)\leq c \times q^{t_z+1}$, which allows an upperbound for $q^z_\infty$, which we denote by $\overline{q}^z_\infty$.}
\label{upperbound for q inf}
\end{figure}

By definition, for $p_z< p^z_\text{th}$, we have $q_\infty^z<1/2$, which means  after error correction  the total error  in the infinite tree is less than 1/2. In the following we establish an  upper bound on $q_\infty^z$. We again use the fact that  $\alpha(q)\leq cq^{t_z}$. As we saw above for sufficiently small $p_z$, e.g., 
\be
p_z<\frac{t_z}{t_z+1} \bigg(\frac{1}{(t_z+1)c}\bigg)^{{1}/{t_z}}=\delta\ ,
\ee
 the line $(q-p_z)/(1-2p_z)$ has two intersections with the curve $c q^{t_z+1}$. Let $(\overline{q}, c \overline{q}^{t_z+1})$ be the coordinates of the intersection with smaller $x$ coordinates (see Fig. \ref{upperbound for q inf}). More precisely $\overline{q}$ is the smallest positive real number satisfying 
\be\label{po2}
 \frac{\overline{q}-p_z}{1-2p_z}=c \overline{q}^{t_z+1}\ .
\ee 
At this point,  the slope of the curve 
 $c q^{t_z+1}$ is $(t_z+1) c \overline{q}^{t_z}$. Furthermore, because for $q$ in the interval  $0<q<\overline{q}$, we have  $ \frac{q-p_z}{1-2p_z}< c q^{t_z+1}$, at the intersection the slope of  line $(q-p_z)/(1-2p_z)$ is larger than the slope of  the curve 
$c q^{t_z+1}$ (this can also be seen from Fig. ), that is,  
 \be\label{tte}
(t_z+1) c \overline{q}^{t_z} \le \frac{c \overline{q}^{t_z+1}}{\overline{q}-p_z}\ ,
 \ee
 where we have used the fact that the $x$-intercept of line  $(q-p_z)/(1-2p_z)$  is $p_z$. Another way to justify this inequality is to notice that for point $q'$ (as defined in Fig. \ref{stab local rec plot}),
 \be
(t_z+1) c {q'}^{t_z} = \frac{c {q'}^{t_z+1}}{{q'}-a}.
\ee
Because $\overline{q}$ is the first point of intersection, we necessarily have that $\overline{q}<q'$ and $a<p_z$. 
Thus,
\be
(t_z+1) c \overline{q}^{t_z} \leq (t_z+1) c {q'}^{t_z} = \frac{c {q'}^{t_z+1}}{{q'}-a} \leq \frac{c \overline{q}^{t_z+1}}{\overline{q}-p_z}.
\ee
We simplify Eq.(\ref{tte}) to,
 \be
\overline{q}\le (1+\frac{1}{t_z}){p_z}\ .
 \ee
 Finally, we note that 
 \be
 q^z_\infty\le \overline{q}\ ,
 \ee
which can be seen from the Fig.\ref{upperbound for q inf},  or by noting that (i) $\overline{q}$  is the smallest positive real number satisfying Eq.(\ref{po2}), and   (ii) $q^z_\infty$ is the smallest positive real number satisfying 
  \be
 \frac{q^z_\infty-p_z}{1-2p_z}=\alpha(q^z_\infty)\ .
\ee 
 The latter means that   for all $q$ in the interval $0 < q<  q^z_\infty$ it holds that
  \begin{align}
  \frac{q-p_z}{1-2p_z} < \alpha(q) < c q^{t_z+1}\ ,
 \end{align}
 and, therefore,  $\overline{q}$  is not in this interval. We conclude that
  \begin{align}
q^z_\infty\le \overline{q} \le (1+\frac{1}{t_z}){p_z}\ .
 \end{align}
It is worth noting that if the code corrects single-qubit errors, i.e., if $t_z\ge 1$, then  for $q=1/2$ we have
 \be\label{artq}
c {\frac{1}{2^{t_z+1}}}> \alpha(\frac{1}{2})=\frac{1}{2}\ ,
\ee
which implies
\be\label{artq2}
c^{1/t_z} > 2\ .
 \ee
This implies that for 
\be
q^z_\infty\le \overline{q} \le (1+\frac{1}{t_z}){p_z}   \le (1+\frac{1}{t_z})\delta =\bigg(\frac{1}{(t_z+1)c}\bigg)^{{1}/{t_z}}< \frac{1}{2}\times \bigg(\frac{1}{t_z+1}\bigg)^{{1}/{t_z}} < \frac{1}{2}\ .
 \ee
This gives another proof that for $p_z\le \delta$, $q^z_\infty$ is strictly less than $1/2$.

\subsection{Fixed-point equation for stabilizer codes with Pauli errors}

Optimal recovery of a single block of stabilizer codes with Pauli errors can be realized by applying a Clifford unitary, measuring ancilla qubits in $\{|0\rangle,|1\rangle\}$ basis and performing a Pauli operator on the data qubit.   It follows that if the noise channel $\mathcal{N}$ is a Pauli channel, then the sequence of channels defined by the recursive equation in Eq.(\ref{gsj}) will be all Pauli channels. Then, as long as $\mathcal{N}_\infty$ is not the fully depolarizing channel $\mathcal{N}_\infty(\rho)=I/2$, it can transmit classical information. Also, for Pauli channels there are simple necessary and sufficient conditions that determine whether the channel is entanglement-breaking, or not (see Appendix \ref{ent depth uncorrelated XZ}).  

Let $\textbf{q}=(q^{x}, q^y, q^z)$ be the probability of $X$, $Y$, and $Z$ errors in channel $\mathcal{M}$ in Eq.(\ref{func}). Then,  the channel $f[\mathcal{M}]$ will be a Pauli channel in the form
\be
f[\mathcal{M}](\rho)=\alpha_{I}(\textbf{q})\rho+\sum_{w=x,y,z} \alpha_{w}(\textbf{q})\  \sigma_w \rho \sigma_w\ , 
\ee
where $\alpha_{I}(\textbf{q})=1-\sum_{w=x,y,z} \alpha_{w}(\textbf{q})$. As is clear, if $\textbf{q}$ is the probability of single-qubit error on the physical qubits, $\{\alpha_w(\textbf{q})\}_w$ is the probability of single-qubit error on the logical qubit. Let  $\textbf{q}_t=({q}^{x}_t,{q}^{y}_t, {q}^{z}_t)$ denote the error probabilities for the Pauli channel $\mathcal{N}_t$. Then,  the recursive Eq.(\ref{gsj})  can be rewritten as
\begin{align}\label{kj}
\textbf{q}_{t+1}&=\left(\alpha_{x}(\textbf{q}_t),
\alpha_y(\textbf{q}_t) , 
\alpha_z(\textbf{q}_t)
\right)
 \left(
\begin{array}{ccc}
p_I  & p_Z  & p_Y  \\
 p_Z &  p_I  &  p_X  \\
p_Y  & p_X   &   p_I
\end{array}
\right)\nonumber
\\ &+\big[1-\sum_{w=x,y,z} \alpha_{w}(\textbf{q}_t)\big] (p_X, p_Y, p_Z) \ ,
\end{align}
where  $p_X, p_Y, p_Z$ are the error probabilities for channel $\mathcal{N}$, and $p_I=1-(p_X+p_Y+p_Z)$.

Since  the recovery $\mathcal{R}$ is a linear map, functions $\{\alpha_w(\textbf{q}): w=x,y,z\}$ are polynomials of, at most, degree  $b$ in each variable $q^{x}$, $q^{y}$, and $q^{z}$. Furthermore, for codes with distance $d$, the recovery $\mathcal{R}$ can be chosen to correct all the errors with weight less than or equal to $\lfloor (d-1)/2\rfloor$.   
In this case, the lowest degree of monomials in $q^{x}$, $q^{y}$, and $q^{z}$ is  $\lfloor (d-1)/2\rfloor+1$. 

As we show in the following subsection, this implies that for codes with distance $d\ge 3$ there are positive error thresholds $p_\text{th}>p_\text{ent}>0$  such that for $p_x, p_y, p_z< p_\text{th}$ the channel $\mathcal{N}_\infty$ is not the fully depolarizing channel, hence transmitting (noisy) classical information, and for  $p_x, p_y, p_z< p_\text{ent}$ the channel $\mathcal{N}_\infty$ is not entanglement-breaking, hence transmitting  entanglement.

\subsection{Non-zero noise threshold for general stabilizer code trees} \label{gen stab threshold}
Here, we show how the above argument can be generalized to the case of all stabilizer codes with Pauli errors. In particular, we show that if in Eq.(\ref{kj}) none of polynomials $\alpha_x$, $\alpha_y$ and $\alpha_z$ have linear terms, then there exists a positive threshold $p_{\text{th}}>0$ such that  for $p_x, p_y, p_z < p_{\text{th}}$, Eq.(\ref{kj}) has fixed-point channels other than the fully depolarizing channel. To this end, we consider the probability of \textit{some} error occurring $p=\sum_{w=x,y,z} p_w$, where $\mathbf{p}=(p_x,p_y,p_z)$ is the probabilities associated to Pauli error. To study recovery, we define $$\widetilde{\alpha}(p):=\sum_{p_x+p_y+p_z=p} \alpha(\mathbf{p}),$$that is the probability of some logical error as a function of some physical error. Let $q_t$ indicate the probability of some error after local recovery of a tree of depth $t$. Then Eq.(\ref{kj}) can be turned into the inequality,
\begin{align}\label{gen stab rec reln}
q_{t+1} \leq 1-(1-\widetilde{\alpha}(q_t))(1-p)
\end{align}
where we note the probability of no error in level $t+1$ is \textit{at least} the probability of no error after recovery and no local error. We focus on the upperbound in Eq.(\ref{gen stab rec reln}) by defining $\epsilon_t$ as $$\epsilon_{t+1}= 1-(1-\widetilde{\alpha}(\epsilon_t))(1-p)$$ and $\epsilon_0=0$. These would satisfy $q_t \leq \epsilon_t$ for all $t$. The fixed point of this recursive relation for $\epsilon_t$, i.e., $\epsilon_\infty$, satisfies the equation,
\begin{align}\label{gen stab tangent eqn}
    \widetilde{\alpha}(\epsilon_\infty)=\frac{\epsilon_\infty-p}{1-p}
\end{align}
It is clear that for stabilizer codes with distance $d\geq3$, all errors with weight atmost $\widetilde{t}:=\lfloor(d-1)/2\rfloor$ will be corrected. Thus, $\tilde\alpha$ will have no terms of degree less than $\widetilde{t}+1$ in $q$ because it is a sum of $\alpha_w$'s which each do not have upto degree-$\widetilde{t}$ monomials. More specifically,
    \begin{align}
        \widetilde{\alpha}(q)\leq \widetilde{c} \times q^{\widetilde{t}+1}
    \end{align}
for positive constant $\widetilde{c}\geq1$. 
Because it is an upperbound, the local recovery threshold estimate for $\widetilde{c} \times q^{\widetilde{t}+1}$ (denoted by $\widetilde{p}_{\rm th}$) will lowerbound the threshold estimate for $\widetilde{\alpha}(q)$, i.e., $\widetilde{p}_{\rm th}\leq{p}_{\rm th}$. Observe that the family of lines considered in the RHS of Eq.(\ref{gen stab tangent eqn}) are characterized as passing through $(1,1)$. Thus, the threshold is estimated by computing the $x$-intercept (i.e., $\widetilde{p}_{\rm th}$) of the line passing through $(1,1)$ that is tangent to $\widetilde{c} \times q^{\widetilde{t}+1}$. This is equivalent to the threshold computation for CSS codes detailed in Fig. \ref{stab local rec plot} in Appendix \ref{local rec appendix} up to rescaling. That is, consider the two curves $$y=\widetilde{c} \times q^{\widetilde{t}+1},\ {\rm and}\ y=\frac{q-\widetilde{p}_{\rm th}}{1-\widetilde{p}_{\rm th}},$$
where the latter is tangent to the former. Upon rescaling as $(q',y')=(q/2,y/2)$, the two curves become, $$y'=( \widetilde{c} 2^{\widetilde{t}}) \times {q'}^{\widetilde{t}+1}, \ {\rm and}\ y'=\frac{q'-2\widetilde{p}_{\rm th}}{1-2\widetilde{p}_{\rm th}},$$
where the latter remains tangent to the former. So, if we set $c=\widetilde{c}2^{\widetilde{t}}$ and note that the $x$-intercept is $2\widetilde{p}_{\rm th}$, we can invoke Eq.(\ref{th lowerb}) to conclude,
\begin{align}
2\widetilde{p}_\text{th}\geq\frac{\widetilde{t}}{\widetilde{t}+1} \bigg(\frac{1}{(\widetilde{t}+1)\widetilde{c}2^{\widetilde{t}}}\bigg)^{\frac{1}{\widetilde{t}}}.
\end{align}
Thus, noting that $\widetilde{p}_{\rm th}\leq p_{\rm th}$, we conclude the following extension of Proposition \ref{prop: css lowerbound} to general stabilizer codes: 
consider a $d\geq 3$ stabilizer tree composed of a stabilizer code that can correct physical errors of weight upto $\widetilde{t}=\lfloor (d-1)/2 \rfloor$, with noise on each edge with probability of some Pauli error being $p=p_x+p_y+p_z$. Then, for sufficiently small $p$, e.g.
$$p < \frac14 \frac{\widetilde{t}}{\widetilde{t}+1} \bigg(\frac{1}{(\widetilde{t}+1)\widetilde{c}}\bigg)^{\frac{1}{\widetilde{t}}},$$
the probability of \textit{some} logical error in the $d\geq 3$ stabilizer tree of any depth $T$ upon local recovery, i.e. $q_T$, is upperbounded as, 
$$q_T \leq p \times (1 + \frac{1}{\widetilde{t}}),$$
because $q_T\leq q_\infty$ for all $T$.


\newpage

\section{Examples -- Computations for Local Recovery for specific stabilizer trees}\label{CSS appendix}
\subsection{Local Recovery threshold for classical trees}\label{local rec threshold appendix}
Here we derive the expression in Eq.(\ref{loc-threshold}). Consider the $\alpha_{\rm maj}$ function for local recovery of classical trees. For simplicity, we shall focus on odd branching number. Then,
\begin{align}
    \alpha_{\rm maj}(p)=\sum_{k={{\lfloor b/2\rfloor}+1}}^b {b\choose k} p^k (1-p)^{b-k} = 1 - \sum_{k=0}^{\lfloor b/2\rfloor} {b\choose k} p^k (1-p)^{b-k}
\end{align}

As per this definition, $\alpha_{\rm maj}(0)=0$ and $\alpha_{\rm maj}(0.5)=0.5$ (the latter can be easily seen by noting that when $p=0.5$, the probability distribution is symmetric about $0-1$ exchange). From Sec. \ref{css local rec}, we know that the x-intercept of the tangent to $\alpha_{\rm maj}(p)$, that also passes through $(0.5,0.5)$, is $p_{\rm th}$. We prove that $\alpha_{\rm maj}(p)$ is convex, i.e. $\alpha''_{\rm maj}(p)\geq0$ for $0\leq p \leq 0.5$ in Section \ref{alpha_convex_proof}. Thus, the tangent to the $\alpha_{\rm maj}$ curve is the tangent line at $p=0.5$. Using the fact that the tangent must pass through $(0.5,0.5)$, we get that 
\begin{align}\label{class local threshold}
  p_{\rm th}=\frac{1}{2}\Big(1-\frac{1}{\alpha'_{cl}(0.5)}\Big).  
\end{align}
We compute $\alpha'_{\rm maj}(0.5)$ below:
\begin{align}
    \alpha'_{\rm maj}(p) = - \sum_{k=0}^{\lfloor b/2\rfloor} {b\choose k} kp^{k-1} (1-p)^{b-k} + \sum_{k=0}^{\lfloor b/2\rfloor} {b\choose k} (b-k)p^k (1-p)^{b-k-1}
\end{align}
At $p=0.5$ this evaluates to,
\begin{align}
    \alpha'_{\rm maj}(0.5) = 2 \sum_{k=0}^{\lfloor b/2\rfloor} \frac{1}{2^b}{b\choose k} (b-2k)
\end{align}
Notice that $\frac{1}{2^b}{b\choose k}$ is the binomial distribution in $k$ with $p=0.5$. Because we are considering the regime when $b\rightarrow\infty$, we use the normal approximation to the binomial distribution to get,
\begin{align}
    2 \sum_{k=0}^{\lfloor b/2\rfloor} \frac{1}{2^b}{b\choose k} (b-2k) &\approx - 4 \int_0^{k=b/2} \frac{1}{\sqrt{\pi b/2}} e^{-\frac{(k-b/2)^2}{b/2}} (k-b/2) dk \Bigg(1 + \mathcal{O}\Big(\frac{1}{\sqrt{b}}\Big)\Bigg)\\
    &= -4 \int_{-b/2}^0 \frac{1}{\sqrt{\pi b/2}} e^{-\frac{x^2}{b/2}} x dx \Bigg(1 + \mathcal{O}\Big(\frac{1}{\sqrt{b}}\Big)\Bigg)
\end{align}
where we did a change of variables as $x=k-b/2$ in the second equality. Because we are anyway in the regime where $b\rightarrow\infty$, we further approximate this as,
\begin{align}
\approx -4 \int_{-\infty}^0 \frac{1}{\sqrt{\pi b/2}} e^{-\frac{x^2}{b/2}} x dx \Bigg(1 + \mathcal{O}\Big(\frac{1}{\sqrt{b}}\Big)\Bigg)= \sqrt{\frac{2}{\pi}} \sqrt{b} \Bigg(1 + \mathcal{O}\Big(\frac{1}{\sqrt{b}}\Big)\Bigg)= \sqrt{\frac{2}{\pi}} \Big( \sqrt{b} + \mathcal{O}(1) \Big)
\end{align}
By substituting this into Eq.(\ref{class local threshold}), we get, 
\begin{align}
    p_{th}=\frac{1}{2}\Bigg(1-\sqrt{\frac{\pi}{2}}\frac{1}{\sqrt{b}+\mathcal{O}(1)}\Bigg)=\frac{1}{2}\Bigg(1-\sqrt{\frac{\pi}{2}}\frac{1}{\sqrt{b}}\Bigg) + \mathcal{O}\Big(\frac{1}{b}\Big)
\end{align}

\subsubsection{Convexity of $\alpha_{\rm maj}$}\label{alpha_convex_proof}
We start by rewriting $\alpha_{\rm maj}$ by noting that that the cumulative binomial distribution ${Pr}(X_{\rm bin}\leq k)$ can be written in terms of the \textit{regularized incomplete beta} function $F(k;n,p)$ (refer to Eq.3.3 in \cite{gp1960bryan}),
\begin{align}
    \alpha_{\rm maj}(p)&=1-\sum^{\lfloor{b}/{2}\rfloor}_{k=0} {b \choose k} p^k(1-p)^{b-k}\\
    &= 1 - {\rm Pr}(X_{\rm bin}\leq \frac{b-1}{2})\\
    &= 1 - F(\frac{b-1}{2};b,p)\\
    &= 1 - \Big[\frac{b+1}{2}\Big]{b \choose \frac{b-1}{2}} \times \int_0^{1-p} t^{\frac{b-1}{2}}(1-t)^{\frac{b-1}{2}} dt
\end{align}
where $b$ is odd. Now we differentiate this expression twice with respect to $p$ rather easily,
\begin{align}
    \alpha'_{\rm maj}(p)&=\Big[\frac{b+1}{2}\Big]{b \choose \frac{b-1}{2}} \times (1-p)^{\frac{b-1}{2}}p^{\frac{b-1}{2}}\\
    &=\Big[\frac{b+1}{2}\Big]{b \choose \frac{b-1}{2}} \times (p-p^2)^{\frac{b-1}{2}},\\
    \alpha''_{\rm maj}(p)&= \Big[\frac{b+1}{2}\Big]{b \choose \frac{b-1}{2}} \times \frac{b-1}{2} \times (p-p^2)^{\frac{b-3}{2}} \times (1-2p).
\end{align}
Thus, $\alpha''_{\rm maj}(p)\geq0$ for all $p\in[0,0.5]$.

\subsection{Example: Local recovery threshold for Generalized Shor code}\label{subsec: shor rec}
An important class of CSS codes are generalized Shor codes. The encoder of a Shor code of order $n$ is a serial concatenation of two repetition codes ($M_1$ and $M_2$) that are dual to each other, i.e., $M_1^{\otimes n} \circ M_2$, that are defined as
\begin{align}\label{shor encoder defn}
    &M_1 \ket{c}= \ket{c}^{\otimes b}: \ \ \ \  c\in \{0,1\},\\
    &M_2 = H^{\otimes n} \circ M_1 \circ H,
\end{align}
where $H$ is the Hadamard gate. Observe that $M_1$ is the classical copier in the $Z$ basis, while $M_2$ is the same in the dual $X$ basis. Fig. \ref{T=1_Shor} illustrates the encoder for $n=3$ that is the usual Shor-9 encoder, upto an additional Hadamard at the input to make it a standard encoder.  We consider a two-step recovery of the generalized Shor code with two decoders for $M_1$ and $M_2$ respectively. These are majority voting procedures in the $Z$ and $X$ basis respectively. We denote them as $\mathcal{R}_{\rm maj}$ and $\mathcal{R}_{\rm par}:=\mathcal{H}\circ\mathcal{R}_{\rm maj}\circ \mathcal{H}^{\otimes n}$ respectively, where $\mathcal{H}(.)=H(.)H$.

\begin{figure}[ht!]
\centering
\includegraphics[width=0.15\textwidth]{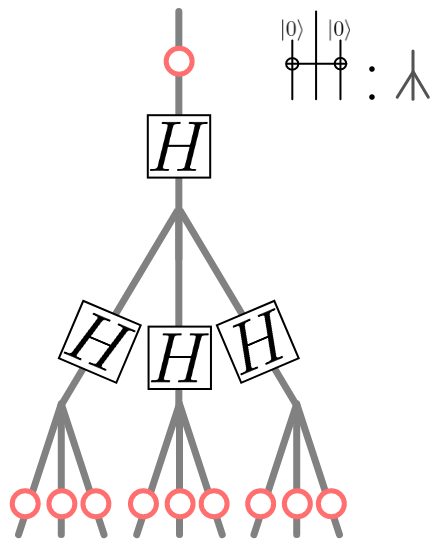} 
\caption{\textbf{Encoder of a standard generalized Shor encoder with $n=3$ --} This is the usual Shor-9 encoder. We have added a Hadamard gate to the root to ensure the encoder is standard with regards to $X$ and $Z$ logical operators.}
\label{T=1_Shor}
\end{figure}

We consider trees with a generalized Shor code encoder and independent single-qubit bit-flip and phase-flip errors, with probabilities $p_x$ and $p_z$ respectively, on each edge. Thus, the local recovery we apply to one level of the generalized Shor code is $\mathcal{R}_{\rm par}\circ \mathcal{R}_{\rm maj}^{\otimes n}$.

 Define $\mathcal{M}_i(.)=M_i(.)M^\dagger_i$ for $i=1,2$. For the encoder $\mathcal{M}_1$,
after applying single-qubit noise channels $\mathcal{N}_x$ on the $n$ physical qubits independently and then performing local recovery, we obtain the effective channel
$\mathcal{R}_{\rm maj}\circ \mathcal{N}^{\otimes n}_x \circ \mathcal{M}_1$. The corresponding 
  probability of logical $X$ error of this channel is $\alpha_{\rm maj}(p_x)$ (defined in Eq.(\ref{alpha cl})),  where $p_x$ is the probability of physical $X$ error sampled by $\mathcal{N}_x$. Similarly, in the case of the encoder $\mathcal{M}_2$, we obtain the effective channel  $\mathcal{R}_{\rm par}\circ \mathcal{N}^{\otimes n}_x \circ \mathcal{M}_2$ after local recovery, whose corresponding probability of logical $X$ error we denote by $\alpha_{\rm par}(p_x)$. Observe from the encoder definitions in Eq.(\ref{shor encoder defn}) that the relation of $\mathcal{M}_2$ to bit-flip errors is equivalent to the relation of $\mathcal{M}_1$ to phase-flip errors due to the Hadamards. Now recall that a single-qubit physical $Z$ error is a logical error for the repetition code.   So, logical error is determined by the parity of the weight of the physical error. Thus, the logical error rate associated with $\mathcal{R}_{\rm par}\circ \mathcal{N}^{\otimes n}_x \circ \mathcal{M}_2$ is
\begin{align}
    \alpha_{\rm par}(p_x)=\frac12 - \frac12|1-2p_x|^n.
\end{align}
Thus, the logical $X$ error in the full channel $$\mathcal{R}_{\rm par}\circ \mathcal{R}_{\rm maj}^{\otimes n}\circ\mathcal{N}^{\otimes n^2}_x \circ\mathcal{M}^{\otimes n}_1\circ \mathcal{M}_2$$
is given by the composition of the functions
\begin{align}
\alpha_x(p_x)=\alpha_{\rm par}\circ \alpha_{\rm maj}(p_x).
\end{align} 
It is easy to see that the above argument in the opposite order pertains to $Z$ errors. Thus,
\begin{align}
\alpha_z(p_z)=\alpha_{\rm maj}\circ\alpha_{\rm par}(p_z).
\end{align}

Using the procedure detailed in Sec. \ref{css local rec}, we can numerically estimate the local recovery threshold $p_{\rm th}^x$ (or, $p_{\rm th}^z$) from $\alpha_x$ (or, $\alpha_z$) for all generalized Shor codes $[[n^2,1,n]]$ as plotted in Fig. \ref{threshold_range}.\\

\begin{figure}[ht!]
\centering
\includegraphics[width=0.5\textwidth]{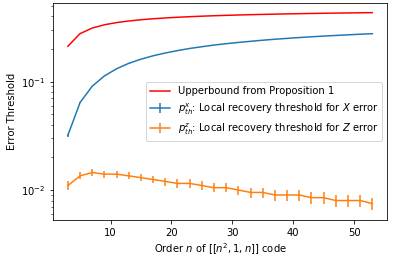} 
\caption{\textbf{Noise threshold range for $[[n^2,1,n]]$ Shor code tree for increasing $n$ -- } The upperbound $\frac12 (1-\frac{1}{\sqrt{n}})$ is obtained from Proposition \ref{Prop 1}. The local recovery thresholds, $p^x_{\rm th}$ and $p^z_{\rm th}$ (i.e., lowerbounds), are obtained from a numerical estimation procedure that benchmarks the local recovery strategy that is explained in Sec. \ref{css local rec}. The upperbound is the same for both $X$ and $Z$ errors, while the lowerbounds are different and are indicated. With increasing $n$, while the $X$-error threshold improves, the $Z$-error threshold suffers. The $x$-axis indicates the order $n$ of $[[n^2,1,n]]$ code for odd $n$, and the $y$-axis indicates the error threshold. The error bars of $0.001$ are determined by the resolution of the numerical estimation procedure.}
\label{threshold_range}
\end{figure}

\textit{Asymptotics for large $n$ --} When we study the asymptotic case of large $n$, the majority voting is equivalent to an asymptotic binomial hypothesis testing scenario for which we can use the exponential Chernoff's bound, that is, 
$$\alpha_{\rm maj}(p_x)\lesssim e^{-\frac{n}{3} |\frac12 - p_x|}.$$
This gives us that $$\alpha_x(p_x) \lesssim \frac{1-(1-2e^{-\frac{n}{3} |\frac12 - p_x|})^n}{2},$$ which clearly goes to $0$ for large $n$.  This shows that for large $n$, we have $p^z_{\rm th} \rightarrow 1/2$ because the probability of logical error after local recovery approaches zero.

When we study the large-$n$ asymptotics for $\alpha_z(p_z)$, observe that $\alpha_{\rm par}(p_z)=\frac12 - \frac{|1-2p_z|^n}{2} \rightarrow \frac12$ as $n\rightarrow \infty$. Thus,
$\alpha_z(p_z)\rightarrow1/2$ for large $n$. This implies that $p^z_{\rm th}\rightarrow 0$, i.e., an absence of a non-trivial local recovery threshold asymptotically.\\

\textit{Asymmetry in $X$ and $Z$ performance --} The worse performance of recovery for $X$-direction information as compared to $Z$-direction information is a consequence of the asymmetry in the number of $Z$-type and $X$-type stabilizers in the Shor code's stabilizer generators. Consider the $[[n^2,1,n]]$ code of order $n$. We then have $n(n-1)$ $Z$-type stabilizer generators, and $(n-1)$ $X$-type stabilizer generators. Thus, there are $n$ times more $Z$-type stabilizer generators than $X$-type stabilizer generators. This entails a much greater sensitivity to detect and correct $X$ errors (using $Z$-type stabilizers) than $Z$ errors. Furthermore, this discrepancy increases with increasing $n$. This reflects as lower local recovery thresholds for $X$-direction as compared to $Z$-direction, seen in Fig. \ref{threshold_range}.

\subsection{Probability of logical error for CSS codes}\label{css deph alpha}

Consider the scenario where the $n$ physical qubits that are used to encode a CSS code are subject to independent $X$ errors with probability $p_x$. To compute the logical $X$ error after optimal recovery of this CSS code, it suffices to compute how the distinguishability of encoded $Z$ eigenstates $\{\ket{0},\ket{1}\}$ reduces after being impacted by said i.i.d. bit-flip noise. To this end, recall that it suffices to consider the simpler dephased version of the CSS code (see Sec. \ref{deph tree mapping sec} and Appendix Sec. \ref{CSS dephased subsec} for details). 
So, to estimate the $\alpha_x$ function (i.e., the logical $X$ error), we shall first compute the total variation distance (TVD) between the output distributions of our $Z$-dephased CSS encoders along with local bit-flip noise of probability $p_x=p$, given inputs $0$ and $1$. This is because the optimum Helstrom measurement will achieve the least error probability $\alpha_x(p)=(1-{\rm TVD(p)})/2$.

Firstly, we indicate the distributions we obtain given input $0$ and $1$, in density matrix notation, as a convex mixture,
\begin{align}
    \rho_0 &= \frac{1}{|C_2|}\sum_{x \in C_2} \ketbra{x}{x}\\
    \rho_1 &= \frac{1}{|C_2|}\sum_{x \in C_2} \ketbra{x+\mathbf{x}_L}{x+\mathbf{x}_L}
\end{align}
where $\mathbf{x}_L$ is the bitstring form of $X_L$ operator.  
 These are the distributions obtained by fully dephasing the two logical states encoded in the $n$ physical qubits (refer to Sec. \ref{deph tree mapping sec} to see why dephasing does not affect their distinguishability).    At this point, these two distributions -- or states -- are fully distinguishable. Now we apply local bit-flip noise $\mathcal{N}_p(\rho)=(1-p)\rho + p X\rho X$, on the above two states and measure their distinguishability. That is, we compute,
\begin{align}
    &{\rm TVD}(p)=||\mathcal{N}_p^{\otimes n}(\rho_0)-\mathcal{N}_p^{\otimes n}(\rho_1)||_1
\end{align}
We compute one of these states,
\begin{align}
    &\mathcal{N}_p^{\otimes n}(\rho_0)\\
    =& \frac{1}{|C_2|} \sum_{x \in C_2} \sum_{y \in \{0,1\}^n} p^{|y|}(1-p)^{n-|y|} \ketbra{x+y}{x+y}\\
    =&  \frac{1}{|C_2|} \sum_{x,y \in \{0,1\}^n} p^{|y|}(1-p)^{n-|y|} \delta(x \in C_2) \ketbra{x+y}{x+y}\\
    =& \frac{1}{|C_2|} \sum_{z\in \{0,1\}^n} \Big(\sum_{y\in \{0,1\}^n} p^{|y|}(1-p)^{n-|y|} \delta(z-y \in C_2) \Big) \ketbra{z}{z} 
\end{align}
where in the last step we simply changed variables. Now, $\delta(z-y \in C_2)$ is the restriction that $z$ and $y$ belong to the same additive coset of $C_2$ in $\{0,1\}^n$. We shall indicate that coset by $[C_2+z]$ for a given $z$. Thus, we can re-write the above expression as,
\begin{align}
    = \frac{1}{|C_2|} \sum_{z\in \{0,1\}^n} \Big( \sum_{y\in [C_2+z]} p^{|y|}(1-p)^{n-|y|}\Big) \ketbra{z}{z} 
\end{align}
Define $\sigma_p([C_2+z]):=\sum_{y\in [C_2+z]} p^{|y|}(1-p)^{n-|y|}$, which is a non-negative number associated to each additive coset of $C_2$. This gives,
\begin{align}
    \rho_0 = \frac{1}{|C_2|} \sum_{z \in \{0,1\}^n} \sigma_p([C_2 +z])\ \ketbra{z}{z}
\end{align}

We obtain $\rho_1$ by noticing that the distribution is $\rho_0$ plus a shift by $\mathbf{x}_L$.
\begin{align}
    \rho_1 = \frac{1}{|C_2|} \sum_{z \in \{0,1\}^n} \sigma_{p}([C_2 +z+\mathbf{x}_L])\ \ketbra{z}{z}
\end{align}
Thus, we have,
\begin{align}\label{TVD z exp}
    &||\mathcal{N}_p^{\otimes n}(\rho_0)-\mathcal{N}_p^{\otimes n}(\rho_1)||_1
    = \frac{1}{2|C_2|} \sum_{z\in \{0,1\}^n}|\sigma_p([C_2+z]) - \sigma_{p}([C_2+z+\mathbf{x}_L])|.
\end{align}
We shall refine our notation to highlight only the important objects. Let $\mathcal{C}$ indicate the set of cosets of $C_2$ in $\{0,1\}^n$, and $K \in \mathcal{C}$ be a member of that set, i.e., a specific coset. Define, $$\sigma_p(K)=\sum_{y \in K} p^{|y|}(1-p)^{n-|y|},$$ where the summation is over all bitstrings belonging to the coset $K$. Noticing that $\sigma_p(.)$ is defined for entire cosets, we partition our $z$ summation in terms of cosets and sum over these cosets and see that,
\begin{align}
    &||\mathcal{N}_p^{\otimes n}(\rho_0)-\mathcal{N}_p^{\otimes n}(\rho_1)||_1
    =\frac{1}{2|C_2|} \sum_{K \in \mathcal{C}} |C_2| \cdot |\sigma_p(K) - \sigma_{p}(K+\mathbf{x}_L)|,
\end{align}
where we used the fact that every additive coset of $C_2$ has the same size $|C_2|$. So, in terms of cosets $K \in \mathcal{C}$, we have,
\begin{align}
    &||\mathcal{N}_p^{\otimes n}(\rho_0)-\mathcal{N}_p^{\otimes n}(\rho_1)||_1= \frac{1}{2}\sum_{K \in \mathcal{C}} |\sigma_p(K) - \sigma_{p}(K+\mathbf{x}_L)|,
\end{align}
where $\sigma_p(K)$ is understood as the probability that any bitstring in the coset $K$ is sampled by the bit-flip noise process with bit-flip probability $p$. For the Helstrom bound saturating measurement, the probability of error is (1-TVD)/2, which is the $\alpha$ value. Thus,
$$\alpha_x(p)=\frac12 - \frac{1}{4}\sum_{K \in \mathcal{C}} |\sigma_p(K) - \sigma_{p}(K+\mathbf{x}_L)|.$$

An important special case are CSS codes for whom $X^{\otimes n}$ is a logical $Z$ operator, i.e., $\mathbf{x}_L=\textbf{1}^n$. In that case, 
$$\alpha_x(p)=\frac12 - \frac{1}{4}\sum_{K \in \mathcal{C}} |\sigma_p(K) - \sigma_{p}(K+\mathbf{1}^n)| = \frac12 - \frac{1}{4}\sum_{K \in \mathcal{C}} |\sigma_{p}(K) - \sigma_{1-p}(K)|.$$

\subsection{Example: Steane-7 Code Coset structure and $\alpha$ function}
The Steane-7 code is self-dual. Thus, its performance with respect to $X$ and $Z$ errors is exactly the same when considering independent $X$ and $Z$ error models. So we drop the $x$ or $z$ subscript that labels the logical error functions $\alpha$. So we compute $\alpha_{\rm Steane-7}(p)$, where $p$ is the probability of (say) independent $X$ errors, in the following.

Computing $\sigma_p$ for different cosets comes down to knowing the distribution of weights in the different cosets. We know this weight distribution exactly for the Steane-7 code. We explicitly compute this in the following. 

The $C_2$ for the Steane-7 Code corresponds to the dual of the $[7,4,3]$ Hamming code that is used to construct the Steane-7 code. We see that the 3 generators of $C_2$ in {$\mathbb{F}_2^7$} are, 

\begin{center}
$
\begin{pmatrix}
1 & 0 & 1 & 0 & 1 & 0 & 1\\
0 & 1 & 1 & 0 & 0 & 1 & 1\\
0 & 0 & 0 & 1 & 1 & 1 & 1
\end{pmatrix}
$
\end{center}
{The subspace spanned by these vectors contain $2^3=8$ vectors}
\begin{center}
$
\begin{pmatrix}
0 & 0 & 0 & 0 & 0 & 0 & 0\\
1 & 0 & 1 & 0 & 1 & 0 & 1\\
0 & 1 & 1 & 0 & 0 & 1 & 1\\
0 & 0 & 0 & 1 & 1 & 1 & 1\\
0 & 1 & 1 & 1 & 1 & 0 & 0\\
1 & 1 & 0 & 0 & 1 & 1 & 0\\
1 & 0 & 1 & 1 & 0 & 1 & 0\\
1 & 1 & 0 & 1 & 0 & 0 & 1\\
\end{pmatrix}
$
\end{center}
We see the weight distribution of $C_2$ is $(0,4,4,4,4,4,4,4)$, where {each number is the weight of one of these eight vectors}. Next, we have $2^7/2^3 -1 = 2^4 -1=15$ cosets of $C_2$. One coset is simply the complement of $C_2$, i.e., $C_2 + \textbf{1}^7$, with the weights $(7,3,3,3,3,3,3)$. We can compute $7$ other cosets by adding vectors like $(1,0,0,0,0,0,0)$, $(0,1,0,0,0,0,0)$, and so on to the $8$ vectors in $C_2$. Adding such a unit vector to any vector can only change the weight by $\pm 1$. Additionally observe that each column has exactly four $1$'s. Thus, adding a unit vector will ensure the total weight remains the same. Thus, the weights are $(1,3,3,3,3,5,5,5)$. The remaining $7$ cosets are the complement of the above mentioned $7$; Thus, have weight distribution $(6,4,4,4,4,2,2,2)$. We can use this to compute $\sigma_p(K)$ for each coset as a function of $p$, and thus, its $\alpha$ function which is,
\begin{align}
    \alpha_{\rm Steane-7}(p)=21p^2 - 98p^3 + 210p^4 - 252p^5 + 168p^6 -48p^7.
\end{align}
This matches the result in Sec. VI in \cite{dohertyconcatenated}.
\newpage

\section{$d=2$ code tree error analysis}\label{d=2 appendix}

\subsection{Derivation of Proposition \ref{prop: d=2 tree}}\label{prop d=2 deriv}
In this section, we 
present an error analysis for 
the recursive decoder introduced in Sec.\ref{subsec: one bit local recovery} and prove proposition \ref{prop: d=2 tree}. Recall that in this proposition we focus on codes with distance 
$d=2$. In the statement of the proposition, we denote the total probability of error in the Pauli channel $\mathcal{N}$ by  $p_{\text{tot}}=p_x+p_y+p_z$.  For convenience, in this section, we use the more compact notation $p$ for $p_{\text{tot}}$.

Consider the decoding between level $t-1$ and $t$ (see Fig \ref{recursivetree2}). The decoder module (detailed in Sec.(\ref{subsec: one bit local recovery})) acts on $b$ qubits along with their reliability bits $\{m_i(t-1)\}_{i \in [b]}$. We refer to these $b$ input qubits as the \textit{block}. The output of the decoder is a single qubit along with its reliability bit $m(t)$. We  define a bit $e(t)$, which takes value  $e(t)=1$  when the output qubit has no error, and $e(t)=1$ if it does. In other words, $e(t)$ indicates whether there is a logical error after recursive decoding of $t$ levels.  Thus, we can define the probabilities,
\begin{align}
    {\text{Probability of a qubit being marked: }}\mu_t &= {\rm Pr}(m(t)=1),\\
    {\text{Probability of an unmarked error: }}\delta_t &= {\rm Pr}(e(t)=1\ {\rm and}\ m(t)=0).
\end{align}
Note that these probabilities are identical for all the blocks at the same level. We show in Sec. \ref{rec deriv} that these probabilities satisfy the coupled recursive relations
\begin{align}\label{mu rec}
    \mu_t &\leq b(\delta_{t-1}+p) + b^2\mu_{t-1}(\delta_{t-1}+p) + b^2\mu^2_{t-1}
\end{align}
and
\begin{align}\label{delta rec}
    \delta_t &\leq b^2 (\delta_{t-1}+p)^2 + b^2\mu_{t-1} (\delta_{t-1}+p)\ ,
\end{align}
with the  initial conditions $\mu_0=0$ and $\delta_0=0$. Recursive application of these coupled  equations allows us to upperbound the error after $t$ rounds of recursive decoding. We show in Sec \ref{fixed point setup} that when $p<1/\lambda\equiv 1/(16b^4+4b^2)$, these quantities satisfy
\begin{align}
    \mu_t &\leq 4b^2\times (p+ \lambda p^2) \leq 8b^2p,\\
    \delta_t &\leq \lambda p^2.
\end{align}
for all $t$. For the second upperbound for $\mu_t$ we simply notice that for $p<1/\lambda$, we have $p\lambda<1$.\\

Ignoring the reliability bit and just focusing on the total probability of logical error, one finds
\begin{align*}
{\rm Pr}(e(t)=1)
&= {\rm Pr}(e(t)=1\ {\rm and}\ m(t)=0) +  {\rm Pr}(e(t)=1| m(t)=1)\times {\rm Pr}(m(t)=1)\\
&\leq \delta_t + \mu_t,
\end{align*}
where we have applied Bayes' rule. Combining this with the tighter bounds above we conclude that for all $t$, 
\begin{align}\label{d=2 upperbound result}
    {\rm Pr}(e(t)=1) &\leq 4b^2p + (1+4b^2) \times \lambda p^2\\
    &\leq (1+8b^2) \times p
\end{align}
when $p<1/\lambda\equiv 1/(16b^4+4b^2)$. In the second line we used that $p\lambda<1$. This proves Proposition \ref{prop: d=2 tree}.\\

\subsection{Derivation of recursive relations Eq.(\ref{mu rec}) and Eq.(\ref{delta rec})}\label{rec deriv}

\begin{figure}
    \centering
    \includegraphics[width=0.2\textwidth]{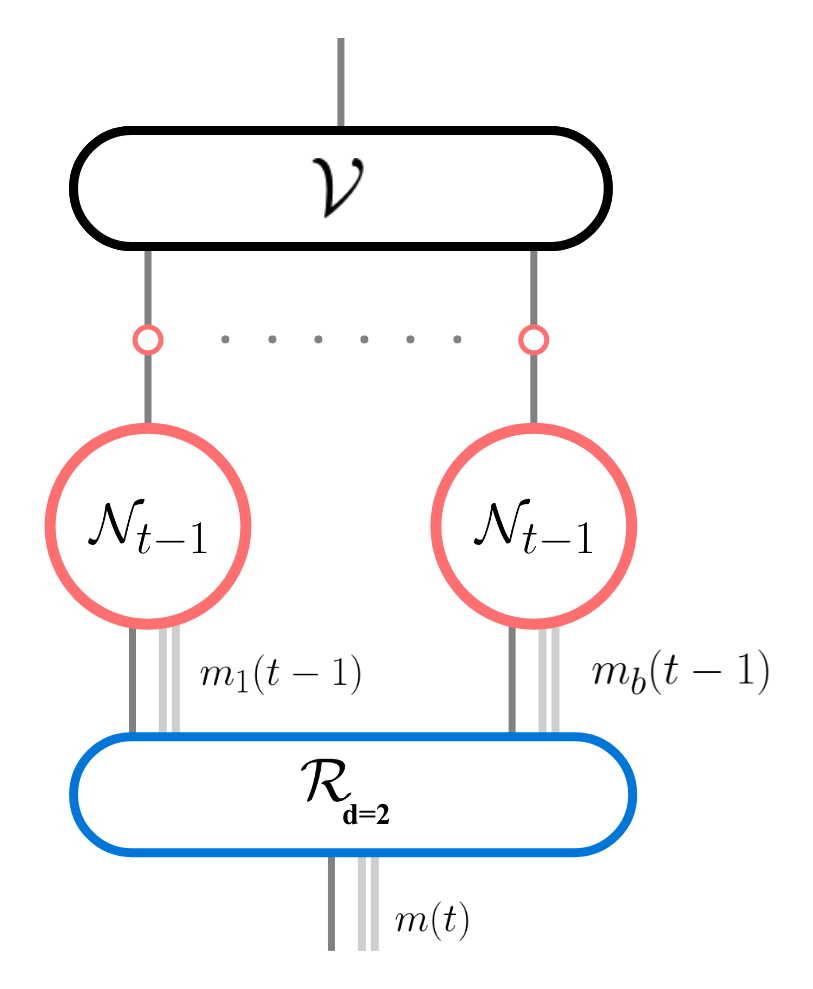}
    \caption{This figure illustrates the recovery of a single block at level $t$ during recursion. In this figure, let $\mathcal{N}_{t-1}$ denote the effective Pauli channel we obtain after recursive decoding of $t-1$ levels. The output of $\mathcal{N}_{t-1}$ is the qubit after $t-1$ levels of recovery along with its reliability bit $m_i(t-1)$, indexed $i$. A total of $b$ such channels, along with the physical error in the encoder (indicated as the small unlabeled red dots) form the input for the decoder $\mathcal{R}_{\rm d=2}$ at level $t$ of recovery.}
    \label{fig: d=2 recovery level t}
\end{figure}
In this section, we prove the recursive relations.  Since  all blocks of $b$ qubits in level $t$ (or $t-1$) are identical, we focus on the update of one particular block. 

We make some remarks regarding the error probabilities within a block: Consider a block of $b$ qubits received by the decoder module at level $t$ (see Fig.\ref{recursivetree2}). 
From the point of this decoder, each qubit may contain unmarked errors, caused by the following two random independent processes: (i) unmarked error in the output of the decoder at level $t-1$, which has probability $\delta_{t-1}$ or (ii) the physical error in the encoder at level $t$, which has probability $p$ (see Fig. \ref{fig: d=2 recovery level t}). Then, union bound implies that  the probability of an unmarked error on each input qubit received by the decoder module is, at most,  $\delta_{t-1}+p$.  Thanks to the tree structure, the event of the $i^{\rm th}$ qubit being marked (or containing unmarked errors) is independent of similar events for $j^{\rm th}$ qubit.  Finally, when we say a detected error, it means that the error produced a non-zero syndrome. We also repeatedly use the very useful bound $(1-x)^{n}\geq1-n x$ when $0\le x\le 1$ and $b\geq1$.
\subsubsection*{Derivation of Eq.(\ref{mu rec})}

Now we derive the bound for $\mu_t={\rm Pr}(m(t)=1)$. We note that if there is only one marked qubit  and none of the other qubits contain errors, either no syndrome will be observed, or the observed syndrome will be consistent with the marked qubit. In both cases, the algorithm sets $m(t)=0$. Therefore, the events that cause sending a message to the next level, i.e., set $m(t)=1$ are a subset of the union of the following events:

\begin{itemize}
\item (A) No qubit is marked and, at least, one of the qubits contains an error, which happens with probability bounded by $1-(1-[p+\delta_{t-1}])^b\le b\times [p+\delta_{t-1}] $.

\item (B) More than one qubit is marked, which happens with probability bounded by
\bes\label{FF}
\begin{align}
1 - b\mu_{t-1}(1-\mu_{t-1})^{b-1} - (1-\mu_{t-1})^{b} &=   1 - (1-\mu_{t-1})^{b-1}[1+(b-1)\mu_{t-1}]\\ &\le 1 - (1-(b-1)\mu_{t-1}) (1+(b-1)\mu_{t-1}) \\ &=(b-1)^2\times \mu_{t-1}^2\ ,
\end{align}
\ees
where to get the second line we have used $(1-x)^{b-1}\geq1-(b-1)x$.
\item (C) A qubit is marked and other unmarked qubits contain errors. To find this probability, consider the special case  in which qubit $r=1$ is marked whereas qubits $2,\cdots, b$ are unmarked and contain errors. The probability of this event is bounded by $\mu_{t-1}\times (b-1)\times (p+\delta_{t-1})$. This joint probability is the same for all other configurations $r=2,\cdots, b$. Therefore, the total probability of event C is bounded by $b(b-1)\times \mu_{t-1}\times (p+\delta_{t-1})$.   
\end{itemize}

Therefore, applying the union bound we find
\begin{align}
    \mu_t=\text{Pr}(m(t)=1)\le \text{Pr}(A\cup B\cup C)&\le \text{Pr}(A)+\text{Pr}(B)+\text{Pr}(C)\\ &\le  b[\delta_{t-1} +p] + (b-1)^2\times \mu^2_{t-1}+b(b-1)\times \mu_{t-1}\times (p+\delta_{t-1})\ .
\end{align}
Eq.(\ref{mu rec}) is a weaker version of this bound, obtained by replacing $b-1$ with $b$.

\subsubsection*{Derivation of Eq.(\ref{delta rec})}
Next, we derive the bound for $\delta_t={\rm Pr}(e(t)=1\ {\rm and}\ m(t)=0)$. Recall that the event where the algorithm sets  $m(t)=0$ and yet $e(t)=1$ are subsets of the following events:  (C) a qubit is marked and other unmarked qubits contain errors.  (D) no syndrome is observed, i.e., $\textbf{s}=\textbf{0}$, and none of the qubits is marked and yet there is an error. 
 We bound the probability of each event  separately.


\begin{itemize}
  \item (C) See above.
  \item (D) Since the code has distance $d=2$, it detects single-qubit errors. Therefore, to have an error $e(t)=1$ without observing any syndrome     ($\textbf{s}=\bf{0}$), there should be, at least, two errors. The probability that a qubit is unmarked and contains an error is bounded by $p+\delta_{t-1}$. Therefore, the probability of this event is bounded by 
\be
1- b[\delta_{t-1}+p](1-[\delta_{t-1}+p])^{b-1} - (1-[\delta_{t-1}+p])^{b}\le (b-1)^2\times (\delta_{t-1}+p)^2\ , 
\ee
where the derivation is similar to Eq.(\ref{FF}). 
\end{itemize}

Then, applying the union bound we find
\begin{align}
    \delta_t= {\rm Pr}(e(t)=1\ {\rm and}\ m(t)=0) \leq  \text{Pr}(C\cup D)\le \text{Pr}(C)+\text{Pr}(D)\le  (b-1)^2\times (\delta_{t-1}+p)^2+b(b-1)\times \mu_{t-1} (p+\delta_{t-1}) \ .
\end{align}
Eq.(\ref{delta rec}) is a weaker version of this bound, obtained by replacing $b-1$ with $b$.

\subsection{Analysis of recursive relations Eq.(\ref{mu rec}) and Eq.(\ref{delta rec})}\label{fixed point setup}

Here,  we analyze the recursive relations Eq.(\ref{mu rec}) and Eq.(\ref{delta rec}), namely
\begin{align}\nonumber
    \mu_t &\leq b(\delta_{t-1}+p) + b^2\mu_{t-1}(\delta_{t-1}+p) + b^2\mu^2_{t-1}\ ,\\
    \delta_t &\leq b^2 (\delta_{t-1}+p)^2 + b^2\mu_{t-1} (\delta_{t-1}+p)\ .\nonumber
\end{align}
In particular, the goal is to find upper bounds on $\mu_t$ and $\delta_t$ that hold for arbitrary large $t$. To simplify these equations, we substitute $\epsilon_t=\delta_{t}+p$ to obtain,
\begin{align}
    \mu_t &\leq b\epsilon_{t-1} + b^2\mu_{t-1}\epsilon_{t-1} + b^2\mu^2_{t-1},\\
    \epsilon_t &\leq p+ b^2 \epsilon_{t-1}^2 + b^2\mu_{t-1} \epsilon_{t-1}.
\end{align}
Also, for convenience, in the following we use the slightly weaker bounds
\bes
\begin{align}\label{coupled}
    \mu_t &\leq 2b^2\epsilon_{t-1} + b^2\mu^2_{t-1}=f(\mu_{t-1},\epsilon_{t-1}) ,\\
    \epsilon_t &\leq p+ b^2 \epsilon_{t-1}^2 + b^2\mu_{t-1} \epsilon_{t-1}=g(\mu_{t-1},\epsilon_{t-1})\ \ , 
\end{align}
\ees
where we have applied condition $\mu_t\leq1$, 
and defined
\begin{align}
   f(x, y)&:=   b^2x^2+2b^2 y\ ,\\
    g(x, y) &:= p+ b^2 y^2 + b^2 x y\ .
\end{align}
To understand the behavior of these recursive inequalities 
we focus on the following recursive equalities
\bes\label{couple2}
\begin{align}
    x_t &= f(x_{t-1}, y_{t-1})= 2b^2y_{t-1} +  b^2x^2_{t-1}\ , \\ 
    y_t &= g(x_{t-1}, y_{t-1}) = p+ b^2 y^2_{t-1} + b^2 x_{t-1} y_{t-1}\ ,
\end{align}
\ees
with the initial condition 
\be\label{initial}
x_0=\mu_0=0\ , \ \ \  y_0=\epsilon_0=p>0\ .
\ee
We note that because for $x,y\ge 0$  functions $f$ and $g$ are monotone in both variables, then applying bounds in Eq.(\ref{coupled}) recursively imply that 
\be
\mu_t\le x_t \ , \ \ \epsilon_t\le y_t \ .
\ee
Suppose $x_\infty, y_\infty >0$ are a  fixed point of Eq.(\ref{couple2}), i.e., they satisfy
\begin{align}\label{fixed pt 1}
    x_\infty &= f(x_\infty, y_\infty) = 2b^2y_\infty + b^2x^2_\infty,
\end{align}
and
\begin{align}\label{fixed pt 2}
    y_\infty&= g(x_\infty, y_\infty)= p + b^2 y^2_\infty + b^2 x_\infty y_\infty\ .
\end{align}
Then, again the monotonicity of functions $f$ and $g$ imply that if $x_0\leq x_\infty$ and $y_0\leq y_\infty$, then $x_t\leq x_\infty$ and $y_t \leq y_\infty$ for all $t$, i.e., 
\begin{align}
    x_t &= f(x_{t-1},y_{t-1}) \leq f(x_\infty,y_\infty) = x_\infty,\\
    y_t &= g(x_{t-1},y_{t-1}) \leq g(x_\infty,y_\infty) = y_\infty\ .
\end{align}
Therefore, if the initial condition in Eq.(\ref{initial}) satisfies the constraint  $  y_0=\epsilon_0=p\le y_\infty$, it is guaranteed that for all $t$, 
\begin{align}
\epsilon_t &\le y_t\le y_\infty \ ,\\ 
\mu_t &\le x_t\le x_\infty\ .
\end{align}

In the following subsection, we show that for $p<1/(16b^4+4b^2)$, there is a fixed point solution that satisfies 
\begin{align}
    2b^2p\leq & \ x_\infty \leq 4b^2p+ 4b^2(16b^4+4b^2)p^2,\\
    p \leq & \ y_\infty \leq p + (16b^4+4b^2)p^2.
\end{align}
The initial conditions, $x_0=0$ and $y_0=p$, clearly satisfy $x_0\leq x_\infty$ and $y_0\leq y_\infty$. So we conclude that $x_t\leq x_\infty$ and $y_t \leq y_\infty$ for all $t$ (this also justifies the use of the subscript $\infty$). Recalling that these were upperbounds, we get that when $p<1/(16b^4+4b^2)$,
\begin{align}
    \mu_t &\leq 4b^2p+ 4b^2(16b^4+4b^2)p^2,\\
    \epsilon_t &\leq p + (16b^4+4b^2)p^2.
\end{align}
for all $t$.   Undoing the substitution $\epsilon_t=\delta_t + p$, we conclude that when $p<1/\lambda \equiv 1/(16b^4+4b^2)$,
\begin{align}
    \mu_t &\leq 4b^2\times (p+ \lambda p^2),\\
    \delta_t &\leq \lambda p^2.
\end{align}
for all $t$.

\begin{figure}
    \centering
\includegraphics[width=0.5\textwidth]{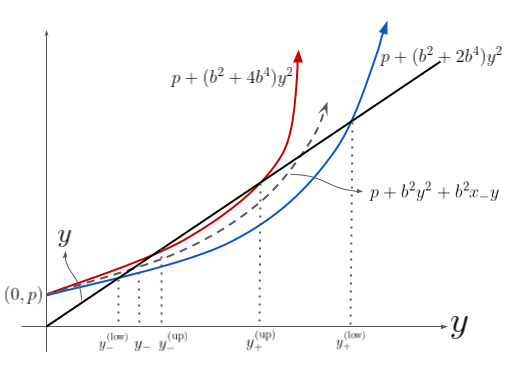}
    \caption{Schematic for intersection of the line $y$ (black) with the functions $p + (b^2+2b^4)y^2$ (blue), $p + b^2 y^2 + b^2 x_{-} y$ (dashed), and $p + (b^2+4b^4)y^2$ (red).}
    \label{fig parabola}
\end{figure}

\subsubsection{$x_\infty$ and $y_\infty$ computation}\label{fixed point comp}
Now we compute a specific solution for $x_\infty$ and $y_\infty$ that satisfies Eq.(\ref{fixed pt 1}) and Eq.(\ref{fixed pt 2}). We drop the $\infty$ subscript for clarity. Note that in general, these two equations will have 4 solutions. It suffices to estimate one real non-negative solution. We first solve Eq.(\ref{fixed pt 1}) for $x$ to get,
\begin{align}\label{x pm}
    x_{\pm}=\frac{1\pm \sqrt{1-8b^4y}}{2b^2}
\end{align}
For real solutions, we require $y<1/8b^4$. Then using the useful bounds
\begin{align}\label{sqrt bound}
    1-a \leq \sqrt{1-a} \leq 1-a/2\ \ {\rm when}\ 0\leq a \leq 1,
\end{align}
we get that the two solutions are bounded as,
\begin{align}
    2b^2y \leq & \ x_- \leq 4b^2y,\\
    \frac{1-4b^4y}{b^2} \leq & \ x_+ \leq \frac{1-2b^4y}{b^2}.
\end{align}
Observe that the `$+$' solution predicts that if $y=0$, then $x=1/b^2$. Whereas the `$-$' predicts that if $y=0$, then $x=0$. We argued earlier that any non-negative set of fixed points would provide a valid upperbound. So, we may as well choose the tighter upperbound, i.e., the `$-$' solution.
Now we solve Eq.(\ref{fixed pt 2}) after substituting $x_{-}$ in it. Using the bounds for $x_{-}$,  we get that the RHS of Eq.(\ref{fixed pt 2}) is bounded as
\begin{align}
     p + (b^2+2b^4)y^2 \leq p + b^2 y^2 + b^2 x_{-} y \leq p + (b^2+4b^4)y^2.
\end{align}
Eq.(\ref{fixed pt 2}) is guaranteed a solution if the line $y$ intersects the upperbound $p + (b^2+4b^4)y^2$. So, we are guaranteed an intersection (i.e., a solution) when $p<1/(16b^4+4b^2)$.

Now we study these solutions fixing such a $p<1/(16b^4+4b^2)$. Observe that the lower and upper bounds are convex parabolas that pass through $(0,p)$. So the line $y$ will intersect each parabola twice. We shall label the solutions for the lower parabola $y^{\rm (low)}_{\pm}$ and the solutions for the upper parabola $y^{\rm (up)}_{\pm}$ (Fig. (\ref{fig parabola}) clarifies this). Observe that there \textit{will} be an intersection between the line $y$ and $p + b^2 y^2 + b^2 x_{-} y$ (that we denote by $y_{-}$) between $y^{\rm (low)}_{-}$ and $y^{\rm (up)}_{-}$. Now we estimate these two:
\begin{align}
    y^{\rm (up)}_{-} &= \frac{1-\sqrt{1-(16b^4+4b^2)p}}{8b^4+2b^2} \leq p + (16b^4+4b^2)p^2\\
    y^{\rm (low)}_{-} &= \frac{1-\sqrt{1-(8b^4+4b^2)p}}{4b^4+2b^2} \geq p.
\end{align}

Thus, we have $p\leq y_{-} \leq p + (16b^4+4b^2)p^2$ when $p<1/(16b^4+4b^2)$. Now we take this value of $y_{-}$ and substitute it back into Eq.(\ref{x pm}) to compute $x_{-}$. From the bounds we saw for $x_{-}$, we conclude that $x_{-}\leq 4b^2y_{-} \leq 4b^2p+ 4b^2(16b^4+4b^2)p^2$ and $x_{-}\geq 2b^2y_{-} \geq 2b^2p$. Furthermore, for the regime of $p$ we are considering, this upperbound for $y_{-}$ (i.e., $2 \times 1/(16b^4+4b^2)$) is consistent with the condition for real solutions for $x_{-}$ in Eq.(\ref{x pm}) which was $y_{-}<1/8b^4$. 

Consolidating all our bounds (and re-introducing the $\infty$ subscript), we finally have that when $p<1/(16b^4+4b^2)$,
\begin{align}
    2b^2p\leq & \ x_\infty \leq 4b^2p+ 16(4b^6+b^4)p^2,\\
    p \leq & \ y_\infty \leq p + 4(4b^4+b^2)p^2.
\end{align}

\newpage

\section{Error Analysis of Bell Tree}\label{appendix: bell tree}

\begin{algorithm}[H]
\caption{Decoding Module with two reliability bits}\label{alg:local rec with 2 bits}
\begin{algorithmic}
\State \textbf{begin} with $2$ qubits and their ordered \textit{reliability} bit pairs $(m_i^{\rm rel}(t-1),m_i^{\rm irrel}(t-1))$ where $i=1,2$ in a block in $t-1$\\

\If{$m^{\rm irrel}_i(t-1)=0, \ \forall i$}
    \State $m^{\rm irrel}(t)\leftarrow0$
\Else \State $m^{\rm irrel}(t)\leftarrow1$
\EndIf\\

\State \textbf{measure} $Z_1Z_2$-\textbf{syndrome} $s$ on the $2$ qubits
\If{$m^{\rm rel}_i(t-1)=0, \ \forall i$}
    \If{$s=0$} \State $m^{\rm rel}(t)\leftarrow0$
    \Else \State $m(t)\leftarrow1$
    \EndIf
\ElsIf{$m^{\rm rel}_k(t-1)=1$ for only one of the two qubits indexed $i=1,2$ that is denoted $k$}
    \If{$s=0$ } \State $m^{\rm rel}(t)\leftarrow0$
    \Else \State \textbf{correct} $k^{\rm th}$ qubit
    \State $m^{\rm rel}(t)\leftarrow0$
    \EndIf
\Else \State $m^{\rm rel}(t)\leftarrow1$
\EndIf\\

\State \textbf{Swap} bit pair: $(m^{\rm rel}(t),m^{\rm irrel}(t)) \leftarrow (m^{\rm irrel}(t),m^{\rm rel}(t))$\\
\State output qubit $\leftarrow$ decoding of $2$ qubits in the block
\State \textbf{apply hadamard} to output qubit\\

\Return output qubit and ordered bit pair $(m^{\rm rel}(t),m^{\rm irrel}(t))$ to $t$ level.
\end{algorithmic}
\end{algorithm}

In this section, we analyze the performance of the recursive decoding procedure described in Algorithm \ref{alg:local rec with 2 bits} to analytically prove the existence of a noise threshold for the Bell tree by lowerbounding it. For simplicity, we assume $p_x=p_z=p$ for the physical independent $X$ and $Z$ errors in the tree. We show that when $p\leq 0.245\%$, then ${\rm Pr}(e^x(t)=1)\leq 53p$ and ${\rm Pr}(e^z(t)=1)\leq 25.5p$ for all even $t$.\\

\subsection{Decoding at level $t$}

Consider the decoding of a single block from level $t-1$ to $t$. The decoder module (detailed in Sec.(\ref{subsec: 2 bit local rec})) acts on $2$ qubits along with their reliability bit pairs $\{(m^{\rm rel}_i(t-1),m^{\rm irrel}_i(t-1))\}_{i \in [2]}$. We refer to these $2$ input qubits as the \textit{block}. The output of the decoder is a single qubit along with its reliability bit pair $(m^{\rm rel}(t),m^{\rm irrel}(t))$. We define a bit $e^x(t)$, which takes value $e^x(t)=0$ when the output qubit has no $X$ error, and $e^x(t)=1$ if it does. In other words, $e^x(t)$ indicates whether there is a logical $X$ error after recursive decoding of $t$ levels. We similarly define $e^z(t)$ for $Z$ errors.  
Note that because we have the same encoder, noise rate, and decoder at every level of encoding and decoding, the joint probability distribution ${\rm Pr}(e^x_i(t),m^{\rm rel}_i(t),e^z_i(t),m^{\rm irrel}_i(t))$ is identical for every qubit $i$ in any given level. Therefore, similar to the analysis of $d=2$ codes, we suppress the index for qubits.

During the recovery of level $t$, each qubit may contain $X$ and $Z$ errors that are caused by the following two random independent processes: (i) $X$ and $Z$ errors in the output of the decoder at level $t-1$, or (ii) the independent $X$ and $Z$ physical error in the encoder at level $t$ which we assume has probability $p_x=p_x=p$ respectively. 

The decoder of a single level of the Bell tree described in Fig \ref{bell tree logic circuit} and Algorithm \ref{alg:local rec with 2 bits} acts independently on $X$ and $Z$ errors by partitioning the decoder into a \textit{relevant} and \textit{irrelevant} part respectively. This is because (i) the decoder is equipped with only one syndrome $Z\otimes Z$ that is sensitive to only $X$ errors; thus, $X$ errors are detected and/or corrected (`relevant'), while $Z$ errors are left unattended (`irrelevant'), (ii) the relevant classical bits are updated independent of irrelevant classical bits. However, due to the application of Hadamard after each decoding level, all $X$ errors become $Z$ errors, and \textit{vice versa}. To respect this switching of reference frame, we also swap the order of the reliability bits in the last step of decoding (see Algorithm \ref{alg:local rec with 2 bits}). So, during the decoding of level $t$, the quantities $(e^z(t),m^{\rm irrel}(t))$ are a function of only $\{(e^x_i(t-1),m^{\rm rel}_i(t-1))\}_{i=1,2}$ where $i$ indexes the two qubits in the block. Similarly,
$(e^x(t),m^{\rm rel}(t))$ are a function of only $\{(e^z_i(t-1),m^{\rm irrel}_i(t-1))\}_{i=1,2}$. If we recurse these dependencies to the level $t=0$, we get the sequence of dependencies,
\begin{align}
    &(e^z(t),m^{\rm irrel}(t)) \leftarrow (e^x(t-1),m^{\rm rel}(t-1)) \leftarrow (e^z(t-2),m^{\rm irrel}(t-2)) \leftarrow ... \\
    &(e^x(t),m^{\rm rel}(t)) \leftarrow (e^z(t-1),m^{\rm irrel}(t-1)) \leftarrow (e^x(t-2),m^{\rm rel}(t-2)) \leftarrow ...
\end{align}
all the way down to $t=0$ quantities. And initially (at $t=0$), all the reliability bits are initiated as $0$, and errors are uncorrelated. So, any error or message of either type at \textit{any} level that effects $(e^z(t),m^{\rm irrel}(t))$ will never affect $(e^x(t),m^{\rm rel}(t))$, and \textit{vice versa}. Furthermore, physical $X$ and $Z$ errors within the tree are sampled independently with probabilities $p_x$ and $p_z$ respectively. These facts ensure that for all decoding levels $t$, the joint probability of errors and reliability bits values in layer $t$ decomposes as
\begin{align}\label{bell error decomp}
    {\rm Pr}\Big(e^x(t),e^z(t), m^{\rm rel}(t), m^{\rm irrel}(t)\Big) = {\rm Pr}\big(e^{x}(t), m^{\rm rel}(t)\big) \times {\rm Pr}\big(e^z(t), m^{\rm irrel}(t)\big)\ .
\end{align}
Due to Eq.(\ref{bell error decomp}), we shall study the the update of ${\rm Pr}(e^{x}(t), m^{\rm rel}(t))$ and ${\rm Pr}(e^z(t), m^{\rm irrel}(t))$ independently.\\

Just as done in the introductory section of Appendix section \ref{prop d=2 deriv}, we define,
\begin{align}
    &\mu^x_{t}:= {\rm Pr}(m^{\rm rel}(t)=1),\\
    &\delta^x_{t}:= {\rm Pr}(e^x(t) =1 \ {\rm and} \ m^{\rm rel}(t)=0)\ ,
\end{align}
and,
\begin{align}
    &\mu^z_{t}:= {\rm Pr}(m^{\rm irrel}(t)=1),\\
    &\delta^z_{t}:= {\rm Pr}(e^z(t) =1 \ {\rm and} \ m^{\rm irrel}(t)=0)\ .
\end{align}
Because the qubit index can be suppressed, this allows us to define $\mu^x_{t-1}, \delta^x_{t-1}$ and $\mu^z_{t-1},\delta^z_{t-1}$ similarly. And from Bayes' rule, these definitions satisfy for all $t$,
\begin{align}
    {\rm Pr}(e^x(t)=1)\leq \mu^x_{t}+ \delta^x_{t},\\
    {\rm Pr}(e^z(t)=1)\leq \mu^z_{t}+ \delta^z_{t}.
\end{align}

We study the evolution of these quantities upon decoding at level $t$ in Sec \ref{binary tree rec relns} and Sec \ref{no qec rec appendix}, and show that
\begin{align}\label{d=2 QEC rec 1}
    \mu^z_t \ &\leq 2(\delta^x_{t-1}+p) + (\mu_{t-1}^x)^2,
\end{align}
and
\begin{align}\label{d=2 QEC rec 2}
    \delta^z_t \ &\leq  (\delta^x_{t-1}+p)^2 + 2\mu^x_{t-1} (\delta^x_{t-1}+p),
\end{align} 
whereas,
\begin{align}\label{d=2 no QEC rec 1}
    \mu^x_{t} \leq 2 \mu^z_{t-1},
\end{align}
and
\begin{align}\label{d=2 no QEC rec 2}
    &\delta^x_{t} \leq 2(\delta^z_{t-1}+p).
\end{align}

We shall combine this update from level $t-1 \rightarrow t \rightarrow t+1$ for $X$ errors by composing Eq.(\ref{d=2 QEC rec 1}) and (\ref{d=2 QEC rec 2}) with Eq.(\ref{d=2 no QEC rec 1}) and (\ref{d=2 no QEC rec 2}) respectively. This gives,
\begin{align}
    \mu^x_{t+1} &\leq 4(\delta^x_{t-1}+p)+2{(\mu^x_{t-1})}^2,\\
    \delta^x_{t+1} & \leq 2p+ 2(\delta^x_{t-1}+p)^2 + 4\mu^x_{t-1}(\delta^x_{t-1}+p),
\end{align}
where $\mu^x_0=\delta^x_0=\mu^z_0=\delta^z_0=0$. We  can solve for the asymptotics of this system in exactly the same way as presented in Appendix Sec (\ref{fixed point comp}). We compute that when $p<1/408\approx 0.245\%$, then ${\rm Pr}(e^x(t)=1)\leq 53p$ for all even $t$.

The above analysis can be modified for $Z$ error evolution if we simply compose Eq.(\ref{d=2 no QEC rec 1}) and (\ref{d=2 no QEC rec 2}) with Eq.(\ref{d=2 QEC rec 1}) and (\ref{d=2 QEC rec 2}) (i.e., in the opposite order). Redoing the above asymptotic analysis for this new recursive system yields the same threshold bound that when $p<1/408\approx 0.245\%$, then ${\rm Pr}(e^z(t)=1)\leq 25.5p$ for all even $t$.

\subsection{Derivation of Eq.(\ref{d=2 QEC rec 1}) and Eq.(\ref{d=2 QEC rec 2}) update}\label{binary tree rec relns}

Recall that the bit $e^z(t)$ indicates whether  the qubit at the output of the decoder module at level $t$ has a $Z$ error. Furthermore, $m^{\rm irrel}(t)$ is the corresponding reliability bit, whose value 1 signals the presence of a $Z$ error on this qubit. 
 The \emph{relevant part} of the decoder module described in Algorithm \ref{alg:local rec with 2 bits} measures stabilizer  $Z \otimes Z$ on the received input qubits, which allows the decoder to detect/correct $X$ errors.    
 Since in the end, the decoder module applies a Hadamard gate, $X$ errors  become $Z$ errors. 
 In conclusion, the ``relevant" part of the decoding module determines how 
$e^x(t-1)$ and $m^{\rm rel}(t-1)$ for two input qubits will be transformed to $e^z(t)$ and $m^{\rm irrel}(t)$ for the output qubit. 

 
 As we saw before, the probability of an unmarked $X$ error on qubit at the input of a decoder  in level $t$ is at most $\delta^x_{t-1}+p$. The events that can result in  $e^z(t)=1$ or $m^{\rm irrel}(t)=1$ are a subset of the following 4 events which are specific cases of the events we considered in our analysis of algorithm  \ref{alg:local rec with bit}\footnote{It is worth noting the close connection with our error analysis of 
 algorithm \ref{alg:local rec with bit}. 
 Namely,  Algorithm \ref{alg:local rec with 2 bits} acts exactly as Algorithm \ref{alg:local rec with bit} on the relevant part of the decoding of a Bell tree. That is, we are decoding a binary repetition code that protects against $X$-type errors. This is a $d=2$ code for $X$ errors that detects single qubit $X$ errors using the only stabilizer $Z \otimes Z$. Therefore, we can specialize the computation presented in Sec \ref{d=2 appendix} to a binary repetition code. Recall that the two qubits in the code suffer a bit-flip error independently with probability $p$. So in line with the presentation in Sec \ref{d=2 appendix}, we shall first rewrite the events (A), (B), (C) and (D) mentioned there for the decoding of a binary repetition code. And then, we shall use these to estimate the probabilities for $\mu^z_t$ and $\delta^z_t$.}:

\begin{itemize}
    \item (A): No qubit is marked and exactly one of the two qubits contains an $X$ error. The probability of this  error is bounded by $1-(1-[\delta^x_{t-1}+p])^2 \leq 2(\delta^x_{t-1}+p)$.
    \item (B): Both qubits are marked, which  happens with probability $({\mu^x_{t-1}})^2$.
    \item (C): A qubit is marked and the other qubit is unmarked and has an $X$ error. There are two such events for two qubits. So, the total  probability bounded by $2 \times \mu^x_{t-1}\times (\delta^x_{t-1}+p)$.
    \item (D): No qubit is marked and  both qubits have (unmarked)  $X$ errors. This happens with probability $(\delta^x_{t-1}+p)^2$. 
\end{itemize}

In particular, we have excluded the event that (i) none of the qubits has errors, and (ii) the event that one of them has an error and is marked and the other one does not have an error and is not marked. In the latter case, the error correction performed by the module takes care of the error and sets the reliability bit to zero. Also, note that in the case of event C
the decoder makes a wrong guess about the location of the error, corrects it, and set the reliability bit to zero. Therefore, this event does not contribute to $\mu_z(t)$ but  contributes to $\delta_z(t)$.  
 
We conclude that

\begin{align}
    \mu^z_t = {\rm Pr}(m^{\rm irrel}(t)=1)\leq \text{Pr}(A \cup B) &\leq \text{Pr}(A) + \text{Pr}(B)\\
    &\leq 2(\delta^x_{t-1}+p) + ({\mu^x_{t-1}})^2\ .
\end{align}
Next, we focus on the probability of unmarked $Z$ errors, $\delta^z_t$.  Since events in $A$ and $B$ set the reliability bit to 1, they do not contribute to $\delta^z_t$.  
On the other hand, events in (C) will result in an unmarked $Z$ error and therefore do contribute to  $\delta^z_t$. We conclude that
\begin{align}
    \delta^z_t={\rm Pr}(e^z(t)=1\ {\rm and}\ m^{\rm irrel}(t)=0)\leq \text{Pr}(C \cup D) &\leq \text{Pr}(C) + \text{Pr}(D)\\
    &\leq 2\mu^x_{t-1} (\delta^x_{t-1}+p)+(\delta^x_{t-1}+p)^2.
\end{align}

So combining all these considerations, we get
\begin{align}
\mu^z_t \ &\leq 2(\delta^x_{t-1}+p) + ({\mu^x_{t-1}})^2\\
    \delta^z_t \ &\leq  (\delta^x_{t-1}+p)^2 + 2\mu^x_{t-1} (\delta^x_{t-1}+p).
\end{align}

\subsection{Derivation of Eq.(\ref{d=2 no QEC rec 1}) and Eq.(\ref{d=2 no QEC rec 2}) update}\label{no qec rec appendix}
In this section, we compute the update of the error pertaining to the irrelevant part of Algorithm \ref{alg:local rec with 2 bits}. This entails decoding of unprotected $Z$ errors which become $X$ errors after decoding. Thus, the associated error bit at the output is $e^x(t)$ and its reliability bit is $m^{\rm rel}(t)$ due to the final swap. 

We first evaluate $\mu^x_{t}={\rm Pr}(m^{\rm rel}{(t)}=1)$. According to the algorithm, we mark a qubit in level $t$ if even one qubit in level $t-1$ is marked. And this is independent of any error because we cannot detect them, and, thus, we leave them unaddressed. Thus, $\mu^x_{t}={\rm Pr}(m^{\rm rel}{(t)}=1)= 1-(1-\mu^z_{t-1})^2 \leq 2\mu^z_{t-1}$.

Now consider $\delta^x_{t}={\rm Pr}(e^x{(t)}=1\ {\rm and} \ m^{\rm rel}{(t)}=0)$. According to the update rule, we require all $m^{\rm irrel}_i(t-1)=0$ to get $m^{\rm rel}(t)=0$, thus there will be no $\mu^z_t$ term. Furthermore, a single error, independent of messages received, causes a logical error in level $t$. Whereas, two qubit errors will cancel each other yielding no logical error. Thus, $\delta^x_{t} = {\rm Pr}(e^x(t)=1 \ {\rm and} \ m^{\rm irrel}{(t)}=0) \leq  2(\delta^z_{t-1}+p)$.

\subsection{Comparison with \textit{conservative} recursive decoder}
 We consider a more ``conservative" version of this decoder where after the correction the relevant reliability is set to $m^{\text{rel}}(t)=1$, indicating unreliability of the decoded qubit. Numerical results suggest that this modification reduces the threshold to around 
$\sim 0.3\%$, and generally results in a higher probability of logical error. The \textit{conservative} algorithm (Algorithm \ref{alg:cons local rec with 2 bits}) is described below, and the associated numerics are presented in Fig.(\ref{cons_fig}).

\begin{algorithm}[H]
\caption{\textit{Conservative} Decoding Module with two reliability bits}\label{alg:cons local rec with 2 bits}
\begin{algorithmic}
\State \textbf{begin} with $2$ qubits and their ordered \textit{reliability} bit pairs $(m_i^{\rm rel}(t-1),m_i^{\rm irrel}(t-1))$ where $i=1,2$ in a block in $t-1$\\

\If{$m^{\rm irrel}_i(t-1)=0, \ \forall i$}
    \State $m^{\rm irrel}(t)\leftarrow0$
\Else \State $m^{\rm irrel}(t)\leftarrow1$
\EndIf\\

\State \textbf{measure} $Z_1Z_2$-\textbf{syndrome} $s$ on the $2$ qubits
\If{$m^{\rm rel}_i(t-1)=0, \ \forall i$}
    \If{$s=0$} \State $m^{\rm rel}(t)\leftarrow0$
    \Else \State $m(t)\leftarrow1$
    \EndIf
\ElsIf{$m^{\rm rel}_k(t-1)=1$ for only one of the two qubits indexed $i=1,2$ that is denoted $k$}
    \If{$s=0$ } \State $m^{\rm rel}(t)\leftarrow0$
    \Else \State \textbf{correct} $k^{\rm th}$ qubit
    \State $m^{\rm rel}(t)\leftarrow1$ \#\textit{modified step}\#
    \EndIf
\Else \State $m^{\rm rel}(t)\leftarrow1$
\EndIf\\

\State \textbf{Swap} bit pair: $(m^{\rm rel}(t),m^{\rm irrel}(t)) \leftarrow (m^{\rm irrel}(t),m^{\rm rel}(t))$\\
\State output qubit $\leftarrow$ decoding of $2$ qubits in the block
\State \textbf{apply hadamard} to output qubit\\

\Return output qubit and ordered bit pair $(m^{\rm rel}(t),m^{\rm irrel}(t))$ to $t$ level.
\end{algorithmic}
\end{algorithm}

\begin{figure}[ht!]
\centering
\includegraphics[width=0.5\textwidth]{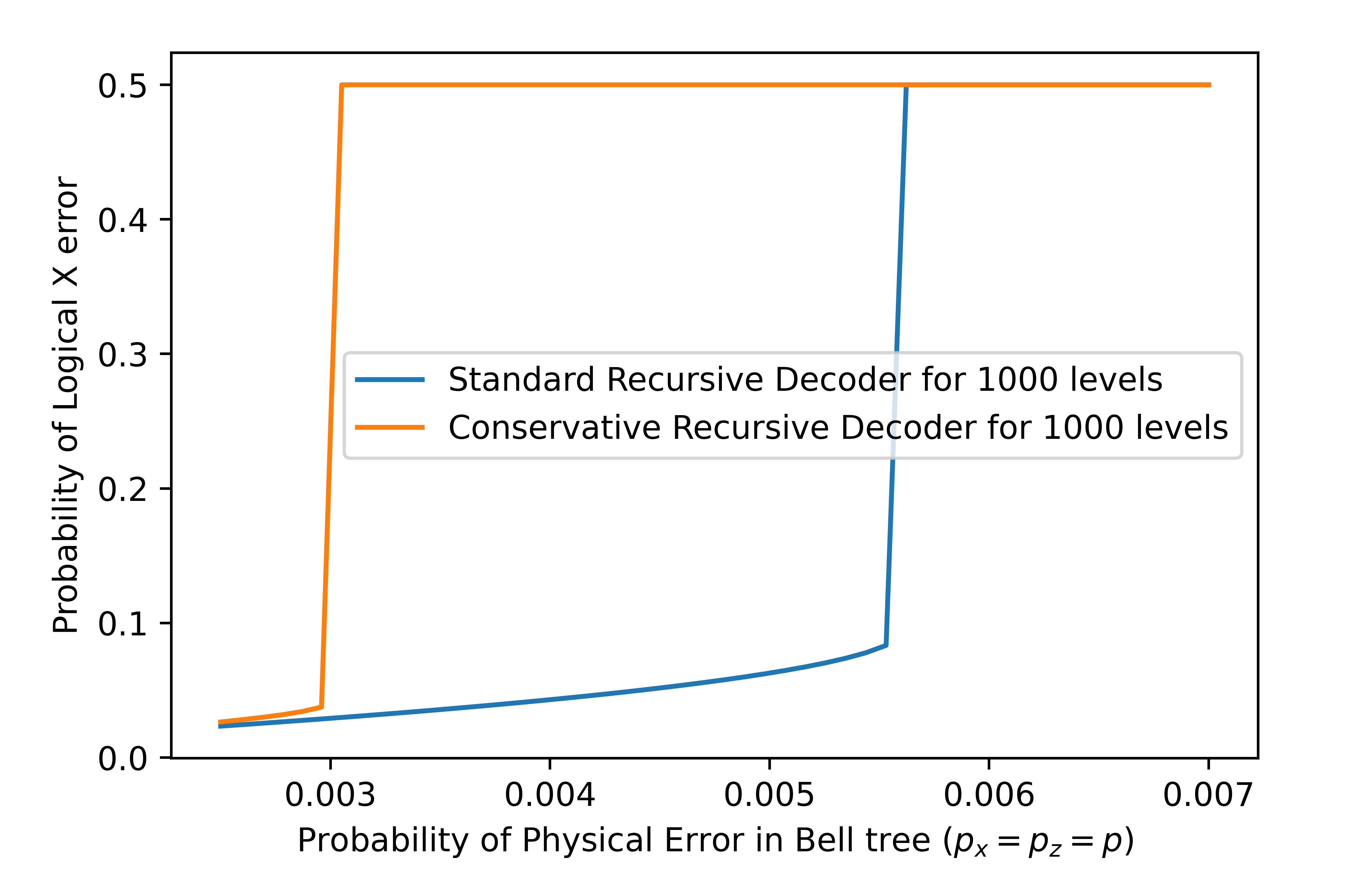} 
\caption{\textbf{Comparison between Standard Recursive decoder (Algorithm \ref{alg:local rec with 2 bits}) and Conservative Recursive decoder (Algorithm \ref{alg:cons local rec with 2 bits}) for the Bell Tree --} We observe that the conservative decoder performs worse than the standard decoder we considered in the main body of the paper. That is to say, the threshold is lower, i.e. $\sim 0.3\%$, and the logical error after correction is greater.}
\label{cons_fig}
\end{figure}

\newpage

\section{Optimal and Efficient Recovery of Noisy Concatenated Stabilizer codes}\label{BP Appendix}
In this section we detail the efficient optimal algorithm for recovering noisy stabilizer trees.  The exposition in this section is essentially in line with the presentation in \cite{poulin2006optimal} which considers concatenated codes, but without noise in the encoder.

\subsection{Clifford Encoders, Logical Errors and Syndromes}\label{clifford properties}
Consider a noisy stabilizer tree, defined in Sec. \ref{stab tree setup}, of the form,
\be\label{treee}
\mathcal{E}_{T}(\rho) = \mathcal{N}^{\otimes b^T} \circ \prod_{j=0}^{T-1}   \mathcal{V}^{\otimes b^j}\circ \mathcal{N}^{\otimes b^j}(\rho).
\ee
where $\mathcal{V}$ is a $1\rightarrow b$ qubit stabilizer isometry defined as $\mathcal{V}(\rho)=U[\rho \otimes \ketbra{0}{0}^{\otimes (b-1)}]U^\dagger$, $U$ being a Clifford unitary and $\mathcal{N}$ is a single qubit Pauli noise. Note that Eq.(\ref{treee}) has a noise before the first encoding, and also after the tree on the leaves. We redescribe the action of the decoder circuit of a general stabilizer code (which includes our tree channel) here: let error $E$ be an arbitrary tensor product of Pauli operators and the identity operator \footnote{In the context of trees/subtrees that have noise between encoders, $E$ represents the \textit{effective} error affecting the leaves that can be obtained (up to stabilizers) by commuting the local errors sampled on edges within the tree across encoders, down to its leaves.}. Then, 
\begin{align}
    U^\dagger E U=\mathcal{L}(E) \otimes \mathbf{X}^{{\rm Synd}(E)} \mathbf{Z}^\mathbf{z},
\end{align}
where $\mathcal{L}(E)$ is the logical operator acting on the logical qubit and $\mathbf{X}^{{\rm Synd}(E)}$ is the syndrome acting on the ancilla qubits. Since the ancilla is initially prepared in $\ket{0}^{\otimes b-1}$, it remains unchanged under $\mathbf{Z}^\mathbf{z}$. We measure and record the syndrome ${\rm Synd}(E)$, and thus infer the distribution $p(L|{\rm Synd}(E))$ over logical errors $L=\{I,X,Y,Z\}$ because, in general, the logical errors are correlated with the syndromes.

In the context of trees, Fig. \ref{BP_Circuit} clarifies this inversion and syndrome-acquisition procedure. Observe that this decoding circuit is essentially a `reflection' of the encoding tree. Therefore, we see that for a tree of depth $T$, we have a syndrome of length $b^{T}-1$, and thus, may need to construct and search a dictionary over $2^{b^{T}-1}$ syndrome strings, which is clearly untenable. But, thanks to the tree structure, there is a belief propagation algorithm that uses the syndrome data and computes this logical error distribution in $\mathcal{O}(b^{T-1})$ steps -- an exponential improvement!

\begin{figure}[ht!]
\centering
\includegraphics[width=0.8\textwidth]{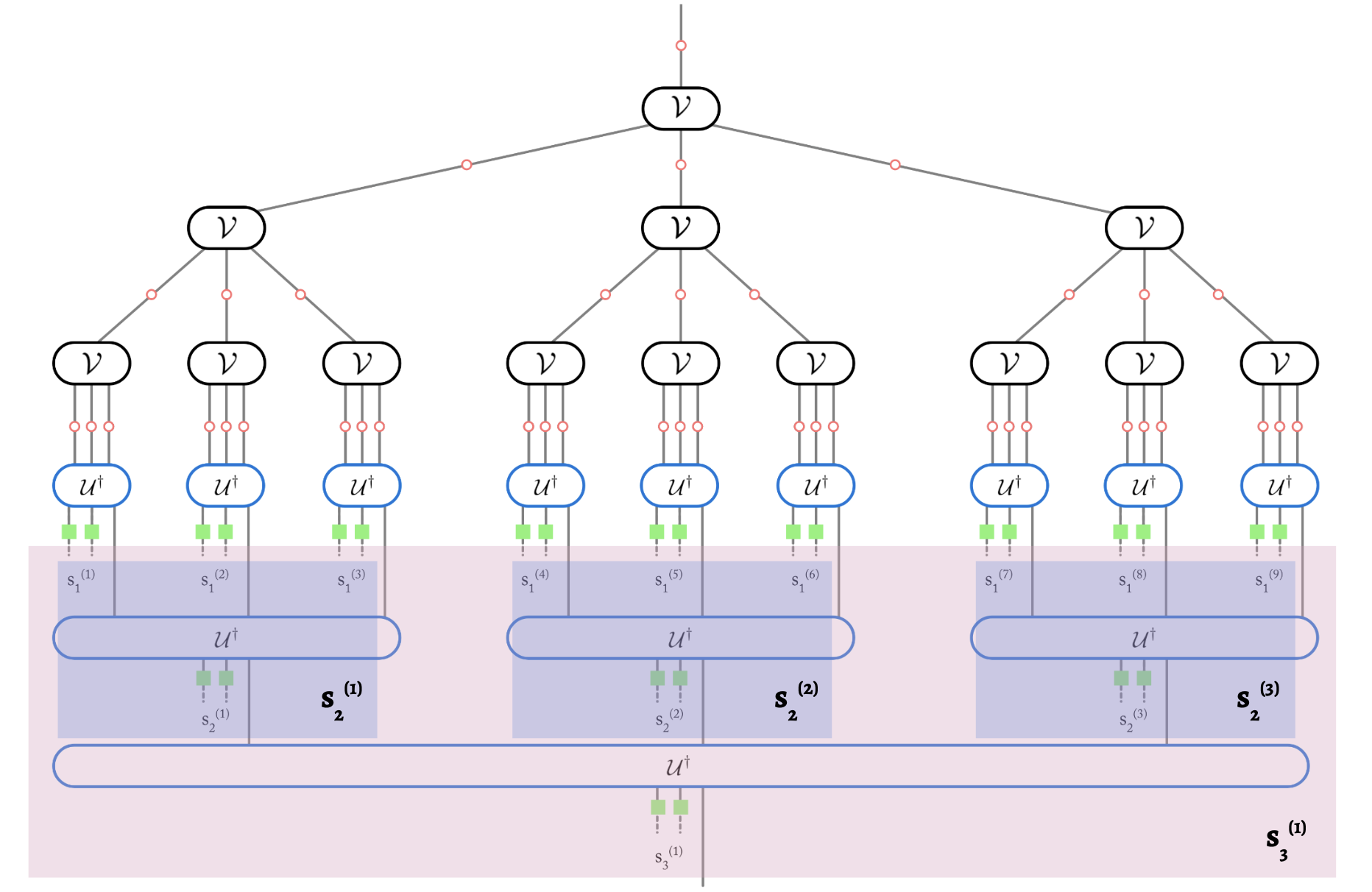} 
\caption{Figure illustrates the optimal recovery circuit. While the upper-half of the figure is a tree with stabilizer encoder $\mathcal{V}$ along with Pauli noise $\mathcal{N}$ on each edge, the lower half is the application of the inverse unitary $\mathcal{U}_T^\dagger = U_T^\dagger (.) U_T$ on the output of the tree process. Note that $\mathcal{U}$ is the Clifford unitary corresponding to the stabilizer isometry $\mathcal{V}$. Upon inversion, we measure each of the ancillary qubits in the $Z$-basis to obtain the syndrome string. The syndromes $s^{(i)}_n$ within the nested syndrome sets $\mathbf{s}^{(i)}_n$ are highlighted in the translucent boxes. Observe that the syndrome-acquisition circuit is a `reflection' of the original tree circuit.}
\label{BP_Circuit}
\end{figure}

\subsection{Ordering of syndrome strings}
The tree structure provides a natural partitioning of the syndrome string. Consider a $b$-ary tree with $T$ levels. The entire syndrome is a bit string of length $b^T-1$. Denote this by $\mathbf{s}^{(1)}_T$ (we include the $(1)$ superscript for consistency with the rest of the presentation).

As an illustrative example, we consider the natural partitions of this string $\mathbf{s}^{(1)}_T$.
This bit string can be obtained from $s_T^{(1)}\in\{0,1\}^{b-1}$, i.e., the syndrome at level $T$, combined with syndromes from the $b$ subtress from the previous levels, $T-1$. These syndromes are labeled as 
$\mathbf{s}_{T-1}^{(i)}$,
where $i=1,\cdots b$, and $T-1$ indicates that the subtree has depth $T-1$, or is `\textit{rooted} in level $T-1$'.\footnote{Note that we have essentially labeled the levels in reverse (e.g., the root is level $T$ and the leaves are level $0$) similar to Sec. \ref{local rec section}, Sec. \ref{local recovery 1 bit section} or Sec. \ref{bell tree sec}.} In summary,
 $$\mathbf{s}^{(1)}_T= s^{(1)}_T \cup (\cup_{i=1}^b \mathbf{s}_{T-1}^{(i)}) \ .$$
Recall that $s^{(1)}_T$ string is obtained by measuring the $b-1$ ancillary qubits that are outputs of the inverse encoder that corresponds `by reflection' to the first encoder in the tree. Fig. \ref{BP_Circuit} illustrates this for a 3-ary tree with depth $T=3$.

We can similarly define the syndrome string $\mathbf{s}_t^{(j)}$ for the $j^{\rm th}$ subtree rooted in level $t$. Observe that this indexing also provides a labeling for \textit{all} subtrees via their root edge position $(j,t)$. We can, thus, say two edges $(j,t)$ and $(i,t-1)$ are \textit{adjacent} when $(j,t)$ is the immediate parent edge of $(i,t-1)$, denoted by $(j,t) \sim (i,t-1)$. Thus, these satisfy the recurrence relation, $$s^{(j)}_t \cup (\cup_{i: (j,t)\sim (i,t-1)} \mathbf{s}_{t-1}^{(i)}) = \mathbf{s}^{(j)}_t.$$

This relation is used to define all labels for all the syndrome strings. Fig. \ref{BP_Circuit} clarifies this labeling. The locality structure of trees ensures that the logical error $L^{(j)}_t$, associated to the $j^{\rm th}$ subtree rooted in level $t$, only depends on $\mathbf{s}_t^{(j)}$, and is independent of the rest of the syndrome string, i.e. 
\be
p(L^{(j)}_t|\mathbf{s}_t)=p(L^{(j)}_t|\mathbf{s}_t^{(j)})\ . 
\ee
In the following, we focus on calculating this conditional probability.
\subsection{Belief Update Rule}\label{BP update}
Consider the root edge of the entire tree channel $\mathcal{E}_{T}$, i.e., level-$T$. Let us call the logical error for the entire tree $L_T$. This root qubit is firstly effected by a local Pauli noise $\mathcal{N}$, and then it is encoded in $b$ qubits via $\mathcal{V}$. Now consider the $b$ subtrees rooted at these $b$ children edges of $\mathcal{V}$. Label the logical error associated to each of these subtrees as error string $L_{T-1}=(L_{T-1}^{(1)},L_{T-1}^{(2)},...,L_{T-1}^{(b)})$ respectively. Let $\mathbf{s}_T$ label the syndrome string obtained from decoding the entire tree  (we drop the $(1)$ superscript here because there is only one string to be considered). Similarly, let $\mathbf{s}_{T-1}=(\mathbf{s}_{T-1}^{(1)},\mathbf{s}_{T-1}^{(2)},...\mathbf{s}_{T-1}^{(b)})$ indicate the syndrome set obtained from decoding the $b$ `level-${(T-1)}$' subtrees noted just earlier. We shall also consider an intermediate tree that is rooted at the edge \textit{after} the noise process $\mathcal{N}$, and whose logical error we denote by $L_T'$. Observe that the entire tree and this intermediate tree are conditioned by the same syndrome $\mathbf{s}_T$ and differ only by the effect of single-qubit $\mathcal{N}$.


Thus, the crux of the BP algorithm is an update rule from,
$$
p(L_{T-1}|\mathbf{s}_{T-1}) \rightarrow p(L_T'|\mathbf{s}_T) \rightarrow p(L_T|\mathbf{s}_T).
$$
The second update is simple as it is caused only by the single-qubit noise process $\mathcal{N}$. Then,
\begin{align}
    p(L_T|\mathbf{s}_T)=\sum_{L_T'} {N}(L_T|L_T')\ p(L_T'|\mathbf{s}_T)\ .
\end{align}
where $N(.|.)$ is defined in Eq.(\ref{pauli noise transition matrix}). The first update is derived in the spirit of Eq. 4 in \cite{poulin2006optimal}:
\begin{align}\label{first update step 1}
p(L_T'|\mathbf{s}_T) &= \sum_{L_{T-1}}\ p(L_T'|L_T, \mathbf{s}_T) \ p(L_{T-1}| \mathbf{s}_T)\ .
\end{align}
The logical error $L_T'$ is completely determined by $L_{T-1}$ by considering its logical part, i.e., $\mathcal{L}(L_{T-1})$. Thus, $p(L_T'|L_{T-1}, \mathbf{s}_T)=\delta[L_T' =  \mathcal{L}(L_{T-1})]$. Next we look at the second term. Recall that $\mathbf{s}_T = s_T \cup \mathbf{s}_{T-1} = s_T \cup (\cup_{i=1}^b \mathbf{s}_{T-1}^{(i)})$. Thus,
\begin{align}
    p(L_{T-1}| \mathbf{s}_T) &= p(L_{T-1}| s_T, \mathbf{s}_{T-1})
    =\frac{p(L_{T-1}, s_T, \mathbf{s}_{T-1})}{p(s_T, \mathbf{s}_T)}
    =\frac{p(s|L_{T-1}, \mathbf{s}_{T-1})}{p(s_T|\mathbf{s}_{T-1})}\ p(L_{T-1}|\mathbf{s}_{T-1}).
\end{align}
The syndrome $s_T$ is completely determined by the $b$-qubit error string $L_{T-1}$. Thus, $p(s_T|L_{T-1}, \mathbf{s}_{T-1})=\delta[s_T={\rm Synd}(L_{T-1})]$. By substituting these into \ref{first update step 1},
\begin{align}
    p(L_T'|\mathbf{s}_T)=\sum_{L_{T-1}} \delta[L_T' = \mathcal{L}(L_{T-1})]\ \delta[s_T={\rm Synd}(L_{T-1})] \frac{1}{p(s_T|\mathbf{s}_{T-1})} p(L_{T-1}|\mathbf{s}_{T-1}) 
\end{align}
where $\mathcal{L}(E)$ is the logical error associated to $E$ and ${\rm Synd}(E)$ is the syndrome bitstring associated to $E$ (Refer to subsection \ref{clifford properties}). An important final observation is that the logical errors associated to different \textit{non-overlapping} subtrees are independent. This is because the noise processes within the tree are local to each edge. The $b$ subtrees, whose logical errors constitute one term of $L_{T-1}$ each, are also non-overlapping. Thus, logical errors $\{L_{T-1}^{(i)}\}_i$ are independent from each other, where $i$ indexes the subtree. Recall that $\mathbf{s}_{T-1}=\{\mathbf{s}_{T-1}^{(i)}\}_i$. Thus, the first update is,
\begin{align}\label{BP update rule}
    p(E'_T|\mathbf{s}_T)=\sum_{L_{T-1}} \delta[L_T' = \mathcal{L}(L_{T-1})]\ \delta[s_T={\rm Synd}(L_{T-1})] \times \frac{1}{p(s_T|\mathbf{s}_{T-1})} \prod_i p(L^{(i)}_{T-1}|\mathbf{s}^{(i)}_{T-1})
\end{align}
Combining the first and second updates gives us,
\begin{align}
    p(L_T|\mathbf{s}_T)=\sum_{L_T'} {N}(L_T|L_T')\sum_{L_{T-1}} \delta[L_T' = \mathcal{L}(L_{T-1})] \delta[s_T={\rm Synd}(L_{T-1})] \times \frac{1}{p(s_T|\mathbf{s}_{T-1})} \prod_i p(L^{(i)}_{T-1}|\mathbf{s}^{(i)}_{T-1})
\end{align}
If we change the order of summation and define,

\begin{align}\label{BP weight final eqn}
    f_{s_T}(L_T,L_{T-1}):= \sum_{L_T'} \delta[L_T' = \mathcal{L}(L_{T-1})] \delta[s_T={\rm Synd}(L_{T-1})]\ {N}(L_T|L_T')
    = \  \delta[s_T={\rm Synd}(L_{T-1})]\ {N}(L_T|\mathcal{L}(L_{T-1}))
\end{align}

we get,
\begin{align}\label{BP final eqn}
p(L_T|\mathbf{s}_T)=\sum_{L_{T-1}}f_{s_T}(L_T,L_{T-1})\frac{1}{p(s_T|\mathbf{s}_{T-1})} \prod_i p(L^{(i)}_{T-1}|\mathbf{s}^{(i)}_{T-1}),
\end{align}
where $\mathbf{s}_T = (s_T, \mathbf{s}_{T-1})$. If we consider trees with no noise within them, ${N}(L_T|L_T')=\delta(L_T=L_T')$. Thus, $f_{s_T}(L_T,L_{T-1})=\delta[L_T = \mathcal{L}(L_{T-1})]\ \delta[s_T={\rm Synd}(L_{T-1})]$, which is equivalent to Eq.4 in \cite{poulin2006optimal}. Finally using the independence of non-overlapping subtrees, we can repeat this procedure for $p(L^{(i)}_{T-1}|\mathbf{s}^{(i)}_{T-1})$ for each $i$, and further propagate until the leaves of the tree. That is to say, $p(L^{(j)}_0)={N}(L^{(j)}_0|I)$ is set as the terminal condition of recursion for all $j$.

\textit{Note regarding noise at the root} -- The presentation above is for a tree \textit{with} noise at the root. We can very easily modify this analysis for a tree \textit{without} noise at the root: in the very last step of decoding (i.e., $t=T$), we modify Eq.(\ref{BP weight final eqn}) as 
$$f_{s_T}(L_T,L_{T-1}):= \sum_{L_T'} \delta[L_T' = \mathcal{L}(L_{T-1})] \delta[s={\rm Synd}(L_{T-1})]\ {N}(L_T|L_T')
    = \  \delta[s={\rm Synd}(L_{T-1})]\ \delta(L_T=\mathcal{L}(L_{T-1})).$$

\textit{Optimality \& Efficiency--} Firstly note that $p(L_T|\mathbf{s}_T)$ contains \textit{all} the relevant information to make a decision regarding the logical error for the entire tree. And its computation is, as noted, via a BP algorithm on a tree-like factor graph. From theorem 14.1 in \cite{mezard2009information}, we note that BP on tree-like factor graphs is exact. Thus, optimality is ensured by our BP algorithm. 
The efficiency of this algorithm is because we sum over only $\mathcal{O}(2^b)$ terms at each vertex. Thus we have $\mathcal{O}(2^b \times b^{T-1})$ terms to consider for the entire tree. This is a drastic improvement in our efficiency as compared to a double exponential search.

\subsection{Numerical Example: Classical $3$-ary tree} 
As a sanity check, we applied this optimal recovery algorithm to a classical tree with branching number 3, and bit-flip error on each edge. Note that in this set-up it suffices to just consider the group $\{I,X\}$ instead of the entire Pauli group. This case has been studied in \cite{evans2000} and we know that the asymptotic phase transition is at $(1-3^{-\frac12})/2 = \sim 0.21$. The figure below plots the overall error after recovery as a function of bit-flip error rate within the tree. 
\begin{center}
    \includegraphics[width=0.5\textwidth]{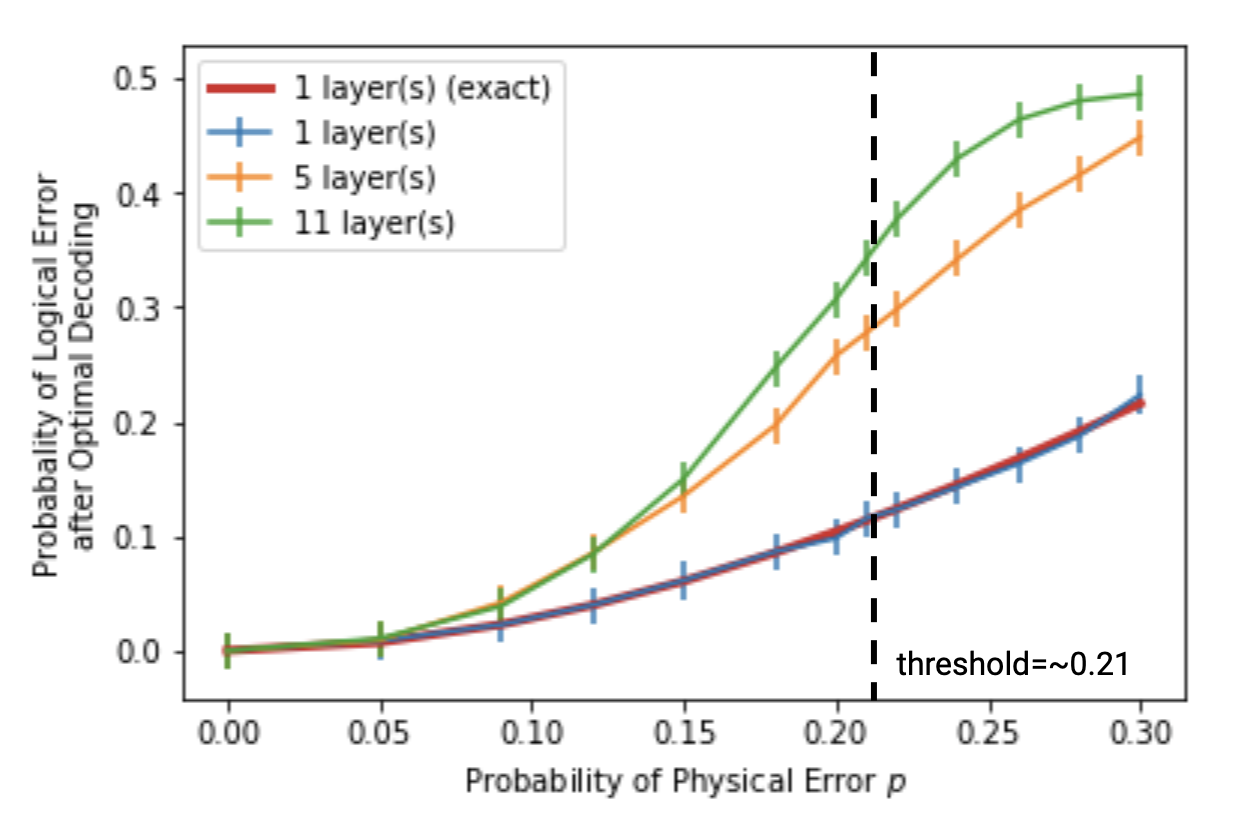} 
\end{center}
We run our Monte Carlo by randomly generating the effect of noise for given physical error settings, inverting the tree step-by-step and acquiring the syndrome strings (see Fig.\ref{BP_Circuit}), and then running the BP algorithm on them. We observe that the logical error saturates for 11 layers around $\sim 0.27$. With increasing depth, the transition progresses lower. For larger depths we should expect it to approach the optimal $\sim 0.21$. Additionally, for $T=1$ layer, we can compute the exact logical error rate, i.e., $3(1-p)p^2 + p^3$. We see that our Monte Carlo simulation matches the exact computation. 

\textit{Note regarding Estimated Mean \& Error --} We note that after randomly generating the effect of noise, the BP algorithm is a completely deterministic function, and thus, whether it is recovered correctly or not is also a deterministic function. So for a specific noise parameter setting, we have a Bernoulli random variable that outputs success or failure of the recovery procedure. We are to estimate the parameter of this Bernoulli random, that is the probability of success/failure. 
Maximum likelihood estimation on this data of $N$ runs entails that success probability is the sample mean $\mathbb{E}({\mathbf{x}})=\sum_{x\in \mathbf{x}}x/N$, and the estimate's variance is ${\rm Var}(\mathbf{x})=\mathbb{E}({\mathbf{x}})(1-\mathbb{E}({\mathbf{x}}))/N \leq 1/4N$. So, e.g., to have a standard deviation of $\mathcal{O}(0.01)$, we need $\sim 2500$ samples.

\newpage

\section{ Dephased Encoders}\label{deph appendix}
In Sec. \ref{deph tree mapping sec} we explained the idea of dephasing trees obtained from the standard (or, anti-standard) encoders of CSS codes  to obtain an equivalent  problem about the propagation of information on classical trees, which is a variant of the standard broadcasting problem studied in, e.g., \cite{evans2000}. This equivalency is based on the fact that for CSS codes $X$ and $Z$ errors can be  corrected independently, and therefore adding one type of error to the tree does not affect the  probability of logical of error for the other type. 

For completeness, here  we state and prove this property in a more formal language.


\begin{proposition}\label{prop}
Let $Z_L\in \langle iI,Z \rangle ^ {\otimes n}$ be a logical operator  of a CSS code and $|0\rangle_L$ and $|1\rangle_L$ be two orthogonal eigenstates of $Z_L$ in the code subspace, satisfying 
$|c\rangle_L=(-1)^c Z_L |c\rangle_L :\ c=0,1$.  
Suppose the qubits are subjected to noise  channel $\mathcal{N}_z\circ \mathcal{N}_x= \mathcal{N}_x \circ \mathcal{N}_z$, where  $\mathcal{N}_x$ only involves Pauli $X$ errors and  $\mathcal{N}_z$ only involves Pauli $Z$ errors (This means $Z$ and $X$ error are independent, whereas $X$ (or, $Z$) errors on different qubits can be correlated). Then,  there exists a quantum channel $\mathcal{R}$, such that
\be\label{rec}
\mathcal{R}\circ \mathcal{N}_z\circ \mathcal{N}_x(|c\rangle\langle c|_L)=\mathcal{N}_x(|c\rangle\langle c|_L) \ \ \ : c=0,1\ .
\ee
Furthermore, given state $\mathcal{N}_z\circ \mathcal{N}_x(|c\rangle\langle c|_L)$ with unknown $c=0,1$, the optimal strategy for determining the value of bit $c$ is measuring all the qubits in the $Z$ basis, followed by a  classical processing of the outcomes. 
 \end{proposition}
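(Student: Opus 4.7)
My plan is to prove the two claims in order, exploiting the CSS structure in two complementary ways — once to undo $\mathcal{N}_z$ via syndrome decoding, and once to reduce the discrimination of the resulting states to a classical problem in the $Z$ basis.

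For the recovery $\mathcal{R}$, I would take it to be the usual CSS decoder for $Z$-type errors: measure all $X$-type stabilizer generators to obtain a syndrome $s_X$, fix once and for all a representative $Z$-type Pauli $Z^{e^\star(s_X)}$ carrying that syndrome, and apply it as a correction. Writing $\mathcal{N}_z(\cdot)=\sum_e p_z(e)Z^e(\cdot)Z^e$ and $\mathcal{N}_x(\cdot)=\sum_a p_x(a)X^a(\cdot)X^a$ and grouping the sum over $e$ by its $X$-syndrome, the claim $\eqref{rec}$ reduces to the pointwise identity
\[
  Z^{e^\star(s_X)+e}\,X^a|c\rangle\langle c|_L X^a\,Z^{e^\star(s_X)+e}=X^a|c\rangle\langle c|_L X^a
  \quad\text{whenever }\mathrm{synd}(e)=s_X.
\]
This will follow because $e^\star+e$ has trivial $X$-syndrome, so $Z^{e^\star+e}$ commutes with every stabilizer and is itself $Z$-type; the CSS hypothesis then forces $Z^{e^\star+e}\in\langle iI,Z\text{-stabilizers},Z_L\rangle$. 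Such an operator acts on $|c\rangle_L$ as a global $\pm 1$ phase ($Z$-stabilizers trivially, $Z_L$ as $(-1)^c$), the commutation with $X^a$ contributes one further sign, and both cancel in the outer product $|c\rangle\langle c|_L$. Summing over $e,a,s_X$ then collapses $\mathcal{R}\circ\mathcal{N}_z\circ\mathcal{N}_x(|c\rangle\langle c|_L)$ to $\mathcal{N}_x(|c\rangle\langle c|_L)$.

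For the optimality of the $Z$-basis measurement, I would rely on two elementary reductions. The full $Z$-dephasing channel $\mathcal{D}_Z$ satisfies $\mathcal{D}_Z\circ\mathcal{N}_z=\mathcal{D}_Z$, so the $Z$-basis statistics of $\mathcal{N}_z\circ\mathcal{N}_x(|c\rangle\langle c|_L)$ agree with those of $\mathcal{N}_x(|c\rangle\langle c|_L)$; meanwhile the recovery $\mathcal{R}$ just built, combined with the monotonicity of the trace distance under $\mathcal{N}_z$ in the opposite direction, shows $\|\mathcal{N}_z\circ\mathcal{N}_x(|0\rangle\langle 0|_L)-\mathcal{N}_z\circ\mathcal{N}_x(|1\rangle\langle 1|_L)\|_1=\|\mathcal{N}_x(|0\rangle\langle 0|_L)-\mathcal{N}_x(|1\rangle\langle 1|_L)\|_1$. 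By Helstrom's theorem it therefore suffices to show that the $Z$-basis total variation distance of $\mathcal{N}_x(|0\rangle\langle 0|_L)$ and $\mathcal{N}_x(|1\rangle\langle 1|_L)$ equals their trace distance. I would do this by decomposing $\mathcal{N}_x(|c\rangle\langle c|_L)$ into an orthogonal sum indexed by the $Z$-stabilizer syndrome $s_Z$ of $a$; within each block, fixing a representative $a^\star(s_Z)$, any other $a$ with the same $s_Z$ differs from $a^\star$ by an $X$-stabilizer or by such a stabilizer times $X_L$, and the block state collapses to $X^{a^\star}\bigl(q_0|c\rangle\langle c|_L+q_1|1-c\rangle\langle 1-c|_L\bigr)X^{a^\star}$ for some $q_0,q_1\geq 0$. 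The two orthogonal states $X^{a^\star}|0\rangle_L$ and $X^{a^\star}|1\rangle_L$ are $Z$-basis supported on the disjoint cosets $a^\star+C_2$ and $a^\star+C_2+x_L$ (with $C_2$ the classical code underlying $|c\rangle_L$), so $Z$-basis measurement within the block already separates them perfectly and yields block TVD $2|q_0-q_1|$, matching the block trace distance. Summing over $s_Z$ gives the desired equality.

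The main obstacle I anticipate is the sign bookkeeping in the first part — tracking each $(-1)^c$ from $Z_L|c\rangle_L$ and each $\pm 1$ from $Z^eX^a=\pm X^aZ^e$, and verifying that all such signs really cancel in the outer-product form $|c\rangle\langle c|_L$ rather than leaving residual coherences. The key conceptual point throughout is where exactly the CSS hypothesis enters: it is what makes both the $Z$-type and $X$-type parts of the normalizer-modulo-stabilizers one-dimensional (generated by $Z_L$ and $X_L$ respectively), so that the recovery is \emph{exact} (not merely optimal) on the restricted inputs $|c\rangle\langle c|_L$, and so that the $Z$-basis supports of the block eigenstates are disjoint cosets. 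For a general stabilizer code these properties would fail and the statement would need to be substantially weakened.
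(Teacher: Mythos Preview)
Your proof is correct. For the first claim, your syndrome--decoding construction of $\mathcal{R}$ (measure the $X$-stabilizers, apply a fixed $Z$-type representative of the observed syndrome) is equivalent in spirit to the paper's, which conjugates by the encoder unitary $W$, traces out the $X$-stabilizer register, resets it to $|0\rangle^{\otimes n_x}$, and re-encodes. Both exploit the same CSS factorization, and the sign-bookkeeping you worry about is indeed harmless once one works with the density operators $|c\rangle\langle c|_L$.

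The genuine difference is in the second claim. The paper gets it almost for free by \emph{reusing} part~1: since the full $Z$-dephasing $\mathcal{D}_z^{\otimes n}$ is itself a $Z$-error channel, part~1 applied with $\mathcal{N}_z$ replaced by $\mathcal{D}_z^{\otimes n}\circ\mathcal{N}_z$ yields a recovery from the dephased state to $\mathcal{N}_x(|c\rangle\langle c|_L)$; composing with $\mathcal{N}_z$ then returns $\mathcal{N}_z\circ\mathcal{N}_x(|c\rangle\langle c|_L)$. Thus $Z$-dephasing is \emph{reversible} on these inputs, and the $Z$-basis measurement is optimal with respect to any monotone figure of merit, not just the Helstrom error. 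Your route---matching the $Z$-basis total variation distance of the $\mathcal{N}_x$-only states to their trace distance via the orthogonal $Z$-syndrome block decomposition and the disjoint-coset supports of $X^{a^\star}|0\rangle_L$ and $X^{a^\star}|1\rangle_L$---is also correct and pleasantly explicit, but it is specific to minimum-error discrimination and does more computation than necessary: the reversibility argument subsumes your steps~2 and~3 in one stroke.
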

 
\begin{proof} 
We prove the first part of the proposition by constructing the said recovery map $\mathcal{R}$. Without loss of generality we can restrict our attention to the case of 2-dimensional codes. If the code subspace has dimension more than 2, we focus on the  2-dimensional subspace spanned by $|0\rangle_L$ and $|1\rangle_L$, which will also be a CSS code.  

In particular, there exists a logical operator $X_L\in\langle iI,X\rangle^{\otimes n}$, and
\begin{align}
S^z_j&\in\langle iI,Z\rangle^{\otimes n}: j=2,\cdots ,n_z+1\\ 
S^x_k&\in\langle iI,X\rangle^{\otimes n}: k=n_z+2,\cdots ,n
\end{align}
where $\{S^z_j\}$ and $\{S^x_k\}$ are $n_z
$ and $n_x$ independent $Z$ and $X$ stabilizers and $n=n_x+n_z+1$. Using the standard results in the theory of stabilizer codes we know that  any such code has an encoder such that
\be
|c\rangle_L=W(|c\rangle\otimes |0\rangle^{\otimes n_z}\otimes |0\rangle^{\otimes n_x})\ \ \ \ \ \  \ \ :\   c=0,1\ ,
\ee
and
\begin{align}
W Z_1 W^\dag &=Z_L\ \ , \  
W X_1 W^\dag =X_L\\
W Z_j W^\dag&=S^{z}_j\ \ \ \ \ \ : \ j=2,\cdots n_z+1\\ 
W Z_k W^\dag&=S^{x}_k\ \ \ \ \ \ : \ k=n_z+2,\cdots,  n_x+n_z+1 \ .
\end{align}
In other words, under unitary $W^\dag$ the Hilbert space of $n$ physical qubits decomposes as  
\be
(\mathbb{C}^2)^{\otimes n}=\mathbb{C}^2\otimes \mathcal{H}_{\text{z-stab}}\otimes\mathcal{H}_{\text{x-stab}}
\ee
where $\mathcal{H}_{\text{z-stab}}=(\mathbb{C}^2)^{\otimes n_z}$ and $\mathcal{H}_{\text{x-stab}}=(\mathbb{C}^2)^{\otimes n_x}$. The fact that the code is a CSS code implies that for any $X$ error $\textbf{X}^{\textbf{x}} $ with $\textbf{x}\in\{0,1\}^n$  and $Z$ error $\textbf{Z}^{\textbf{z}} $ with $\textbf{z}\in\{0,1\}^n$ it holds that  
\begin{align}
W^\dag \textbf{X}^{\textbf{x}} \textbf{Z}^{\textbf{z}} |c\rangle_L&=\pm\big(X^{x_L(\textbf{x})} Z^{z_L(\textbf{z})}|c\rangle\big)\otimes  |\mathbf{s}_z(\textbf{x})\rangle \otimes  |\mathbf{s}_x(\textbf{z})\rangle\ .
\end{align}
where $\mathbf{s}_z(\mathbf{x})$ is the syndrome string associated to $Z$ stabilizer measurements of $X$ error $\textbf{X}^{\textbf{x}} $ (and \textit{vice versa} for $\mathbf{s}_x(\mathbf{z})$). Thus,
\begin{align}
    W^\dagger(\mathcal{N}_x\circ \mathcal{E}_z(\ket{c}\bra{c}_L))W = \sum_{\mathbf{x},\mathbf{z}} p(\mathbf{x},\mathbf{z}) ({X}^{x_L(\textbf{x})} {Z}^{z_L(\textbf{z})} |c\rangle\langle c|_L {Z}^{z_L(\textbf{z})} {X}^{x_L(\textbf{x})}) \otimes \ket{\mathbf{s}_z(\mathbf{x})}\bra{\mathbf{s}_z(\mathbf{x})} \otimes \ket{\mathbf{s}_x(\mathbf{z})}\bra{\mathbf{s}_x(\mathbf{z})}
\end{align}

Recall $\ket{c}_L$ is a logical $Z$ state, i.e., ${Z}_L|c\rangle = \pm |c\rangle$. Noting the independence of $X$ and $Z$ errors, we have $p(\mathbf{x},\mathbf{z})=p(\mathbf{x})p(\mathbf{z})$. Thus,
\begin{align}
    W^\dagger(\mathcal{N}_x\circ \mathcal{N}_z(\ket{c}\bra{c}_L))W = \sum_{\mathbf{x}} p(\mathbf{x}) ({X}^{x_L(\textbf{x})} |c\rangle\langle c|_L {X}^{x_L(\textbf{x})}) \otimes \ket{\mathbf{s}_z(\mathbf{x})}\bra{\mathbf{s}_z(\mathbf{x})} \otimes \sum_{\mathbf{z}} p(\mathbf{z}) \ket{\mathbf{s}_x(\mathbf{z})}\bra{\mathbf{s}_x(\mathbf{z})}
\end{align}

We see here that $\mathbb{C}^2 \otimes \mathcal{H}_{\rm z-stab}$ is uncorrelated with $\mathcal{H}_{\rm x-stab}$. Furthermore, all the information regarding $Z$ errors is within $\mathcal{H}_{\rm x-stab}$. So replacing this subsystem with $\ket{0}^{\otimes n_z}$ allows us to discard the effect of $\mathcal{E}_z$. Thus, in summary, the recovery channel we want is,

\be
\mathcal{R}(\cdot)=W\Big(\Tr_{\text{x-stab}}(W^\dag(\cdot) W)\otimes |0\rangle\langle 0|^{\otimes n_z}\Big)W^\dag.
\ee\\

To show the second part of the proposition, suppose one measures all qubits in $\{|0\rangle,|1\rangle\}$ basis. If we ignore the outcomes of all measurements, i.e., if we dephase all qubits in the $Z$ basis, we obtain state $\mathcal{D}_z^{\otimes n}\circ\mathcal{N}_z\circ \mathcal{N}_x(|c\rangle\langle c|_L)$.  Since 
 $\mathcal{D}_z^{\otimes n}\circ\mathcal{N}_z$ only contains $Z$ errors, using the first part of the proposition, we know that there exists a recovery process that transforms state    
$\mathcal{D}_z^{\otimes n}\circ\mathcal{N}_z\circ \mathcal{N}_x(|c\rangle\langle c|_L)$ to $\mathcal{N}_z\circ \mathcal{N}_x(|c\rangle\langle c|_L)$. This means that, regardless of the figure of merit, the optimal strategy for  discriminating     
states $\mathcal{N}_z\circ \mathcal{N}_x(|0\rangle\langle 0|_L)$ and $\mathcal{N}_z\circ \mathcal{N}_x(|1\rangle\langle 1|_L)$ can start with a measurement of all qubits in the $Z$ basis, followed by a classical processing of the outcomes of $Z$ measurements (which is the only information remaining in a fully dephased state).
\end{proof}

\subsection{Dephased encoders for CSS codes}\label{CSS dephased subsec}
The dephased encoders of CSS codes has a simple description. Recall that a  CSS code can be characterized by a pair of linear classical codes $C_1\subset \{0,1\}^b$ and $C_2\subset C_1$. Since here we are focusing on codes that encode a single qubit, $|C_1|/|C_2|=2$, where $|C_1|$ and $|C_2|$ are the number of bitstrings in set $C_1$ and $C_2$, respectively.
 Elements of $C_1$ can be partitioned to two equivalency classes, namely those in $C_2$, and those that can be written as $\textbf{x}_L+C_2$, where  $\textbf{x}_L$ is a fixed bit string, which defines the logical $X$ operator. Then, $|0\rangle_L$ and $|1\rangle_L$ can be written as  
\begin{align}
|0\rangle_L&=V|0\rangle=\frac{1}{\sqrt{|C_2|}}\sum_{\textbf{w}\in C_2}|\textbf{w}\rangle\ , \\ 
|1\rangle_L&=V|1\rangle=\frac{1}{\sqrt{|C_2|}}\sum_{\textbf{w}\in C_2}|\textbf{w}+\textbf{x}_L\rangle\ .\end{align}
Then, the classical encoder defined in Eq.(\ref{class-enc}) is given by 
\begin{align}\label{CSS measured encoder general form}
    \mathbb{M}_z:&\ P_z(\textbf{w}|0)=\frac{1}{|C_2|}\ \ \  : \textbf{w} \in C_2 \nonumber\\
    &\ P_z(\textbf{w}|1)=\frac{1}{|C_2|}\ \ \  : \textbf{w} \in C_2+\textbf{x}_L\ .
\end{align}
Similarly, for logical states $|\pm\rangle_L=(|0\rangle_L\pm|1\rangle_L)/\sqrt{2}=V|\pm\rangle$ we have
\begin{align}
H^{\otimes b}|+\rangle_L&=H^{\otimes b}\frac{1}{\sqrt{|C_1|}}\sum_{\textbf{w}\in C_1}|\textbf{w}\rangle=\frac{1}{\sqrt{|C^\perp_1|}}\sum_{\textbf{w}\in C^\perp_1}|\textbf{w}\rangle\\ 
H^{\otimes b}|-\rangle_L&=H^{\otimes b}\frac{1}{\sqrt{|C_1|}}\sum_{\textbf{w}\in C_1} (-1)^{\textbf{z}_L\cdot \textbf{w}} |\textbf{w}\rangle\\ &=\frac{1}{\sqrt{|C^\perp_1|}}\sum_{\textbf{w}\in C^\perp_1}|\textbf{w}+\textbf{z}_L\rangle\ ,
\end{align}
where $H$ is the Hadamard operator. Then, by dephasing the input and output of the encoder $V$ in the $X$ basis, we obtain the classical encoder
\begin{align}\label{CSS measured encoder general x}
    \mathbb{M}_x:&\ P_x(\textbf{w}|0)=\frac{1}{|C^\perp_1|}\ \ \  : \textbf{w} \in C^\perp_1 \nonumber\\
    &\ P_x(\textbf{w}|1)=\frac{1}{|C^\perp_1|}\ \ \  : \textbf{w} \in C^\perp_1+\textbf{z}_L\ .
\end{align}

\subsection{Example: Generalized Shor Code}\label{shor dephased}
The dephased tree of generalized Shor codes $[[n^2,1,n]]$ is where we have classical encoders that are alternatingly concatenated. We label them $\mathbb{M}_1$ and $\mathbb{M}_2$, and are defined:
\begin{align}
    \mathbb{M}_1:&\ P_z(0^{\otimes n}|0)=1\nonumber\\ 
    &\ P_z(1^{\otimes n}|1)=1\\
    \mathbb{M}_2:&\ P_z(\mathbf{w}|0)=\frac{1}{2^{n-1}}, \ {\rm when}\ |\mathbf{w}|=0 \ {\rm mod \ 2} \nonumber \\
    &\ P_z(\mathbf{w}|1)=\frac{1}{2^{n-1}}, \ {\rm when}\ |\mathbf{w}|=1 \ {\rm mod \ 2}
\end{align}
where $|\mathbf{w}|$ is the Hamming weight of $\mathbf{w}$. For instance, the usual Shor-9 code corresponds to $n=3$, and the specific maps are,
\begin{align}
    \mathbb{M}_1:&\ P_z(000|0)=1\nonumber\\ 
    &\ P_z(111|1)=1\\
    \mathbb{M}_2:&\   P_z(\mathbf{w}|0)=1/4,\ {\rm when}\ \mathbf{w} \in \{000, 011,101, 110\}\nonumber\\
  &\   P_z(\mathbf{w}|1)=1/4,\ {\rm when}\ \mathbf{w} \in \{111, 100, 010, 001\}
\end{align}

Thus, the encoder becomes $\mathbb{M}_1^{\otimes n} \circ \mathbb{M}_2$, from $1\rightarrow n^2$ bits. After every encoded layer, we have local independent bit-flip error on all these bits, which corresponds to $X$ (or $Z$) error channel between concatenated Shor encoder layers.

Considering these encoders in the opposite order, i.e., $\mathbb{M}_2^{\otimes n} \circ \mathbb{M}_1$ corresponds to dephasing in the conjugate basis, i.e., $X$ to $Z$ and \textit{vice versa}.

\subsection{Summary of Open Classical Problems}\label{open q}

Study of novel dephased trees can be summarized into three succinctly stated classical tree problems, that still remain open problems. We list them in order of reducing generality. From here on, $\overline{1}$ is a bitsring with all $1$'s and `$+$' is bit-wise addition modulo 2.\\

\textit{Problem I --}
Consider a tree-like circuit.
Each \textit{vertex} is associated to an encoder that maps one bit, $b$, to $n$ bitstring, $x$, probabilistically. This encoder is, thus, characterized by the conditional probability distribution $p(x|b)$ such that $p(x|b)=p(x+\textbf{1}^n|b+1)$. This setup is general in that it also subsumes cases with binary symmetric channels (e.g., bit-flip noise channels) before or after the encoder.\\

\textit{Problem II --}\label{restrict}
Here we focus on an important special case of Problem I, where $p(x|b)$ can be realized by adding local uncorrelated bit-flip noise on the output bits which are sampled from a distribution  $q(x|b)$. Here, $q(x|b)$ has the property that $q(x|b=0)$ and $q(x|b=1)$ have disjoint support. So graphically, the vertices of the tree are associated to the aforementioned $q(x|b)$ encoder, and the edges are associated with bit-flip channels of error $\epsilon$.

More specifically, we are interested in the case where $q(x|b=0)$ is a uniform distribution over a subspace of bit strings, and  $q(x|b=1) = q(x+\textbf{1}^n|b=0)$, i.e, its complement. This restriction is interesting because the encoder by itself is not noisy (due to the disjoint support), although it might make the encoded information \textit{more} sensitive to noise. Dephased trees of Steane codes and Generalized Shor Codes satisfy said property.\\

\textit{Problem III --}
We focus on the classical version of generalized Shor codes $[[n^2,1,n]]$. Here we have two encoders from $1\rightarrow n$ bits that are alternatingly concatenated. We label them $\mathbb{M}_1$ and $\mathbb{M}_2$, and are 
defined:
\begin{align}
    \mathbb{M}_1:\ & 0 \mapsto \textbf{0}^{n}\nonumber\\
    & 1 \mapsto \textbf{1}^{n}\\
    \mathbb{M}_2:\ & 0 \mapsto {\rm uniformly\ random\ over\ even\ Hamming\ wt.}\nonumber\\
     & 1 \mapsto {\rm uniformly\ random\ over\ odd\ Hamming\ wt.}
\end{align}
The encoder associated to each vertex is $\mathbb{M}_1^{\otimes n} \circ \mathbb{M}_2$, from $1\rightarrow n^2$ bits. For instance, the usual Shor-9 code corresponds to $n=3$, and the specific maps are,
\begin{align}
    \mathbb{M}_1:\ & 0 \mapsto 000\nonumber\\
    & 1 \mapsto 111\\
    \mathbb{M}_2:\ & 0 \mapsto {\rm Unif}\{000, 011,101, 110\}\nonumber\\
     & 1 \mapsto {\rm Unif}\{111, 100, 010, 001\}
\end{align}

Note that bit-flip noise here occurs after $\mathbb{M}_1^{\otimes n} \circ \mathbb{M}_2$, not in between them. The case where we switch the order of the encoders, i.e., $\mathbb{M}_2^{\otimes n} \circ \mathbb{M}_1$, corresponds to measuring in the conjugate basis, and is also of interest. $\mathbb{M}_2$ is, in some sense, maximally sensitive to noise. That is, an error on any of the $n$ children of $\mathbb{M}_2$ is equivalent to an error on the input bit. This is an extreme illustration of the comment made in Problem II.\\ 

A thorough analysis of these classical tree problems remains an open research direction in classical network theory. 

\newpage

\end{document}